\pdfoutput=1
\documentclass[12pt]{article}
\usepackage[margin=1in]{geometry}
\usepackage{setspace}
\usepackage{amsmath,amssymb,amsthm,bm,booktabs,array}
\usepackage{graphicx}
\usepackage[authoryear,round]{natbib}
\usepackage{microtype}
\usepackage{xurl}
\usepackage[hidelinks]{hyperref}

\theoremstyle{plain}
\newtheorem{theorem}{Theorem}
\newtheorem{proposition}{Proposition}
\newtheorem{lemma}{Lemma}
\newtheorem{corollary}{Corollary}
\newtheorem{assumption}{Assumption}
\theoremstyle{definition}
\newtheorem{definition}{Definition}
\theoremstyle{remark}
\newtheorem{remark}{Remark}

\newcommand{\IWTheta}{4}
\newcommand{\IWMuA}{1.251}
\newcommand{\IWLamTildeA}{1.089}

\newcommand{\IWMuB}{1.079}

\newcommand{\IWMuC}{1.292}

\newcommand{\IWSMR}{0.573}
\newcommand{\IWPMR}{0.108}
\newcommand{\IWPrivTech}{0.183}
\newcommand{\IWOS}{$-$0.076}
\newcommand{\IWOSPrice}{0.0032}
\newcommand{\IWOSWage}{$-$0.0787}
\newcommand{\IWGap}{0.067}
\newcommand{\IWGapOwn}{0.046}
\newcommand{\IWGapDown}{0.021}
\newcommand{\IWRivals}{0.282}
\newcommand{\IWAE}{0.041}
\newcommand{\IWHarberger}{0.151}
\newcommand{\IWGapsAll}{0.110}
\newcommand{\IWWedge}{0.465}

\newcommand{\IWXRDRatio}{1.238}

\newcommand{\IWXBSProfitTwo}{0.087}
\newcommand{\IWXBSMarkupTwo}{$-$0.048}
\newcommand{\IWXBSTransferTwo}{0.134}
\newcommand{\IWXBSProfitThree}{$-$0.005}
\newcommand{\IWXBSMarkupThree}{$-$0.011}
\newcommand{\IWXBSTransferThree}{0.006}

\newcommand{\IWXDomShareLow}{0.316}
\newcommand{\IWXDomShareHigh}{0.628}
\newcommand{\IWXDomAELow}{0.116}
\newcommand{\IWXDomAEHigh}{$-$0.098}
\newcommand{\IWXDomNestMeanHigh}{0.114}

\newcommand{\IWXDomOSLow}{0.217}
\newcommand{\IWXDomOSHigh}{0.178}
\newcommand{\IWXDomOSShareLow}{0.28}
\newcommand{\IWXDomOSShareHigh}{0.37}

\newcommand{\IWXDomRowOne}{0.316 & 1.828 & 0.463 & 0.453 & 0.116 & 0.217 & 1.460 \\}
\newcommand{\IWXDomRowTwo}{0.465 & 2.338 & 0.870 & 0.572 & 0.077 & 0.214 & 1.678 \\}
\newcommand{\IWXDomRowThree}{0.571 & 2.914 & 1.332 & 0.657 & $-$0.025 & 0.194 & 1.816 \\}
\newcommand{\IWXDomRowFour}{0.628 & 3.363 & 1.691 & 0.703 & $-$0.098 & 0.178 & 1.897 \\}

\newcommand{\DYSMR}{1.725}
\newcommand{\DYPMR}{0.949}
\newcommand{\DYWedge}{0.776}
\newcommand{\DYStatic}{0.655}
\newcommand{\DYOwn}{0.100}
\newcommand{\DYRivSocial}{0.021}
\newcommand{\DYAE}{0.121}
\newcommand{\DYGapOwn}{0.211}
\newcommand{\DYRivKnow}{0.175}
\newcommand{\DYRivTech}{0.140}
\newcommand{\DYMinusOS}{0.008}
\newcommand{\DYRivStake}{0.0003}
\newcommand{\DYRatio}{1.82}
\newcommand{\DYStaticPerPMR}{0.69}
\newcommand{\DYResponsesPerPMR}{0.13}
\newcommand{\DYSubsidy}{45}
\newcommand{\DYGrowth}{0.0152}

\newcommand{\IWAppYear}{2015}
\newcommand{\IWAppThetaBase}{16}
\newcommand{\IWAppThetaAlt}{8}
\newcommand{\IWAppSigmaFinal}{0.9}
\newcommand{\IWAppEpsilonVA}{0.5}
\newcommand{\IWAppThetaInt}{0.2}
\newcommand{\IWAppEtaKL}{0.99}
\newcommand{\IWAppPhiR}{0.042}
\newcommand{\IWAppPhiT}{0.186}

\newcommand{\IWAppDepreciation}{0.15}
\newcommand{\IWAppPreGrowth}{0.05}
\newcommand{\IWAppWindowStart}{2013}
\newcommand{\IWAppWindowEnd}{2017}
\newcommand{\IWAppMinRecords}{10}

\newcommand{\IWAppFullCoverage}{18}
\newcommand{\IWAppPatentingFirms}{1{,}017}

\newcommand{\IWAppCPCGroups}{4{,}640}
\newcommand{\IWAppBFGainUserCost}{13}
\newcommand{\IWAppBFGainAccounting}{11}
\newcommand{\IWAppBFGainProduction}{25}

\newcommand{\IWAppTPExcessUserCost}{14}
\newcommand{\IWAppExcludedUserCost}{11}

\newcommand{\IWAppTPExcessAccounting}{26}
\newcommand{\IWAppExcludedAccounting}{5}
\newcommand{\IWAppTPProduction}{1.79}
\newcommand{\IWAppTPProductionAll}{1.59}
\newcommand{\IWAppDownShareProduction}{43}
\newcommand{\IWAppExcludedProduction}{30}
\newcommand{\IWAppDownShareLow}{41}
\newcommand{\IWAppDownShareHigh}{48}

\newcommand{\IWAppPassOwn}{0.62}

\newcommand{\IWAppBSVSymmetric}{5.4}
\newcommand{\IWAppFactorPremium}{1.20}

\newcommand{\IWAppClipUserCost}{4}
\newcommand{\IWAppClipAccounting}{13}
\newcommand{\IWAppClipBase}{12}
\newcommand{\IWAppClipBaseOutput}{17}
\newcommand{\IWAppClipAlt}{28}
\newcommand{\IWAppClipAltOutput}{30}
\newcommand{\IWAppForeignFirms}{744}
\newcommand{\IWAppForeignSales}{43}

\newcommand{\IWAppSlopeUser}{\ensuremath{-}0.48}

\newcommand{\IWAppSlopeCournot}{\ensuremath{-}0.94}
\newcommand{\IWAppSlopeFirms}{3{,}031}

\newcommand{\IWAppSlopeUserWeighted}{\ensuremath{-}0.04}
\newcommand{\IWAppEtaFlat}{10.2}

\newcommand{\IWAppPassEtaFlat}{0.97}
\newcommand{\IWAppIndAENegUserCost}{88}
\newcommand{\IWAppIndAENegAccounting}{89}
\newcommand{\IWAppIndAENegCournot}{88}

\newcommand{\IWAppFactorPremiumAlt}{1.37}
\newcommand{\IWAppFactorPremiumAccounting}{1.20}

\newcommand{\IWAppBSVCreditBSV}{0.99}

\newcommand{\IWAppFullCoverageSales}{74}
\newcommand{\IWAppFullCoverageRD}{87}
\newcommand{\IWAppFullCoverageRatioMedian}{1.9}

\newcommand{\IWAppSegTheta}{13.8}
\newcommand{\IWAppSegEta}{1.7}
\newcommand{\IWAppSegPremium}{1.20}

\newcommand{\IWAppSegPassOwn}{0.62}
\newcommand{\IWAppSegPassSmall}{0.92}
\newcommand{\IWAppSegPassLarge}{0.54}
\newcommand{\IWAppSegMarkets}{533}
\newcommand{\IWAppSegFringeless}{42}
\newcommand{\IWAppSegFringelessSales}{36}
\newcommand{\IWAppSegClip}{9}
\newcommand{\IWAppSegFirms}{3{,}193}
\newcommand{\IWAppSegNoUSFirms}{276}
\newcommand{\IWAppSegNoUSSales}{11}

\newcommand{\IWAppSegMultiFirms}{962}
\newcommand{\IWAppSegDomesticShare}{49}

\newcommand{\IWAppSegDropped}{14}
\newcommand{\IWAppSegFactorNA}{0.87}
\newcommand{\IWAppSegFactorUSCA}{0.92}
\newcommand{\IWAppSegFactorAM}{0.73}

\newcommand{\IWAppSegRatio}{1.14}
\newcommand{\IWAppSegGap}{0.27}

\newcommand{\IWAppSegMinusIP}{\ensuremath{-}0.17}

\newcommand{\IWAppSegAEPTen}{\ensuremath{-}0.19}
\newcommand{\IWAppSegAEPNinety}{0.28}
\newcommand{\IWAppSegAELarge}{56}

\newcommand{\IWAppSegRivalShareMedian}{0.76}
\newcommand{\IWAppSegSpRatio}{6.24}

\newcommand{\IWAppSegBSVRatio}{1.11}
\newcommand{\IWAppSegBSVRatioBSVCreditPrice}{2.62}
\newcommand{\IWAppSegBSVNoSteal}{5.01}
\newcommand{\IWAppSegBSVCreditModel}{3.50}

\newcommand{\IWAppSegRDNegShare}{80}

\newcommand{\IWAppSegRDGapNet}{0.11}
\newcommand{\IWAppSegRDAE}{\ensuremath{-}0.04}
\newcommand{\IWAppSegRDIPCost}{0.19}
\newcommand{\IWAppSegRDPosShareSp}{99.7}
\newcommand{\IWAppSegRDNegShareSmall}{100}

\newcommand{\IWAppSegRDAESmall}{\ensuremath{-}0.34}

\newcommand{\IWAppSegRDShortfallSmall}{0.31}

\newcommand{\IWAppSegIPMedianSizeA}{0.005}

\newcommand{\IWAppSegSmallFringeAE}{\ensuremath{-}0.11}

\newcommand{\IWAppSegSmallNoFringeAE}{\ensuremath{-}0.57}

\newcommand{\IWAppSegSalesSizeA}{8}
\newcommand{\IWAppSegRDSizeA}{15}

\newcommand{\IWAppSegAESizeA}{\ensuremath{-}0.16}
\newcommand{\IWAppSegIPSizeA}{0.026}
\newcommand{\IWAppSegRatioSizeA}{1.01}

\newcommand{\IWAppSegRatioSizeD}{1.39}

\newcommand{\IWAppSegBertPassOwn}{0.87}

\newcommand{\IWAppSegBertAESizeA}{\ensuremath{-}0.09}
\newcommand{\IWAppSegBertIPSizeA}{\ensuremath{-}0.001}

\newcommand{\IWAppSegBertRDNegShareSmall}{99}
\newcommand{\IWAppSegBertRDShortfallSmall}{0.08}

\newcommand{\IWAppSegLeftOut}{115}
\newcommand{\IWAppSegLeftOutSales}{0.9}

\newcommand{\IWAppRDWRatioSeg}{1.08}

\newcommand{\IWAppRDWRatioSegBert}{1.24}

\newcommand{\IWAppFESmallAESegFringe}{\ensuremath{-}0.08}
\newcommand{\IWAppFESmallIPSegFringe}{0.035}
\newcommand{\IWAppFESmallShortfallSegFringe}{0.09}
\newcommand{\IWAppFESmallRDNegSegFringe}{100}

\newcommand{\IWAppFirmShareSmallSeg}{62}

\newcommand{\IWAppMainMinAELarge}{30}
\newcommand{\IWAppMainMaxAELarge}{68}

\newcommand{\IWAppMainMinRivalShareMedian}{0.72}
\newcommand{\IWAppMainMaxRivalShareMedian}{0.94}

\newcommand{\IWAppFixedMinRDWAEOffset}{8}
\newcommand{\IWAppFixedMaxRDWAEOffset}{35}

\newcommand{\IWAppSegUnitAESizeA}{\ensuremath{-}0.09}

\newcommand{\IWAppRDWAESegUnitVA}{0.02}

\newcommand{\IWAppRDWAESegUnit}{0.05}

\newcommand{\IWAppRDWAESegFree}{0.09}

\newcommand{\IWAppSegSpillFirms}{831}
\newcommand{\IWAppSegSpillRDShare}{59}
\newcommand{\IWAppSegBSVCreditBSVPrice}{0.92}
\newcommand{\IWAppSegBSVEqualMean}{4.8}
\newcommand{\IWAppSegBSVPrivateExcess}{370}
\newcommand{\IWAppSegBSVSteal}{3.66}
\newcommand{\IWAppSegBSVOS}{0.13}
\newcommand{\IWAppSegBSVOwnMarkup}{0.18}
\newcommand{\IWAppSegBSVSocial}{0.84}
\newcommand{\IWAppSegBSVDown}{0.11}
\newcommand{\IWAppSegBSVAE}{0.05}
\newcommand{\IWAppSegBSVWelfareOverOutputPrice}{2.39}
\newcommand{\IWAppSegBSVFactorCreditPrice}{1.92}
\newcommand{\IWAppSegBSVFactorOwnOutput}{1.05}
\newcommand{\IWAppSegBSVFactorSocial}{1.19}
\newcommand{\IWAppSegSpRatioNoIP}{7.17}
\newcommand{\IWAppSegSpIPMultiplePrice}{11.7}
\newcommand{\IWAppSegBSVRatioFirmRivals}{1.55}

\newcommand{\IWAppSegAETailShare}{0.03}
\newcommand{\IWAppSegAETailShareUpper}{0.40}
\newcommand{\IWAppRDWAEFringeSeg}{0.02}

\newcommand{\IWAppRDWAEFringeSegBert}{\ensuremath{-}0.04}

\newcommand{\IWAppRDShareNoFringeSeg}{54}
\newcommand{\IWAppSalesShareNoFringeSeg}{43}
\newcommand{\IWAppSegSpillIPSizeA}{\ensuremath{-}0.0033}
\newcommand{\IWAppSegBertSpillIPSizeA}{\ensuremath{-}0.0037}
\newcommand{\IWAppSegSpillIPSizeC}{\ensuremath{-}0.082}
\newcommand{\IWAppSegBertSpillIPSizeC}{\ensuremath{-}0.119}
\newcommand{\IWAppSegSpillSSizeA}{0.002}
\newcommand{\IWAppSegSpillSSizeC}{0.097}
\newcommand{\IWAppPassSmallSalesCut}{25}

\newcommand{\IWAppFiveSalesSeg}{34}
\newcommand{\IWAppFiveRDSeg}{40}
\newcommand{\IWAppFiveAESeg}{\ensuremath{-}0.11}

\newcommand{\IWAppFiveRDNegSeg}{97}

\newcommand{\IWAppFiveAESegBert}{\ensuremath{-}0.09}

\newcommand{\IWAppFiveRDNegSegBert}{91}

\newcommand{\IWAppSegRDGapNetSmall}{0.05}

\newcommand{\IWAppSegUCRDWAE}{0.01}

\newcommand{\IWAppSegMSSlopeUser}{\ensuremath{-}0.37}
\newcommand{\IWAppSegMSSlopeUserWeighted}{\ensuremath{-}0.17}
\newcommand{\IWAppSegMSSlopeAccounting}{\ensuremath{-}0.40}
\newcommand{\IWAppSegMSSlopeAccountingWeighted}{\ensuremath{-}0.21}

\newcommand{\IWAppSegMSSigmaUserWeighted}{4.1}
\newcommand{\IWAppSegMSSigmaAccounting}{2.1}

\newcommand{\IWAppSegMSCorrUserWeighted}{0.20}
\newcommand{\IWAppSegMSCorrAccounting}{0.09}

\newcommand{\IWAppRobSegCMinSmallAE}{\ensuremath{-}0.19}
\newcommand{\IWAppRobSegCMaxSmallAE}{\ensuremath{-}0.11}
\newcommand{\IWAppRobSegCMinSmallIP}{0.007}
\newcommand{\IWAppRobSegCMaxSmallIP}{0.047}

\newcommand{\SMR}{\operatorname{SMR}}
\newcommand{\PMR}{\operatorname{PMR}}
\newcommand{\SMC}{\operatorname{SMC}}
\newcommand{\AEff}{\mathrm{AE}}
\newcommand{\IP}{\mathrm{IP}}
\newcommand{\TR}{\mathrm{TR}}
\newcommand{\T}{\mathcal{T}}
\newcommand{\Pv}{\mathcal{P}}
\newcommand{\mc}{\mathit{mc}}
\newcommand{\dlog}{d\log}
\newcommand{\Cov}{\operatorname{Cov}}

\newif\ifjournal
\journaltrue
\makeatletter
\newcommand{\onlineappendix}{%
	\setcounter{section}{0}%
	\renewcommand{\thesection}{OA.\arabic{section}}%
	\renewcommand{\theHsection}{OA.\arabic{section}}%
	\@for\oa@c:=equation,table,figure,theorem,proposition,lemma,corollary,assumption,definition,remark\do{%
		\setcounter{\oa@c}{0}%
		\expandafter\edef\csname the\oa@c\endcsname{OA.\noexpand\arabic{\oa@c}}%
		\expandafter\edef\csname theH\oa@c\endcsname{OA.\noexpand\arabic{\oa@c}}}%
}
\makeatother

\title{Allocative Efficiency and the Returns to Innovation}
\author{Koki Okumura\thanks{The Chinese University of Hong Kong. Email: \href{mailto:kokiokumura@cuhk.edu.hk}{kokiokumura@cuhk.edu.hk}.}}
\date{October 2026}

\begin{document}

\maketitle

\begin{abstract}
	\noindent Innovation changes welfare by improving technology and reallocating inputs across producers with different markups. When producers choose markups optimally, an innovator's own output and price adjustments have no private value at the margin, so reallocation affects its profit only through others' responses. For an innovator small in its markets, the change in allocative efficiency is of the same order as its cost saving but does not reach its profit. A process innovator is paid its cost saving, forgoing its own and downstream markups on added output. For U.S. public firms in \IWAppYear{}, these forgone markups average \IWAppTPExcessAccounting{} percent of that saving. For innovators accounting for \IWAppSegAELarge{} percent of sales, allocative efficiency rises or falls by more than a tenth of their private return. With estimated spillovers, patenting firms' R\&D has a social return \IWAppSegSpRatio{} times its private return.
\end{abstract}

\section{Introduction}\label{sec:intro}

An innovation changes welfare in two ways. It raises the output that the economy obtains from a given allocation of goods and factors, and it moves goods and factors between producers. When prices equal marginal costs, the second channel has no first-order effect on welfare \citep{Hulten1978}. With markups it does, because a factor earns its price but produces output worth that price times the markup of the producer that employs it. \citet{BaqaeeFarhi2020} call the two channels the technology effect and the change in allocative efficiency. Markups are large and dispersed across firms \citep{EdmondMidriganXu2023}, and in Baqaee and Farhi's estimates the reallocation of market share toward high-markup firms accounts for about half of aggregate TFP growth in the United States over 1997--2015. They treat productivities and markups as primitives. Technology changes, however, come from innovation that firms choose for profit, and the markups that make reallocation matter come from the same firms' optimization. This paper asks which parts of the social return to an innovation the innovator is paid for when markups arise from producers' own optimization. We focus on process innovations, which raise the innovator's productivity, and Section~\ref{sec:main} also treats improvements in quality.

We study an economy like that of Baqaee and Farhi, with arbitrary demand, production functions, input--output linkages, and factors, but with markups chosen optimally by producers. We show that an innovator's own output and price adjustments have no first-order private value. Reallocation therefore affects its profit only through other agents' responses. The welfare effect of this reallocation is the change in allocative efficiency. For an innovator that is small in its market and perceives how its output affects its price, other agents' responses to its technology change affect its profit only at second order in its size. This holds under both quantity and price competition. In nested-CES economies, the change in allocative efficiency is of first order in the innovator's size except in special cases. When producers compete in quantities, the responses reach the innovator's profit only as price changes, which are transfers, so the innovator receives none of the change in allocative efficiency, whatever its size and whatever the sign of the change. The innovator is paid its cost saving at its current output rather than the technology effect of its innovation, so it also forgoes its own markup and the markups accumulated downstream on the output that the innovation adds at a fixed allocation. The technology changes that its innovation causes at other producers are external as well. Because the change in allocative efficiency can be positive or negative, the externality it creates can raise or lower the social return relative to the private return.

A simple case shows why. Suppose that labor, in fixed supply, is the only input and that a producer that is small in a market of differentiated goods lowers its cost. It passes the cost reduction through to its price, and its expansion draws workers from its rivals. At either producer a worker earns the wage and produces output that sells for the wage times that producer's markup, so moving a worker to the innovator changes allocative efficiency by the wage times the difference between the innovator's markup and the rival's. When markups rise with market share, the small innovator's markup is below its rivals', so each worker that it draws from them lowers allocative efficiency. The innovator gains nothing from hiring the worker at the margin. It has chosen its output so that marginal revenue equals marginal cost, so the markup it earns on the worker's output is offset by the fall in the price it receives on the units it already sells, a fall that benefits its buyers. Its profit also changes as its rivals contract and the wage responds, but because the innovator is small, the value of these responses at its quantities is of second order in its size, whereas the change in allocative efficiency, like the innovator's cost saving, is of first order.

The argument requires only that producers set their markups optimally, as under Cournot oligopoly with any demand system or monopolistic competition. The innovator's output and input mix are at its optimum, so their adjustment to the innovation has no first-order private value, and every other agent's adjustment reaches its profit only through what we call the induced price term, the effect on its profit of the price changes that it does not attribute to its own output or quality. This is the argument of \citet{GreenwaldStiglitz1986}, applied to an innovator that sets its own markup. It fails when markups are exogenous ratios of price to marginal cost, the reduced form of Baqaee and Farhi's analysis, because the innovator's output is then not at its optimum.

Theorem~\ref{thm:main} states the result through two benchmarks. The technology effect, $\T_k$, is society's valuation of producer $k$'s technology change at a fixed allocation. The private technology value, $\Pv_i$, is the innovator's valuation of its own technology change at fixed output and input prices. For a process innovation, this private value is its cost saving. The change in welfare $dW$ and the change in the innovator's profit $d\pi_i$ therefore satisfy
\[
	dW-d\pi_i=(\T_i-\Pv_i)+\sum_{j\neq i}\T_j+\AEff-\IP_i ,
\]
where $\AEff$ is the change in allocative efficiency and $\IP_i$ the induced price term. We call $\T_i-\Pv_i$ the appropriability gap. At the innovator's current output, the innovation saves inputs, which are worth their prices. At a fixed allocation, it raises the innovator's output, which flows to its buyers and raises their output in turn, and consumers value the resulting final output at its prices. Society's valuation is $\mu_i\tilde\lambda_i/\lambda_i$ times the innovator's, its markup times a factor that exceeds one only through the markups downstream of it. The gap is nonnegative and can be measured with markups and input--output tables alone, whereas the change in allocative efficiency, before an innovation occurs, can be predicted only with a model of conduct. Under quantity competition the induced price term consists of transfers between the innovator and its buyers and suppliers. Under price competition it also contains the innovator's margin on the demand that rivals' price responses divert to it \citep{WeylFabinger2013}. For a given set of observed price and quantity changes, the measured induced price term is the same under either form of competition. The result for small innovators holds under both.

The change in allocative efficiency is the only external term without a fixed sign, and Section~\ref{sec:primitives} shows when it is positive, building on Baqaee and Farhi's propagation formulas with Cournot markups as in \citet{AtkesonBurstein2008}. Within a market, if the other producers' markups do not change, an innovation by a producer whose markup exceeds the average raises allocative efficiency by drawing inputs from its rivals, which it does exactly when the elasticity of substitution times the pass-through of its cost reduction exceeds one. Markups that rise with market share lower the pass-through and dampen the gain, and across markets they can reverse it. Along supply chains, a productivity gain at a marked-up supplier lowers allocative efficiency when its buyers' value added and intermediate inputs are complements, because the buyers then shift their spending toward value added, which carries no markup.

Section~\ref{sec:application} measures the terms of the wedge for innovations by U.S. public firms in \IWAppYear{}, in the economy that \citet{BaqaeeFarhi2020} calibrate to input--output tables and Compustat, with markups set by competition among the business segments of these firms within six-digit NAICS markets. With Baqaee and Farhi's user-cost and accounting-profit markups, the technology effect of an innovation exceeds the innovator's private technology value by \IWAppTPExcessUserCost{} and \IWAppTPExcessAccounting{} percent on average, and markups downstream of the innovator account for \IWAppDownShareLow{} to \IWAppDownShareHigh{} percent of the excess. The appropriability gap can be measured using these markups, whereas the other terms use the markups that conduct implies, because the private return rests on producers' first-order conditions. The change in allocative efficiency has either sign across innovations and exceeds a tenth of the private return in absolute value for innovators that account for \IWAppSegAELarge{} percent of sales when firms compete in quantities, and for \IWAppMainMinAELarge{} to \IWAppMainMaxAELarge{} percent depending on conduct and the definition of markets.

Figure~\ref{fig:size} shows the separation of Theorem~\ref{thm:main} by the innovator's share of its market. Relative to the private return, the induced price term vanishes as market share falls, whereas the change in allocative efficiency does not. For innovators with less than 5 percent of their market, which account for \IWAppFiveSalesSeg{} percent of sales and \IWAppFiveRDSeg{} percent of R\&D, it is negative whether firms compete in quantities or in prices.

Theorem~\ref{thm:main} changes how business stealing and the social returns to R\&D should be interpreted. Estimates of the social return treat the output that an innovator takes from its rivals as a transfer \citep{Griliches1979,BloomSchankermanVanReenen2013}. \citet{BloomSchankermanVanReenen2013} attribute half of the change in rivals' market values to lost output, and with the same share \citet{LuckingBloomVanReenen2019} estimate a marginal social return to R\&D of 57.7 percent against a private return of 13.6 percent. In Theorem~\ref{thm:main}, the innovator's gain from taking business is part of its induced price term, and the welfare cost to rivals whose technology does not change is their output loss valued at their markups \citep{MankiwWhinston1986}, which is part of the change in allocative efficiency. Treating business stealing as a transfer therefore omits the change in allocative efficiency. Rivals' profits, and so their market values, move by this welfare cost plus transfers of either sign, so no fixed share of their change recovers it, and they can misstate it even in sign (Proposition~\ref{prop:bs}). In the U.S. economy, rivals' lost output valued at their markups accounts for \IWAppSegRivalShareMedian{} of the fall in their profits at the sales-weighted median innovation, not one half. With the technology spillovers that \citet{BloomSchankermanVanReenen2013} estimate, the social return is \IWAppSegSpRatio{} times the private return, whereas the output-based accounting that they use, applied to the same economy with their estimate of business stealing, gives \IWAppSegBSVRatioBSVCreditPrice{}, close to the ratio of their estimated returns of 55.0 and 20.7 percent. The difference comes mainly from that accounting's private return, which credits the innovator with the output that its rivals lose, valued at their prices.

The same separation holds for the R\&D that an innovation induces. In a Markov game of R\&D (Section~\ref{sec:subsidy}), an innovation changes firms' subsequent R\&D decisions. The innovator's own change in R\&D has no first-order value to it, because it already chooses its R\&D optimally, as it does its output. A rival's change in R\&D raises the innovator's profits through the knowledge that spills over to it and lowers them through the business that the rival takes. These R\&D responses change welfare only through the responding firms' R\&D wedges, the gaps between their social and private marginal returns, as reallocating inputs changes welfare only through markups, so the reallocation of R\&D across firms is external to the innovator, as allocative efficiency is.

For policy, the firm's first-order condition equates its private marginal return to its marginal cost of R\&D, so when that cost is also society's, the R\&D subsidy that closes the wedge at the current equilibrium depends on the two marginal returns only through their ratio. The subsidy's allocative-efficiency component has no fixed sign, so it does not by itself justify subsidizing R\&D. When R\&D draws factors from production in proportion to their employment and this causes no reallocation, as in Section~\ref{sec:application}, a dollar of it costs society the aggregate markup, the mean of the producers' ratios $\mu_k\tilde\lambda_k/\lambda_k$ weighted by their payments to factors, \IWAppSegPremium{} dollars in the U.S. economy. Net of this cost, the appropriability gap favors R\&D exactly for innovators whose ratio exceeds the aggregate markup, and at the mean the two cancel, as uniform production subsidies have no first-order effect in \citet{Liu2019}. Without technology spillovers, the static return to the R\&D of innovators with less than 5 percent of their market, whose change in allocative efficiency is negative, falls short of its social cost for \IWAppFiveRDNegSeg{} percent of it with quantity competition and \IWAppFiveRDNegSegBert{} percent with price competition. At the firm's first-order condition, existing R\&D tax credits lower the private marginal return required to cover a dollar of R\&D spending and thus widen this shortfall. With the technology spillovers of \citet{BloomSchankermanVanReenen2013}, by contrast, the social return exceeds that social cost for \IWAppSegRDPosShareSp{} percent of the R\&D of the firms that generate them, so the case for subsidizing the R\&D of small innovators rests on spillovers.

\subsection{Related literature}\label{sec:literature}

When markups are chosen, the change in allocative efficiency reaches an innovator's profit only through other agents' responses, and Theorem~\ref{thm:main} separates it from the appropriability gaps of all productivity changes. Partial-equilibrium models of strategic R\&D in industrial organization \citep{BranderSpencer1983,Suzumura1992,LeahyNeary1997,FarrellShapiro1990,FarrellShapiroAER1990} combine the two in the Harberger term $\sum_k(p_k-\mc_k)\,dy_k$ (Section~\ref{sec:bridge}), which also appears in the argument of \citet{GreenwaldStiglitz1986} and the identity of \citet{Harberger1971} and \citet{BaqaeeBurstein2026}, and which \citet{BaqaeeFarhi2020} distinguish from allocative efficiency in the decompositions of \citet{PetrinLevinsohn2012} and \citet{BasuFernald2002}. The responses to a small innovator's cost reduction affect its profit only at second order in its size, under quantity or price competition and for any smooth aggregator of differentiated goods (Corollary~\ref{cor:small}, Proposition~\ref{prop:smallgeneral}). In the homogeneous-good Cournot markets of \citet{LahiriOno1988} and \citet{Zhao2001}, a small enough firm's marginal cost cut lowers welfare by shifting output away from rivals with higher margins, the force by which small innovators lower allocative efficiency here, but its induced price term does not vanish. \citet{Katz1986} and \citet{KatzOrdover1990}, and \citet{MukherjeeEtAl2026} for new products, split the wedge by who gains or loses.

Theorem~\ref{thm:main} provides a counterpart for incumbents' innovation to the finding of \citet{BaqaeeFarhiSangani2024} that entrants' profits leave out the welfare gains from reallocation toward high-markup firms. Here, producers choose markups rather than taking them as given \citep{BaqaeeFarhi2020}, and reallocation can raise or lower welfare. The appropriability gap compares cost with the cost-based Domar weight, as do analyses of entry \citep{BaqaeeFarhiEntry}, technology adoption \citep{BueraTrachter2024,AlvarezBueraTrachter2026}, and production subsidies \citep{Liu2019}, so the innovator's share of its technology effect falls with its own and downstream markups, as in \citet{AdhamiBrouilletteRockall2025} for a monopolistically competitive firm. Unlike the new varieties and captured product lines of \citet{Acemoglu2023} and \citet{DeRidder2024}, whose private value rises with the markup, the process innovations studied here include incumbents' improvements of their own products. Such improvements are the main source of U.S. productivity growth \citep{GarciaMaciaHsiehKlenow2019} and are productivity gains when quality multiplies quantity. Incumbents' innovation in \citet{Peters2020} is the zero-pass-through case of Section~\ref{sec:onemarket}, and in \citet{AghionEtAl2025} markups attract R\&D whatever their source.

Theorem~\ref{thm:dynamic} decomposes the gap between the social and private marginal returns to a firm's R\&D in a Markov game, whereas structural studies of dynamic oligopoly compare welfare totals or decompose private incentives \citep{GoettlerGordon2011,Igami2017,BesankoDoraszelskiKryukov2019,CavenaileCelikTian} and studies of knowledge and production networks characterize the allocation of R\&D \citep{LiuMa2021,KonigLiuZenou2019,KoikeOkumura2026}. Each induced R\&D response enters welfare through the responding firm's R\&D wedge and the innovator's return through the value of that R\&D to it, as rivals' strategies enter the costate equations of feedback Nash equilibria \citep{DocknerEtAl2000} and later choices enter generalized Euler equations \citep{KrusellKuruscuSmith2002,KleinKrusellRiosRull2008}.

Section~\ref{sec:environment} sets up the economy, Section~\ref{sec:main} states and proves the main result, and Section~\ref{sec:primitives} characterizes the sign of the change in allocative efficiency. Section~\ref{sec:measured} relates the result to the Harberger term, business stealing, and estimated returns to R\&D, Section~\ref{sec:policy} draws the implications for R\&D policy, and Section~\ref{sec:application} measures the terms of the wedge for U.S. public firms. Section~\ref{sec:conclusion} concludes. The Appendix proves the main results, and the Online Appendix contains the R\&D game, the other proofs, extensions, example economies, and details of the application.

\section{Environment}\label{sec:environment}

The economy follows \citet{BaqaeeFarhi2020}, with arbitrary production functions, input--output linkages, and factors, except that markups arise from producers' optimization rather than from exogenous wedges and that final demand and factor supplies are more general.

\paragraph{Production and markets.} There are $N$ goods and $F$ primary factors. Good $k\in\{1,\dots,N\}$ is produced by producer $k$ according to
\[
	y_k=A_k\,G_k(\bm x_k,\bm l_k),
\]
where $\bm x_k=(x_{km})_m$ are its intermediate inputs, $\bm l_k=(l_{kf})_f$ are its factor inputs, $G_k$ is differentiable, and $A_k$ is Hicks-neutral productivity. Some producers may be competitive aggregators of other goods, such as the CES nests of Section~\ref{sec:primitives}. Markets clear,
\[
	y_m=c_m+\sum_kx_{km},
	\qquad
	L_f=\sum_kl_{kf},
\]
where $c_m$ is final consumption of good $m$ and $L_f$ is the supply of factor $f$. Good $m$ sells at price $p_m$, and factor $f$ at price $w_f$.

\paragraph{Household.} Prices and welfare are measured in a numeraire whose marginal utility to the household is one, and producers maximize profit in this numeraire, a normalization that is part of their conduct \citep{GabszewiczVial1972}. Welfare is
\[
	W=U(\bm c;\bm b)-\sum_fV_f(L_f),
\]
where $U$ is differentiable and increasing in $\bm c$, $b_m$ is the quality of good $m$, which enters only $U$ and only for goods sold exclusively to final consumers, and $V_f$ is the disutility of supplying factor $f$. The household takes prices as given, so $p_m=U_{c_m}$ for every consumed good and $w_f=V_f'(L_f)$ for every factor in elastic supply, and profits are rebated lump sum. A factor in inelastic supply has $dL_f=0$.

\begin{assumption}[Price-taking input demand]
	\label{ass:inputs}
	Each producer takes input prices $\bm p$ and factor prices $\bm w$ as given and minimizes cost. Its cost function is $C_k(y_k;\bm p,\bm w,A_k)$, its marginal cost is $\mc_k=\partial C_k/\partial y_k$, and
	\[
		A_k\frac{\partial G_k}{\partial x_{km}}=\frac{p_m}{\mc_k},
		\qquad
		A_k\frac{\partial G_k}{\partial l_{kf}}=\frac{w_f}{\mc_k}.
	\]
\end{assumption}

\begin{assumption}[Optimizing producers]
	\label{ass:conduct}
	Producer $k$ chooses $y_k$ to maximize $\pi_k=p_ky_k-C_k$, taking all input and factor prices as given. It perceives the inverse demand $p_k=\varphi_k(y_k,b_k)$, the price at which its buyers, including competitive aggregators and downstream producers, purchase $y_k$ when the outputs and qualities of the other producers selling to them are held fixed. Perceived inverse demand is weakly decreasing, $\varphi_{k,y}=\partial\varphi_k/\partial y_k\le0$, with $\varphi_{k,y}=0$ for competitive producers, which take their price as given. The first-order condition is
	\[
		\varphi_k+y_k\,\varphi_{k,y}=\mc_k,
	\]
	so the markup $\mu_k\equiv p_k/\mc_k$ is at least one.
\end{assumption}

Assumption~\ref{ass:conduct} covers Cournot oligopoly with any demand system, monopolistic competition, and competitive producers. Remark~\ref{rem:bertrand} in Section~\ref{sec:main} treats price competition, and \ifjournal the working-paper version of this paper, available from the authors, treats\else Remarks~\ref{rem:multiproduct} and~\ref{rem:monopsony} in Online Appendix~\ref{oa:extensions} treat\fi{} multi-product producers and monopsony.

\paragraph{Shares and Domar weights.} Final expenditure is $E=\sum_mp_mc_m$, with shares $\beta_m=p_mc_m/E$. The cost-based and revenue-based input shares of producer $k$ are
\[
	\widetilde\Omega_{km}=\frac{p_mx_{km}}{\mc_k\,y_k},
	\qquad
	\Omega_{km}=\frac{p_mx_{km}}{p_ky_k}=\frac{\widetilde\Omega_{km}}{\mu_k}.
\]
Factors are treated as additional nodes that use no inputs and are not consumed, so their rows of $\widetilde\Omega$ and $\Omega$ and their entries of $\beta$ are zero. In producer $k$'s row, the shares of factor $f$ are $\widetilde\Omega_{kf}=w_fl_{kf}/(\mc_ky_k)$ and $\Omega_{kf}=w_fl_{kf}/(p_ky_k)$. By Assumption~\ref{ass:inputs}, each cost-based share of producer $k$ is the elasticity of its output with respect to the corresponding input. With constant returns, $\mc_ky_k$ is producer $k$'s cost, and the cost-based shares are the shares of its inputs in that cost.

The Leontief inverses are
\[
	\widetilde\Psi=(I-\widetilde\Omega)^{-1}=I+\widetilde\Omega+\widetilde\Omega^2+\cdots,
	\qquad
	\Psi=(I-\Omega)^{-1}=I+\Omega+\Omega^2+\cdots,
\]
where the series are assumed to converge, as they do under constant returns when the supply chain of every good uses some factor. The cost-based and revenue-based Domar weights are $\tilde\lambda^\top=\beta^\top\widetilde\Psi$ and $\lambda^\top=\beta^\top\Psi$. Market clearing gives $p_ky_k=E\lambda_k$, so $\lambda_k$ is producer $k$'s sales relative to final expenditure. The cost-based weight $\tilde\lambda_k$ instead sums, over every chain of purchases that leads from a final good back to producer $k$, the expenditure share of that final good times the product of the cost-based input shares along the chain, which run through the producers downstream of $k$.

The factor columns $\widetilde\Psi_{(f)}$ and $\Psi_{(f)}$ of the Leontief inverses give the share of factor $f$ in a producer's cost and in its revenue along its whole supply chain, with $\widetilde\Psi_{ff}=\Psi_{ff}=1$. The income share of factor $f$ is $\Lambda_f=\beta^\top\Psi_{(f)}=w_fL_f/E$, and its cost-based share is $\widetilde\Lambda_f=\beta^\top\widetilde\Psi_{(f)}=\sum_k\tilde\lambda_k\widetilde\Omega_{kf}$.

\paragraph{Innovation.} The set of producers and goods is fixed, so innovation changes the technologies of existing producers of existing goods, and Remark~\ref{rem:entry} in Online Appendix~\ref{oa:extensions} treats free entry. Let $\tau$ collect the technologies $(A_k,b_k)$ of all producers. An innovation by producer $i$, the innovator, is a differential change $d\tau=(\dlog A_k,db_k)_k$, which also includes spillovers to other producers' technologies and the changes in technique that the innovation induces (Remark~\ref{rem:technique}). A change in $A_i$ with $db_i=0$ is a process innovation, and a change in $b_i$ with $\dlog A_i=0$ a quality innovation. Proposition~\ref{prop:general} in Online Appendix~\ref{oa:extensions} lets technology enter production functions and utility in any differentiable way. We assume that the equilibrium is locally a continuously differentiable function of $\tau$, as it is when the equilibrium conditions are continuously differentiable and their Jacobian is nonsingular, and $d$ denotes the resulting first-order change in an equilibrium object, along any such selection. The social return to the innovation is $dW$, the private return of innovator $i$ is $d\pi_i$, and we call their difference $dW-d\pi_i$ the wedge. The cost of the innovation is omitted from both returns until Section~\ref{sec:subsidy}.

\paragraph{Benchmarks.} Section~\ref{sec:main} decomposes the wedge by valuing technology changes at two benchmarks. Society's benchmark is a fixed allocation, at which every user, the household included, keeps its share of every good and factor, and factor supplies are held fixed. A producer's own benchmark holds its output and its input prices fixed at their current values.

\begin{definition}[Technology values]
	\label{def:values}
	For each producer $k$, the technology effect $\T_k$, the value of its technology change at a fixed allocation, and the private technology value $\Pv_k$, the value of the same change to producer $k$ at its current output and input prices, are
	\begin{align*}
		\T_k&\equiv E\tilde\lambda_k\,\dlog A_k+U_{b_k}\,db_k,\\
		\Pv_k&\equiv\mc_k\,y_k\,\dlog A_k+y_k\,\varphi_{k,b}\,db_k,
	\end{align*}
	where $U_{b_k}=\partial U/\partial b_k$ and $\varphi_{k,b}=\partial\varphi_k/\partial b_k$.
\end{definition}

Lemma~\ref{lem:allocation} shows that $\T_k$ is the technology effect of \citet{BaqaeeFarhi2020}. For a productivity gain, $\Pv_k$ is the cost saving $\mc_ky_k\,\dlog A_k$, which values at their prices the inputs that the producer saves at its current output, whereas $\T_k$ values at consumer prices the final consumption that its additional output yields at a fixed allocation, as the output flows to its buyers and raises their output in turn. The two differ unless neither producer $k$ nor any producer downstream of it charges a markup, and we call $\T_k-\Pv_k$ the appropriability gap of producer $k$'s technology change.
\section{Private and Social Returns}\label{sec:main}

This section compares each return with its value at the corresponding benchmark (Sections~\ref{sec:lemmas} and~\ref{sec:theorem}), characterizes the appropriability gap and the induced price term (Sections~\ref{sec:gap} and~\ref{sec:os}), and shows that for a small innovator the induced price term is of second order under quantity or price competition (Section~\ref{sec:small}).

\subsection{Two benchmarks}\label{sec:lemmas}

\begin{lemma}[Welfare at a fixed allocation]\label{lem:allocation}
	Let the allocation $\mathcal X$ collect the shares in which each good and factor is divided among its users, $\mathcal X_{km}=x_{km}/y_m$, $\mathcal X_{kf}=l_{kf}/L_f$, and $\mathcal X_{cm}=c_m/y_m$, and let $W(\tau,\mathcal X,\bm L)$ be the welfare attained with technologies $\tau$, allocation $\mathcal X$, and factor supplies $\bm L$. Under Assumption~\ref{ass:inputs}, at the equilibrium,
	\[
		\frac{\partial W}{\partial\log A_k}\Big|_{\mathcal X,\bm L}\dlog A_k+\frac{\partial W}{\partial b_k}\Big|_{\mathcal X,\bm L}db_k=\T_k
	\]
	for every producer $k$.
\end{lemma}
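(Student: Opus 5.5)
The plan is to compute the first-order change in welfare directly. We perturb $A_k$ (or $b_k$) while holding the shares $\mathcal X$ and factor supplies $\bm L$ fixed, and we express the resulting change in every producer's output through the production functions. With $\mathcal X$ and $\bm L$ fixed, each producer's inputs are $x_{nm}=\mathcal X_{nm}y_m$ and $l_{nf}=\mathcal X_{nf}L_f$. Outputs are therefore pinned down implicitly by the system
\[
	y_n=A_n\,G_n\bigl((\mathcal X_{nm}y_m)_m,(\mathcal X_{nf}L_f)_f\bigr),\qquad n=1,\dots,N .
\]
Consumption is $c_m=\mathcal X_{cm}y_m$, so $\dlog c_m=\dlog y_m$. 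Welfare changes only through $U$, because $\bm L$ and hence $\sum_fV_f(L_f)$ are held fixed.

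The first step is to log-differentiate the system at the equilibrium. Since the shares are fixed, $\dlog x_{nm}=\dlog y_m$ and $\dlog l_{nf}=0$, which gives
\[
	\dlog y_n=\delta_{nk}\,\dlog A_k+\sum_m\frac{\partial\log G_n}{\partial\log x_{nm}}\,\dlog y_m .
\]
By Assumption~\ref{ass:inputs}, the elasticity of output with respect to input $m$ is $A_nG_{n,x_{nm}}x_{nm}/y_n=p_mx_{nm}/(\mc_ny_n)=\widetilde\Omega_{nm}$, as the text already notes after defining the shares. Restricting to goods, the linear system becomes $\dlog\bm y=e_k\,\dlog A_k+\widetilde\Omega\,\dlog\bm y$. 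The factor rows contribute nothing because $\dlog L_f=0$. Hence $\dlog\bm y=\widetilde\Psi e_k\,\dlog A_k$, using the assumed convergence of the Leontief series.

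The second step is to value this change. We have $dW=\sum_mU_{c_m}\,dc_m=\sum_mU_{c_m}c_m\,\dlog y_m$. At the equilibrium $U_{c_m}=p_m$ for every consumed good, and unconsumed goods have $c_m=0$ and so contribute nothing. Therefore $dW=\sum_mp_mc_m\,\dlog y_m=E\,\beta^\top\widetilde\Psi e_k\,\dlog A_k=E\tilde\lambda_k\,\dlog A_k$.

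For quality, $b_k$ enters only $U$ and not any production function. Holding $\mathcal X$ and $\bm L$ fixed therefore leaves all outputs and consumption unchanged, and the derivative is just $U_{b_k}\,db_k$. Adding the two pieces gives $\T_k$.

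The only delicate point is justifying the implicit-function step. We must show that the fixed-allocation outputs are locally a differentiable function of $\tau$, and that the linearized system can be inverted. I would handle this by noting that the Jacobian of the system in $\log\bm y$ is $I-\widetilde\Omega$. This matrix is invertible precisely because the series defining $\widetilde\Psi$ converges, so the implicit function theorem applies at the equilibrium point. Competitive aggregators need no separate treatment, since they are producers with $\mu=1$ whose cost shares satisfy the same Assumption~\ref{ass:inputs}.
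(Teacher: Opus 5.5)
Your proof is correct and follows the paper's own argument: you log-differentiate the production functions at fixed shares, identify the output elasticities with $\widetilde\Omega$ through Assumption~\ref{ass:inputs}, invert to get $\dlog\bm y=\widetilde\Psi\,\dlog\bm A$, and value the change in consumption at $p_m=U_{c_m}$, with quality entering only through $U_{b_k}$. Your implicit-function step is an addition that the paper leaves implicit. It needs the $G_n$ to be continuously differentiable rather than merely differentiable, and convergence of the series is sufficient for $I-\widetilde\Omega$ to be invertible but not equivalent to it.
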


\begin{proof}
	At a fixed $\mathcal X$, $x_{km}=\mathcal X_{km}y_m$ and $l_{kf}=\mathcal X_{kf}L_f$ with $L_f$ fixed, so
	\[
		\dlog y_k=\dlog A_k+\sum_m\frac{\partial\log G_k}{\partial\log x_{km}}\,\dlog y_m .
	\]
	By Assumption~\ref{ass:inputs}, $\partial\log G_k/\partial\log x_{km}=A_k(\partial G_k/\partial x_{km})\,x_{km}/y_k=p_mx_{km}/(\mc_ky_k)=\widetilde\Omega_{km}$. Hence $\dlog\bm y=\widetilde\Psi\,\dlog\bm A$. Since $c_m=\mathcal X_{cm}y_m$, $\dlog c_m=\dlog y_m$, and with $p_m=U_{c_m}$ and fixed factor supplies,
	\[
		dW=\sum_mp_mc_m\,\dlog y_m+\sum_kU_{b_k}db_k=E\,\beta^\top\widetilde\Psi\,\dlog\bm A+\sum_kU_{b_k}db_k .
	\]
	Because $\beta^\top\widetilde\Psi=\tilde\lambda^\top$, the $k$th terms are $E\tilde\lambda_k\,\dlog A_k+U_{b_k}\,db_k=\T_k$.
\end{proof}

\begin{definition}[Change in allocative efficiency]\label{def:ae}
	For an innovation $d\tau$, the change in allocative efficiency is the welfare effect of the changes in the allocation and in factor supplies that the innovation causes,
	\[
		\AEff\equiv\frac{\partial W}{\partial\mathcal X}\,d\mathcal X+\sum_f\frac{\partial W}{\partial L_f}\,dL_f ,
	\]
	with the derivatives evaluated at the equilibrium. The second term is zero when factor supplies are fixed (Remark~\ref{rem:factorsupply} in Online Appendix~\ref{oa:extensions}).
\end{definition}

This is the change in allocative efficiency of \citet{BaqaeeFarhi2020}. For a process innovation in an economy in which $U$ is homothetic, technologies have constant returns, and factor supplies are fixed, their Theorem~1 expresses it through the changes in markups and factor income shares, $\AEff=-E\,(\tilde\lambda^\top\dlog\bm\mu+\widetilde\Lambda^\top\dlog\bm\Lambda)$. In an efficient economy $\AEff=0$ to first order, and when every markup is one, $\tilde\lambda=\lambda$, so the technology effect weights productivity changes by sales \citep{Hulten1978}.

\begin{corollary}[Allocative efficiency at markup wedges]\label{cor:aeformula}
	Under Assumption~\ref{ass:inputs},
	\begin{equation}
		\AEff=\sum_k(\mu_k-1)\,dI_k-E\sum_k(\tilde\lambda_k-\lambda_k)\,\dlog A_k ,
		\label{eq:ae}
	\end{equation}
	where $dI_k\equiv\sum_mp_m\,dx_{km}+\sum_fw_f\,dl_{kf}$ is the change in producer $k$'s input use valued at current prices.
\end{corollary}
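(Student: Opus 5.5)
The plan is to compute the total welfare change $dW$ directly from the resource constraints, and then subtract the technology effects. First I would show that $dW=\sum_k\T_k+\AEff$. Welfare at the equilibrium is $W(\tau,\mathcal X,\bm L)$ evaluated at the equilibrium allocation and factor supplies, so the chain rule gives
\[
dW=\sum_k\Big(\frac{\partial W}{\partial\log A_k}\Big|_{\mathcal X,\bm L}\dlog A_k+\frac{\partial W}{\partial b_k}\Big|_{\mathcal X,\bm L}db_k\Big)+\frac{\partial W}{\partial\mathcal X}\,d\mathcal X+\sum_f\frac{\partial W}{\partial L_f}\,dL_f .
\]
By Lemma~\ref{lem:allocation} the first sum is $\sum_k\T_k$, and by Definition~\ref{def:ae} the remaining terms are $\AEff$. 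Hence $\AEff=dW-\sum_k\T_k$, and it remains to express $dW$ in terms of input changes.

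Second, I would differentiate welfare directly. Using $p_m=U_{c_m}$, together with $w_f=V_f'$ for elastic factors and $dL_f=0$ for inelastic ones, gives
\[
dW=\sum_mp_m\,dc_m+\sum_kU_{b_k}db_k-\sum_fw_f\,dL_f .
\]
Market clearing gives $dc_m=dy_m-\sum_kdx_{km}$ and $dL_f=\sum_kdl_{kf}$. Collecting terms, the consumption and factor-supply terms become
\[
\sum_mp_m\,dc_m-\sum_fw_f\,dL_f=\sum_kp_k\,dy_k-\sum_kdI_k ,
\]
where the sum over goods $k$ runs over producers only, since factor nodes produce nothing.

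Third, I would apply Assumption~\ref{ass:inputs} to each producer's output change. Differentiating $y_k=A_kG_k$ gives $dy_k=y_k\,\dlog A_k+A_k\sum_m G_{k,x_{km}}dx_{km}+A_k\sum_fG_{k,l_{kf}}dl_{kf}$. Substituting $A_kG_{k,x_{km}}=p_m/\mc_k$ and $A_kG_{k,l_{kf}}=w_f/\mc_k$ yields $\mc_k\,dy_k=\mc_ky_k\,\dlog A_k+dI_k$. Multiplying by $\mu_k$ and using $p_ky_k=E\lambda_k$ gives
\[
p_k\,dy_k=E\lambda_k\,\dlog A_k+\mu_k\,dI_k .
\]
Substituting this into the expression for $dW$ gives
\[
dW=\sum_k(\mu_k-1)\,dI_k+E\sum_k\lambda_k\,\dlog A_k+\sum_kU_{b_k}db_k .
\]
Subtracting $\sum_k\T_k=E\sum_k\tilde\lambda_k\,\dlog A_k+\sum_kU_{b_k}db_k$ cancels the quality terms and leaves \eqref{eq:ae}.

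The only delicate step is the bookkeeping in the second step. I need to make sure that factor nodes are excluded from the producer sums, since they have zero rows of $\Omega$ and $\beta$. I also need to make sure that the sign convention for inelastic factors is handled correctly, since those contribute nothing because $dL_f=0$. Beyond that, the argument is a direct calculation and needs no assumption of returns to scale or homotheticity.
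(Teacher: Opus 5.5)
Your proposal is correct and is essentially the paper's proof. It combines $dW=\sum_k\T_k+\AEff$ (from Lemma~\ref{lem:allocation}, the chain rule, and Definition~\ref{def:ae}) with the direct computation $dW=\sum_kp_k\,dy_k-\sum_kdI_k+\sum_kU_{b_k}\,db_k$ from household optimality and market clearing, then applies $dI_k=\mc_k(dy_k-y_k\,\dlog A_k)$ from Assumption~\ref{ass:inputs} and subtracts the technology effects, differing from the paper only in the immaterial order in which the two expressions for $dW$ are derived.
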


The proof is in Appendix~\ref{app:proofs}. The first term of~\eqref{eq:ae} values each producer's change in input use at its markup wedge $\mu_k-1$, the excess of the value of the output that a marginal unit of spending on $k$'s inputs yields over its cost. The second term removes the contribution of input changes that would occur even at a fixed allocation. The change in allocative efficiency therefore reflects only departures from those benchmark input changes, valued at markup wedges.

\begin{lemma}[Private return]\label{lem:private}
	Under Assumptions~\ref{ass:inputs} and~\ref{ass:conduct},
	\[
		d\pi_i=\Pv_i+\IP_i,
		\qquad
		\IP_i\equiv y_i\,(dp_i)^{-i}-\sum_mx_{im}\,dp_m-\sum_fl_{if}\,dw_f,
	\]
	where $\IP_i$ is the induced price term and $(dp_i)^{-i}\equiv dp_i-\varphi_{i,y}\,dy_i-\varphi_{i,b}\,db_i$ is the part of the change in $p_i$ that producer $i$ does not attribute to its own output or quality.
\end{lemma}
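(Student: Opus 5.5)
The proof differentiates the profit identity $\pi_i=p_iy_i-C_i(y_i;\bm p,\bm w,A_i)$ along the equilibrium path and then applies the envelope theorem twice: once for cost minimization (Assumption~\ref{ass:inputs}) and once for the output choice (Assumption~\ref{ass:conduct}). Totally differentiating gives
\[
d\pi_i=y_i\,dp_i+p_i\,dy_i-\frac{\partial C_i}{\partial y_i}dy_i-\sum_m\frac{\partial C_i}{\partial p_m}dp_m-\sum_f\frac{\partial C_i}{\partial w_f}dw_f-\frac{\partial C_i}{\partial\log A_i}\dlog A_i .
\]
We evaluate each cost derivative in turn. By Shephard's lemma, which holds because the producer minimizes cost at given prices, $\partial C_i/\partial p_m=x_{im}$ and $\partial C_i/\partial w_f=l_{if}$. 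For the productivity derivative, note that $y_i=A_iG_i$, so producing $y_i$ requires $G_i=y_i/A_i$. Then $C_i(y_i;\bm p,\bm w,A_i)=\hat C_i(y_i/A_i;\bm p,\bm w)$, where $\hat C_i$ is the cost of attaining a given level of $G_i$. Hence
\[
\frac{\partial C_i}{\partial\log A_i}=-\frac{y_i}{A_i}\hat C_i'=-y_i\frac{\partial C_i}{\partial y_i}=-\mc_i\,y_i ,
\]
so the last term contributes $+\mc_iy_i\,\dlog A_i$, which is the process part of $\Pv_i$. No returns-to-scale assumption is needed for this step.

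Next we split $dp_i$ using the definition of $(dp_i)^{-i}$:
\[
y_i\,dp_i=y_i\,\varphi_{i,y}\,dy_i+y_i\,\varphi_{i,b}\,db_i+y_i\,(dp_i)^{-i}.
\]
Collecting the terms in $dy_i$ gives $(p_i+y_i\varphi_{i,y}-\mc_i)\,dy_i$. At equilibrium $p_i=\varphi_i(y_i,b_i)$, so this coefficient is zero by the first-order condition in Assumption~\ref{ass:conduct}. This is the step that removes the private value of the innovator's own output adjustment. The term $y_i\varphi_{i,b}\,db_i$ is the quality part of $\Pv_i$. What remains is $y_i(dp_i)^{-i}-\sum_mx_{im}dp_m-\sum_fl_{if}dw_f=\IP_i$, which gives $d\pi_i=\Pv_i+\IP_i$.

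The main obstacle is bookkeeping rather than substance. First, the cost function's dependence on $A_i$ must be handled carefully so that the technology term equals exactly $\mc_iy_i\,\dlog A_i$ without assuming constant returns; the homogeneity argument through $\hat C_i(y_i/A_i)$ settles this. Second, the equilibrium selection must be differentiable, which the paper assumes, so that the chain rule applies and the decomposition of $dp_i$ is well defined. If the perceived inverse demand $\varphi_i$ is evaluated at the equilibrium quantities, the identity $p_i=\varphi_i$ holds at the point, and that is all the first-order condition needs.
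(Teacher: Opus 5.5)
Your proof is correct and takes the same route as the paper's. The paper also differentiates $\pi_i=p_iy_i-C_i$, uses Shephard's lemma for $\partial C_i/\partial p_m$ and $\partial C_i/\partial w_f$, and uses Hicks neutrality (cost depends on $y_i$ and $A_i$ only through $y_i/A_i$) to get $\partial C_i/\partial\log A_i=-\mc_i y_i$. It then substitutes $dp_i=\varphi_{i,y}\,dy_i+\varphi_{i,b}\,db_i+(dp_i)^{-i}$ and removes the $dy_i$ terms with the first-order condition.
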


\begin{proof}
	By Shephard's lemma, $dC_i=\mc_i\,dy_i-\mc_iy_i\,\dlog A_i+\sum_mx_{im}\,dp_m+\sum_fl_{if}\,dw_f$. Hence
	\[
		d\pi_i=y_i\,dp_i+(p_i-\mc_i)\,dy_i+\mc_iy_i\,\dlog A_i-\sum_mx_{im}\,dp_m-\sum_fl_{if}\,dw_f .
	\]
	Substituting $dp_i=\varphi_{i,y}\,dy_i+\varphi_{i,b}\,db_i+(dp_i)^{-i}$, the terms in $dy_i$ are $(p_i-\mc_i+y_i\varphi_{i,y})\,dy_i=0$ by Assumption~\ref{ass:conduct}.
\end{proof}

Lemma~\ref{lem:private} is an envelope result. Cost minimization gives the change in the innovator's input mix no first-order effect on its cost, and profit maximization along its perceived inverse demand makes the markup that it earns on an output change, $(p_i-\mc_i)\,dy_i$, exactly offset by the fall in revenue on its current output that it attributes to that change, $y_i\varphi_{i,y}\,dy_i$. The innovator's own adjustments, including the change in its markup that it attributes to its own output, therefore have no first-order private value, even though they can move resources between producers with different markups. Besides $\Pv_i$, the innovator receives only $\IP_i$, the changes in its input and factor prices and the part of the change in its own price that it does not attribute to its own output or quality, valued at its current quantities.

\subsection{The decomposition}\label{sec:theorem}

\begin{theorem}[Social and private returns to innovation]\label{thm:main}
	Under Assumptions~\ref{ass:inputs} and~\ref{ass:conduct}, for any innovation $d\tau$ with innovator $i$, the social return is
	\[
		dW=\sum_k\T_k+\AEff .
	\]
	The private return is
	\[
		d\pi_i=\Pv_i+\IP_i ,
	\]
	so the innovator's own changes in output and input mix have no first-order effect on its profit, and the rest of its profit change consists of price changes valued at its current quantities. Hence
	\begin{equation}
		dW-d\pi_i=(\T_i-\Pv_i)+\sum_{j\neq i}\T_j+\AEff-\IP_i ,
		\label{eq:main}
	\end{equation}
	and the change in allocative efficiency, of either sign, enters the wedge one for one.
\end{theorem}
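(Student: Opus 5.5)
The plan is to treat welfare as a function $W(\tau,\mathcal X,\bm L)$ of technologies, allocation, and factor supplies, as in Lemma~\ref{lem:allocation}, and to differentiate it along the equilibrium selection. Because the equilibrium is locally continuously differentiable in $\tau$, the total first-order change splits by the chain rule into a technology part and an allocation part:
\[
	dW=\sum_k\Big(\frac{\partial W}{\partial\log A_k}\Big|_{\mathcal X,\bm L}\dlog A_k+\frac{\partial W}{\partial b_k}\Big|_{\mathcal X,\bm L}db_k\Big)+\frac{\partial W}{\partial\mathcal X}\,d\mathcal X+\sum_f\frac{\partial W}{\partial L_f}\,dL_f .
\]
I will then invoke Lemma~\ref{lem:allocation} to identify each producer's bracket with $\T_k$. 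I will invoke Definition~\ref{def:ae} to identify the remaining two terms with $\AEff$. Together these give $dW=\sum_k\T_k+\AEff$.

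The one point to check here is that $W$, written as a function of $(\tau,\mathcal X,\bm L)$, is well defined and differentiable at the equilibrium. Given the shares and factor supplies, outputs are determined by the system $y_k=A_kG_k(\mathcal X_{k\cdot}\bm y,\mathcal X_{k\cdot}\bm L)$. Near the equilibrium this system has a locally unique differentiable solution, because its Jacobian is $I-\widetilde\Omega$ in logs, and that matrix is invertible under the assumed convergence of $\widetilde\Psi$. Consumption then follows as $c_m=\mathcal X_{cm}y_m$. This is the implicit step already used in the proof of Lemma~\ref{lem:allocation}, so I only need to note it.

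The private part is exactly Lemma~\ref{lem:private}, $d\pi_i=\Pv_i+\IP_i$. Its proof already gives the interpretive claims. The terms in $dy_i$ cancel by the first-order condition of Assumption~\ref{ass:conduct}. The input-mix changes drop out through Shephard's lemma, so what is left is $\IP_i$, a set of price changes weighted by current quantities.

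Subtracting the private return from the social return and separating $\T_i$ from $\sum_{j\neq i}\T_j$ yields~\eqref{eq:main}. In that expression $\AEff$ appears with coefficient one.

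The main obstacle is not algebraic. It is making sure that the decomposition of $dW$ is exhaustive and uses the same differential $d$ on both sides. In particular, the quality terms $U_{b_k}db_k$ must be attributed to the technology part rather than to the allocation part. That holds because $b_k$ enters only $U$, so its partial derivative at fixed $\mathcal X$ is exactly $U_{b_k}$.
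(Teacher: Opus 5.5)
Your proposal is correct and follows the paper's proof. The chain rule splits $dW$ into the technology part, which Lemma~\ref{lem:allocation} identifies with $\sum_k\T_k$, and the allocation and factor-supply part, which is $\AEff$ by Definition~\ref{def:ae}; the private return is Lemma~\ref{lem:private}, where the first-order condition and Shephard's lemma remove the innovator's own output and input-mix adjustments. Your extra check that $W(\tau,\mathcal X,\bm L)$ is locally well defined because $I-\widetilde\Omega$ is invertible is a harmless refinement that the paper leaves implicit.
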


\begin{proof}
	By the chain rule, $dW$ is the sum of $(\partial W/\partial\tau)\,d\tau$ and the two terms of Definition~\ref{def:ae}, and Lemma~\ref{lem:allocation} gives $(\partial W/\partial\tau)\,d\tau=\sum_k\T_k$. The private return is Lemma~\ref{lem:private}. In $\IP_i$, the change $(dp_i)^{-i}$ excludes the change in $p_i$ along the perceived inverse demand, through which alone a change in $y_i$ would enter, and by Shephard's lemma the input mix does not enter the cost.
\end{proof}

The two benchmarks separate the welfare effect of reallocation from what the innovator is paid, which other exact splits of the private return do not. By the first display in the proof of Lemma~\ref{lem:private}, without Assumption~\ref{ass:conduct},
\[
	d\pi_i=p_iy_i\,\dlog A_i+(\mu_i-1)\,dI_i+y_i\,dp_i-\sum_mx_{im}\,dp_m-\sum_fl_{if}\,dw_f ,
\]
where $(\mu_i-1)\,dI_i$ is the innovator's own term in~\eqref{eq:ae}, so the innovator appears to receive part of the change in allocative efficiency. Under Assumption~\ref{ass:conduct}, however, the change in its price includes the fall $\varphi_{i,y}\,dy_i$ that it attributes to its own output, and $(\mu_i-1)\,dI_i+y_i\varphi_{i,y}\,dy_i=-(p_i-\mc_i)\,y_i\,\dlog A_i$, which does not depend on how its input use changes. The innovator's own term in $\AEff$ is passed to its buyers through its price.

The first three terms of~\eqref{eq:main} are external. The innovator values its own technology change holding its output and its input prices fixed, and society values it at a fixed allocation, which gives the appropriability gap (Corollary~\ref{cor:gap}). Other producers' technology changes are valued socially at their technology effects, which the private return does not contain. Every first-order change in allocative efficiency caused by an innovation, of either sign, is an externality, because the innovator's own adjustments have no first-order private value and every other agent's adjustment reaches its profit only through $\IP_i$. The welfare effect of the rise in a large innovator's own markup after its productivity gain \citep{BaqaeeFarhi2020}, for example, is part of $\AEff$, whereas the part of that rise that other producers' responses cause raises the innovator's profit through $\IP_i$. The fourth term differs in kind. Under quantity competition, $\IP_i$ consists of price changes valued at the innovator's current quantities, transfers between the innovator and its buyers and suppliers that net out with one household \citep{DavilaKorinek2018}. In an open economy, the part paid by foreign buyers or received by foreign suppliers is a transfer between countries, which national welfare counts as a gain or loss. For a price-taking producer with constant returns, $d\pi_i=0$ and $\IP_i=-\Pv_i$, so the entire social return is external, the appropriability problem of \citet{Arrow1962} in its extreme form.

The argument is that of \citet{GreenwaldStiglitz1986}, in which the price changes that a small change causes are transfers and its welfare effect beyond its direct effect is the sum of existing distortions times the induced changes in the distorted quantities. With markups as the distortions, this sum is the Harberger term of Lemma~\ref{lem:accounting}. Theorem~\ref{thm:main} applies the argument to an innovator that perceives its own price effect and separates the Harberger term into the change in allocative efficiency and the appropriability gaps (Section~\ref{sec:bridge}). The envelope step fails when markups are exogenous wedges that producers keep as profit, the reduced form of \citet{BaqaeeFarhi2020}, because a producer's output is then not at its optimum (Remark~\ref{rem:wedges} in Online Appendix~\ref{oa:extensions}).

Although they are measured from different benchmarks, the terms of~\eqref{eq:main} can move together, and in the economy of Corollary~\ref{cor:fringe} the induced price term is proportional to the change in allocative efficiency. They differ, however, in sign and in what measuring them requires. For a productivity gain, the appropriability gap is nonnegative and requires only markups and the input--output network, and with constant returns the induced price term requires only the innovator's sales and markup, their changes, and its private technology value, whereas the change in allocative efficiency has no fixed sign and, for an innovation that has not occurred, depends on substitution and conduct.

\begin{remark}[Price competition]\label{rem:bertrand}
	Suppose instead that producer $k$ chooses its price. It takes as given all input and factor prices and the prices of the other producers selling to its buyers, and it perceives its buyers' demand $y_k=D_k(p_k,b_k)$, with partial derivatives $D_{k,p}<0$ and $D_{k,b}$. Its first-order condition is $y_k+(p_k-\mc_k)D_{k,p}=0$, so $p_k-\mc_k=-y_k/D_{k,p}>0$ and the markup exceeds one. The steps of Lemma~\ref{lem:private} give $d\pi_i=\Pv_i^B+\IP_i^B$, with
	\begin{align*}
		\Pv_i^B&=\mc_iy_i\,\dlog A_i+(p_i-\mc_i)\,D_{i,b}\,db_i,\\
		\IP_i^B&=(p_i-\mc_i)\,(dy_i)^{-i}-\sum_mx_{im}\,dp_m-\sum_fl_{if}\,dw_f,
	\end{align*}
	where $(dy_i)^{-i}\equiv dy_i-D_{i,p}\,dp_i-D_{i,b}\,db_i$ is the part of the change in $y_i$ that producer $i$ does not attribute to its own price or quality. Because Lemma~\ref{lem:allocation} does not use Assumption~\ref{ass:conduct}, Theorem~\ref{thm:main} holds with these terms (Online Appendix~\ref{oa:bertrand}). Cost minimization and the first-order condition for the price again give the innovator's own adjustments no first-order private value, but the induced price term now contains the margin on the output that rivals' price responses divert to the innovator (\citealp{Shapiro2012}; \citealp{WeylFabinger2013}). This margin is an output change weighted by the innovator's markup, not a price change, so the reading of $\IP_i$ as a pure transfer is specific to quantity competition. Under price competition this margin is part of the innovator's own term in~\eqref{eq:ae}, and the change in allocative efficiency is external only net of it. For an innovator that is small in its markets, however, $\IP_i$ and $\IP_i^B$ are both of second order in its size (Corollary~\ref{cor:small}).
\end{remark}

\subsection{The appropriability gap}\label{sec:gap}

\begin{corollary}[Appropriability gap]\label{cor:gap}
	Under Assumptions~\ref{ass:inputs} and~\ref{ass:conduct}, the appropriability gap is the sum of a production term and a Spence term,
	\[
		\T_i-\Pv_i=(E\tilde\lambda_i-\mc_iy_i)\,\dlog A_i+\nu_i\,db_i,
	\]
	where $\nu_i\equiv U_{b_i}-y_i\varphi_{i,b}$ is the quality distortion of \citet{Spence1975}. The coefficient of $\dlog A_i$ in the production term satisfies
	\[
		E\tilde\lambda_i-\mc_iy_i=(p_i-\mc_i)\,y_i+E\,(\tilde\lambda_i-\lambda_i)\ \ge\ 0,
	\]
	with both summands nonnegative. The second summand is zero when no producer downstream of $i$ charges a markup.
\end{corollary}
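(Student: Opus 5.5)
The identity itself is immediate from Definition~\ref{def:values}: subtracting $\Pv_i$ from $\T_i$ term by term gives $(E\tilde\lambda_i-\mc_iy_i)\,\dlog A_i+(U_{b_i}-y_i\varphi_{i,b})\,db_i$, and the second coefficient is $\nu_i$ by definition. The content of the statement is therefore the decomposition of the production coefficient and the signs of its two pieces. I would write $E\tilde\lambda_i-\mc_iy_i=(E\lambda_i-\mc_iy_i)+E(\tilde\lambda_i-\lambda_i)$ and use market clearing, $p_iy_i=E\lambda_i$, to rewrite the first bracket as $(p_i-\mc_i)y_i$. This summand is nonnegative because Assumption~\ref{ass:conduct} gives $p_i-\mc_i=-y_i\varphi_{i,y}\ge0$.

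The substantive step is to show $\tilde\lambda_i\ge\lambda_i$, with equality when no producer downstream of $i$ charges a markup. I would compare the two Leontief series entrywise. Every entry of $\widetilde\Omega$ and $\Omega$ is nonnegative, and $\widetilde\Omega_{km}=\mu_k\Omega_{km}\ge\Omega_{km}$ because $\mu_k\ge1$. Hence $\widetilde\Omega^n\ge\Omega^n$ entrywise for every $n$, and summing the convergent series gives $\widetilde\Psi\ge\Psi$. Premultiplying by $\beta^\top\ge0$ then yields $\tilde\lambda\ge\lambda$.

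For the equality case I would use the path representation given after the definition of the Domar weights. The difference $\tilde\lambda_i-\lambda_i$ is a sum over chains $m_0\to m_1\to\cdots\to m_n=i$, running from a final good $m_0$ back to $i$. Each chain contributes $\beta_{m_0}\prod_{t<n}\Omega_{m_tm_{t+1}}\bigl(\prod_{t<n}\mu_{m_t}-1\bigr)$. The markups in this product belong only to the buyers $m_0,\dots,m_{n-1}$ on the chain, not to $i$ itself, and these buyers are exactly the producers downstream of $i$. If each of them has $\mu=1$, every chain contributes zero and $\tilde\lambda_i=\lambda_i$.

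The main obstacle is bookkeeping: one must be careful that $\mu_i$ never enters this difference. The entries of $\widetilde\Omega$ and $\Omega$ in row $i$ scale the use of $i$'s own inputs, and column $i$ only collects purchases of $i$'s output. The one spot needing attention is therefore the indexing of the entry $(\widetilde\Psi)_{m_0 i}$, checked so that the markups that appear are those of the buyers, that is, row indices of nodes other than $i$ along the chain. Convergence is assumed in Section~\ref{sec:environment}, so the termwise comparison of series is legitimate.
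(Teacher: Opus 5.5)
Your proof is correct and follows the paper's own argument: the identity from Definition~\ref{def:values}, $p_iy_i=E\lambda_i$ with $p_i\ge\mc_i$ from the first-order condition, the entrywise comparison $0\le\Omega\le\widetilde\Omega$ summed through the Leontief series, and a path argument for the equality case. One side remark needs correcting: $\mu_i$ does enter $\tilde\lambda_i-\lambda_i$ when $i$ lies on a cycle of the network, as in the roundabout economy of Online Appendix~\ref{oa:roundabout}, because a chain can then pass through $i$ as an intermediate buyer; the equality case still holds because $i$ is then itself downstream of $i$, which is how the paper's ``one of the buyers along it'' should be read.
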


\begin{proof}
	The first expression follows from Definition~\ref{def:values}. Since $p_iy_i=E\lambda_i$, $E\tilde\lambda_i-\mc_iy_i=(p_i-\mc_i)\,y_i+E\,(\tilde\lambda_i-\lambda_i)$, and $p_i\ge\mc_i$. Because $\mu_k\ge1$, $0\le\Omega\le\widetilde\Omega$ elementwise, so $\Psi=I+\Omega+\Omega^2+\cdots\le I+\widetilde\Omega+\widetilde\Omega^2+\cdots=\widetilde\Psi$ and $\lambda\le\tilde\lambda$. A path from producer $i$ to final demand contributes to $\tilde\lambda_i-\lambda_i$ only if one of the buyers along it has $\mu>1$.
\end{proof}

In the production term, $(p_i-\mc_i)\,y_i\,\dlog A_i$ is the innovator's own markup on the output that the innovation adds at fixed inputs, and $E\,(\tilde\lambda_i-\lambda_i)\,\dlog A_i$ consists of the markups that the producers downstream of the innovator earn on the further output that they produce at a fixed allocation. Because $\mc_iy_i=p_iy_i/\mu_i=E\lambda_i/\mu_i$, the ratio of the technology effect of a process innovation to the innovator's private technology value is

\begin{equation}
	\frac{E\tilde\lambda_i}{\mc_iy_i}=\mu_i\,\frac{\tilde\lambda_i}{\lambda_i}.
	\label{eq:liu}
\end{equation}

The first factor, the innovator's own markup, keeps the gap positive whenever the innovator charges a markup, and the second exceeds one only through the markups downstream of it, so the gap is largest for producers far upstream of marked-up producers. In Cobb--Douglas economies, \citet{BigioLaO2020} show that $\tilde\lambda$ gives the elasticity of aggregate TFP with respect to each producer's productivity whatever the distortions. With one factor, and with markups represented as wedges on purchases, the distortion centrality of \citet{Liu2019} is the ratio in~\eqref{eq:liu} times labor's income share. His welfare results assume that the payments on wedges are dissipated, which removes the change in allocative efficiency.\ifjournal\else{} The same comparison of cost with the cost-based Domar weight governs the social value of entry \citep{BaqaeeFarhiEntry} and of technology adoption \citep{BueraTrachter2024,AlvarezBueraTrachter2026}, and Corollary~\ref{cor:gap} extends it to process innovation with heterogeneous endogenous markups in any network.\fi

\paragraph{A vertical chain.}\label{sec:vertical}
The vertical economy of \citet{BaqaeeFarhi2020} isolates the appropriability gap from the change in allocative efficiency. Producers have constant returns. Labor, in fixed supply, is used only by producer~1, whose good is the only input of producer~2, and so on up to producer~$N$, whose good is the only one consumed. The allocation is unique, so $\AEff=0$ for every innovation, and each $\tilde\lambda_k=1$. Producer $k$'s cost is $\mc_ky_k=E\prod_{j\ge k}\mu_j^{-1}$, because the sales of each producer are the cost of the producer it supplies, so the appropriability gap of a process innovation by $k$ is $\T_k-\Pv_k=E\bigl(1-\prod_{j\ge k}\mu_j^{-1}\bigr)\dlog A_k$, which rises with $k$'s own markup and with every markup downstream of it.

\paragraph{Quality innovation.} For a quality innovation the gap is $\nu_i\,db_i$, the value of the improvement to consumers at a fixed allocation less the revenue that the innovator gains from the price increase at its current output \citep{Spence1975}. It can have either sign, and it is zero when quality shifts inverse demand in parallel.

\subsection{The induced price term}\label{sec:os}

The induced price term collects every effect of the innovation on the innovator's profit other than its private technology value. The next result expresses it through the effect of all price changes on the innovator's profit and, under constant returns, through the innovator's sales and markup.

\begin{corollary}[Induced price term]\label{cor:os}
	For every producer $k$, let $\TR_k\equiv y_k\,dp_k-\sum_mx_{km}\,dp_m-\sum_fl_{kf}\,dw_f$ be the effect of price changes on its profit at its current quantities. Under Assumptions~\ref{ass:inputs} and~\ref{ass:conduct},
	\[
		\IP_i=\TR_i+(p_i-\mc_i)\,dy_i-y_i\,\varphi_{i,b}\,db_i .
	\]
	If producer $i$ has constant returns, then, whatever its conduct,
	\[
		d\pi_i=\Bigl(1-\frac{1}{\mu_i}\Bigr)d(p_iy_i)+\frac{p_iy_i}{\mu_i}\,\dlog\mu_i ,
	\]
	and $\IP_i=d\pi_i-\Pv_i$, which for a process innovation is $\IP_i=p_iy_i\bigl[(1-1/\mu_i)\,\dlog(p_iy_i)-(\dlog A_i-\dlog\mu_i)/\mu_i\bigr]$.
\end{corollary}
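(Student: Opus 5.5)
The plan is to obtain all three claims directly from Lemma~\ref{lem:private} and the definitions, with no new economics.

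\emph{First identity.} I will substitute the definition of $(dp_i)^{-i}$ into $\IP_i$. This gives
\[
\IP_i=y_i\,dp_i-y_i\varphi_{i,y}\,dy_i-y_i\varphi_{i,b}\,db_i-\sum_mx_{im}\,dp_m-\sum_fl_{if}\,dw_f=\TR_i-y_i\varphi_{i,y}\,dy_i-y_i\varphi_{i,b}\,db_i .
\]
The first-order condition of Assumption~\ref{ass:conduct} gives $-y_i\varphi_{i,y}=p_i-\mc_i$, which yields the stated form.

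\emph{Constant-returns formula.} Here I will not use Assumption~\ref{ass:conduct}. With constant returns, $C_i=\mc_iy_i$, so $\pi_i=p_iy_i-\mc_iy_i=p_iy_i(1-1/\mu_i)$, since $\mc_iy_i=p_iy_i/\mu_i$. Differentiating,
\[
d\pi_i=(1-1/\mu_i)\,d(p_iy_i)+p_iy_i\,d\mu_i/\mu_i^2 ,
\]
and the last term equals $(p_iy_i/\mu_i)\,\dlog\mu_i$. This holds whatever the conduct, because it uses only the identity $\pi_i=(1-1/\mu_i)p_iy_i$ under constant returns.

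\emph{Process innovation.} The relation $\IP_i=d\pi_i-\Pv_i$ is Lemma~\ref{lem:private} rearranged. For a process innovation, $db_i=0$ and $\Pv_i=\mc_iy_i\,\dlog A_i=(p_iy_i/\mu_i)\,\dlog A_i$. Writing $d(p_iy_i)=p_iy_i\,\dlog(p_iy_i)$ and factoring out $p_iy_i$ gives
\[
\IP_i=p_iy_i\bigl[(1-1/\mu_i)\,\dlog(p_iy_i)-(\dlog A_i-\dlog\mu_i)/\mu_i\bigr].
\]

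The only point needing care is the scope of ``whatever its conduct.'' The $d\pi_i$ formula holds generally. The expression $\IP_i=d\pi_i-\Pv_i$, however, relies on Lemma~\ref{lem:private}, hence on Assumption~\ref{ass:conduct}, or on its Bertrand analogue in Remark~\ref{rem:bertrand}, where $\Pv_i^B$ coincides with $\Pv_i$ for process innovations. I would note that under either conduct the displayed process-innovation formula is the same, since it depends only on $d\pi_i$ and the cost saving.
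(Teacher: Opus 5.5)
Your proposal is correct and follows the paper's proof step for step: you substitute $(dp_i)^{-i}$ and use the first-order condition for the first identity, differentiate $\pi_i=(1-1/\mu_i)\,p_iy_i$ without any optimality condition, and then use Lemma~\ref{lem:private} with $\Pv_i=p_iy_i\,\dlog A_i/\mu_i$. Your remark on scope also matches the paper's Online Appendix, which says the corollary carries over to price competition with $\IP_i^B$ and $\Pv_i^B$ in place of $\IP_i$ and $\Pv_i$, with the process-innovation form unchanged.
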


The proof is in Online Appendix~\ref{oa:otherproofs}. Under constant returns, the private return, and with it $\IP_i=d\pi_i-\Pv_i$, is measured by the changes in the innovator's sales and markup, without the first-order condition or any assumption on conduct. Conduct matters only for reading $d\pi_i-\Pv_i$ as the price changes that the innovator does not attribute to itself.\ifjournal\else{} With a constant markup, the induced price term is zero when the innovator's sales grow at the rate $\dlog A_i/(\mu_i-1)$, as for a small producer under monopolistic competition, positive when they grow faster, for example because its rivals' outputs fall, and negative when they grow more slowly, for example because its input prices rise.\fi

\subsection{Small innovators}\label{sec:small}

The induced price term is of first order for a price taker and for a dominant producer that takes business from its rivals (Table~\ref{tab:dominant}). The next result shows that for an innovator that is small in its market the term is of second order in its size under quantity or price competition, and that the change in allocative efficiency per unit of the innovator's sales has a finite limit, which Section~\ref{sec:primitives} shows to be nonzero in nested-CES economies except in special cases. It requires the innovator to sell to a CES aggregate of differentiated goods, such as an industry, and leaves the rest of the economy general.

\begin{corollary}[Small innovators]\label{cor:small}
	Under Assumption~\ref{ass:inputs}, and Assumption~\ref{ass:conduct} or Remark~\ref{rem:bertrand} for the producers other than $i$, let producer $i$ have constant returns and sell its whole output to a competitive producer $n$ with a constant-returns CES technology of elasticity $\theta_n$, in which $i$'s cost share is $s_i$, and consider a process innovation $\dlog A_i$ with no other technology change. Whatever $i$'s conduct,
	\begin{equation}
		\begin{split}
			\frac{d\pi_i-\Pv_i}{p_iy_i}={}&\Bigl[\theta_n\Bigl(1-\frac{1}{\mu_i}\Bigr)-1\Bigr](\dlog A_i-\dlog\mu_i)\\
			&+\Bigl(1-\frac{1}{\mu_i}\Bigr)\Bigl[(\theta_n-1)\Bigl(\dlog p_n-\sum_m\widetilde\Omega_{im}\,\dlog p_m\Bigr)+\dlog(p_ny_n)\Bigr],
		\end{split}
		\label{eq:small}
	\end{equation}
	where the sum runs over $i$'s inputs, factors included. Suppose further that $\theta_n>1$ and that $i$ competes with the other suppliers of $n$ in quantities, as in Assumption~\ref{ass:conduct}, or in prices, as in Remark~\ref{rem:bertrand}, perceiving the demand for $n$'s output to have an elasticity $\eta_n>0$, so that $\mu_i=\varepsilon_i/(\varepsilon_i-1)$ with $1/\varepsilon_i=(1-s_i)/\theta_n+s_i/\eta_n$ under quantity competition and $\varepsilon_i=\theta_n(1-s_i)+\eta_ns_i$ under price competition. Then $d\pi_i-\Pv_i$ is $\IP_i$ under quantity competition and $\IP_i^B$ under price competition, so for this process innovation the two take the same value given the observed changes in prices and quantities. Write $n$'s technology with the weight $\delta_i$ on good $i$, so that $s_i=\delta_i(p_i/p_n)^{1-\theta_n}$, as in Online Appendix~\ref{oa:dominant}, and let $\delta_i$ go to zero, with the other primitives fixed, along equilibria that converge to an equilibrium of the economy with $\delta_i=0$ at which the Jacobian of the equilibrium conditions, taken to be twice continuously differentiable, is nonsingular (in variables that remain regular at $\delta_i=0$, where this amounts to regularity of the economy without good $i$). Then $s_i\to0$, $\mu_i\to\theta_n/(\theta_n-1)$, and
	\begin{enumerate}
		\item $(d\pi_i-\Pv_i)/(p_iy_i)$ and $\dlog\mu_i$ are of order $s_i$, whereas $\Pv_i/(p_iy_i)=\dlog A_i/\mu_i$ tends to $(1-1/\theta_n)\,\dlog A_i$;
		\item $\AEff/(p_iy_i)$ converges to a finite limit, which is the same when $i$'s markup is held fixed.
	\end{enumerate}
\end{corollary}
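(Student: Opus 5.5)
The plan has three stages: an identity, a matching of conduct, and an implicit-function limit.

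\emph{The identity~\eqref{eq:small}.} I start from the constant-returns formula of Corollary~\ref{cor:os}, $d\pi_i=(1-1/\mu_i)\,d(p_iy_i)+(p_iy_i/\mu_i)\,\dlog\mu_i$, which holds whatever $i$'s conduct, and subtract $\Pv_i=p_iy_i\,\dlog A_i/\mu_i$. Everything is then expressed through $\dlog(p_iy_i)$.

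Because $i$ sells only to $n$, which has a CES technology, $\dlog y_i=\dlog y_n-\theta_n(\dlog p_i-\dlog p_n)$. Hence
\[
\dlog(p_iy_i)=(1-\theta_n)(\dlog p_i-\dlog p_n)+\dlog(p_ny_n).
\]
Constant returns and cost minimization give $\dlog p_i=\dlog\mu_i+\sum_m\widetilde\Omega_{im}\dlog p_m-\dlog A_i$. Substituting and collecting the terms in $(\dlog A_i-\dlog\mu_i)$ should reproduce~\eqref{eq:small}. The coefficient $\theta_n(1-1/\mu_i)-1$ arises from $(1-1/\mu_i)(\theta_n-1)-1/\mu_i$. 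This step is bookkeeping.

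\emph{The identification $d\pi_i-\Pv_i=\IP_i$ or $\IP_i^B$.} Under quantity competition this is Lemma~\ref{lem:private}. Under price competition it is the analogue in Remark~\ref{rem:bertrand}, noting that $\Pv_i^B=\Pv_i$ when $db_i=0$. Because the left side is fully determined by observed prices and quantities, the two terms coincide for a given set of changes.

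The markup formulas follow from the perceived elasticity of $i$'s demand:
\begin{itemize}
\item Cournot: the inverse-demand elasticity is $(1-s_i)/\theta_n+s_i/\eta_n$, holding other suppliers' quantities fixed.
\item Bertrand: the demand elasticity is $\theta_n(1-s_i)+\eta_ns_i$.
\end{itemize}
As $\delta_i\to0$, $s_i\to0$ and therefore $\mu_i\to\theta_n/(\theta_n-1)$. This also gives the limit of $\Pv_i/(p_iy_i)$.

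\emph{Order-$s_i$ claims.} In~\eqref{eq:small} at the limit markup, the first bracket satisfies $\theta_n(1-1/\mu_i)-1=O(s_i)$, since it vanishes at $\mu_i=\theta_n/(\theta_n-1)$ and $\mu_i$ is smooth in $s_i$. Because $\mu_i$ depends only on $s_i$, we also have $\dlog\mu_i=\mu_i'(s_i)\,ds_i/\mu_i$, with $ds_i=s_i[(1-\theta_n)(\dlog p_i-\dlog p_n)]$. So $\dlog\mu_i=O(s_i)$ once the price changes are bounded.

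The remaining term $(\theta_n-1)(\dlog p_n-\sum\widetilde\Omega_{im}\dlog p_m)+\dlog(p_ny_n)$ involves changes in the rest of the economy. These are induced by a shock to $i$ whose weight in the economy's equilibrium conditions is proportional to $s_i$ (and to $\lambda_i$).

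\emph{Main obstacle: bounding the rest-of-economy responses.} I would formalize the last claim by rewriting the equilibrium in regular variables. Good $i$ enters $n$'s unit cost and the market-clearing conditions only through $\delta_i$ times smooth functions; for example, $y_i$ is $\delta_i$ times a regular quantity. The equilibrium conditions $H(z,\delta_i,\log A_i)=0$ are then $C^2$, and I want to show two things:
\begin{itemize}
\item $\partial H/\partial\log A_i=O(\delta_i)$;
\item $\partial_zH$ is nonsingular near $\delta_i=0$.
\end{itemize}
The implicit function theorem then gives $dz/\dlog A_i=O(\delta_i)$ for all rest-of-economy prices and quantities, uniformly. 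The same bound holds for $i$'s own $\dlog p_i$ and its normalized output, which stay $O(1)$. This bounds the bracket by $O(s_i)$ and closes part~1. The delicate points are choosing variables so that the Jacobian at $\delta_i=0$ is that of the economy without good $i$, and handling $i$'s own markup equation, which stays regular.

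\emph{Part 2.} I would use Corollary~\ref{cor:aeformula}. The term $-E(\tilde\lambda_i-\lambda_i)\dlog A_i$ is proportional to $\lambda_i$, hence to $p_iy_i$. In $\sum_k(\mu_k-1)dI_k$, $i$'s own term is $(\mu_i-1)\,dI_i$, with $dI_i/(p_iy_i)=\dlog(p_iy_i)/\mu_i$ up to $O(s_i)$, which is finite. Every other $dI_k$ is $O(\delta_i)\propto p_iy_i$ by the implicit-function step. So $\AEff/(p_iy_i)$ has a limit given by the derivative of the regular solution in $\delta_i$.

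Holding $\mu_i$ fixed changes the equilibrium only through $\dlog\mu_i=O(s_i)$. Its effect on $\AEff$ is therefore $O(s_i)\,p_iy_i$, and the limit is unchanged.
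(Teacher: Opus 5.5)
Your proposal is correct. For the identity~\eqref{eq:small}, the identification of $d\pi_i-\Pv_i$ with $\IP_i$ or $\IP_i^B$, and Part~1, it follows the paper's argument: the constant-returns form of Corollary~\ref{cor:os}, then CES demand and cost minimization, then the implicit function theorem in variables that stay regular at $\delta_i=0$, with $n$'s technology in dual form.

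The paper makes two choices you do not. It keeps $i$'s output in levels, so $i$'s own equations form a block with unit diagonal, and the rest of the economy sees $i$ only through input demands that vanish with $y_i$. It then gets the order $\delta_i$ by noting that at $\delta_i=0$ nothing other than $p_i$ depends on $A_i$. The derivatives with respect to $\log A_i$ are $C^1$ in $\delta_i$ and vanish there. Your normalized output $y_i/\delta_i$ works equally well and makes the Jacobian block triangular outright.

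Part~2 is where you take a genuinely different route. The paper writes $\AEff/E=dW/E-\tilde\lambda_i\,\dlog A_i$, notes that both terms are $C^1$ in $\delta_i$ and vanish at $\delta_i=0$, and divides by $p_iy_i=\lambda_nEs_i$. Your route through equation~\eqref{eq:ae} is also valid:
\begin{itemize}
\item the downstream term is $-E(\tilde\lambda_n-\lambda_n)s_i\,\dlog A_i$;
\item $i$'s own term per unit of sales is $(1-1/\mu_i)(\dlog y_i-\dlog A_i)$;
\item every other term is $O(\delta_i)$.
\end{itemize}
It needs more checks than the paper's two lines. In exchange, it shows how the limit splits between $i$'s own markup-weighted expansion, which is of order one per unit of its sales, and everyone else's responses.

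Three steps need tightening.

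\emph{The implicit-function bound.} As stated, $\partial H/\partial\log A_i=O(\delta_i)$ fails in $i$'s own price equation, which is why $\dlog p_i$ is of order one. What makes the other responses $O(\delta_i)$ is that, in your variables, the other equations depend on $\log A_i$ and on $i$'s own variables only through terms carrying $\delta_i$. Hence the block of $(\partial_zH)^{-1}$ that maps $i$'s equations into the other variables is itself $O(\delta_i)$.

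\emph{The fixed-markup step.} $\dlog\mu_i=O(s_i)$ alone only bounds the change in $\AEff$ by something of order $s_iE$, which is of order one per unit of $p_iy_i$. You also need the sensitivity of $W$ to $i$'s markup to be $O(s_i)$; this is equivalent to the sensitivity of $\AEff$, since $\T_i$ is the same in both economies. That bound follows from the same block argument, because $i$'s markup enters the equilibrium conditions only through $i$'s price equation. The first-order comparison is legitimate because the economy with $i$'s markup fixed has the same Jacobian as the original at $\delta_i=0$, so it is regular nearby. The paper states both points.

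\emph{The markup response.} Your explicit formula for $\dlog\mu_i$ takes $\eta_n$ as constant. If $\eta_n$ moves with the equilibrium, the extra term carries a factor $s_i$. Alternatively, read $\dlog\mu_i=O(\delta_i)$ directly off the implicit-function step, as the paper does.
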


The proof is in Appendix~\ref{app:proofs}. Equation~\eqref{eq:small} uses only constant returns and CES demand within the aggregate, so it holds for any conduct and whatever the rest of the economy. Its first term is the innovator's margin on the expansion that its own price change causes within the aggregate, net of the revenue that this price change costs it on its current output. The bracket is zero at the markup $\theta_n/(\theta_n-1)$, which a producer that perceives its own-price effect charges when it is small in its market, and it is $s_i(\theta_n/\eta_n-1)$ under quantity competition and $s_i(\theta_n-\eta_n)/\varepsilon_i$ under price competition. The second term values the changes in the aggregate's price relative to the innovator's input costs and in the aggregate's sales, which the innovation moves only through the innovator's share.

The corollary makes the separation of Theorem~\ref{thm:main} independent of conduct for innovators that are small in their markets. To first order in its size, such an innovator's profit changes by its cost saving, whether it competes in quantities or in prices, so none of the change in allocative efficiency reaches it. That change is in general of first order, because the innovator's expansion and the fall in the aggregate's price move resources between producers with different markups, and in the nested-CES economies of Section~\ref{sec:primitives} its limit is zero only in special cases, such as the economy of Corollary~\ref{cor:fringe}. In a homogeneous-good Cournot market, a marginal cost reduction by a producer with a small enough share lowers welfare because it shifts output away from rivals with higher margins \citep{LahiriOno1988,Zhao2001}, and Section~\ref{sec:onemarket} carries this force over to differentiated goods under quantity or price competition. Corollary~\ref{cor:small} adds that the responses to a small innovator's cost reduction affect its profit only at second order in its size, which requires differentiated goods. With a homogeneous good, linear demand, and $J$ Cournot producers, a producer's induced price term is $(J-1)/(J+1)$ of its cost saving whatever its share.\ifjournal\else{} For a process innovation, $\IP_i$ and $\IP_i^B$ are both $d\pi_i-\Pv_i$, so data on prices and quantities cannot tell whether this term consists of transfers or also contains the margin on diverted demand, and the corollary does not depend on which is true.\fi{} The result on the induced price term holds for any smooth constant-returns aggregate and with spillovers that shrink in proportion to the innovator's size, as pool spillovers do when its knowledge stock is proportional to its sales, whereas spillovers that do not shrink, such as imitation of a given improvement, move rivals' prices at first order (Proposition~\ref{prop:smallgeneral}).\ifjournal\else{} Section~\ref{sec:app-small} measures both terms by the innovator's share of its market for U.S. public firms (Figure~\ref{fig:size}).\fi{}
\section{When Does Innovation Raise Allocative Efficiency?}\label{sec:primitives}

By Theorem~\ref{thm:main}, the change in allocative efficiency is no part of the innovator's private return, and for a process innovation that raises the innovator's productivity it is the only external term without a fixed sign. This section adds to the propagation formulas of \citet{BaqaeeFarhi2020} the conduct that determines how markups respond. Allocative efficiency is unchanged when markups are equal, there are no intermediate inputs, and factor supplies are fixed (Section~\ref{sec:uniform}), and in nested-CES economies when technologies are Cobb--Douglas and markups are fixed or set by Cournot competition (Section~\ref{sec:ces}). Otherwise three forces determine its sign, within a market (Section~\ref{sec:onemarket}), across markets (Section~\ref{sec:nests}), and along supply chains (Section~\ref{sec:chains}).\ifjournal\else{} Section~\ref{sec:ae-private} compares the change in allocative efficiency with the private return, and Sections~\ref{sec:onemarket} and~\ref{sec:ae-private} give the limits of Corollary~\ref{cor:small} for small innovators.\fi{} Proposition~\ref{prop:ae} and equation~\eqref{eq:onefixed} restate results of \citet{BaqaeeFarhi2020}, and Online Appendix~\ref{oa:allocative} derives them and gives the proofs.

\subsection{Economies without intermediate inputs}\label{sec:uniform}

Equation~\eqref{eq:ae} holds for any demand system. Without intermediate inputs, $\tilde\lambda=\lambda$, so its second term vanishes. In Corollary~\ref{cor:uniform}, $dI_k$ is the change in producer $k$'s input use valued at current prices (Corollary~\ref{cor:aeformula}), and with one factor $l_k$ is producer $k$'s employment.

\begin{corollary}[Equal markups]\label{cor:uniform}
	Suppose there are no intermediate inputs and factor supplies are fixed. Then for every innovation and every constant $c$,
	\[
		\AEff=\sum_k(\mu_k-c)\,dI_k .
	\]
	With one factor and constant returns, and with $\bar\mu=(\sum_k\lambda_k/\mu_k)^{-1}$ the sales-weighted harmonic mean markup,
	\[
		\frac{\AEff}{E}=\sum_k\lambda_k\Bigl(1-\frac{\bar\mu}{\mu_k}\Bigr)\dlog l_k .
	\]
	If all markups are equal, $\AEff=0$ for every innovation, whatever the demand system and however markups respond.
\end{corollary}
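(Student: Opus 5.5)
The plan starts from equation~\eqref{eq:ae} of Corollary~\ref{cor:aeformula}. Without intermediate inputs the only inputs are factors, so $\widetilde\Omega$ has nonzero entries only in factor columns. Factors use no inputs and are not consumed, so $\widetilde\Omega^2=0$ restricted to goods, and the same holds for $\Omega$. Hence $\tilde\lambda_k=\beta_k=\lambda_k$ for every producer, and the second term of~\eqref{eq:ae} vanishes, giving $\AEff=\sum_k(\mu_k-1)\,dI_k$. To replace $1$ by an arbitrary constant $c$, it suffices to show $\sum_k dI_k=0$. Here $dI_k=\sum_f w_f\,dl_{kf}$, since there are no goods among inputs, so $\sum_k dI_k=\sum_f w_f\,d\bigl(\sum_k l_{kf}\bigr)=\sum_f w_f\,dL_f=0$ because factor supplies are fixed. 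This gives the first display.

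For the second display there is one factor and constant returns. Then $I_k=w\,l_k$ equals producer $k$'s cost $\mc_ky_k=p_ky_k/\mu_k=E\lambda_k/\mu_k$, so $dI_k=w\,dl_k=E(\lambda_k/\mu_k)\,\dlog l_k$. I would choose $c=\bar\mu$ in the first display. Then
\[
	\AEff=\sum_k(\mu_k-\bar\mu)\,E\frac{\lambda_k}{\mu_k}\,\dlog l_k=E\sum_k\lambda_k\Bigl(1-\frac{\bar\mu}{\mu_k}\Bigr)\dlog l_k .
\]
Any $c$ would do, but $\bar\mu$ is the natural normalization. It makes the weights $\lambda_k(1-\bar\mu/\mu_k)$ sum to zero, since $\sum_k\lambda_k=1$ without intermediates and $\sum_k\lambda_k/\mu_k=1/\bar\mu$. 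I would note this only as a consistency check, because the identity already follows from $\sum_k dI_k=0$.

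Finally, if all markups equal a common $\mu$, I would take $c=\mu$ in the first display, which gives $\AEff=0$. Nothing in the argument uses the demand system or how markups respond: only Assumption~\ref{ass:inputs}, through Corollary~\ref{cor:aeformula}, and factor-market clearing with fixed supplies enter. The only step needing care is confirming $\tilde\lambda=\lambda$ on producers. This amounts to checking that the Leontief series stop after the first power because factor rows are zero, so there is no real obstacle. One should also confirm that $dI_k$ values changes at current prices and so carries no $dw_f$ terms, which is why the factor-clearing argument goes through.
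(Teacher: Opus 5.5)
Your proof is correct and follows the paper's argument essentially step for step. You reduce \eqref{eq:ae} to $\sum_k(\mu_k-1)\,dI_k$ via $\tilde\lambda=\lambda=\beta$, use $\sum_k dI_k=\sum_f w_f\,dL_f=0$ to replace $1$ by any $c$, and then substitute $w l_k=E\lambda_k/\mu_k$ and take $c=\bar\mu$ or $c=\mu$; the only cosmetic difference is that the paper first writes $\AEff/E=\sum_k\lambda_k(1-c/\mu_k)\,\dlog l_k$ for general $c$ before specializing.
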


With one factor the weights $\lambda_k(1-\bar\mu/\mu_k)$ sum to zero, so $\AEff/E$ is the sales-weighted covariance between the growth of employment and the markup gap $1-\bar\mu/\mu_k$, and an innovation raises allocative efficiency when it moves labor toward producers whose markups exceed those of the producers that release it. The neutrality requires both conditions, because a common markup still distorts the use of intermediate inputs relative to factors and, with elastic factor supplies, factor supply. \ifjournal\else In the linear--quadratic oligopoly of \citet{HopenhaynOkumura2026}, margins equal quantities, so the comparison of markups with a mean becomes a comparison of firm size (Online Appendix~\ref{oa:lq}).\fi

\subsection{Nested CES economies}\label{sec:ces}

\paragraph{Setting.} Sections~\ref{sec:ces}--\ref{sec:nests} assume that the economy is in the nested-CES standard form of \citet{BaqaeeFarhi2020}. Final demand buys only the output of the final-demand aggregator, indexed by~$0$, so $\beta=\bm e_0$. Every producer $j$, including each CES nest and the aggregator, has a constant-returns CES technology with elasticity $\theta_j$ over its inputs, and nests and the aggregator are competitive. There is one primary factor, labor, in fixed supply, with wage $w$. Because labor is the only primary factor and every producer's cost shares sum to one, $\widetilde\Psi_{jL}=1$ for every producer and $\widetilde\Lambda_L=1$. Let $\mathcal K$ be the producers other than nests and the aggregator. The innovation changes their productivities by $\dlog\bm A$ and their markups by $\dlog\bm\mu$, two vectors indexed by $\mathcal K$. The equilibrium of Section~\ref{sec:environment} is then the equilibrium of the \citet{BaqaeeFarhi2020} economy with wedges equal to the markups, so their results apply with the markup changes $\dlog\bm\mu$ in the role of changes in wedges.

\paragraph{Notation.} Proposition~\ref{prop:ae} compares two vectors across the inputs of each producer. The first records how the innovation changes prices. Cost minimization and $\widetilde\Psi_{jL}=1$ give the price changes
\[
	\dlog p_j=\sum_k\widetilde\Psi_{jk}\,(\dlog\mu_k-\dlog A_k)+\dlog w
\]
for every producer $j$, nests and the aggregator included. We call $\widetilde\Psi\,\dlog(\bm A/\bm\mu)$ the exposure. It uses the columns of $\widetilde\Psi$ for $\mathcal K$, its entry for each producer is the fall in that producer's price relative to the wage, and its entry for labor is zero. The second vector, $\Psi_{(L)}/\Lambda_L$, is the share of labor in each producer's revenue along its whole supply chain relative to the economy's labor share, with the entry $1/\Lambda_L$ for labor. As \citet{BaqaeeFarhi2020} note, it is low for producers whose supply chains are heavily marked up. When the producers in $\mathcal K$ use only labor and employ all of it, its entry for producer $k$ is $\bar\mu/\mu_k$, with $\bar\mu$ the sales-weighted harmonic mean markup of these producers, as in Corollary~\ref{cor:uniform}, so $1-\Psi_{kL}/\Lambda_L$ generalizes the markup gap $1-\bar\mu/\mu_k$. For vectors $\bm u$ and $\bm v$ indexed by producers and labor, the input--output covariance of \citet{BaqaeeFarhi2020} for producer $j$ is
\[
	\Cov_{\widetilde\Omega(j)}(\bm u,\bm v)=\sum_m\widetilde\Omega_{jm}\,u_mv_m-\Bigl(\sum_m\widetilde\Omega_{jm}\,u_m\Bigr)\Bigl(\sum_m\widetilde\Omega_{jm}\,v_m\Bigr),
\]
the covariance across all of $j$'s inputs, labor included, with weights equal to its cost shares.

The propagation formulas of \citet{BaqaeeFarhi2020} then give the change in allocative efficiency for given changes in productivities and markups (Online Appendix~\ref{oa:allocative}).

\begin{proposition}[Allocative efficiency given markup changes; \citealp{BaqaeeFarhi2020}]\label{prop:ae}
	In the setting above, for productivity changes $\dlog\bm A$ and markup changes $\dlog\bm\mu$,
	\begin{equation}
		\frac{\AEff}{E}=-\sum_j\frac{\lambda_j}{\mu_j}(\theta_j-1)\Cov_{\widetilde\Omega(j)}\Bigl(\widetilde\Psi\,\dlog(\bm A/\bm\mu),\frac{\Psi_{(L)}}{\Lambda_L}\Bigr)+\sum_k\Bigl(\lambda_k\frac{\Psi_{kL}}{\Lambda_L}-\tilde\lambda_k\Bigr)\dlog\mu_k .
		\label{eq:ae-prim}
	\end{equation}
\end{proposition}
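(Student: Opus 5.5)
The plan is to start from the general formula already stated for this setting, $\AEff=-E\,(\tilde\lambda^\top\dlog\bm\mu+\widetilde\Lambda^\top\dlog\bm\Lambda)$, Baqaee and Farhi's Theorem~1, which applies because $U$ is homothetic here (final demand buys only the aggregator), technologies have constant returns, and labor is in fixed supply. With one factor, $\widetilde\Lambda_L=1$, so
\[
	\frac{\AEff}{E}=-\sum_k\tilde\lambda_k\,\dlog\mu_k-\dlog\Lambda_L .
\]
The markup term already supplies the $-\tilde\lambda_k\,\dlog\mu_k$ part of~\eqref{eq:ae-prim}. The work is therefore to compute $\dlog\Lambda_L$ by propagating the shock through revenue-based Domar weights in the nested-CES network.

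First, I will derive the change in each producer's revenue-based input shares. For producer $j$ with CES elasticity $\theta_j$, $\Omega_{jm}=\widetilde\Omega_{jm}/\mu_j$ and $\widetilde\Omega_{jm}\propto p_m^{1-\theta_j}$ relative to $j$'s marginal cost. Hence
\[
	\dlog\Omega_{jm}=(1-\theta_j)\Bigl(\dlog p_m-\sum_l\widetilde\Omega_{jl}\,\dlog p_l\Bigr)-\dlog\mu_j .
\]
Using the price formula $\dlog p=\widetilde\Psi(\dlog\bm\mu-\dlog\bm A)+\dlog w$, the common $\dlog w$ cancels inside the bracket. The bracket is then minus the deviation of the exposure entry from its $\widetilde\Omega(j)$-weighted mean.

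Second, I will differentiate $\Lambda_L=\beta^\top\Psi_{(L)}=\bm e_0^\top(I-\Omega)^{-1}\bm e_L$. This gives
\[
	d\Lambda_L=\sum_{j,m}\lambda_j\,d\Omega_{jm}\,\Psi_{mL}.
\]
Substituting $d\Omega_{jm}=\Omega_{jm}\,\dlog\Omega_{jm}$ and $\Omega_{jm}=\widetilde\Omega_{jm}/\mu_j$ splits the result into two parts. The $\theta_j$ part gives $(\lambda_j/\mu_j)(\theta_j-1)$ times a weighted sum of $\Psi_{mL}$ times the demeaned exposure. That sum is exactly $\Cov_{\widetilde\Omega(j)}\bigl(\widetilde\Psi\dlog(\bm A/\bm\mu),\Psi_{(L)}\bigr)$. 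The markup part gives $-\lambda_k\,\dlog\mu_k\sum_m\Omega_{km}\Psi_{mL}=-\lambda_k\,\dlog\mu_k\,\Psi_{kL}$, using $\sum_m\Omega_{km}\Psi_{mL}=\Psi_{kL}$ for $k\neq L$. Dividing by $\Lambda_L$ and inserting the result into the expression for $\AEff/E$ gives both terms of~\eqref{eq:ae-prim} with the stated signs. Nests and the aggregator have $\mu=1$ and $\dlog\mu=0$, so only $k\in\mathcal K$ enter the second sum.

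I expect the main obstacle to be bookkeeping rather than anything conceptual. Three points need care. The covariance must include labor as an input, which works because the exposure is zero at labor and $\Psi_{LL}=1$. The sign convention of the exposure, $\dlog(\bm A/\bm\mu)$, must be tracked so that $(1-\theta_j)$ becomes $-(\theta_j-1)$ with the overall minus sign in~\eqref{eq:ae-prim}. Finally, I will check that the identification of equilibrium markups with Baqaee--Farhi wedges makes their Theorem~1 applicable, as noted in the Setting paragraph. As a fallback, I can instead derive $\AEff$ directly from Corollary~\ref{cor:aeformula} using the same share changes, which gives an independent check of the signs.
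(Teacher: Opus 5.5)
Your argument is correct and follows the paper's route: Baqaee and Farhi's Theorem~1 with $\widetilde\Lambda_L=1$ reduces $\AEff/E$ to $-\tilde\lambda^\top\dlog\bm\mu-\dlog\Lambda_L$. The paper then takes $\dlog\Lambda_L$ from their one-factor propagation equations~(10) and~(13), whereas you rederive those equations by differentiating $\Psi=(I-\Omega)^{-1}$ with the CES share changes, and the signs and both terms check out. One wording nit: $\widetilde\Omega_{jm}$ is proportional to $p_m^{1-\theta_j}$ relative to $j$'s input-price index $A_j\mc_j$, not relative to $\mc_j$ itself, but your displayed formula already uses the correct index $\sum_l\widetilde\Omega_{jl}\,\dlog p_l$.
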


The first term of \eqref{eq:ae-prim} is substitution. A producer $j$ with $\theta_j>1$ shifts its spending toward the inputs whose prices fall most, and allocative efficiency rises when these inputs have more heavily marked-up supply chains than $j$'s other inputs, that is, lower relative labor shares $\Psi_{mL}/\Lambda_L$. The second term is the direct effect of markup changes. A higher markup at $k$ acts like a lower productivity at $k$, which lowers $\AEff/E$ by $\tilde\lambda_k$ per unit of $\dlog\mu_k$, and releases labor to other producers, which raises it by $\lambda_k\Psi_{kL}/\Lambda_L$. Conduct determines the markup changes, and the next result derives them when producers compete in quantities within CES nests.

\begin{proposition}[Markup responses]\label{prop:markups}
	In the setting above, suppose each producer $k\in\mathcal K$ has a constant markup or is a strategic producer, one that sells its whole output to one CES nest $n(k)$, in which its cost share is $s_k=\widetilde\Omega_{n(k)k}$, competes in quantities with the other suppliers of that nest, and perceives the demand for the nest's output to have the constant elasticity $\eta_{n(k)}$. The markup of a strategic producer $k$ is then $\mu_k=(1-1/\varepsilon_k)^{-1}$ with
	\[
		\frac{1}{\varepsilon_k}=\frac{1-s_k}{\theta_{n(k)}}+\frac{s_k}{\eta_{n(k)}}.
	\]
	Let $\widetilde\Psi_{\cdot\mathcal K}$ be the columns of $\widetilde\Psi$ for $\mathcal K$. Let $Z=\operatorname{diag}(\zeta_k)$, with $\zeta_k=\mu_ks_k(1/\eta_{n(k)}-1/\theta_{n(k)})$ for strategic producers and $\zeta_k=0$ otherwise, and let $\mathcal D$ map a vector $\bm u$ indexed by all nodes and labor to
	\[
		(\mathcal D\bm u)_k=(1-\theta_{n(k)})\Bigl(u_k-\sum_m\widetilde\Omega_{n(k)m}\,u_m\Bigr)
	\]
	for strategic producers and to zero otherwise, where the sum runs over all inputs of the nest, other nests and labor included. Then, when the inverse exists,
	\[
		\dlog\bm\mu=B\,\dlog\bm A,
		\qquad
		B=-\bigl(I-Z\mathcal D\widetilde\Psi_{\cdot\mathcal K}\bigr)^{-1}Z\mathcal D\widetilde\Psi_{\cdot\mathcal K}.
	\]
\end{proposition}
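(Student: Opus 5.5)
The plan is to derive the markup formula from the perceived inverse demand of Assumption~\ref{ass:conduct}, linearize it in the strategic producer's cost share, and then close the system with the price-propagation formula $\dlog p_j=\sum_k\widetilde\Psi_{jk}(\dlog\mu_k-\dlog A_k)+\dlog w$ stated in the Notation paragraph.

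\emph{Step 1: the markup.} Nest $n=n(k)$ is competitive with a constant-returns CES technology with weights $\delta_m$. Its demand for good $k$ gives
\[
p_k=p_n\,\delta_k^{1/\theta_n}\,(y_k/y_n)^{-1/\theta_n}.
\]
Under Assumption~\ref{ass:conduct}, producer $k$ holds the other suppliers' outputs fixed. It perceives the demand for $y_n$ to have elasticity $\eta_n$, so $\dlog p_n=-\dlog y_n/\eta_n$ along its perceived inverse demand. Because $n$ is competitive with constant returns, its revenue equals its cost. Hence $\partial\log y_n/\partial\log y_k=\widetilde\Omega_{nk}=s_k$, since these output elasticities are the cost shares by Assumption~\ref{ass:inputs}. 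Combining,
\[
\partial\log\varphi_k/\partial\log y_k=-(1-s_k)/\theta_n-s_k/\eta_n=-1/\varepsilon_k.
\]
The first-order condition $\varphi_k(1-1/\varepsilon_k)=\mc_k$ then gives $\mu_k=(1-1/\varepsilon_k)^{-1}$.

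\emph{Step 2: linearization.} Since $1/\varepsilon_k=1/\theta_n+s_k(1/\eta_n-1/\theta_n)$ and the elasticities are constant,
\[
\dlog\mu_k=\mu_k\,d(1/\varepsilon_k)=\mu_ks_k(1/\eta_n-1/\theta_n)\,\dlog s_k=\zeta_k\,\dlog s_k.
\]
For constant-markup producers, $\dlog\mu_k=0=\zeta_k\cdot(\,\cdot\,)$ trivially. The CES share satisfies $s_k=\delta_k(p_k/p_n)^{1-\theta_n}$, so $\dlog s_k=(1-\theta_n)(\dlog p_k-\dlog p_n)$. Cost minimization by the competitive nest gives $\dlog p_n=\sum_m\widetilde\Omega_{nm}\dlog p_m$, with the sum over all of its inputs, labor included. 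Therefore $\dlog\mu_k=\zeta_k(\mathcal D\,\dlog\bm p)_k$.

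\emph{Step 3: closing the system.} Substitute the price-propagation formula into $\mathcal D\,\dlog\bm p$. The wage term $\dlog w$ enters every node's log price one for one, with labor's own entry $\dlog w$. The cost shares $\widetilde\Omega_{nm}$ of each nest sum to one. Hence $\mathcal D$ annihilates this common shift, and $\mathcal D\,\dlog\bm p=\mathcal D\widetilde\Psi_{\cdot\mathcal K}(\dlog\bm\mu-\dlog\bm A)$, where labor's row of $\widetilde\Psi_{\cdot\mathcal K}$ is zero. This yields the fixed point
\[
\dlog\bm\mu=Z\mathcal D\widetilde\Psi_{\cdot\mathcal K}(\dlog\bm\mu-\dlog\bm A).
\]
Rearranging gives $(I-Z\mathcal D\widetilde\Psi_{\cdot\mathcal K})\dlog\bm\mu=-Z\mathcal D\widetilde\Psi_{\cdot\mathcal K}\dlog\bm A$, and inverting gives $B$ when the inverse exists.

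The main point needing care is Step 1's use of $s_k$ both as the output elasticity of the nest and as the cost share. It also matters in Step 3 that $\widetilde\Psi$ is evaluated at the equilibrium markups, so that the propagation formula already accounts for the markups along supply chains. I would make explicit that the nest is competitive, so revenue and cost shares coincide. I would also check that the price formula uses the same $\widetilde\Psi$ columns for $\mathcal K$ that appear in $B$.
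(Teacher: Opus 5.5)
Your proposal is correct and follows the paper's proof essentially step for step. Both derive the perceived inverse-demand elasticity $-(1-s_k)/\theta_n-s_k/\eta_n$ from the competitive CES nest and the perceived $\eta_n$, then differentiate the markup rule to get $\dlog\mu_k=\zeta_k\,\dlog s_k$. Both then substitute the price equation into the CES share changes, cancel $\dlog w$ because the nest's cost shares sum to one, and solve the resulting fixed point $\dlog\bm\mu=Z\mathcal D\widetilde\Psi_{\cdot\mathcal K}(\dlog\bm\mu-\dlog\bm A)$.
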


The coefficient $\zeta_k$ is the elasticity of producer $k$'s markup with respect to its within-nest share, so markups rise with the share when $\eta_n<\theta_n$. The proposition also holds when strategic producers compete in prices, perceiving the elasticity $\varepsilon_k=\theta_{n(k)}(1-s_k)+\eta_{n(k)}s_k$ of their own demand and charging $\mu_k=\varepsilon_k/(\varepsilon_k-1)$, with $\zeta_k=\mu_ks_k(\theta_{n(k)}-\eta_{n(k)})/\varepsilon_k^2$ (Online Appendix~\ref{oa:allocative}). The matrix $B$ is the solution of a feedback loop. The vector $\widetilde\Psi_{\cdot\mathcal K}(\dlog\bm\mu-\dlog\bm A)$ gives every price change relative to the wage, $\mathcal D$ turns these price changes into changes in the strategic producers' within-nest shares, and $Z$ turns the share changes into markup changes. Because markup changes feed back into prices, the markup changes $\dlog\bm\mu=B\,\dlog\bm A$ are the fixed point of this loop.

The markup rule is that of the Atkeson--Burstein oligopoly, which \citet{Grassi2018} embeds in an input--output network, and $\zeta_k$ is the elasticity with which \citet{BursteinCarvalhoGrassi2025} characterize the response of markups within a sector. When no nest buys labor or the output of other nests directly and the suppliers of each nest have the same input shares, as in both papers and in Section~\ref{sec:application}, $B$ reduces to their within-sector solution (Online Appendix~\ref{oa:allocative}).\ifjournal\else{} Otherwise a strategic producer's markup also responds to markups and productivities elsewhere in the network, as in the firm-to-firm model of \citet{DhyneKikkawaMagerman2022}.\fi

With a continuum of nests combined in Cobb--Douglas fashion, $\eta_n=1$, $1/\mu_k=(1-s_k)(1-1/\theta_n)$, and $\zeta_k=s_k/(1-s_k)$, the case of \citet{BaqaeeFarhi2020}, based on \citet{AtkesonBurstein2008}. When final demand is a CES with elasticity $\eta$ over nests, a supplier perceives the elasticity $\eta/(1-\lambda_n)$, so a constant $\eta_n=\eta$ is exact only for negligible nests, and the results below treat $\eta$ as a parameter of conduct. Theorem~\ref{thm:main} holds for any $\eta$, and when $\eta$ differs from the elasticity $\eta^a$ of the demand that the nest faces, the induced price term also contains the part of the change in the producer's price caused by its own output that it does not perceive, $p_ky_ks_k(1/\eta-1/\eta^a)\,\dlog y_k$.

The two propositions also extend Proposition~4 of \citet{BaqaeeFarhi2020} from fixed to Cournot markups.

\begin{corollary}[Cobb--Douglas]\label{cor:cd}
	If $\theta_j=1$ for every producer, then $\AEff=0$ for every innovation, with fixed markups or with the markups of Proposition~\ref{prop:markups}, which are finite when every $\eta_n>1$.
\end{corollary}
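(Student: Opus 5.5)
The plan is to read the result off Propositions~\ref{prop:ae} and~\ref{prop:markups}. First I check that the Cournot markups are well defined when every $\theta_j=1$. Then I show that neither term of~\eqref{eq:ae-prim} survives.

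\emph{Step 1: finite markups.} For a strategic producer $k$ with $\theta_{n(k)}=1$, Proposition~\ref{prop:markups} gives
\[
	\frac{1}{\varepsilon_k}=(1-s_k)+\frac{s_k}{\eta_{n(k)}}=1-s_k\Bigl(1-\frac{1}{\eta_{n(k)}}\Bigr).
\]
This is strictly below one exactly when $s_k>0$ and $\eta_{n(k)}>1$. Then $\varepsilon_k>1$ and $\mu_k=(1-1/\varepsilon_k)^{-1}=1/\bigl(s_k(1-1/\eta_{n(k)})\bigr)$ is finite. If instead $\eta_{n(k)}\le1$, then $\varepsilon_k\le1$ and no finite optimal markup exists, which is why the hypothesis $\eta_n>1$ is needed. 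For producers with a fixed markup there is nothing to check.

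\emph{Step 2: markups do not respond.} With fixed markups, $\dlog\bm\mu=0$ by assumption. With the markups of Proposition~\ref{prop:markups}, every entry of the operator $\mathcal D$ carries the factor $1-\theta_{n(k)}=0$, so $\mathcal D=0$. Hence $Z\mathcal D\widetilde\Psi_{\cdot\mathcal K}=0$, the matrix $I-Z\mathcal D\widetilde\Psi_{\cdot\mathcal K}=I$ is trivially invertible, and $B=0$, so $\dlog\bm\mu=0$.

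The economic reason is direct. Under Cobb--Douglas aggregation, within-nest cost shares $s_k$ do not move, and each markup depends only on $s_k$ and the fixed $\eta_{n(k)}$. I would state this reason explicitly, because Proposition~\ref{prop:markups} presumes a regular equilibrium. I would also confirm that the mixed case, with some producers at fixed markups and others strategic, is covered because $\zeta_k$ is simply zero for the former.

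\emph{Step 3: substitute into~\eqref{eq:ae-prim}.} Each summand of the first term carries the factor $\theta_j-1=0$, so the substitution term vanishes whatever the exposure and the relative labor shares. The second term is linear in $\dlog\bm\mu$, which is zero by Step~2. Hence $\AEff/E=0$ for every productivity change $\dlog\bm A$, which is the claim.

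I expect no step to be a real obstacle. The only delicate point is Step~1 together with the regularity needed to invoke Propositions~\ref{prop:ae} and~\ref{prop:markups}: one must confirm that the equilibrium with Cournot markups exists with finite markups when $\eta_n>1$. The explicit formula for $\mu_k$ settles this.
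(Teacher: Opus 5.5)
Your proposal is correct and follows the paper's own proof: $\theta_j=1$ removes the substitution term of \eqref{eq:ae-prim}, and the factor $1-\theta_{n(k)}$ makes $\mathcal D=0$ and hence $B=0$, so the direct term vanishes as well; the markup rule then gives $1-1/\varepsilon_k=s_k(1-1/\eta_n)>0$ when $\eta_n>1$. Your closing remark that the explicit formula settles the existence of the equilibrium claims more than the formula shows, but the corollary asserts only that the markups are finite, and the formula does establish that.
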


With Cobb--Douglas technologies every buyer keeps its cost shares, and Cournot markups, which depend only on these shares, do not change, so the allocation does not respond. In nested-CES economies with fixed or Cournot markups, allocative efficiency therefore moves only through substitution ($\theta_j\neq1$) and the markup responses that substitution triggers.

\subsection{One market}\label{sec:onemarket}

The forces in Proposition~\ref{prop:ae} are clearest in one market, the horizontal economy of \citet{BaqaeeFarhi2020}. Final demand is a single CES nest with elasticity $\theta$ over producers that use only labor, so $\lambda_k=\tilde\lambda_k=s_k$, and $\bar\mu=(\sum_ks_k/\mu_k)^{-1}$ is the sales-weighted harmonic mean markup. For a productivity change $\dlog A_i$ accompanied by a change $\dlog\mu_i$ in the innovator's markup, with all other markups fixed, their equations~(5) and~(6) give
\begin{equation}
	\frac{\AEff}{E}=s_i\Bigl(1-\frac{\bar\mu}{\mu_i}\Bigr)\bigl[(\theta-1)\,\dlog A_i-\theta\,\dlog\mu_i\bigr]
	\label{eq:onefixed}
\end{equation}
(Online Appendix~\ref{oa:allocative}). For a small innovator that perceives its own-price effect, under quantity or price competition, the markup response is of order $s_i$ (Corollary~\ref{cor:small}), so as its weight goes to zero $\AEff/(s_iE)$ tends to $(\theta-1)(1-\bar\mu/\mu_i)\,\dlog A_i$ with $\mu_i=\theta/(\theta-1)$. This limit is zero only when $\theta=1$ or $\bar\mu=\theta/(\theta-1)$, and it is negative when $\theta>1$ and the other producers' mean markup exceeds $\theta/(\theta-1)$.

The innovator's price relative to the wage changes by $\dlog\mu_i-\dlog A_i$, so the bracket equals $[\theta(1-\dlog\mu_i/\dlog A_i)-1]\,\dlog A_i$, where $1-\dlog\mu_i/\dlog A_i$ is the pass-through of its cost reduction. For an innovator whose markup exceeds the harmonic mean, a productivity gain therefore raises allocative efficiency exactly when $\theta$ times the pass-through exceeds one. With fixed markups the pass-through is one and the condition is $\theta>1$, the case of \citet{BaqaeeFarhi2020}, in which producers with above-mean markups are too small. With zero pass-through, as in \citet{Peters2020}, where an incumbent's own innovation raises its markup and leaves its price unchanged, the innovation lowers allocative efficiency whatever the elasticity.

Under Cournot conduct, markups that rise with the share (Proposition~\ref{prop:markups}) lower the pass-through below one and so dampen the gain, as \citet{BaqaeeFarhi2020} show for $\eta=1$ and rivals that form a fringe with constant markups. Corollary~\ref{cor:onecournot} in Online Appendix~\ref{oa:onecournot} gives the change in allocative efficiency for any perceived elasticity $\eta$, and Section~\ref{sec:app-ae} measures the dampening for U.S. public firms.

\subsection{Several markets}\label{sec:nests}

When final demand combines several markets with elasticity $\eta$ and the innovator's rivals are strategic as well, the change in allocative efficiency also depends on the flow of labor between markets (Corollary~\ref{cor:nests} in Online Appendix~\ref{oa:nests}). A change in the markups of the innovator's market changes the labor that the market employs at a given share of spending, so labor moves between markets even when $\eta=1$. A dominant producer's innovation raises its market's mean markup and so releases labor to other, less marked-up markets, and at large shares this loss outweighs the gain within its market (Table~\ref{tab:dominant}). As the innovator's weight goes to zero, Corollary~\ref{cor:nests} evaluated at $s_i=0$ gives the limit of $\AEff/(\lambda_iE)$ in Corollary~\ref{cor:small}.

\subsection{Supply chains}\label{sec:chains}

With intermediate inputs, an innovation also changes how its buyers divide their costs between its good and their own value added. Suppose producer $j$, with a fixed markup, combines labor, its value added, with the good of producer $i$ according to a CES technology with elasticity $\theta_j$, and final demand buys only $j$'s good. Producer $i$ uses only labor and has a fixed markup $\mu_i$, and $\omega=\widetilde\Omega_{ji}$ is the share of $i$'s good in $j$'s cost.

\begin{corollary}[A supplier and its buyer]\label{cor:chain}
	In this economy, a productivity change $\dlog A_i$ gives
	\[
		\frac{\AEff}{E}=(\theta_j-1)\,\omega(1-\omega)\,\frac{1-1/\mu_i}{1-\omega(1-1/\mu_i)}\,\dlog A_i .
	\]
\end{corollary}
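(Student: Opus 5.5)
The plan is to specialize Proposition~\ref{prop:ae} to this economy. Markups are fixed, so $\dlog\bm\mu=0$ and the second term of~\eqref{eq:ae-prim} drops out. Only the substitution term remains.

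\textbf{Setup.} Final demand buys only $j$'s good, so I take $\beta=\bm e_j$; equivalently, a trivial aggregator with a single input, whose covariance term vanishes. Then $\lambda_j=\tilde\lambda_j=1$. Producer $i$ uses only labor and thus has a degenerate input distribution, so its covariance is zero. The only producer with a nonzero covariance is $j$, whose two inputs are labor, with cost share $1-\omega$, and good $i$, with cost share $\omega$. For a two-point distribution the input--output covariance is
\[
	\Cov_{\widetilde\Omega(j)}(\bm u,\bm v)=\omega(1-\omega)(u_i-u_L)(v_i-v_L).
\]

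\textbf{Step 1: the two vectors at $j$'s inputs.} For the exposure, $\widetilde\Psi_{ii}=1$ and the entry for labor is zero, so $u_i-u_L=\dlog A_i$. For the relative labor shares, $v_L=1/\Lambda_L$ and $v_i=\Psi_{iL}/\Lambda_L$. Since $i$ uses only labor, $\Omega_{iL}=1/\mu_i$, so $\Psi_{iL}=1/\mu_i$ and $v_i-v_L=-(1-1/\mu_i)/\Lambda_L$.

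\textbf{Step 2: the labor share $\Lambda_L$.} I compute it from $\beta=\bm e_j$ as
\[
	\Lambda_L=\Psi_{jL}=\Omega_{jL}+\Omega_{ji}\Psi_{iL}=\frac{(1-\omega)+\omega/\mu_i}{\mu_j}=\frac{1-\omega(1-1/\mu_i)}{\mu_j}.
\]

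\textbf{Step 3: substitution.} Plugging into~\eqref{eq:ae-prim} gives
\[
	\frac{\AEff}{E}=-\frac{1}{\mu_j}(\theta_j-1)\,\omega(1-\omega)\,\dlog A_i\cdot\Bigl(-\frac{1-1/\mu_i}{\Lambda_L}\Bigr),
\]
and the $\mu_j$ cancels against $\Lambda_L$, which yields the stated formula.

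\textbf{Main obstacle.} The delicate point is the bookkeeping of which node is "final demand". The standard form requires a separate competitive aggregator, so I would check that inserting one with $\theta_0=1$ and a single input changes neither $\lambda_j$ nor $\Lambda_L$, and that $j$'s own markup enters only through $\lambda_j/\mu_j$ and $\Lambda_L$, where it cancels.

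\textbf{Cross-check.} As a check independent of Proposition~\ref{prop:ae}, I would verify the result via Corollary~\ref{cor:aeformula}. Labor is fixed, so $i$'s employment share moves with $j$'s CES substitution, by $\theta_j-1$ times $\omega(1-\omega)\dlog A_i$ in cost shares. That change is valued at the markup gap between labor used by $i$ and labor used directly by $j$, net of the $\tilde\lambda_i-\lambda_i$ correction.
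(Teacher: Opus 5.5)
Your proposal is correct and follows the paper's own proof. With fixed markups only the substitution term of~\eqref{eq:ae-prim} survives, only $j$ has more than one input, and $\Lambda_L=\Psi_{jL}=(1-\omega+\omega/\mu_i)/\mu_j$ makes $\mu_j$ cancel against $\lambda_j/\mu_j$ with $\lambda_j=1$. Your two-point covariance $\omega(1-\omega)(u_i-u_L)(v_i-v_L)$ is an equivalent way of writing the paper's $\omega\,\dlog A_i\bigl(1/(\mu_i\Lambda_L)-\mu_j\bigr)$.
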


When $\mu_i>1$, the change in allocative efficiency has the sign of $\theta_j-1$, whatever $j$'s markup. Labor is worth $\mu_i$ times more in $i$'s supply chain than in $j$'s value added, and when $\theta_j<1$ the fall in $i$'s price shifts $j$'s spending, and labor, from $i$ to its own value added. In the economy of Section~\ref{sec:application}, every industry combines value added and intermediate inputs with the elasticity \IWAppEpsilonVA{} of \citet{BaqaeeFarhi2020}, and this force lowers the average change in allocative efficiency, while the rise of Cournot markups with market share makes it vary with the innovator's size (Section~\ref{sec:app-ae}).

\subsection{Allocative efficiency and the private return}\label{sec:ae-private}

With fixed markups, equation~\eqref{eq:onefixed} shows how reallocation changes the relation between the social return and the innovator's markup. In one market, $\T_i=Es_i\,\dlog A_i$, $\Pv_i=Es_i\,\dlog A_i/\mu_i$, and $dW=\T_i+\AEff$, so
\begin{equation}
	\frac{dW}{\Pv_i}=\mu_i+(\theta-1)(\mu_i-\bar\mu).
	\label{eq:ratio}
\end{equation}

The first term is $\T_i/\Pv_i$, the ratio in equation~\eqref{eq:liu} without intermediate inputs, and the second is $\AEff/\Pv_i$. Across innovators in the same market, the ratio rises with $\mu_i$ at the rate $\theta$, while its first term rises at the rate one, so reallocation makes $dW/\Pv_i$ rise with the innovator's markup more steeply than $\T_i/\Pv_i$ when goods are substitutes. With zero pass-through, as in \citet{Peters2020}, $\AEff=-(\mu_i-\bar\mu)\Pv_i$ and $dW/\Pv_i=\bar\mu$ for every innovator, whatever $\theta$. The fall in allocative efficiency then exactly offsets the innovator's own markup in the appropriability gap, and the ratio equals $1/\Lambda_L$, the premium at which labor drawn into R\&D costs society more than its wage (Section~\ref{sec:factors}). For a small innovator that perceives its own-price effect, under quantity or price competition, $d\pi_i/(p_iy_i)$ tends to $\dlog A_i/\mu_i$ (Corollary~\ref{cor:small}) and $\AEff/(p_iy_i)$ to the limit given after equation~\eqref{eq:onefixed}, so in that market $dW/d\pi_i$ tends to the value of \eqref{eq:ratio} at $\mu_i=\theta/(\theta-1)$.

\ifjournal\else
If the small innovator's rivals price at marginal cost, equation~\eqref{eq:ratio} equals $(2\theta-1)/(\theta-1)$, the inverse of the share of the marginal social surplus from a cost reduction that a monopolist facing constant-elasticity demand captures \citep{AdhamiBrouilletteRockall2025}. In general equilibrium the mean markup takes the place of marginal-cost pricing elsewhere, so the reallocation term depends on the innovator's markup relative to the mean.
\fi

Under Cournot conduct the private return also contains the induced price term, and the comparison depends on the innovator's share. Both returns have closed forms in the economy of \citet{BaqaeeFarhi2020} in which one Cournot producer faces a fringe of atomistic producers in each of a continuum of small industries (Corollary~\ref{cor:fringe} in Online Appendix~\ref{oa:fringe}). There the induced price term is proportional to the change in allocative efficiency, $\IP_i=(\theta-1)(1+s_i)\,\AEff/\theta$, although it consists of transfers rather than of any part of that change, and $dW/d\pi_i$ rises with the innovator's share when the fringe's mass is fixed. When the fringe enters freely, $dW/d\pi_i$ is the welfare value of the labor that the innovation saves, per unit of its wage, whatever the share (Remark~\ref{rem:entry}). With strategic rivals in heterogeneous industries, the induced price term need not move with the change in allocative efficiency at all, and it is no measure of that change (Online Appendix~\ref{oa:nests}).
\section{Measured Returns}\label{sec:measured}\label{sec:harberger}

\ifjournal\else
This section relates Theorem~\ref{thm:main} to the accounting at fixed quantities that productivity measurement and the strategic R\&D literature use, and to the output-based returns of estimated social returns to R\&D. Online Appendix~\ref{oa:statistics} lists the statistics that measure each term of the wedge.
\fi

\subsection{The Harberger term}\label{sec:bridge}

The welfare change can also be accounted for at fixed quantities.

\begin{lemma}[Welfare at fixed quantities]\label{lem:accounting}
	Under Assumption~\ref{ass:inputs},
	\[
		dW=\sum_k\bigl(\mc_k\,y_k\,\dlog A_k+U_{b_k}\,db_k\bigr)+\sum_k(p_k-\mc_k)\,dy_k .
	\]
\end{lemma}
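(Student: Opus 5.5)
Differentiate welfare directly along the equilibrium and use the household's and producers' first-order conditions to turn every quantity change into a payment or a markup wedge.

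Start from $W=U(\bm c;\bm b)-\sum_fV_f(L_f)$. Since $p_m=U_{c_m}$ for every consumed good and $w_f=V_f'(L_f)$ for every factor in elastic supply (with $dL_f=0$ for inelastic factors), we get
\[
	dW=\sum_mp_m\,dc_m+\sum_kU_{b_k}db_k-\sum_fw_f\,dL_f .
\]
Next, substitute market clearing, $dc_m=dy_m-\sum_kdx_{km}$ and $dL_f=\sum_kdl_{kf}$. This regroups the quantity terms by producer:
\[
	dW=\sum_k\Bigl(p_k\,dy_k-\sum_mp_m\,dx_{km}-\sum_fw_f\,dl_{kf}\Bigr)+\sum_kU_{b_k}db_k ,
\]
that is, $dW=\sum_k(p_k\,dy_k-dI_k)+\sum_kU_{b_k}db_k$, with $dI_k$ as in Corollary~\ref{cor:aeformula}.

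The key step is to express each producer's $dI_k$ through its output and productivity changes using Assumption~\ref{ass:inputs} alone. Differentiate $y_k=A_kG_k(\bm x_k,\bm l_k)$:
\[
	dy_k=y_k\,\dlog A_k+A_k\sum_m\frac{\partial G_k}{\partial x_{km}}dx_{km}+A_k\sum_f\frac{\partial G_k}{\partial l_{kf}}dl_{kf}.
\]
Cost minimization gives $A_k\partial G_k/\partial x_{km}=p_m/\mc_k$ and $A_k\partial G_k/\partial l_{kf}=w_f/\mc_k$. Hence $dy_k=y_k\,\dlog A_k+dI_k/\mc_k$, or
\[
	dI_k=\mc_k\,dy_k-\mc_k\,y_k\,\dlog A_k .
\]
Substituting gives $p_k\,dy_k-dI_k=(p_k-\mc_k)\,dy_k+\mc_ky_k\,\dlog A_k$. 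Summing over $k$ and adding the quality terms yields the statement.

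The only delicate point is bookkeeping rather than analysis, and three places need care:
\begin{itemize}
	\item Competitive aggregators are included among the producers $k$. Their $\mu_k=1$ means their Harberger terms vanish, but they still carry productivity terms.
	\item Goods not consumed have $c_m=0$, so $U_{c_m}$ never appears for them.
	\item Quality enters only $U$, so $b_k$ does not appear in the production step.
\end{itemize}
No conduct assumption is used, consistent with the lemma requiring only Assumption~\ref{ass:inputs}. As a check, combining this with Corollary~\ref{cor:aeformula} should reproduce Theorem~\ref{thm:main}'s $dW=\sum_k\T_k+\AEff$.
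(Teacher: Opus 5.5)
Your proposal is correct and follows the paper's proof step for step. The household's first-order conditions and market clearing give $dW=\sum_k(p_k\,dy_k-dI_k)+\sum_kU_{b_k}\,db_k$, and cost minimization gives $dI_k=\mc_k(dy_k-y_k\,\dlog A_k)$, with no use of Assumption~\ref{ass:conduct}.

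One bookkeeping point needs tightening. For a good that is not consumed, its term drops out because $dc_m=0$ along the differentiable selection, not merely because $c_m=0$. Otherwise the term $U_{c_m}\,dc_m$ would remain with $U_{c_m}\neq p_m$.
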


The proof is in Online Appendix~\ref{oa:otherproofs}, where $dI_k\equiv\sum_mp_m\,dx_{km}+\sum_fw_f\,dl_{kf}=\mc_k(dy_k-y_k\,\dlog A_k)$. The first sum values each technology change at fixed quantities, a productivity gain at the cost of the inputs it saves at current output. We call the second sum, $\sum_k(p_k-\mc_k)\,dy_k$, the Harberger term. The lemma does not use Assumption~\ref{ass:conduct}, so it holds for any conduct in output markets. It is the identity of \citet{Harberger1971}, which \citet{BaqaeeBurstein2026} extend to economies with heterogeneous agents.

\begin{corollary}[Allocative efficiency and the Harberger term]\label{cor:bridge}
	Under Assumption~\ref{ass:inputs},
	\begin{equation}
		\sum_k(p_k-\mc_k)\,dy_k=\AEff+\sum_k\bigl(E\tilde\lambda_k-\mc_k\,y_k\bigr)\,\dlog A_k .
		\label{eq:bridge}
	\end{equation}
	Under Assumption~\ref{ass:conduct} as well, the wedge can be written as
	\begin{equation}
		dW-d\pi_i=\sum_k(p_k-\mc_k)\,dy_k+\sum_{j\neq i}\bigl(\mc_j\,y_j\,\dlog A_j+U_{b_j}\,db_j\bigr)+\nu_i\,db_i-\IP_i ,
		\label{eq:harberger}
	\end{equation}
	where $\nu_i$ is the quality distortion of Corollary~\ref{cor:gap}.
\end{corollary}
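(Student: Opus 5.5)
Equation~\eqref{eq:bridge} follows by combining Lemma~\ref{lem:accounting} with Theorem~\ref{thm:main}. Both express $dW$, and the second of these decompositions does not use Assumption~\ref{ass:conduct}. Its proof uses only the chain rule, Lemma~\ref{lem:allocation} and Definition~\ref{def:ae}, so under Assumption~\ref{ass:inputs} alone
\[
	dW=\sum_k\bigl(E\tilde\lambda_k\,\dlog A_k+U_{b_k}\,db_k\bigr)+\AEff .
\]
Lemma~\ref{lem:accounting} gives the same $dW$ as
\[
	\sum_k\bigl(\mc_ky_k\,\dlog A_k+U_{b_k}\,db_k\bigr)+\sum_k(p_k-\mc_k)\,dy_k .
\]
Equating the two expressions, the quality terms $U_{b_k}\,db_k$ cancel. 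Solving for the Harberger term then gives~\eqref{eq:bridge} directly.

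For~\eqref{eq:harberger}, I would start from the wedge $dW-d\pi_i$. I would write $dW$ using Lemma~\ref{lem:accounting}, and $d\pi_i=\Pv_i+\IP_i$ using Lemma~\ref{lem:private}, which needs Assumption~\ref{ass:conduct}. Subtracting $\Pv_i=\mc_iy_i\,\dlog A_i+y_i\varphi_{i,b}\,db_i$ removes the innovator's productivity term from the first sum of Lemma~\ref{lem:accounting}. It also turns the innovator's quality term into $(U_{b_i}-y_i\varphi_{i,b})\,db_i=\nu_i\,db_i$, where $\nu_i$ is the quality distortion of Corollary~\ref{cor:gap}. What remains of that first sum is exactly the sum over $j\neq i$, and the Harberger term and $-\IP_i$ carry over unchanged.

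As a consistency check, substituting~\eqref{eq:bridge} into~\eqref{eq:harberger} should recover~\eqref{eq:main}. The $\dlog A$ coefficients $E\tilde\lambda_k-\mc_ky_k$ combine with $\mc_jy_j$ for $j\neq i$ to give $\T_j$, and with the innovator's own production term to give the appropriability gap of Corollary~\ref{cor:gap}.

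There is no real obstacle. The only point requiring care is that Lemma~\ref{lem:allocation} and the first half of Theorem~\ref{thm:main} hold without Assumption~\ref{ass:conduct}, so that~\eqref{eq:bridge} needs only Assumption~\ref{ass:inputs}. I would state this explicitly, noting that the chain-rule step in the theorem's proof invokes only Lemma~\ref{lem:allocation}.
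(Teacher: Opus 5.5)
Your proposal is correct and follows the paper's proof essentially step for step. The paper also obtains \eqref{eq:bridge} by subtracting $\sum_k\T_k$ from Lemma~\ref{lem:accounting}, using $dW=\sum_k\T_k+\AEff$, which rests only on Lemma~\ref{lem:allocation}, the chain rule, and Definition~\ref{def:ae}. It likewise obtains \eqref{eq:harberger} by subtracting $d\pi_i=\Pv_i+\IP_i$ from Lemma~\ref{lem:accounting}, so that the innovator's terms collapse to $\nu_i\,db_i$; your remark that \eqref{eq:bridge} needs only Assumption~\ref{ass:inputs} is the right point of care.
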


The proof is in Online Appendix~\ref{oa:otherproofs}. Equation~\eqref{eq:harberger}, which we call the Harberger classification, and the decomposition~\eqref{eq:main} are two exact classifications of the same wedge that differ only in where they place the markups on the output that the productivity changes add at a fixed allocation, $\sum_k(E\tilde\lambda_k-\mc_ky_k)\,\dlog A_k$. The decomposition~\eqref{eq:main} assigns them to the appropriability gap and the technology effects $\T_j$, whereas the Harberger classification places them in the Harberger term, where they join the change in allocative efficiency in a single external term that has no fixed sign and requires the change in every producer's output.

Because it contains these markups, the Harberger term is not the change in allocative efficiency. In the vertical chain of Section~\ref{sec:vertical}, no reallocation is possible and $\AEff=0$, yet the Harberger term of a process innovation equals its appropriability gap, as \citet{BaqaeeFarhi2020} point out for the productivity decompositions of \citet{PetrinLevinsohn2012} and \citet{BasuFernald2002}. The reallocation term of \citet{PetrinLevinsohn2012}, the first term of~\eqref{eq:ae}, equals the change in allocative efficiency plus the downstream-markup components of the appropriability gaps of all productivity changes, $E\sum_k(\tilde\lambda_k-\lambda_k)\,\dlog A_k$.

The strategic R\&D literature writes the wedge in the Harberger classification \citep{BranderSpencer1983,Suzumura1992,LeahyNeary1997}.\ifjournal\else{} In a market for a homogeneous good with Cournot producers, equation~\eqref{eq:harberger} gives, for a process innovation, the external effect that \citet{FarrellShapiroAER1990} derive for a change in the behavior of a subset of producers (Online Appendix~\ref{oa:fs}).\fi{} By Corollary~\ref{cor:bridge}, the Harberger term in these expressions combines the change in allocative efficiency with the appropriability gaps, which Theorem~\ref{thm:main} separates.

\subsection{Business stealing and output-based returns}\label{sec:bs}\label{sec:measurement}

An innovation that expands the innovator's output takes business from rivals that sell substitutes. The welfare effect of this business stealing \citep{MankiwWhinston1986} is $\sum_{j\neq i}(p_j-\mc_j)\,dy_j$, the rivals' output changes weighted by their margins $p_j-\mc_j$, and is a cost when their outputs fall.

\begin{proposition}[Business stealing]\label{prop:bs}
	Under Assumption~\ref{ass:inputs}, let producer $j\neq i$ experience no technology change, $\dlog A_j=0$ and $db_j=0$. Then
	\[
		d\pi_j=(p_j-\mc_j)\,dy_j+\TR_j ,
	\]
	where $\TR_j$, defined in Corollary~\ref{cor:os}, is the effect of price changes on $j$'s profit at its current quantities. Under Assumption~\ref{ass:conduct} as well, $d\pi_j=\IP_j$. The first term is $j$'s contribution to the Harberger term in~\eqref{eq:harberger} and equals $(\mu_j-1)\,dI_j$, its contribution to~\eqref{eq:ae}. The second consists of price changes and is a transfer.
\end{proposition}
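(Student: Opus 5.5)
The plan is to redo, for producer $j$, the Shephard's-lemma computation from the proof of Lemma~\ref{lem:private}, now with no technology change at $j$. By Assumption~\ref{ass:inputs}, producer $j$'s cost function $C_j(y_j;\bm p,\bm w,A_j)$ has the differential
\[
	dC_j=\mc_j\,dy_j-\mc_jy_j\,\dlog A_j+\sum_mx_{jm}\,dp_m+\sum_fl_{jf}\,dw_f .
\]
Setting $\dlog A_j=0$, I will differentiate $\pi_j=p_jy_j-C_j$ to obtain
\[
	d\pi_j=y_j\,dp_j+(p_j-\mc_j)\,dy_j-\sum_mx_{jm}\,dp_m-\sum_fl_{jf}\,dw_f .
\]
This is $(p_j-\mc_j)\,dy_j+\TR_j$ by the definition of $\TR_j$ in Corollary~\ref{cor:os}. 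This step uses only cost minimization, so it holds for any conduct.

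Under Assumption~\ref{ass:conduct} as well, I will argue exactly as in Lemma~\ref{lem:private}, with $j$ in place of $i$. Because $db_j=0$, the change $dp_j$ splits as $\varphi_{j,y}\,dy_j+(dp_j)^{-j}$, and the first-order condition $p_j-\mc_j+y_j\varphi_{j,y}=0$ removes every term in $dy_j$. What remains is $y_j(dp_j)^{-j}-\sum_mx_{jm}\,dp_m-\sum_fl_{jf}\,dw_f=\IP_j$. Equivalently, Lemma~\ref{lem:private} applied to $j$ gives $d\pi_j=\Pv_j+\IP_j$ with $\Pv_j=0$, since $\dlog A_j=0$ and $db_j=0$ in Definition~\ref{def:values}.

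For the identification of the first term, Lemma~\ref{lem:accounting} shows directly that $(p_j-\mc_j)\,dy_j$ is $j$'s summand of the Harberger term in~\eqref{eq:harberger}. To link it to~\eqref{eq:ae}, I will use the relation stated after Lemma~\ref{lem:accounting}, $dI_j=\mc_j(dy_j-y_j\,\dlog A_j)$. I would verify it by noting that the input changes valued at the Lagrange prices $p_m/\mc_j$ and $w_f/\mc_j$ equal $\dlog y_j-\dlog A_j$ times $y_j$, using Assumption~\ref{ass:inputs}. With $\dlog A_j=0$ this gives $dI_j=\mc_j\,dy_j$, so $(\mu_j-1)\,dI_j=(p_j-\mc_j)\,dy_j$. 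This is $j$'s term in the first sum of~\eqref{eq:ae}, and its term in the second sum vanishes.

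Finally, $\TR_j$ consists only of price changes valued at $j$'s current quantities. Each such price change appears with the opposite sign for the counterparty, that is, the buyers of $j$ (including the household) and $j$'s input and factor suppliers. It therefore nets out in $W$ with a single household, so $\TR_j$ is a transfer. There is no real obstacle here; the only care needed is the differential identity for $dI_j$ without constant returns, which follows from the homogeneity-free first-order conditions of Assumption~\ref{ass:inputs} applied to $d\log G_j$.
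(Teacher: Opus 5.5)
Your proposal is correct and follows essentially the paper's own proof. The paper takes the same four steps: the Shephard's-lemma profit differential with $\dlog A_j=0$, which holds for any conduct; Lemma~\ref{lem:private} applied to $j$ with $\Pv_j=0$; the identity $dI_j=\mc_j\,dy_j$, which matches the first term to $j$'s summand in~\eqref{eq:ae}; and the market-clearing argument that $\TR_j$ is paid by $j$'s buyers and received by its suppliers, so that it is a transfer.
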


The proof is in Appendix~\ref{app:proofs}. The welfare cost of business stealing from a rival without technology change is thus part of $\AEff$, whereas the rival's profit, and so its market value, changes by this cost plus the transfer $\TR_j$, which can have either sign. When markups fall with market share, a rival's profit can even rise as its output falls (Online Appendix~\ref{oa:roundabout}). The innovator's own gain from taking business is part of $\IP_i$.\ifjournal\else{} \citet{BaqaeeFarhiSangani2024} value the business stealing of a marginal entrant at the aggregate markup, whereas Proposition~\ref{prop:bs} values each rival's output loss at that rival's own markup.\fi

Leading estimates of the social return to R\&D treat business stealing as a transfer \citep{Griliches1979}. \citet{BloomSchankermanVanReenen2013} define the marginal social return to a firm's R\&D as the increase in aggregate output that it causes, including spillovers and induced R\&D, and the marginal private return as the increase in the firm's own output, including the output it takes from its rivals, which they measure as half of the change in rivals' market values. The one-half share is exact in a homogeneous-good Cournot model with linear demand. The corresponding share exceeds one half under concave inverse demand. Theorem~\ref{thm:main} instead assigns the innovator's gain from taking business to $\IP_i$ and the welfare cost to rivals to $\AEff$. Three implications follow.

First, no fixed share of the change in rivals' market values recovers the welfare cost of business stealing. With constant returns, the transfer is $\TR_j=y_j\,d(p_j-\mc_j)$, so the share of a rival's profit change due to its lost output, $(p_j-\mc_j)\,dy_j/d\pi_j$, varies across rivals and need not lie between zero and one.

Second, an output-based social return values each firm's output gain at its price and treats business stealing as a transfer, so it omits the markups downstream of the firms whose technology changes and $\AEff$ (Online Appendix~\ref{oa:application}). It is exact with equal markups, no intermediate inputs, and fixed factor supplies, and in an efficient economy it reduces to the theorem of \citet{Hulten1978}. Because the innovation also moves labor between markets, comparing the innovator with its product-market rivals does not sign the omitted term. A social return measured from aggregate productivity, as in \citet{JonesSummers2022}, contains both omitted terms \citep{BaqaeeFarhi2020}.

Third, an output-based private return exceeds the innovator's profit change by its markup on its own output gain and by the credit for rivals' lost output, less its induced price term (Online Appendix~\ref{oa:application}). It equals the profit change only without markups. At the firm's first-order condition, the marginal cost of R\&D equals the change in its profit, so inferring the socially optimal level of R\&D from the ratio of output-based returns compares the social return with a private return other than the one that governs the firm's R\&D. The decomposition does not correct these estimates, which are measured from output and market values. It identifies the terms that their accounting values differently and when each is exact.

\section{R\&D Policy}\label{sec:policy}

\ifjournal\else
This section draws the implications of the decomposition for R\&D policy. Section~\ref{sec:subsidy} decomposes the returns to R\&D in a game of R\&D and derives the local R\&D subsidy, and Section~\ref{sec:factors} derives the social cost of R\&D that draws factors from production.
\fi

\subsection{Local R\&D subsidies}\label{sec:subsidy}

Innovations come from R\&D, and R\&D responds to the innovations of others. Online Appendix~\ref{oa:dynamic} sets up a Markov game in which firm $k$ chooses R\&D effort $r_k$ at cost $\Gamma_k(r_k)$, effort raises knowledge stocks with spillovers, and the static economy of Section~\ref{sec:environment} is in equilibrium at each state. The social and private marginal returns to firm $k$'s effort, $\SMR_k$ and $\PMR_k$, are the effects of one more unit of it on welfare and on firm $k$'s value, each net of the R\&D costs that it induces. Firm $k$'s R\&D wedge is $\Delta_k\equiv\SMR_k-\PMR_k$, and firm $i$'s stake in firm $j$'s effort, $\xi_{ij}$, is the effect of one more unit of $j$'s effort on $i$'s value. One more unit of firm $i$'s effort adds knowledge that spreads over time. To trace the technology changes generated by this extra knowledge, hold each firm's effort on its equilibrium path. Let $dr^0_{k,t}$ denote the effort response prescribed by firm $k$'s strategy to the resulting change in the date-$t$ knowledge state.

\begin{theorem}[Social and private returns to R\&D]
	\label{thm:dynamic}
	Under Assumptions~\ref{ass:mpe} and~\ref{ass:regular} of Online Appendix~\ref{oa:dynamic}, let $\T_{k,t}$, $\Pv_{i,t}$, $\AEff_t$, and $\IP_{i,t}$ be the terms of Theorem~\ref{thm:main}, with firm $i$ as the innovator, for the change in technologies that the extra knowledge induces at date $t$ with every firm's effort held on its path, and evaluate $\Delta_k$ and $\xi_{ij}$ at the date-$t$ state. Then, with discount rate $\rho$,
	\begin{align*}
		\SMR_i&=\int_0^\infty e^{-\rho t}\Bigl[\sum_k\T_{k,t}+\AEff_t+\sum_k\Delta_k\,dr^0_{k,t}\Bigr]dt,\\
		\PMR_i&=\int_0^\infty e^{-\rho t}\Bigl[\Pv_{i,t}+\IP_{i,t}+\sum_{j\neq i}\xi_{ij}\,dr^0_{j,t}\Bigr]dt,
	\end{align*}
	and therefore
	\begin{equation}
		\begin{split}
			\SMR_i-\PMR_i=\int_0^\infty e^{-\rho t}\Bigl[&(\T_{i,t}-\Pv_{i,t})+\sum_{j\neq i}\T_{j,t}+\AEff_t-\IP_{i,t}\\
			&+\Delta_i\,dr^0_{i,t}+\sum_{j\neq i}(\Delta_j-\xi_{ij})\,dr^0_{j,t}\Bigr]dt .
		\end{split}
		\label{eq:dynamic}
	\end{equation}
\end{theorem}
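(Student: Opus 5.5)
The plan is to differentiate the value functions of the Markov game along the equilibrium path. At each date I apply the static decomposition of Theorem~\ref{thm:main}, and I account for induced effort changes separately by an envelope argument for each firm's own effort.

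First, I fix the perturbation: one more unit of $r_i$ at date $0$ and nothing else. This changes the knowledge state at every later date. Write $d\kappa_t$ for the change in the date-$t$ knowledge state when every firm's effort is held on its path. Under Assumption~\ref{ass:mpe}, each firm's strategy is a feedback rule. So the actual change in the state equals $d\kappa_t$ plus the cumulative effect of the effort responses. By linearity of first-order changes, I split the total change in welfare and in firm $i$'s value into two parts. One part is due to $d\kappa_t$ at fixed efforts. The other is due to each induced response $dr^0_{k,t}$.

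The key step is to value each response $dr^0_{k,t}$ at the marginal returns defined in the statement. Concretely, I treat the induced response of firm $k$ at date $t$ as an extra unit of $k$'s effort, scaled by $dr^0_{k,t}$, starting from the date-$t$ state. Its total effect on welfare is $\SMR_k$ at that state, net of the R\&D costs it induces, and its effect on firm $i$'s value is $\xi_{ik}$. This is recursive: the response itself triggers further responses. Those further responses, however, are already inside $\SMR_k$ and $\xi_{ik}$, because both are defined with strategies followed thereafter. So I need only the first-round responses $dr^0_{k,t}$ to the path-held knowledge change, and I should avoid double counting. Each response also carries its own cost $\Gamma_k'\,dr^0_{k,t}$, which is included in $\SMR_k$ through the net-of-cost definition.

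For the social return, the response of firm $k$ contributes $\SMR_k\,dr^0_{k,t}$. Meanwhile the path-held knowledge change produces, at each date, a static technology change whose welfare effect is $\sum_k\T_{k,t}+\AEff_t$ by Theorem~\ref{thm:main}, since static equilibrium holds at each state. Adding the R\&D cost of the responses and regrouping turns $\SMR_k\,dr^0_{k,t}$ into the form stated in the theorem.

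I expect this regrouping to be the main obstacle. The statement writes $\Delta_k\,dr^0_{k,t}$ rather than $\SMR_k\,dr^0_{k,t}$. That requires $\PMR_k\,dr^0_{k,t}$ to vanish, which holds by firm $k$'s first-order condition: $\PMR_k$ is net of the R\&D cost and equals zero at an interior optimum, as guaranteed by Assumption~\ref{ass:regular}. So I will verify that each $dr^0_{k,t}$ is paired with a zero private net marginal return, which lets $\SMR_k$ be replaced by $\Delta_k$.

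For the private return, the path-held knowledge change moves firm $i$'s flow profit by $\Pv_{i,t}+\IP_{i,t}$, using Lemma~\ref{lem:private} at each date. Firm $i$'s own response enters through $\PMR_i=0$ at its optimum; this is the envelope step. Rivals' responses enter through $\xi_{ij}\,dr^0_{j,t}$. Discounting at $\rho$ and integrating gives the two displayed formulas, and subtracting them and splitting $\T_{i,t}$ from the sum gives~\eqref{eq:dynamic}. In that difference, $\Delta_i\,dr^0_{i,t}$ survives because it appears only in the social return.

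The remaining care is regularity. I need differentiability of the value functions and of the strategies, and I need to be able to interchange differentiation with the integral; I would take both from Assumption~\ref{ass:regular}.
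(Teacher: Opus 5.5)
Your decomposition is the paper's. You split the equilibrium response into the fixed-effort response $d\bm z^0_t$ and the first-round responses $dr^0_{k,t}$. Each response is propagated by the equilibrium flow and valued at $V_{\bm z}$ or $V^i_{\bm z}$ at the date-$t$ state, so later responses are not double counted. The paper makes your appeal to linearity precise with three ingredients. The first is Duhamel's formula $\Pi(t,0)=\Pi_0(t,0)+\int_0^t\Pi(t,s)\,\mathcal J_\chi(s)\,\Pi_0(s,0)\,ds$. The second is the representation $\nabla\mathcal V_F(\bm z_s)=\int_s^\infty e^{-\rho(t-s)}\Pi(t,s)^\top\nabla F(\bm z_t)\,dt$ of marginal values at later states. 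The third is an exchange of the order of the resulting double integral, justified by the bounds of Assumption~\ref{ass:regular}; an interchange of differentiation and integration alone is not enough.

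The step you flag as the main obstacle rests on a misreading of the definitions. In the statement, $\SMR_k=\bm h_{r_k}^\top V_{\bm z}$ and $\PMR_k=\bm h_{r_k}^\top V^k_{\bm z}$ exclude the cost of the extra unit of effort itself and include only the R\&D costs it induces later. So $\Gamma_k'\,dr^0_{k,t}$ is not inside $\SMR_k$. The interior first-order condition, which comes from Assumption~\ref{ass:mpe} rather than Assumption~\ref{ass:regular}, is $\PMR_k=\Gamma_k'$, not $\PMR_k=0$.

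The correct bookkeeping gives each response the weight $\SMR_k-\Gamma_k'$. Its direct cost enters through the flow payoff along $d\bm z^0_t$, since $\nabla(\Gamma_k\circ\chi_k)^\top d\bm z^0_t=\Gamma_k'\,dr^0_{k,t}$. The knowledge it creates is worth $\SMR_k\,dr^0_{k,t}$. The first-order condition then turns $\SMR_k-\Gamma_k'$ into $\SMR_k-\PMR_k=\Delta_k$. In the private return it makes the innovator's own weight $\PMR_i-\Gamma_i'=0$, which is the envelope step. A rival's weight is $\xi_{ij}$ with no cost term, because firm $i$ does not pay $j$'s cost.

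Your two misreadings (own cost inside $\SMR_k$, and $\PMR_k=0$) cancel for the responding firms, so \eqref{eq:dynamic} comes out right. They cannot be held consistently, though. You subtract no $\Gamma_i'$ for the innovator's own extra unit, as the statement requires. Under your net reading, $\PMR_i$ would be zero at the optimum, whereas the right side of the displayed formula for $\PMR_i$ equals $\Gamma_i'$. Use the statement's convention throughout and your argument goes through.
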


The proof is in Online Appendix~\ref{oa:dynamic-game}. The first line of the integrand in~\eqref{eq:dynamic} is the wedge of Theorem~\ref{thm:main} at each date, and the second values the R\&D responses. The innovator's own R\&D response has no private value at the margin, and rivals' responses enter its private return only through its stakes $\xi_{ij}$ in their R\&D, which extends the strategic effect of \citet{BranderSpencer1983} to R\&D. A firm's R\&D response changes welfare only through its R\&D wedge, by $\Delta_k\,dr^0_{k,t}$, which at the first-order condition $\PMR_k=\Gamma_k'$ is the change in its R\&D spending times $\SMR_k/\PMR_k-1$, as equation~\eqref{eq:ae} values each producer's change in input use at its markup less one. The reallocation of R\&D across firms is therefore external to the innovator, as allocative efficiency is.

Suppose firm $k$ pays $(1-\varsigma_k)\,\Gamma_k(r_k)$ for R\&D effort, where $\varsigma_k$ is a proportional subsidy financed lump sum. Its first-order condition becomes $\PMR_k=(1-\varsigma_k)\,\Gamma_k'$. Effort is efficient at the margin when $\SMR_k=\SMC_k$, where $\SMC_k$ is the social marginal cost of effort. If the ratios $\PMR_k/\SMR_k$ and $\SMC_k/\Gamma_k'$ stay at their values at the observed equilibrium, the first-order condition coincides with this efficiency condition exactly when the subsidy rate is
\begin{equation}
	\varsigma_k^{\mathrm{loc}}=1-\frac{\PMR_k}{\SMR_k}\,\frac{\SMC_k}{\Gamma_k'} .
	\label{eq:subsidy}
\end{equation}

At the observed rate, the first-order condition gives $\Gamma_k'=\PMR_k/(1-\varsigma_k)$, so the rate needs no separate estimate of the cost of effort, only the two marginal returns and $\SMC_k/\Gamma_k'$, the social cost of a dollar of R\&D spending. When R\&D is paid in the numeraire, $\SMC_k=\Gamma_k'$ and $\varsigma_k^{\mathrm{loc}}=(\SMR_k-\PMR_k)/\SMR_k$, which depends on the two marginal returns only through their ratio, and Theorem~\ref{thm:dynamic} decomposes its numerator, the R\&D wedge. The rate is local because it holds the two ratios at their observed values. Determining the optimal subsidy requires accounting for how these ratios change with the equilibrium induced by the subsidy \citep[computed in the linear--quadratic oligopoly by][]{HopenhaynOkumura2026}.

The decomposition has three implications for R\&D policy. First, the allocative-efficiency component of the wedge has no fixed sign (Section~\ref{sec:primitives}), so unlike knowledge spillovers it does not by itself justify subsidizing R\&D. Frameworks with a common markup across producers \citep{JonesWilliams2000,AtkesonBurstein2019} abstract from the reallocation that leaves this component without a fixed sign, and in \citet{AkcigitHanleyStantcheva2022}, whose aggregate output is additively separable across firms, its counterpart is positive. Second, for a process innovation that raises productivity, the appropriability gap is nonnegative and raises the rate for innovators upstream of marked-up producers, as upstream position does for production subsidies in \citet{Liu2019} and for entry subsidies in \citet{BaqaeeFarhiEntry}. For R\&D that draws factors from production, Section~\ref{sec:factors} compares the gap with the premium on those factors. Third, a firm's local rate depends on its rivals' R\&D wedges and on its stakes in their R\&D.\ifjournal\else{} Online Appendix~\ref{oa:dynamic-game} relates the decomposition to the optimal subsidies of \citet{AkcigitHanleyStantcheva2022} and \citet{LeahyNeary1997}.\fi{} These implications take the social return at the distorted equilibrium. At the efficient allocation that a planner with production-side instruments would choose, $\AEff=0$ to first order, so when such instruments are available, $\AEff$ is a reason to use them rather than to subsidize R\&D \citep[see also][]{JonesWilliams1998,AlvarezBueraTrachter2026}.

\subsection{R\&D that uses factors}\label{sec:factors}

R\&D often uses factors that would otherwise be employed in production, and the social cost of such R\&D then differs from its price.

\begin{proposition}[Innovation that uses factors]
	\label{prop:factors}
	Let $L_f$ now denote the employment of factor $f$ in production and $L^R_f$ its employment in innovation, so that $\Lambda_f=w_fL_f/E$, and let the supply $L_f+L^R_f$ of each factor be fixed. Suppose the innovation also withdraws $dL^R_f\ge0$ units of factor $f$ from production, which the innovator buys at $w_f$, and the cost of the innovation is included in both returns. Let $\widetilde\Lambda_f=\sum_k\tilde\lambda_k\widetilde\Omega_{kf}$ be the cost-based share of factor $f$. Then
	\[
		dW=\sum_k\T_k-E\sum_f\widetilde\Lambda_f\frac{dL^R_f}{L_f}+\AEff,
		\qquad
		d\pi_i=\Pv_i+\IP_i-\sum_fw_f\,dL^R_f,
	\]
	where $\AEff$ also contains the reallocation that the withdrawal causes. The wedge $dW-d\pi_i$ therefore contains, in addition to the terms of Theorem~\ref{thm:main}, the term
	\[
		-E\sum_f\bigl(\widetilde\Lambda_f-\Lambda_f\bigr)\frac{dL^R_f}{L_f}\le0 .
	\]
\end{proposition}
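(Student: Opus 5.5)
The plan is to rerun the chain-rule argument of Theorem~\ref{thm:main}, treating the withdrawal $dL^R_f$ as one more exogenous perturbation next to $d\tau$.

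\textbf{Step 1: welfare.} With total supply $L_f+L^R_f$ fixed, the disutility term $\sum_fV_f$ does not move. Welfare attained from production is then $W(\tau,\mathcal X,\bm L)$ with $\bm L$ the factors employed in production, exactly as in Lemma~\ref{lem:allocation}. The chain rule splits $dW$ into three pieces:
\begin{itemize}
\item $(\partial W/\partial\tau)\,d\tau=\sum_k\T_k$, by Lemma~\ref{lem:allocation};
\item the effect of $d\mathcal X$, which stays in $\AEff$;
\item the effect of $d\bm L=-d\bm L^R$ at a fixed allocation.
\end{itemize}
The last piece is not a supply response that $\AEff$ absorbs. It is imposed, so I would compute it directly by repeating the proof of Lemma~\ref{lem:allocation}. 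At fixed shares, $l_{kf}=\mathcal X_{kf}L_f$, so $\dlog l_{kf}=\dlog L_f$ for every $k$. Using $\partial\log G_k/\partial\log l_{kf}=\widetilde\Omega_{kf}$ from Assumption~\ref{ass:inputs}, the output changes are
\[
\dlog\bm y=\widetilde\Psi\,\widetilde\Omega_{\cdot F}\,\dlog\bm L .
\]
Valuing consumption at $p_m=U_{c_m}$ gives
\[
E\,\beta^\top\widetilde\Psi\,\widetilde\Omega_{\cdot F}\,\dlog\bm L=E\sum_f\widetilde\Lambda_f\,\dlog L_f=-E\sum_f\widetilde\Lambda_f\,\frac{dL^R_f}{L_f}.
\]
The remaining allocation changes, including those that the withdrawal induces, are $\AEff$ by definition. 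This yields the stated $dW$.

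\textbf{Step 2: profit.} The innovator's profit now includes $-\sum_fw_fL^R_f$. Differentiating, its change is $-\sum_fw_f\,dL^R_f-\sum_fL^R_f\,dw_f$. The stated formula drops the second term, so I would read the withdrawal as incremental from $L^R_f=0$ at the margin. Alternatively, the term can be folded into $\IP_i$ as a price change valued at current quantities. I would state this convention explicitly. Lemma~\ref{lem:private} then applies unchanged to the production part and gives $d\pi_i=\Pv_i+\IP_i-\sum_fw_f\,dL^R_f$.

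\textbf{Step 3: the extra term.} Subtracting the two returns, the new part of the wedge is
\[
-E\sum_f\widetilde\Lambda_f\,\frac{dL^R_f}{L_f}+\sum_fw_f\,dL^R_f .
\]
Since $w_f=E\Lambda_f/L_f$, the second sum equals $E\sum_f\Lambda_f\,dL^R_f/L_f$, so the new part is $-E\sum_f(\widetilde\Lambda_f-\Lambda_f)\,dL^R_f/L_f$. For the sign, I would use $\widetilde\Psi\ge\Psi$ elementwise, as in the proof of Corollary~\ref{cor:gap}. This gives $\widetilde\Lambda_f=\beta^\top\widetilde\Psi_{(f)}\ge\beta^\top\Psi_{(f)}=\Lambda_f$, and together with $dL^R_f\ge0$ the term is nonpositive.

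\textbf{Main obstacle.} The delicate point is Step 1: keeping $\AEff$ as only the $d\mathcal X$ term while the factor-supply piece is pulled out explicitly. The statement's clause that $\AEff$ "also contains the reallocation" means the share changes caused by the withdrawal. I would check this against Definition~\ref{def:ae}, whose $dL_f$ term must now be excluded because it is the explicit withdrawal cost.
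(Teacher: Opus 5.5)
Your proposal is correct and follows the paper's proof essentially step for step. You obtain $-E\sum_f\widetilde\Lambda_f\,dL^R_f/L_f$ as the withdrawal's effect at a fixed allocation by rerunning Lemma~\ref{lem:allocation}, take the production profit from Lemma~\ref{lem:private}, convert the factor payment with $w_f=E\Lambda_f/L_f$, and get $\Lambda_f\le\widetilde\Lambda_f$ from the same elementwise dominance used for Corollary~\ref{cor:gap}. Your remark about $L^R_f\,dw_f$ is a fair clarification: the paper simply takes the cost of the innovation to be $\sum_fw_f\,dL^R_f$, which is the convention you adopt.
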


The proof is in Appendix~\ref{app:proofs}. At a fixed allocation, the factors used for innovation would have produced output worth $\widetilde\Lambda_f/\Lambda_f$ times their cost, because the markups along the supply chains that use them are forgone. The additional term is the network form of the monopoly-distortion effect of \citet{AghionHowitt1992}, in which research labor costs society more than its wage because it would otherwise work for a monopolist. When R\&D uses factors, $\SMC_k$ in equation~\eqref{eq:subsidy} therefore values the factors that a unit of effort uses at $\widetilde\Lambda_f/\Lambda_f$ times their cost and subtracts the change in allocative efficiency that their withdrawal causes.

The premium on a factor averages the ratios of equation~\eqref{eq:liu} over the producers that employ the factor. Because $w_fl_{kf}=E\lambda_k\widetilde\Omega_{kf}/\mu_k$,
\[
	\frac{\widetilde\Lambda_f}{\Lambda_f}=\sum_k\frac{l_{kf}}{L_f}\,\frac{\mu_k\tilde\lambda_k}{\lambda_k},
\]

and with constant returns, for which $\sum_f\widetilde\Lambda_f=1$, the mean of these premiums weighted by factor incomes is $1/\sum_f\Lambda_f$, the ratio of final expenditure to factor income.\ifjournal\else{} With one factor, dividing the ratios by the premium $1/\Lambda_L$ gives the result of \citet{Liu2019} that his distortion centrality, weighted by value added, averages to one, so that a uniform production subsidy has no first-order effect.\fi{} The next result compares this premium with the appropriability gap.

\begin{corollary}[R\&D paid in factors]\label{cor:factors}
	Under the assumptions of Proposition~\ref{prop:factors}, in which the supply of each factor is fixed, consider a process innovation by producer $i$, and let $R=\sum_fw_f\,dL^R_f$ be its cost.
	\begin{enumerate}
		\item Suppose that no producer uses intermediate inputs other than competitive aggregators that sell, directly or through other such aggregators, only to final demand, and that every other producer charges the same markup $\mu$. Then $\AEff=0$ and
		\[
			dW=\mu\Bigl(\sum_k\mc_ky_k\,\dlog A_k-R\Bigr)
		\]
		for any demand system, conduct, and returns to scale, whatever factors the innovation uses.
		\item Suppose that technologies have constant returns, that final demand is homothetic, that every markup depends only on expenditure shares, as in Proposition~\ref{prop:markups}, and that the innovation withdraws factors in proportion to their employment in production, with $dL^R_f/L_f$ the same for every $f$. Then the withdrawal causes no reallocation, and $\AEff$ is the change in allocative efficiency caused by the technology change alone. Without spillovers or induced technology changes,
		\[
			dW=\frac{\mu_i\tilde\lambda_i}{\lambda_i}\,\Pv_i+\AEff-\frac{R}{\sum_f\Lambda_f},
		\]
		where $1/\sum_f\Lambda_f$ is the mean of the ratios $\mu_k\tilde\lambda_k/\lambda_k$ weighted by producers' payments to factors. If $R$ equals the private return of the technology change, $\Pv_i+\IP_i$, as at the first-order condition for R\&D, then
		\[
			dW=\Bigl(\frac{\mu_i\tilde\lambda_i}{\lambda_i}-\frac{1}{\sum_f\Lambda_f}\Bigr)\Pv_i+\AEff-\frac{\IP_i}{\sum_f\Lambda_f}.
		\]
	\end{enumerate}
\end{corollary}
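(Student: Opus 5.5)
The plan is to prove both parts from Proposition~\ref{prop:factors}, which already gives $dW=\sum_k\T_k-E\sum_f\widetilde\Lambda_f\,dL^R_f/L_f+\AEff$. Each part then reduces to identifying the middle term and showing what happens to $\AEff$.

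\emph{Part 1.} I would compute $dW$ directly rather than through $\AEff$.
\begin{itemize}
\item With total factor supplies fixed and $p_m=U_{c_m}$, we have $dW=\sum_m p_m\,dc_m$ over consumed goods.
\item Each competitive aggregator satisfies $p_a=\mc_a$, and by cost minimization (Assumption~\ref{ass:inputs}) $p_a\,dy_a=\sum_m p_m\,dx_{am}$ when it has no technology change. Walking up the chain of aggregators therefore rewrites $dW$ as $\sum_k p_k\,dy_k$ over the non-aggregator producers, which use no intermediate inputs.
\item For each such producer, $p_k\,dy_k=\mu\,\mc_k\,dy_k$. The identity $dI_k=\mc_k(dy_k-y_k\,\dlog A_k)$ from the proof of Lemma~\ref{lem:accounting} gives $\mc_k\,dy_k=\mc_ky_k\,\dlog A_k+\sum_f w_f\,dl_{kf}$. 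This step uses no conduct assumption and no returns to scale.
\item Summing over $k$, $\sum_{k,f}w_f\,dl_{kf}=\sum_f w_f\,dL_f=-R$, because fixed total supply gives $dL_f=-dL^R_f$. This yields the displayed formula $dW=\mu\bigl(\sum_k\mc_ky_k\,\dlog A_k-R\bigr)$.
\end{itemize}
To get $\AEff=0$, I would check that in this economy $E\tilde\lambda_k=\mu\,\mc_ky_k$ and $\widetilde\Lambda_f=\mu\Lambda_f$. Both hold because aggregators carry cost-based weights equal to their revenue weights and pass them straight through. Comparing with Proposition~\ref{prop:factors} then forces $\AEff=0$. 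Aggregators that buy from each other need only an induction on the aggregation tree.

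\emph{Part 2.} Write $x=dL^R_f/L_f$, common across $f$.
\begin{itemize}
\item I would first argue that the withdrawal alone causes no reallocation. Scaling every factor endowment by $1-x$ in a constant-returns economy with homothetic demand scales every quantity proportionally, leaves all relative prices, expenditure shares and input shares unchanged, and hence, since markups depend only on expenditure shares, leaves all markups unchanged. The scaled point is therefore again an equilibrium with the same allocation shares $\mathcal X$.
\item Local uniqueness from the regularity assumption identifies it as the selected equilibrium. By linearity of first-order changes, $\AEff$ is then that of the technology change alone.
\item Next, $R=\sum_f w_fL_fx=Ex\sum_f\Lambda_f$, and constant returns give $\sum_f\widetilde\Lambda_f=1$. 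So $E\sum_f\widetilde\Lambda_f\,x=Ex=R/\sum_f\Lambda_f$.
\item With only $\dlog A_i\neq0$, equation~\eqref{eq:liu} gives $\T_i=(\mu_i\tilde\lambda_i/\lambda_i)\Pv_i$, which yields the first display.
\item The interpretation of $1/\sum_f\Lambda_f$ as the payment-weighted mean of $\mu_k\tilde\lambda_k/\lambda_k$ follows by summing the display before the corollary, $\widetilde\Lambda_f/\Lambda_f=\sum_k(l_{kf}/L_f)\mu_k\tilde\lambda_k/\lambda_k$, with weights $w_fL_f$.
\item Substituting $R=\Pv_i+\IP_i$ gives the last display.
\end{itemize}

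The main obstacle is the no-reallocation step in part 2. It needs care that "markups depend only on expenditure shares" covers both conduct of the Proposition~\ref{prop:markups} type and the competitive aggregators. It also needs care that the proportional scaling is consistent with market clearing for every good, not just for factors. In part 1, the delicate point is the bookkeeping through nested aggregators, which I would handle by induction.
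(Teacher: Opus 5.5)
Your proposal is correct and follows the paper's proof. In Part~1, your telescoping of $dW$ through the aggregators, which contribute nothing because $p_a=\mc_a$, is Lemma~\ref{lem:accounting} with the extra $-R$ term; the comparison with Proposition~\ref{prop:factors}, via $\tilde\lambda=\lambda$ and $\widetilde\Lambda_f=\mu\Lambda_f$, that forces $\AEff=0$ is the paper's, and in Part~2 you use the same proportional-scaling equilibrium, local uniqueness, $\sum_f\widetilde\Lambda_f=1$, and equation~\eqref{eq:liu}, with the factor-payment-weighted mean obtained by the same computation.
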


The proof is in Appendix~\ref{app:proofs}. In Part~1, society values each dollar of cost that a producer saves, and each dollar of factors that the innovation withdraws, at $\mu$ dollars, so per dollar the appropriability gap and the premium cancel. Without spillovers or induced technology changes, at $R=\Pv_i+\IP_i$ the social return net of the cost of R\&D is then $-\mu\,\IP_i$. In Part~2, a withdrawal in proportion to employment scales the economy down without changing its allocation, whereas R\&D that uses only some of the factors, returns to scale that differ across producers, or decreasing returns with intermediate inputs make producers contract unevenly, and the reallocation enters $\AEff$. With a proportional withdrawal, the coefficient of $\Pv_i$ in the last display is the appropriability gap net of the premium, positive exactly when the innovator's ratio exceeds the mean. At the mean the two cancel, the counterpart for R\&D of the neutrality of uniform production subsidies in \citet{Liu2019}, and the innovator adds to welfare, net of the cost of its R\&D, only its change in allocative efficiency less the social cost of the R\&D that its induced price term, a transfer, draws.\ifjournal\else{} If the innovator also counts its own profit loss as the withdrawal scales the economy down, $R$ falls short of $\Pv_i+\IP_i$ by the fraction $\pi_i/(\pi_i+E\sum_f\Lambda_f)$.\fi{} Section~\ref{sec:app-cost} measures these terms for U.S. public firms.
\section{The Wedge in the U.S. Economy}\label{sec:application}

This section measures the terms of Theorem~\ref{thm:main} for process innovations by U.S. public firms in \IWAppYear{}, in a baseline with quantity competition within six-digit NAICS markets and in two alternatives (Table~\ref{tab:application7}). The findings hold in every column of Table~\ref{tab:application7} unless conduct is named.

\subsection{The economy}\label{sec:app-economy}

The economy has the 66 industries of the BEA summary use table, with the nested-CES production and final demand of \citet{BaqaeeFarhi2020} and labor and capital in fixed supply. Each industry's output is a CES aggregate, with elasticity $\sigma$, of its six-digit NAICS markets, and each market's output is a CES aggregate, with elasticity $\theta$, of the goods of its producers and of a fringe of negligible producers. The producers are the business segments of the \IWAppSegFirms{} Compustat firms of the sample of \citet{BaqaeeFarhi2020} that sell in the United States, and a producer's share $s$ is its domestic sales relative to its market's size from Economic Census receipts. A fringe of fixed mass produces the rest, except in the \IWAppSegFringeless{} of \IWAppSegMarkets{} markets where the producers' sales exceed that size, which hold \IWAppSegFringelessSales{} percent of the producers' domestic sales (Online Appendix~\ref{oa:application}).

Producers perceive the elasticity $\sigma$ of their market's demand, the true elasticity for a market small in its industry, so by Proposition~\ref{prop:markups} a producer's markup is $(1-1/\varepsilon)^{-1}$ with $1/\varepsilon=(1-s)/\theta+s/\sigma$. The elasticities $\theta=\IWAppSegTheta$ and $\sigma=\IWAppSegEta$ match the aggregate markup of \IWAppSegPremium{} that the accounting-profit markups of \citet{BaqaeeFarhi2020} give and an average pass-through of \IWAppSegPassOwn{} of an innovator's cost reduction into its price when only its own markup responds, close to the estimate of about 0.6 for Belgian manufacturers in \citet{AmitiItskhokiKonings2019}. As in their estimates, it falls with size, from \IWAppSegPassSmall{} for the innovators with the smallest market shares, which account for \IWAppPassSmallSalesCut{} percent of sales, to \IWAppSegPassLarge{} for the others, against nearly one and about 0.5 for small and large Belgian firms.\ifjournal\else{} Comparing groups by their share of sales avoids comparing market shares across markets of different breadth, since \citet{AmitiItskhokiKonings2019} define size by employment or by shares in four-digit industries. Because a firm's markup depends only on its market share here, as in the nested-CES oligopoly of \citet{AmitiItskhokiKonings2019}, its elasticities with respect to its own cost and to its competitors' prices sum to one, a restriction that they cannot reject for either group, so the model also matches the strategic complementarity that they estimate.\fi

For each innovation, we report first-order effects per unit increase in the log productivity of the firm's largest producer, allowing markups to respond only within that producer's market. One alternative replaces six-digit markets with the BEA industries of the economy that \citet{BaqaeeFarhi2020} calibrate, with worldwide sales and a unit perceived elasticity of the industry's demand. The other replaces quantity with price competition \citep{AtkesonBurstein2008}, with the same aggregate markup and a pass-through of \IWAppSegBertPassOwn{}, above the estimates of \citet{AmitiItskhokiKonings2019}.

\subsection{The terms of the wedge}\label{sec:app-ae}

By equation~\eqref{eq:liu}, the technology effect of a process innovation relative to the private technology value is $\mu_i\tilde\lambda_i/\lambda_i$, which depends only on markups and the network. With the user-cost and accounting-profit markups of \citet{BaqaeeFarhi2020}, in their economy, its sales-weighted mean exceeds one by \IWAppTPExcessUserCost{} and \IWAppTPExcessAccounting{} percent, and the markups downstream of the innovator account for \IWAppDownShareLow{} to \IWAppDownShareHigh{} percent of the excess.\ifjournal\else{} Their production-function markups give a mean ratio of \IWAppTPProduction{}, with a downstream share of \IWAppDownShareProduction{} percent, but we do not use them for the size of the gap. The ratio depends on the level of markups, which production functions estimated from revenue data do not identify \citep{BondHashemiKaplanZoch2021,DeRidderGrassiMorzenti2026}, and these markups are below one for innovators with \IWAppExcludedProduction{} percent of sales (Online Appendix~\ref{oa:application}).\fi{} The other terms rest on producers' first-order conditions, which measured markups need not satisfy (Remark~\ref{rem:wedges}), so they use the markups that conduct implies, with which the gap is \IWAppSegGap{} per unit of the private return in the baseline.

\begin{table}[!t]
	\centering
	\footnotesize
	\setlength{\tabcolsep}{4pt}
	\caption{Social and private returns to process innovation by U.S. public firms, 2015}
	\label{tab:application7}
	\begin{tabular}{@{}lcccc@{}}
		\toprule
		& \multicolumn{2}{c}{Quantity competition} & \multicolumn{2}{c}{Price competition} \\ \cmidrule(lr){2-3}\cmidrule(lr){4-5}
		& Six-digit & BEA & Six-digit & BEA \\
		& markets & industries & markets & industries \\
		\midrule
		\multicolumn{5}{@{}l}{\emph{A. Returns without technology spillovers}} \\
		Own-cost pass-through & 0.62 & 0.62 & 0.87 & 0.94 \\
		Social over private return, $dW/d\pi_i$ & 1.14 & 0.99 & 1.26 & 1.19 \\
		\quad Appropriability gap & 0.27 & 0.23 & 0.26 & 0.24 \\
		\quad Change in allocative efficiency & 0.04 & \ensuremath{-}0.03 & \ensuremath{-}0.04 & \ensuremath{-}0.07 \\
		\quad Minus induced price term & \ensuremath{-}0.17 & \ensuremath{-}0.21 & 0.05 & 0.02 \\
		Sales with $|\AEff/d\pi_i|>0.1$ (\%) & 56 & 68 & 31 & 30 \\
		R\&D-weighted $dW/d\pi_i$ & 1.08 & 0.98 & 1.24 & 1.20 \\
		R\&D-weighted allocative efficiency & \ensuremath{-}0.04 & \ensuremath{-}0.09 & \ensuremath{-}0.07 & \ensuremath{-}0.08 \\
		R\&D with $dW/d\pi_i$ below the premium (\%) & 80 & 90 & 46 & 70 \\
		Rivals' lost output / profit loss, median & 0.76 & 0.72 & 0.93 & 0.94 \\
		\addlinespace \multicolumn{5}{@{}l}{\emph{B. Innovators below 1 percent of their market or industry}} \\
		Share of innovating firms (\%) & 62 & 84 & 62 & 84 \\
		Sales / R\&D (\%) & 8 / 15 & 19 / 23 & 8 / 15 & 19 / 23 \\
		Change in allocative efficiency & \ensuremath{-}0.16 & \ensuremath{-}0.31 & \ensuremath{-}0.09 & \ensuremath{-}0.10 \\
		Induced price term & 0.026 & 0.054 & \ensuremath{-}0.001 & \ensuremath{-}0.001 \\
		R\&D below the factor premium (\%) & 100 & 99 & 99 & 99 \\
		Shortfall from the premium, R\&D-weighted & 0.31 & 0.40 & 0.08 & 0.06 \\
		\addlinespace \multicolumn{5}{@{}l}{\emph{C. With technology spillovers (innovators with patents and R\&D)}} \\
		Social over private return, $dW/d\pi_i$ & 6.24 & 5.39 & 6.92 & 6.39 \\
		Output-based ratio, credit of BSV & 2.62 & 2.61 & 2.64 & 2.65 \\
		Output-based ratio, no credit & 5.01 & 5.02 & 5.01 & 5.02 \\
		\bottomrule
	\end{tabular}

	\vspace{0.3em}
	\parbox{\linewidth}{\footnotesize Notes: Per unit of the private return; ratios pool the innovations of all innovators. For each innovation, we report first-order effects per unit increase in the log productivity of the firm's largest producer, allowing markups to respond only within that producer's market; the own-cost pass-through has only the innovator's markup responding. Six-digit markets: competition within six-digit NAICS markets, each Compustat business segment a producer with its domestic sales; $\theta=13.8$ and $\eta=\sigma=1.7$ with quantity competition, $\theta=11.2$ and $\eta=\sigma=1.7$ with price competition. BEA industries: competition within BEA industries; $\theta=16$ with quantity competition and $\theta=11.0$ with price competition, $\eta=1$. Every economy has the aggregate markup 1.20, which is also the factor premium, the social cost of a dollar of R\&D paid in labor and capital. R\&D-weighted statistics weight innovators by their R\&D spending. Panel B: innovators with less than 1 percent of their six-digit market (six-digit columns) or BEA industry (BEA columns). The output-based ratios credit the innovator with its rivals' lost output as \citet{BloomSchankermanVanReenen2013} estimate it, in units of output at prices, or with none. Online Appendix Table~\ref{tab:oa-robustness} reports further specifications, and Online Appendix~\ref{oa:application} describes the construction.}
\end{table}

Without technology spillovers (panel~A of Table~\ref{tab:application7}), the social return in the baseline is \IWAppSegRatio{} times the private return. With quantity competition, the induced price term, a transfer, enters the wedge with \IWAppSegMinusIP{} per unit of the private return and offsets most of the gap. With price competition the term is close to zero.

The change in allocative efficiency is small on average but not for individual innovations. In the baseline it ranges from \IWAppSegAEPTen{} to \IWAppSegAEPNinety{} per unit of the private return between the 10th and 90th percentiles of innovators weighted by sales\ifjournal\else{} (Figure~\ref{fig:ae})\fi, and it exceeds a tenth in absolute value for innovators that account for \IWAppSegAELarge{} percent of sales. With a fixed fringe and the network elasticities of \citet{BaqaeeFarhi2020}, the R\&D-weighted change in allocative efficiency is negative in every specification except quantity competition at the user-cost markup level. In the specifications with a negative change, it offsets \IWAppFixedMinRDWAEOffset{} to \IWAppFixedMaxRDWAEOffset{} percent of the R\&D-weighted appropriability gap. At the user-cost markup level under quantity competition, it is close to zero, at \IWAppSegUCRDWAE{} (Online Appendix Table~\ref{tab:oa-robustness}). In the baseline with free entry by the fringe in the innovator's market, it is \IWAppRDWAESegFree{} (Section~\ref{sec:app-small}). In the baseline with quantity competition, the negative sign comes from the innovators whose largest producer's market has no fringe, \IWAppSalesShareNoFringeSeg{} percent of the innovators' sales and \IWAppRDShareNoFringeSeg{} percent of their R\&D. For the others, the R\&D-weighted change is \IWAppRDWAEFringeSeg{} with quantity competition and \IWAppRDWAEFringeSegBert{} with price competition. This is the change that one innovation causes, not the realized change from all sources that accounts for about half of aggregate TFP growth in \citet{BaqaeeFarhi2020}.

\ifjournal\else
\begin{figure}[!t]
	\centering
	\includegraphics[width=0.85\linewidth]{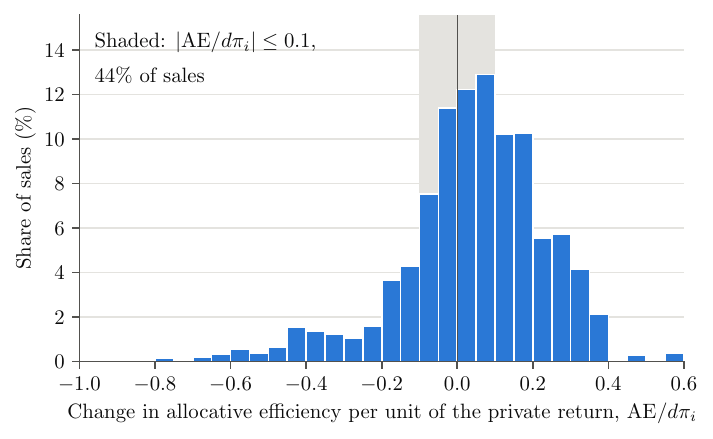}
	\caption{Change in allocative efficiency across innovators, 2015}
	\label{fig:ae}

	\vspace{0.3em}
	\parbox{\linewidth}{\footnotesize Notes: Distribution of $\AEff/d\pi_i$ across innovators in the baseline, weighted by sales, in bins of 0.05, with all markups of the innovator's market responding. The shaded band marks $|\AEff/d\pi_i|\le0.1$. Innovators with $\AEff/d\pi_i$ below $-1$ (\IWAppSegAETailShare{} percent of sales) and above 0.6 (\IWAppSegAETailShareUpper{} percent) are included in the first and last bins.}
\end{figure}
\fi

Two forces shape the change that an innovation causes. Within markets, markups rise with share, so most producers charge less than their market's mean markup, and their innovations move resources away from more marked-up rivals (Corollaries~\ref{cor:onecournot} and~\ref{cor:nests}). Along supply chains, value added and intermediate inputs are complements, so an innovation that lowers its industry's price shifts its buyers' spending from intermediate inputs, which carry markups, to their own labor and capital (Corollary~\ref{cor:chain}). The first force makes the change vary with the innovator's size, and the second, on which the negative sign of the R\&D-weighted mean also rests, lowers its average. In the baseline, that mean is \IWAppRDWAESegUnitVA{} with the elasticity between value added and intermediate inputs set to one and \IWAppRDWAESegUnit{} with all elasticities outside markets set to one, in which case the change for innovators below 1 percent of their market is still \IWAppSegUnitAESizeA{}.\ifjournal\else{} The sign of the supply-chain force does not rest on Cournot markups. With the user-cost and accounting-profit markups, a productivity gain of a whole BEA industry, which leaves relative prices within the industry unchanged, lowers allocative efficiency for industries that account for \IWAppIndAENegUserCost{} and \IWAppIndAENegAccounting{} percent of sales, as with the Cournot markups of the BEA economy (\IWAppIndAENegCournot{} percent).\fi

\subsection{Small innovators}\label{sec:app-small}

\begin{figure}[!t]
	\centering
	\includegraphics[width=0.8\linewidth]{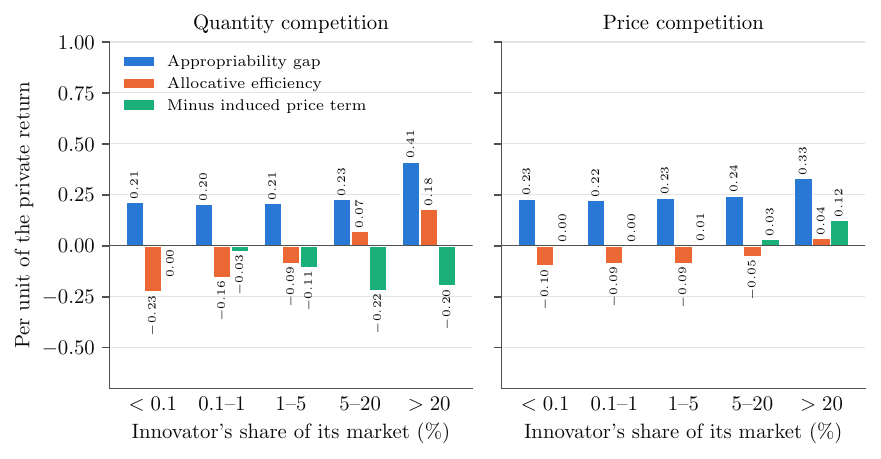}
	\caption{Terms of the wedge by the innovator's share of its market, 2015}
	\label{fig:size}

	\vspace{0.3em}
	\parbox{\linewidth}{\footnotesize Notes: The appropriability gap $\T_i-\Pv_i$, the change in allocative efficiency $\AEff$, and minus the induced price term $-\IP_i$, each summed over the innovators in a group and divided by the sum of their private returns $d\pi_i$, in the baseline with quantity competition (left) and with price competition (right). Groups are by the share of the innovator's largest producer in its market. There are no technology spillovers.}
\end{figure}

Innovators with less than 1 percent of their market are \IWAppFirmShareSmallSeg{} percent of the innovating firms, with \IWAppSegSalesSizeA{} percent of sales and \IWAppSegRDSizeA{} percent of R\&D. As Corollary~\ref{cor:small} predicts, their induced price term nearly vanishes under either conduct, whereas their change in allocative efficiency does not. With quantity competition, their induced price term is \IWAppSegIPSizeA{} per unit of the private return, with a median of \IWAppSegIPMedianSizeA{}, and their change in allocative efficiency is \IWAppSegAESizeA{}. With price competition the two are \IWAppSegBertIPSizeA{} and \IWAppSegBertAESizeA{} (Figure~\ref{fig:size}).\ifjournal\else{} With quantity competition, both rise with share, and so does the ratio of social to private return, from \IWAppSegRatioSizeA{} below 1 percent to \IWAppSegRatioSizeD{} above 20 percent, as in \citet{FarrellShapiro1990}. With the technology spillovers of Section~\ref{sec:app-bs}, the technology gains of rivals change the induced price term of innovators with less than 1 percent of their market by only \IWAppSegSpillIPSizeA{} per unit of the private return with quantity competition and \IWAppSegBertSpillIPSizeA{} with price competition, against \IWAppSegSpillIPSizeC{} and \IWAppSegBertSpillIPSizeC{} for innovators with 5 to 20 percent, because the spillover that each rival receives is proportional to the innovator's R\&D stock, which is small for small innovators. Per unit of the innovator's own productivity gain, a same-market rival gains \IWAppSegSpillSSizeA{} on average across the rivals of innovators below 1 percent, against \IWAppSegSpillSSizeC{} at 5 to 20 percent.\fi

With quantity competition, the change in allocative efficiency of small innovators is \IWAppSegSmallNoFringeAE{} per unit of the private return in markets without a fringe and \IWAppSegSmallFringeAE{} in markets with one. If that fringe enters freely, so that the innovator takes business only from it (Remark~\ref{rem:entry} in Online Appendix~\ref{oa:extensions}), the change is \IWAppFESmallAESegFringe{} and the induced price term \IWAppFESmallIPSegFringe{}.

Without technology spillovers, the static social return falls short of the social cost (Section~\ref{sec:app-cost}) for innovators accounting for \IWAppSegRDNegShareSmall{} percent of small innovators' R\&D spending under quantity competition. The corresponding shares are \IWAppSegBertRDNegShareSmall{} percent under price competition and \IWAppFESmallRDNegSegFringe{} percent with free entry. The shortfall per dollar of private return is \IWAppSegRDShortfallSmall{}, \IWAppSegBertRDShortfallSmall{}, and \IWAppFESmallShortfallSegFringe{}. With quantity competition, their appropriability gap net of the factor premium is \IWAppSegRDGapNetSmall{} per unit of the private return and their change in allocative efficiency \IWAppSegRDAESmall{}, so the shortfall comes from the change in allocative efficiency, although their own markups are below the premium (Corollary~\ref{cor:factors}). Innovators with less than 5 percent of their market account for \IWAppFiveSalesSeg{} percent of sales and \IWAppFiveRDSeg{} percent of R\&D. Their change in allocative efficiency is \IWAppFiveAESeg{} with quantity competition and \IWAppFiveAESegBert{} with price competition, and the static return falls short of the social cost for \IWAppFiveRDNegSeg{} and \IWAppFiveRDNegSegBert{} percent of their R\&D. Measured markups also rise with share within markets, and across the quantity-competition rows of panels A and C of Online Appendix Table~\ref{tab:oa-robustness} other than unit elasticities and free entry, including those with the $\sigma$ that their slopes imply, small innovators' change in allocative efficiency is \IWAppRobSegCMinSmallAE{} to \IWAppRobSegCMaxSmallAE{} and their induced price term \IWAppRobSegCMinSmallIP{} to \IWAppRobSegCMaxSmallIP{}.

\subsection{Business stealing and estimated returns to R\&D}\label{sec:app-bs}

By Proposition~\ref{prop:bs}, the welfare cost of the business that rivals lose is their lost output weighted by their margins $p_j-\mc_j$, and the rest of their profit change is a transfer. At the sales-weighted median innovation, this output part is \IWAppSegRivalShareMedian{} of the rivals' profit loss in the baseline and \IWAppMainMinRivalShareMedian{} to \IWAppMainMaxRivalShareMedian{} across Table~\ref{tab:application7}. The half that \citet{BloomSchankermanVanReenen2013} attribute to lost output, a share they regard as conservative, therefore understates the welfare cost of business stealing and, in their own accounting, the private return.

With the technology spillovers that \citet{BloomSchankermanVanReenen2013} estimate from the proximity of firms' patent portfolios \citep{Jaffe1986}, the social return of the \IWAppSegSpillFirms{} innovators with patents and R\&D is \IWAppSegSpRatio{} times the private return in the baseline (panel~C). The output-based accounting of \citet{BloomSchankermanVanReenen2013} credits the innovator's private return with its rivals' lost output at their prices, half of each rival's percentage fall in market value times its sales. With their estimate of this credit, the output-based ratio is \IWAppSegBSVRatioBSVCreditPrice{}, close to the ratio of their estimated returns, 55.0 and 20.7 percent, and without it \IWAppSegBSVNoSteal{}, close to the \IWAppBSVSymmetric{} of their symmetric case. The difference from the welfare-based ratio comes mainly from the credit, which is large because a rival's sales are $\mu_j/(\mu_j-1)$ times its profit (Online Appendix~\ref{oa:application}).\ifjournal\else{} In Theorem~\ref{thm:main}, by contrast, the innovator's own output expansion has no first-order private value, and it gains from taking business only through its induced price term.\fi

\subsection{The cost of R\&D}\label{sec:app-cost}

At the first-order condition without R\&D subsidies, a marginal dollar of R\&D buys a private return of one dollar and, in the stationary specification of \citet{BloomSchankermanVanReenen2013} and leaving aside R\&D responses, a social return of $dW/d\pi_i$ dollars (Corollary~\ref{cor:stationary} in Online Appendix~\ref{oa:dynamic}). Paid in labor and capital in proportion to their employment, it costs society \IWAppSegPremium{} dollars (Section~\ref{sec:factors}), and where it draws scientists and engineers, whose supply is inelastic, from other firms' R\&D \citep{Goolsbee1998}, it costs the displaced R\&D valued at those firms' social return per dollar.\ifjournal\else{} \citet{HopenhaynOkumura2026} compute both marginal returns, R\&D responses included, in a calibrated oligopoly without input--output linkages.\fi

Without technology spillovers, $dW/d\pi_i$ weighted by R\&D spending is \IWAppRDWRatioSeg{} in the baseline and falls short of the factor premium for innovators that account for \IWAppSegRDNegShare{} percent of R\&D, for which the local rate of equation~\eqref{eq:subsidy} is a tax. That rate is the total subsidy rate, here $1-\IWAppSegPremium\,d\pi_i/dW$, so the conclusion does not depend on existing R\&D tax credits. A credit at the rate $\varsigma$ lowers the private return on a marginal dollar to $1-\varsigma$ dollars, so at current credits the static net return is negative for at least these innovators. With quantity competition, the shortfall comes mainly from the R\&D that firms undertake for the induced price term, a transfer. Per unit of the private return and weighted by R\&D, that R\&D costs society \IWAppSegRDIPCost{}, whereas the appropriability gap net of the premium adds \IWAppSegRDGapNet{} and the change in allocative efficiency \IWAppSegRDAE{} (Corollary~\ref{cor:factors}). With price competition, where that term is close to zero, the R\&D-weighted ratio is \IWAppRDWRatioSegBert{}, about the premium. With technology spillovers, the social return in the baseline exceeds the cost of R\&D for innovators that account for \IWAppSegRDPosShareSp{} percent of the R\&D of the firms that generate spillovers, \IWAppSegSpillRDShare{} percent of the total.

\section{Conclusion}\label{sec:conclusion}

We asked which parts of the social return to an innovation the innovator is paid for when markups arise from producers' own optimization. The innovator's own output and price adjustments have no first-order value to it, so the change in allocative efficiency that the innovation causes reaches its profit only through other agents' responses. Under quantity competition, these responses affect the innovator's profit through transfers. For an innovator that is small in its market the responses to its own technology change affect its profit at second order in its size under quantity or price competition (Corollary~\ref{cor:small} and Proposition~\ref{prop:smallgeneral}), while in the nested-CES economies of Section~\ref{sec:primitives} the change in allocative efficiency is of first order except in special cases. For a process innovation, the innovator is paid its cost saving and so forgoes its own markup and the markups downstream of it on the output that the innovation adds at a fixed allocation. The Harberger term, on which models of strategic R\&D and the argument of \citet{GreenwaldStiglitz1986} rest, combines the change in allocative efficiency with these appropriability gaps.

The change in allocative efficiency has no fixed sign. It is positive when an innovation moves resources toward more marked-up supply chains. Markups that rise with market share dampen this gain and across markets can reverse it, and along supply chains a productivity gain at a marked-up supplier lowers allocative efficiency when its buyers' value added and intermediate inputs are complements (Section~\ref{sec:primitives}). In a Markov game of R\&D, the reallocation of R\&D across firms is external to the innovator in the same way, because the ratio of the social to the private marginal return plays the role in R\&D that the markup plays in production (Theorem~\ref{thm:dynamic}).

These results have three consequences. First, the profits and market values of rivals whose technology does not change move by the welfare cost of business stealing from them plus transfers of either sign, so no fixed share of their change recovers that cost, and they can misstate it even in sign (Proposition~\ref{prop:bs}). Second, a social return that treats business stealing as a transfer omits the change in allocative efficiency. Third, unlike knowledge spillovers, the allocative-efficiency component of the R\&D subsidy that closes the wedge has no fixed sign, so it does not by itself justify subsidizing R\&D. When R\&D draws factors from production in proportion to employment, the appropriability gap exceeds the factor premium exactly for innovators whose ratio $\mu_i\tilde\lambda_i/\lambda_i$ exceeds its factor-payment-weighted mean (Section~\ref{sec:factors}).

For U.S. public firms in \IWAppYear{}, Baqaee and Farhi's markups give an appropriability gap of \IWAppTPExcessUserCost{} to \IWAppTPExcessAccounting{} percent of the private technology value on average, of which downstream markups account for \IWAppDownShareLow{} to \IWAppDownShareHigh{} percent. Under quantity or price competition, the change in allocative efficiency has either sign across innovations and, for innovators small in their markets, far exceeds the induced price term in absolute value. Rivals' lost output accounts for more than half of the fall in their profits. With estimated technology spillovers, output-based accounting puts the ratio of social to private returns to R\&D at \IWAppSegBSVRatioBSVCreditPrice{}, against \IWAppSegSpRatio{} in welfare terms, mainly because it credits innovators with their rivals' lost output at their prices (Section~\ref{sec:application}).

\begingroup
\singlespacing
\bibliography{references}
\endgroup

\appendix

\section{Proofs}\label{app:proofs}

This appendix proves Corollaries~\ref{cor:aeformula}, \ref{cor:small}, and~\ref{cor:factors} and Propositions~\ref{prop:bs} and~\ref{prop:factors}. Lemmas~\ref{lem:allocation} and~\ref{lem:private} and Theorem~\ref{thm:main} are proved in the text, Online Appendix~\ref{oa:extensions} derives the results for price competition stated in Remark~\ref{rem:bertrand}, and the Online Appendix proves the other results.

Several steps use the differential of a producer's profit $\pi_k=p_ky_k-C_k$. Under Assumption~\ref{ass:inputs}, Shephard's lemma gives $\partial C_k/\partial p_m=x_{km}$ and $\partial C_k/\partial w_f=l_{kf}$. Because productivity is Hicks-neutral, cost depends on $y_k$ and $A_k$ only through $y_k/A_k$, so $\partial C_k/\partial\log A_k=-\mc_k\,y_k$. Hence, for every producer $k$,
\begin{align*}
	d\pi_k&=y_k\,dp_k+(p_k-\mc_k)\,dy_k+\mc_k\,y_k\,\dlog A_k-\sum_mx_{km}\,dp_m-\sum_fl_{kf}\,dw_f\\
	&=\TR_k+(p_k-\mc_k)\,dy_k+\mc_k\,y_k\,\dlog A_k ,
\end{align*}
with $\TR_k$ as defined in Corollary~\ref{cor:os}. This expression does not use Assumption~\ref{ass:conduct} or any other assumption on how $k$ sets its output or its price.

\begin{proof}[Proof of Corollary~\ref{cor:aeformula}]
	Using $p_m=U_{c_m}$, $w_f\,dL_f=V_f'\,dL_f$ for every factor (for a factor in inelastic supply because $dL_f=0$), and market clearing,
	\begin{align*}
		dW&=\sum_mp_m\Bigl(dy_m-\sum_kdx_{km}\Bigr)+\sum_kU_{b_k}\,db_k-\sum_fw_f\sum_kdl_{kf}\\
		&=\sum_kp_k\,dy_k-\sum_kdI_k+\sum_kU_{b_k}\,db_k .
	\end{align*}
	By Assumption~\ref{ass:inputs}, $dI_k=\mc_kA_k\,dG_k=\mc_k\bigl(dy_k-y_k\,\dlog A_k\bigr)$, so $p_k\,dy_k-dI_k=(\mu_k-1)\,dI_k+p_ky_k\,\dlog A_k$ and
	\[
		dW=\sum_k(\mu_k-1)\,dI_k+\sum_k\bigl(p_ky_k\,\dlog A_k+U_{b_k}\,db_k\bigr).
	\]
	By Lemma~\ref{lem:allocation}, the chain rule, and Definition~\ref{def:ae}, $dW=\sum_k\T_k+\AEff$ with $\T_k=E\tilde\lambda_k\,\dlog A_k+U_{b_k}\,db_k$. Subtracting $\sum_k\T_k$ and using $p_ky_k=E\lambda_k$ gives \eqref{eq:ae}.
\end{proof}

\begin{proof}[Proof of Corollary~\ref{cor:small}]
	Producer $i$ has constant returns, so by Corollary~\ref{cor:os}, whatever its conduct,
	\[
		\frac{d\pi_i-\Pv_i}{p_iy_i}=\Bigl(1-\frac{1}{\mu_i}\Bigr)\dlog(p_iy_i)-\frac{\dlog A_i-\dlog\mu_i}{\mu_i}.
	\]
	Producer $n$ is competitive with a constant-returns CES technology, so $p_iy_i=s_ip_ny_n$ and $\dlog s_i=(1-\theta_n)(\dlog p_i-\dlog p_n)$, where $\dlog p_i=\dlog\mu_i-\dlog A_i+\sum_m\widetilde\Omega_{im}\,\dlog p_m$ by cost minimization. Substituting $\dlog(p_iy_i)=\dlog s_i+\dlog(p_ny_n)$ and using $(\theta_n-1)(1-1/\mu_i)-1/\mu_i=\theta_n(1-1/\mu_i)-1$ gives equation~\eqref{eq:small}.

	Under quantity competition, $d\pi_i-\Pv_i=\IP_i$ by Lemma~\ref{lem:private}, and $\theta_n(1-1/\mu_i)-1=\theta_n/\varepsilon_i-1=s_i(\theta_n/\eta_n-1)$. Under price competition, $d\pi_i=\Pv_i^B+\IP_i^B$ by Remark~\ref{rem:bertrand}, with $\Pv_i^B=\Pv_i$ for a process innovation, and $\theta_n(1-1/\mu_i)-1=s_i(\theta_n-\eta_n)/\varepsilon_i$. The elasticity $\eta_n$ may vary with the equilibrium. As long as it stays positive, $\varepsilon_i\to\theta_n$ as $s_i\to0$, so $\mu_i\to\theta_n/(\theta_n-1)$, and the bracket in~\eqref{eq:small} is of order $s_i$.

	Take as variables $i$'s output in levels and $n$'s technology in dual form, in which the equilibrium conditions remain twice continuously differentiable at $\delta_i=0$. There, $i$'s own conditions form a block with unit diagonal, and the other conditions depend on $i$'s variables only through $i$'s input demands, which vanish with its output, so the Jacobian is nonsingular exactly when that of the economy without good $i$ is. The implicit function theorem then makes the equilibrium a twice continuously differentiable function of $(\delta_i,\log A_i)$ near $\delta_i=0$, so the derivatives of equilibrium objects with respect to $\log A_i$ are continuously differentiable in $\delta_i$. Under Assumption~\ref{ass:conduct} or Remark~\ref{rem:bertrand}, the other producers' conditions involve good $i$ only through $n$'s demand for it and $n$'s unit cost, both of which carry the weight $\delta_i$. When $\delta_i=0$, producer $n$ does not use good $i$, so no equilibrium object other than $p_i$ depends on $A_i$, and $s_i=0$, so $\mu_i=\theta_n/(\theta_n-1)$ whatever $A_i$. The derivatives with respect to $\log A_i$ of every price other than $p_i$, factor prices included, of every markup, of the sales of every producer other than $i$, and of welfare are therefore zero at $\delta_i=0$. They are of order $\delta_i$, and hence of order $s_i$, since $s_i/\delta_i$ tends to $(p_i/p_n)^{1-\theta_n}>0$. The second term of~\eqref{eq:small} consists of such derivatives, and so does $\dlog\mu_i$, which proves the first part.

	With no other technology change, Theorem~\ref{thm:main} gives $\AEff/E=dW/E-\tilde\lambda_i\,\dlog A_i$. Both terms are continuously differentiable in $\delta_i$ and zero at $\delta_i=0$, the second because $\tilde\lambda_i=0$ when $n$ does not use good $i$. Since $p_iy_i=\lambda_nEs_i$ with $\lambda_n>0$, $\AEff/(p_iy_i)$ is the ratio of two continuously differentiable functions of $\delta_i$ that vanish at $\delta_i=0$, the second with a positive derivative there, so it converges.

	Finally, consider the economy in which $i$'s markup is fixed at a value $m$ that $i$ keeps as profit, with everything else unchanged. At $m$ equal to $i$'s markup in the original equilibrium, the two economies have the same equilibrium, and at $\delta_i=0$ the same Jacobian, because there $s_i=0$ for all prices and $i$'s markup does not respond. The original equilibrium is that of the economy with the fixed markup at $m=\mu_i$, so its welfare change is that of the economy with the fixed markup plus $(\partial W/\partial\log m)\,\dlog\mu_i$, and the technology effects are the same. At $\delta_i=0$ no equilibrium object other than $p_i$ depends on $m$, so $\partial W/\partial\log m$ is of order $s_i$, as is $\dlog\mu_i$. Holding $i$'s markup fixed therefore changes $\AEff/E$ by a term of order $s_i^2$ and $\AEff/(p_iy_i)$ by a term of order $s_i$.
\end{proof}

\begin{proof}[Proof of Proposition~\ref{prop:bs}]
	Producer $j$ has $\dlog A_j=0$ and $db_j=0$. The profit differential at the start of this appendix, which does not use the first-order condition, then gives $d\pi_j=(p_j-\mc_j)\,dy_j+\TR_j$. Under Assumption~\ref{ass:conduct}, Lemma~\ref{lem:private} applies to $j$ with $\Pv_j=0$, so $d\pi_j=\IP_j$. With no technology change, $dI_j=\mc_j\,dy_j$, so $(p_j-\mc_j)\,dy_j=(\mu_j-1)\,dI_j$. This is $j$'s term in \eqref{eq:ae}, and $(p_j-\mc_j)\,dy_j$ is by definition its term in the Harberger term of \eqref{eq:harberger}. Finally, $\TR_j$ values changes in prices at $j$'s quantities. By market clearing, $y_j\,dp_j=c_j\,dp_j+\sum_kx_{kj}\,dp_j$ is paid by $j$'s buyers, while $x_{jm}\,dp_m$ and $l_{jf}\,dw_f$ are received by its suppliers of goods and factors, so $\TR_j$ is a transfer.

	Under price competition $d\pi_j=(p_j-\mc_j)\,dy_j+\TR_j$ holds unchanged, and $d\pi_j=\IP_j^B$ because $\Pv_j^B=0$.
\end{proof}

\begin{proof}[Proof of Proposition~\ref{prop:factors}]
	With the supply $\sum_kl_{kf}+L^R_f$ of each factor fixed, $\sum_kdl_{kf}=-dL^R_f$, and the proof of Lemma~\ref{lem:accounting} gives $dW$ the additional term $-\sum_fw_f\,dL^R_f$. At a fixed allocation, production employment of factor $f$ falls by $dL^R_f$ and each producer's use of it falls in the same proportion, so
	\[
		\dlog y_k=\dlog A_k+\sum_m\widetilde\Omega_{km}\,\dlog y_m-\sum_f\widetilde\Omega_{kf}\,\frac{dL^R_f}{L_f},
	\]
	and the proof of Lemma~\ref{lem:allocation} gives the welfare change at a fixed allocation the additional term $-E\sum_k\tilde\lambda_k\sum_f\widetilde\Omega_{kf}\,dL^R_f/L_f=-E\sum_f\widetilde\Lambda_f\,dL^R_f/L_f$. Welfare also depends on the withdrawn factors, and the change in welfare at a fixed allocation is the term above, so the chain rule and Definition~\ref{def:ae} give the expression for $dW$. Lemma~\ref{lem:private} does not involve factor supplies, so the innovator's profit from production changes by $\Pv_i+\IP_i$, and the innovator pays $w_f\,dL^R_f=E\Lambda_f\,dL^R_f/L_f$ for the withdrawn factors. Subtracting, $dW-d\pi_i$ consists of the terms of Theorem~\ref{thm:main} and $-E\sum_f(\widetilde\Lambda_f-\Lambda_f)\,dL^R_f/L_f$. Finally, $\Lambda_f=\sum_k\lambda_k\Omega_{kf}$ with $\Omega_{kf}=\widetilde\Omega_{kf}/\mu_k\le\widetilde\Omega_{kf}$, and $\lambda_k\le\tilde\lambda_k$ by the argument in the proof of Corollary~\ref{cor:gap}, so $\Lambda_f\le\widetilde\Lambda_f$.
\end{proof}

Because the welfare accounting at fixed quantities and the private return acquire the same term $-\sum_fw_f\,dL^R_f$, equation~\eqref{eq:harberger} holds unchanged.

\begin{proof}[Proof of Corollary~\ref{cor:factors}]
	For Part~1, competitive aggregators that sell, directly or through other such aggregators, only to final demand, such as CES nests, can be treated as part of the demand system, so producers use only factors. With the supply of each factor fixed, the proof of Proposition~\ref{prop:factors} gives the welfare accounting of Lemma~\ref{lem:accounting} the additional term $-R$,
	\[
		dW=\sum_k\mc_ky_k\,\dlog A_k+\sum_k(p_k-\mc_k)\,dy_k-R .
	\]
	The proof of Lemma~\ref{lem:accounting} gives $\mc_k\,dy_k=dI_k+\mc_ky_k\,\dlog A_k$, using cost minimization but not constant returns, so with $p_k-\mc_k=(\mu-1)\mc_k$,
	\[
		\sum_k(p_k-\mc_k)\,dy_k=(\mu-1)\Bigl(\sum_kdI_k+\sum_k\mc_ky_k\,\dlog A_k\Bigr).
	\]
	Without intermediate inputs, $\sum_kdI_k=\sum_fw_f\sum_kdl_{kf}=-R$, which gives the expression for $dW$. In Proposition~\ref{prop:factors}, $\tilde\lambda=\lambda$ and $\widetilde\Omega_{kf}=\mu\Omega_{kf}$, so $\T_k=E\lambda_k\,\dlog A_k=\mu\mc_ky_k\,\dlog A_k$ and $E\widetilde\Lambda_f=\mu E\Lambda_f=\mu w_fL_f$. The first two terms of its expression for $dW$ therefore sum to the same value, and $\AEff=0$. None of these steps uses the demand system, Assumption~\ref{ass:conduct}, or constant returns.

	For Part~2, consider the economy in which the production employment of every factor is $(1-x)L_f$ and technologies are unchanged. Scaling every quantity by $1-x$, and every price by the common factor that keeps the marginal utility of the numeraire at one, satisfies the equilibrium conditions. With constant returns, each producer's cost-minimizing input shares depend only on relative input prices, and with homothetic final demand, the household's expenditure shares depend only on relative prices. Markups that depend only on expenditure shares are therefore unchanged, and so are prices relative to marginal costs. Markets clear because every quantity is scaled by the same factor, and the common scaling of prices scales profits without changing producers' choices. This scaled equilibrium is the selection of Section~\ref{sec:environment}, and it is the only one nearby when the Jacobian of the equilibrium conditions is nonsingular. The withdrawal therefore leaves the allocation $\mathcal X$ unchanged and contributes nothing to $\AEff$, which in Proposition~\ref{prop:factors} is the welfare effect of the change in the allocation. Because first-order changes add, $\AEff$ is that of the technology change alone. With constant returns, $\sum_f\widetilde\Lambda_f=1$. If $dL^R_f/L_f=x$ for every $f$, then $R=xE\sum_f\Lambda_f$, so the second term of $dW$ in Proposition~\ref{prop:factors} is $-Ex=-R/\sum_f\Lambda_f$. Without spillovers or induced technology changes, $\sum_k\T_k=\T_i=(\mu_i\tilde\lambda_i/\lambda_i)\Pv_i$ by equation~\eqref{eq:liu}, which gives the first expression for $dW$, and substituting $R=\Pv_i+\IP_i$ gives the second. Next, $w_fl_{kf}=E\lambda_k\widetilde\Omega_{kf}/\mu_k$ gives
	\[
		\sum_k\Bigl(\sum_fw_fl_{kf}\Bigr)\frac{\mu_k\tilde\lambda_k}{\lambda_k}=E\sum_k\tilde\lambda_k\sum_f\widetilde\Omega_{kf}=E\sum_f\widetilde\Lambda_f=E,
	\]
	so the mean of the ratios weighted by factor payments is $E/\sum_fw_fL_f=1/\sum_f\Lambda_f$.\ifjournal\else{} Finally, in the scaled equilibrium the innovator's production profit falls by $x\pi_i=R\pi_i/(E\sum_f\Lambda_f)$ at unchanged prices. If the innovator counts this loss, its first-order condition is $R\bigl(1+\pi_i/(E\sum_f\Lambda_f)\bigr)=\Pv_i+\IP_i$, so $R$ falls short of $\Pv_i+\IP_i$ by the fraction $\pi_i/(\pi_i+E\sum_f\Lambda_f)$.\fi
\end{proof}

\section*{Online Appendix}
\onlineappendix
This Online Appendix contains the R\&D game behind Theorem~\ref{thm:dynamic} and its proofs (Section~\ref{oa:dynamic}), further results and proofs on allocative efficiency (Section~\ref{oa:allocative}), extensions of the main results and the remaining proofs (Section~\ref{oa:extensions}), example economies (Section~\ref{oa:examples}), and details of the 2015 economy (Section~\ref{oa:application}). Numbers of sections, equations, and results without the prefix OA refer to the main paper.\ifjournal{} Further examples and results are in the working-paper version of this paper, available from the authors.\fi

\section{R\&D Dynamics}\label{oa:dynamic}

This section sets up the R\&D game of Section~\ref{sec:subsidy} and proves Theorem~\ref{thm:dynamic}, which separates the spread of the knowledge created by an extra unit of one firm's R\&D from the R\&D responses that it triggers and values each response at the responding firm's R\&D wedge.

\subsection{The game and the decomposition}\label{oa:dynamic-game}

\paragraph{Environment.} Knowledge $\bm z$ lies in an open subset $\mathcal Z$ of a Euclidean space and follows $\dot{\bm z}=\bm h(\bm z,\bm r)$, where firm $k$ chooses R\&D effort $r_k$ at cost $\Gamma_k(r_k)$ in units of the numeraire. At each state the static economy of Section~\ref{sec:environment} is in an equilibrium that satisfies Assumptions~\ref{ass:inputs} and~\ref{ass:conduct}, and each firm is a producer in it. Knowledge enters technologies directly, $\tau_k(\bm z)$, which allows spillovers, or through each firm's choice of technique given its own knowledge $z_k$ (Remark~\ref{rem:technique}), in which case $\Pv_k=\omega_k\,dz_k$. Let $W(\bm z)$ and $\pi_k(\bm z)$ be static welfare and firm $k$'s static profit. Firms use Markov strategies $\bm r=\bm\chi(\bm z)$, so knowledge follows $\dot{\bm z}=\bm f(\bm z)\equiv\bm h(\bm z,\bm\chi(\bm z))$ along the base path $\bm z_t$ from $\bm z_0$. With discount rate $\rho>0$, household welfare and firm $k$'s value, both net of R\&D costs, are
\[
	V(\bm z_0)=\int_0^\infty e^{-\rho t}\Bigl[W(\bm z_t)-\sum_k\Gamma_k\bigl(\chi_k(\bm z_t)\bigr)\Bigr]dt,
	\qquad
	V^k(\bm z_0)=\int_0^\infty e^{-\rho t}\bigl[\pi_k(\bm z_t)-\Gamma_k\bigl(\chi_k(\bm z_t)\bigr)\bigr]dt .
\]

\paragraph{Marginal returns to R\&D.} At any state $\bm z$, with $\bm h_{r_k}$ evaluated at $(\bm z,\bm\chi(\bm z))$, the social and private marginal returns to firm $k$'s effort, firm $k$'s R\&D wedge, and firm $i$'s stake in firm $j$'s effort, the marginal value to $i$ of one more unit of $j$'s effort, are
\[
	\SMR_k\equiv\bm h_{r_k}^\top V_{\bm z},
	\qquad
	\PMR_k\equiv\bm h_{r_k}^\top V^k_{\bm z},
	\qquad
	\Delta_k\equiv\SMR_k-\PMR_k,
	\qquad
	\xi_{ij}\equiv\bm h_{r_j}^\top V^i_{\bm z}.
\]
They exclude the cost of the extra unit of effort and include the R\&D costs that it induces, those of every firm in $\SMR_k$ and firm $k$'s own in $\PMR_k$. For the innovator $i$, one more unit of effort moves the state in the direction $\bm v_i\equiv\bm h_{r_i}(\bm z_0,\bm\chi(\bm z_0))$, so $\SMR_i=V_{\bm z}(\bm z_0)^\top\bm v_i$ and $\PMR_i=V^i_{\bm z}(\bm z_0)^\top\bm v_i$ at $\bm z_0$.

\paragraph{Fixed-effort response.} Along the base path, the equilibrium and fixed-effort Jacobians of the knowledge dynamics are $\mathcal J(t)=\bm h_{\bm z}+\sum_k\bm h_{r_k}\chi_{k,\bm z}$ and $\mathcal J_0(t)=\bm h_{\bm z}$, evaluated at $(\bm z_t,\bm\chi(\bm z_t))$. Let $\Pi(t,s)$ and $\Pi_0(t,s)$ be their state-transition matrices, with $\partial_t\Pi(t,s)=\mathcal J(t)\Pi(t,s)$ and $\Pi(s,s)=I$, and likewise for $\Pi_0$. In the fixed-effort response $d\bm z^0_t=\Pi_0(t,0)\bm v_i$, the extra knowledge spreads with every firm's effort held at its base-path value; in the equilibrium response $d\bm z_t=\Pi(t,0)\bm v_i$, it also changes every firm's effort. Firm $k$'s R\&D response,
\[
	dr^0_{k,t}\equiv\chi_{k,\bm z}(\bm z_t)\,d\bm z^0_t ,
\]
is the change in effort that its strategy prescribes for $d\bm z^0_t$. Because $d\bm z^0_t$ excludes the knowledge that the R\&D responses themselves create, $dr^0_{k,t}$ differs from the change in effort along the equilibrium response, and Theorem~\ref{thm:dynamic} values that knowledge, and the further responses it triggers, through the weights $\Delta_k$ and $\xi_{ij}$.

\begin{assumption}[Markov perfect equilibrium]
	\label{ass:mpe}
	Given the other firms' strategies, $\chi_k$ maximizes firm $k$'s value, and the maximum in firm $k$'s Hamilton--Jacobi--Bellman equation is interior along the base path,
	\[
		\Gamma_k'\bigl(\chi_k(\bm z_t)\bigr)=\bm h_{r_k}\bigl(\bm z_t,\bm\chi(\bm z_t)\bigr)^\top V^k_{\bm z}(\bm z_t)\qquad\text{for all }t\ge0,
	\]
	where $V^k_{\bm z}$ exists by Lemma~\ref{lem:closedloop}.
\end{assumption}

\begin{assumption}[Dynamic regularity]
	\label{ass:regular}
	$\bm h$, $\bm\chi$, $W$, $\pi_k$, and $\Gamma_k$ are continuously differentiable, and the paths of $\dot{\bm z}=\bm f(\bm z)$ from a neighborhood of $\bm z_0$ stay in $\mathcal Z$. There are constants $K$, $\gamma_\Pi$, and $\gamma_F$ with $\gamma_\Pi+\gamma_F<\rho$ such that, uniformly over initial states in that neighborhood, $\|\Pi(t,s)\|$ and $\|\Pi_0(t,s)\|$ are at most $Ke^{\gamma_\Pi(t-s)}$, the gradients of $W$, $\pi_k$, and $\Gamma_k\circ\chi_k$ along the paths are at most $Ke^{\gamma_Ft}$, and $\|\bm h_{r_k}\|\,\|\chi_{k,\bm z}\|$ is at most $K$.
\end{assumption}

Assumption~\ref{ass:mpe} enters only through the first-order conditions $\PMR_k=\Gamma_k'$, which make $\Delta_k$ the social marginal value of firm $k$'s effort less its marginal cost.

Under these assumptions, Theorem~\ref{thm:dynamic} in Section~\ref{sec:subsidy} decomposes $\SMR_i$ and $\PMR_i$, with $\T_{k,t}$, $\Pv_{i,t}$, $\AEff_t$, and $\IP_{i,t}$ the terms of Theorem~\ref{thm:main} for the change in technologies that the fixed-effort response $d\bm z^0_t$ induces at date $t$.

\begin{proof}[Proof of Theorem~\ref{thm:dynamic}]
	Lemma~\ref{lem:openloop} below writes $\SMR_i$ and $\PMR_i$ as discounted integrals of the static change in $W$ or $\pi_i$ induced by $d\bm z^0_t$ and of the responses $dr^0_{k,t}$, weighted by $\bm h_{r_k}^\top V_{\bm z}-\Gamma_k'$ and by $\bm h_{r_k}^\top V^i_{\bm z}-\mathbf 1\{k=i\}\,\Gamma_i'$. By Assumption~\ref{ass:mpe}, $\Gamma_k'=\bm h_{r_k}^\top V^k_{\bm z}$, so these weights are $\Delta_k$, zero for $k=i$, and $\xi_{ij}$ for $k=j\neq i$. Theorem~\ref{thm:main} at each date writes the static changes in $W$ and $\pi_i$ as $\sum_k\T_{k,t}+\AEff_t$ and $\Pv_{i,t}+\IP_{i,t}$.
\end{proof}

Section~\ref{sec:subsidy} interprets equation~\eqref{eq:dynamic}. By the envelope theorem the innovator's own response adds $\Delta_i\,dr^0_{i,t}$ to the social return and nothing to the private return, and rivals' responses enter the private return only through the stake $\xi_{ij}$, which extends to R\&D the strategic effect of rivals' output \citep{BranderSpencer1983,JunVives2004}, whereas society values them at $\Delta_j$. This is the logic of feedback Nash costate equations \citep{DocknerEtAl2000} and generalized Euler equations \citep{KrusellKuruscuSmith2002,KleinKrusellRiosRull2008}.

With proportional R\&D subsidies $\varsigma_j$, the decomposition holds with each response $dr^0_{j,t}$, the innovator's own included, valued socially at $\Delta_j-\varsigma_j\Gamma_j'$, the social marginal value of firm $j$'s effort less its marginal cost.\ifjournal\else{} Proposition~\ref{prop:dynamic} (Section~\ref{oa:dynamic-proofs}) gives an equivalent form that applies Theorem~\ref{thm:main} at each date along the equilibrium response, which does not separate the diffusion of knowledge from the R\&D responses, and Section~\ref{oa:lq} computes the terms of equation~\eqref{eq:dynamic} in a three-firm linear--quadratic version of the R\&D game of \citet{HopenhaynOkumura2026}.\fi

\ifjournal\else
The optimal R\&D subsidy of \citet{AkcigitHanleyStantcheva2022} contains a Pigouvian term for knowledge spillovers and a monopoly-distortion term, which here is $\nu_i\,db_i+(\mu_i-1)\,dI_i$; their aggregate output is additively separable across firms, so their counterpart of $\AEff$ is positive. In the symmetric oligopolies of \citet{LeahyNeary1997}, output subsidies remove markups, which sets $\AEff$ and the appropriability gap of a process innovation to zero, so the static wedge reduces to $\sum_{j\neq i}\T_j-\IP_i$, the counterpart of their spillover term less a strategic term.
\fi

\begin{corollary}[R\&D stocks in a stationary economy]\label{cor:stationary}
	Under Assumptions~\ref{ass:mpe} and~\ref{ass:regular}, suppose that $\bm h(\bm z,\bm r)=\kappa\bm r-\delta\bm z$, with $\kappa>0$ and $\rho+\delta>0$, that technologies are functions $\tau(\bm z)$ of knowledge, and that the base path is a steady state, $\bm z_t=\bm z_0$. Let $dW=\sum_k\T_k+\AEff$ and $d\pi_i=\Pv_i+\IP_i$ be the static returns of Theorem~\ref{thm:main} for the change in technologies $\tau_{\bm z}(\bm z_0)\,\bm v_i$, and let $dr_k\equiv\chi_{k,\bm z}(\bm z_0)\,\bm v_i$. Then
	\[
		\SMR_i=\frac{dW+\sum_k\Delta_k\,dr_k}{\rho+\delta},
		\qquad
		\PMR_i=\frac{d\pi_i+\sum_{j\neq i}\xi_{ij}\,dr_j}{\rho+\delta},
	\]
	with $\Delta_k$ and $\xi_{ij}$ evaluated at $\bm z_0$.
\end{corollary}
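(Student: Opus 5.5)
The plan is to specialize Theorem~\ref{thm:dynamic} to the steady state with linear knowledge accumulation. In that setting the state-transition matrices become exponentials, and the integrals collapse to the factor $1/(\rho+\delta)$. With $\bm h(\bm z,\bm r)=\kappa\bm r-\delta\bm z$, the fixed-effort Jacobian is $\mathcal J_0=\bm h_{\bm z}=-\delta I$, which is constant. Hence $\Pi_0(t,0)=e^{-\delta t}I$, and the fixed-effort response is $d\bm z^0_t=e^{-\delta t}\bm v_i$ with $\bm v_i=\kappa\bm e_i$. Because the base path is a steady state, every coefficient in the integrand of Theorem~\ref{thm:dynamic} is evaluated at $\bm z_0$ at every date. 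This covers $\tau_{\bm z}$, the Domar weights and markups that define $\T_{k,t}$, $\Pv_{i,t}$, $\AEff_t$ and $\IP_{i,t}$, and also $\chi_{k,\bm z}$, $\Delta_k$ and $\xi_{ij}$.

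The key step is linearity in the direction of the state change. The static terms of Theorem~\ref{thm:main} are first-order changes induced by the technology change $\tau_{\bm z}(\bm z_0)\,d\bm z^0_t$. They are therefore linear in $d\bm z^0_t$, so the date-$t$ terms equal $e^{-\delta t}$ times the corresponding terms for $\tau_{\bm z}(\bm z_0)\bm v_i$. Summing, $\sum_k\T_{k,t}+\AEff_t=e^{-\delta t}dW$ and $\Pv_{i,t}+\IP_{i,t}=e^{-\delta t}d\pi_i$. Likewise $dr^0_{k,t}=\chi_{k,\bm z}(\bm z_0)e^{-\delta t}\bm v_i=e^{-\delta t}dr_k$. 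One point to check is that the objects are genuinely linear: $\T_k$ and $\Pv_k$ are linear in $d\tau$ by Definition~\ref{def:values}, and $\AEff$ and $\IP_i$ are differentials along the selected equilibrium, hence linear in $d\tau$. I would state this explicitly, since Theorem~\ref{thm:main} is phrased for a given $d\tau$.

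Substituting into the two displays of Theorem~\ref{thm:dynamic} gives $\SMR_i=\int_0^\infty e^{-(\rho+\delta)t}dt\,[dW+\sum_k\Delta_k dr_k]$, and similarly for $\PMR_i$, where the innovator's own response drops out and only the terms $\xi_{ij}\,dr_j$ for $j\neq i$ remain. The integral equals $1/(\rho+\delta)$ because $\rho+\delta>0$.

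The main obstacle is only a technical one: confirming that Assumption~\ref{ass:regular} holds, or is not needed beyond what Theorem~\ref{thm:dynamic} already uses, so that the integrals converge and the theorem applies. At a steady state the gradients are constant, so $\gamma_F=0$ can be taken. The bound on $\Pi_0$ holds with $\gamma_\Pi=\max(-\delta,\cdot)$, and the corollary's hypothesis $\rho+\delta>0$ is exactly what makes the discounted integral of $e^{-\delta t}$ finite. I would note that the remaining conditions of Assumption~\ref{ass:regular}, including those on the equilibrium matrix $\Pi$, are maintained by hypothesis, so Theorem~\ref{thm:dynamic} can be invoked directly.
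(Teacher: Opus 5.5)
Your proposal is correct and follows the paper's proof. As there, $\mathcal J_0=-\delta I$ gives $d\bm z^0_t=e^{-\delta t}\bm v_i$, and the steady state makes the static economy, $\chi_{k,\bm z}$, $\Delta_k$, and $\xi_{ij}$ constant. Linearity of the terms of Theorem~\ref{thm:main} in the technology change then scales each integrand of Theorem~\ref{thm:dynamic} by $e^{-\delta t}$, and $\int_0^\infty e^{-(\rho+\delta)t}\,dt=1/(\rho+\delta)$. Your closing regularity check is unnecessary and slightly overreaches. Assumption~\ref{ass:regular} is a hypothesis of the corollary, and it cannot be verified from the steady state alone (e.g.\ by taking $\gamma_F=0$), because its bounds must hold along paths from nearby initial states, where Lemma~\ref{lem:closedloop} differentiates.
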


\begin{proof}
	Here $\mathcal J_0=-\delta I$, so $d\bm z^0_t=e^{-\delta t}\bm v_i$. At a steady state the static economy, $\chi_{k,\bm z}$, $\Delta_k$, and $\xi_{ij}$ are constant, and the terms of Theorem~\ref{thm:main} are linear in the change in technologies, so each integrand in Theorem~\ref{thm:dynamic} is $e^{-\delta t}$ times its value at $t=0$, with $dr^0_{k,t}=e^{-\delta t}dr_k$, and $\int_0^\infty e^{-(\rho+\delta)t}\,dt=1/(\rho+\delta)$.
\end{proof}

\subsection{Proofs}\label{oa:dynamic-proofs}

\begin{lemma}[Marginal values along the equilibrium path]\label{lem:closedloop}
	Let Assumption~\ref{ass:regular} hold, let $F$ be $W$, $\pi_k$, or $\Gamma_k\circ\chi_k$ for some firm $k$, and suppose that $\mathcal V_F(\bm z_0)\equiv\int_0^\infty e^{-\rho t}F(\bm z_t)\,dt$, its discounted value along the path of $\dot{\bm z}=\bm f(\bm z)$ from $\bm z_0$, is finite. Then, at $\bm z_0$ and at every state $\bm z_s$ on the base path,
	\[
		\nabla\mathcal V_F(\bm z_0)=\int_0^\infty e^{-\rho t}\,\Pi(t,0)^\top\nabla F(\bm z_t)\,dt,
		\qquad
		\nabla\mathcal V_F(\bm z_s)=\int_s^\infty e^{-\rho(t-s)}\,\Pi(t,s)^\top\nabla F(\bm z_t)\,dt .
	\]
	The same formulas hold for $V$, with $F=W-\sum_j\Gamma_j\circ\chi_j$, and for $V^k$, with $F=\pi_k-\Gamma_k\circ\chi_k$.
\end{lemma}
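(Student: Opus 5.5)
The plan is to treat $\mathcal V_F$ as a composition of the flow map of $\dot{\bm z}=\bm f(\bm z)$ with $F$, differentiate under the integral sign, and identify the derivative of the flow map with the state-transition matrix $\Pi$.

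Write $\phi_t(\bm z)$ for the solution at date $t$ of $\dot{\bm z}=\bm f(\bm z)$ started at $\bm z$. Three steps are needed.

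First, by Assumption~\ref{ass:regular}, $\bm f=\bm h(\cdot,\bm\chi(\cdot))$ is $C^1$, and paths from a neighborhood of $\bm z_0$ stay in $\mathcal Z$. Standard differentiable dependence on initial conditions then makes $\phi_t$ a $C^1$ function of $\bm z$ on that neighborhood. Its derivative $D\phi_t(\bm z_0)$ solves the variational equation
\[
	\partial_tM=\bm f_{\bm z}(\bm z_t)M,
	\qquad
	M(0)=I.
\]
Differentiating $\bm h(\bm z,\bm\chi(\bm z))$ gives $\bm f_{\bm z}=\bm h_{\bm z}+\sum_k\bm h_{r_k}\chi_{k,\bm z}=\mathcal J(t)$. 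Hence $D\phi_t(\bm z_0)=\Pi(t,0)$. By the group property $\phi_t=\phi_{t-s}\circ\phi_s$, the same argument gives $D\phi_{t-s}(\bm z_s)=\Pi(t,s)$.

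Second, for fixed $t$ the chain rule gives
\[
	\nabla_{\bm z}F(\phi_t(\bm z))\big|_{\bm z_0}=\Pi(t,0)^\top\nabla F(\bm z_t).
\]

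Third, we interchange differentiation and integration, which is the main obstacle. The plan is to use the uniform bounds in Assumption~\ref{ass:regular}. Uniformly over initial states in the neighborhood, the integrand $e^{-\rho t}\,\|D\phi_t(\bm z)^\top\nabla F(\phi_t(\bm z))\|$ is dominated by $K^2e^{-(\rho-\gamma_\Pi-\gamma_F)t}$, which is integrable because $\gamma_\Pi+\gamma_F<\rho$. I would then argue via the mean value theorem along segments in $\bm z$. The difference quotient of $\mathcal V_F$ equals the integral of averaged gradients along the segment, and dominated convergence, with pointwise continuity from $C^1$ dependence, lets the limit pass inside. The same bound shows that the candidate gradient is continuous, so $\mathcal V_F$ is differentiable at $\bm z_0$ with the stated formula.

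Some care is needed in reading the hypothesis. The bound on $\|\Pi\|$ is assumed uniformly over initial states, so it applies to the derivative of the flow from nearby points and not only along the base path. Likewise, the gradient bound on $F$ is assumed along paths from the neighborhood. Both are needed for domination on the whole neighborhood and not only at $\bm z_0$. Finiteness of $\mathcal V_F(\bm z_0)$ is given, and finiteness nearby follows by integrating the gradient bound along segments.

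For $\bm z_s$ on the base path, repeat the argument with the time shift $t\mapsto t-s$. By autonomy, $\mathcal V_F(\bm z_s)=\int_s^\infty e^{-\rho(t-s)}F(\bm z_t)\,dt$ along the path from $\bm z_s$. A neighborhood of $\bm z_s$ is mapped from a neighborhood of $\bm z_0$ by $\phi_s$, or we can simply apply the uniform bounds with $\Pi(t,s)$, which satisfies $\|\Pi(t,s)\|\le Ke^{\gamma_\Pi(t-s)}$. The gradient bound $Ke^{\gamma_Ft}=Ke^{\gamma_Fs}e^{\gamma_F(t-s)}$ only adds a constant factor.

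Finally, the statements for $V$ and $V^k$ follow by linearity of the gradient formula in $F$. Each of $W$, $\pi_k$, and $\Gamma_k\circ\chi_k$ satisfies the hypotheses, so finite sums such as $W-\sum_j\Gamma_j\circ\chi_j$ and $\pi_k-\Gamma_k\circ\chi_k$ do as well, provided each component value is finite. If only the sum is assumed finite, I would instead apply the domination argument directly to the sum, since its gradient obeys the same exponential bound.
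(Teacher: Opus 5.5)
Your proposal is correct and follows essentially the same route as the paper: you identify $\partial\bm z_t/\partial\bm z_0$ with $\Pi(t,0)$ through the variational equation, then use the mean value theorem and dominated convergence with the bound $K^2e^{(\gamma_\Pi+\gamma_F-\rho)t}$. You then shift time by autonomy, carry the neighborhood of $\bm z_0$ to one of $\bm z_s$ with the flow to get the dominating function $K^2e^{\gamma_Fs}e^{(\gamma_\Pi+\gamma_F-\rho)(t-s)}$, and finish by linearity; your added remarks on continuity of the candidate gradient and on finiteness of $\mathcal V_F$ near $\bm z_0$ only make explicit what the paper leaves implicit.
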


\begin{proof}
	The path is continuously differentiable in its initial state, with derivative solving the variational equation with coefficient matrix $\bm f_{\bm z}=\mathcal J(t)$, so $\partial\bm z_t/\partial\bm z_0=\Pi(t,0)$. For a unit vector $\bm e$, the mean value theorem writes the difference quotient of $e^{-\rho t}F(\bm z_t)$ in the direction $\bm e$ as $e^{-\rho t}\nabla F^\top\Pi(t,0)\,\bm e$ along the path from an intermediate initial state. By Assumption~\ref{ass:regular} it is bounded by $K^2e^{(\gamma_\Pi+\gamma_F-\rho)t}$, which is integrable, and dominated convergence gives the first formula. Because $\dot{\bm z}=\bm f(\bm z)$ is autonomous, the path from $\bm z_s$ is the base path from date $s$ on, so $\mathcal V_F(\bm z_s)=\int_s^\infty e^{-\rho(t-s)}F(\bm z_t)\,dt$ and $\partial\bm z_t/\partial\bm z_s=\Pi(t,s)$. The flow over $[0,s]$ maps a neighborhood of $\bm z_0$ diffeomorphically onto a neighborhood of $\bm z_s$, so the bounds of Assumption~\ref{ass:regular} hold along paths from states near $\bm z_s$, and the same argument with the dominating function $K^2e^{\gamma_Fs}e^{(\gamma_\Pi+\gamma_F-\rho)(t-s)}$ gives the second formula. The formulas for $V$ and $V^k$ follow by linearity.
\end{proof}

\begin{lemma}[Decomposition along the fixed-effort response]\label{lem:openloop}
	Under Assumption~\ref{ass:regular}, with $\nabla W$, $\nabla\pi_i$, $V_{\bm z}$, and $V^i_{\bm z}$ evaluated at $\bm z_t$, $\bm h_{r_k}$ at $(\bm z_t,\bm\chi(\bm z_t))$, and $\Gamma_k'$ at $\chi_k(\bm z_t)$,
	\begin{align*}
		V_{\bm z}(\bm z_0)^\top\bm v_i&=\int_0^\infty e^{-\rho t}\Bigl[\nabla W^\top d\bm z^0_t+\sum_k\bigl(\bm h_{r_k}^\top V_{\bm z}-\Gamma_k'\bigr)\,dr^0_{k,t}\Bigr]dt,\\
		V^i_{\bm z}(\bm z_0)^\top\bm v_i&=\int_0^\infty e^{-\rho t}\Bigl[\nabla\pi_i^\top d\bm z^0_t+\bigl(\bm h_{r_i}^\top V^i_{\bm z}-\Gamma_i'\bigr)\,dr^0_{i,t}+\sum_{j\neq i}\bm h_{r_j}^\top V^i_{\bm z}\,dr^0_{j,t}\Bigr]dt .
	\end{align*}
\end{lemma}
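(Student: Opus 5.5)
The plan is to obtain both identities by differentiating the value functions along the perturbed path and then splitting the equilibrium knowledge response into its fixed-effort part and the part generated by the effort responses.

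\emph{Step 1: write the marginal values as integrals along the base path.} By Lemma~\ref{lem:closedloop}, applied with $F=W-\sum_k\Gamma_k\circ\chi_k$ at $\bm z_0$,
\[
V_{\bm z}(\bm z_0)^\top\bm v_i=\int_0^\infty e^{-\rho t}\,\nabla F(\bm z_t)^\top\Pi(t,0)\bm v_i\,dt .
\]
The integrand is $\nabla W^\top d\bm z_t-\sum_k\Gamma_k'\,\chi_{k,\bm z}\,d\bm z_t$, with $d\bm z_t=\Pi(t,0)\bm v_i$.

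\emph{Step 2: relate the two transition matrices.} Since $\mathcal J=\mathcal J_0+\sum_k\bm h_{r_k}\chi_{k,\bm z}$, variation of constants gives
\[
\Pi(t,0)=\Pi_0(t,0)+\int_0^t\Pi(t,s)\sum_k\bm h_{r_k}(s)\chi_{k,\bm z}(s)\Pi_0(s,0)\,ds .
\]
Differentiating the right side in $t$ gives $\mathcal J_0\Pi_0+\sum_k\bm h_{r_k}\chi_{k,\bm z}\Pi_0+\int_0^t\mathcal J(t)\Pi(t,s)\cdots ds$. This equals $\mathcal J(t)$ times the right side only if the perturbation is attached on the correct side. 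So I will check carefully that the correct form is $\Pi=\Pi_0+\int\Pi(t,s)\,(\mathcal J-\mathcal J_0)(s)\,\Pi_0(s,0)\,ds$, which holds because $\partial_s[\Pi(t,s)\Pi_0(s,0)]=\Pi(t,s)(\mathcal J_0-\mathcal J)(s)\Pi_0(s,0)$.

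Applying this to $\bm v_i$ gives
\[
d\bm z_t=d\bm z^0_t+\int_0^t\Pi(t,s)\sum_k\bm h_{r_k}(s)\,dr^0_{k,s}\,ds .
\]
The equilibrium response is thus the fixed-effort diffusion plus the knowledge created by each effort response $dr^0_{k,s}$, propagated under equilibrium dynamics.

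\emph{Step 3: substitute and swap the order of integration.} Substituting Step 2 into Step 1 and using the same identity for the $\Gamma'\chi_{\bm z}$ part, the terms in $d\bm z^0_t$ give $\nabla W^\top d\bm z^0_t-\sum_k\Gamma_k'\,dr^0_{k,t}$. For the double integral, swapping the order via Fubini gives
\[
\int_0^\infty e^{-\rho s}\sum_k\Bigl[\int_s^\infty e^{-\rho(t-s)}\nabla F(\bm z_t)^\top\Pi(t,s)\,dt\Bigr]\bm h_{r_k}(s)\,dr^0_{k,s}\,ds .
\]
By the second formula of Lemma~\ref{lem:closedloop}, the bracket is $V_{\bm z}(\bm z_s)^\top$. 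This yields the first identity.

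\emph{Step 4: the private value.} The second identity follows in the same way with $F=\pi_i-\Gamma_i\circ\chi_i$. Only $\Gamma_i'$ appears, which gives the weight $\bm h_{r_i}^\top V^i_{\bm z}-\Gamma_i'$ for $k=i$ and $\bm h_{r_j}^\top V^i_{\bm z}$ for $j\neq i$.

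\emph{Main obstacle.} The main difficulty is justifying Fubini and the absolute convergence of the double integral. I would bound the integrand by $K\,e^{\gamma_F t}\cdot Ke^{\gamma_\Pi(t-s)}\cdot K\cdot Ke^{\gamma_\Pi s}$, using $\|\bm h_{r_k}\|\,\|\chi_{k,\bm z}\|\le K$ and the bounds on $\Pi$ and $\Pi_0$. This gives $K^4e^{(\gamma_F+\gamma_\Pi)t}$, and integrating over $s\in[0,t]$ adds only a factor $t$. Since $\gamma_\Pi+\gamma_F<\rho$, this is integrable against $e^{-\rho t}$. The same bounds ensure that each single integral in Steps 1 and 3 converges.
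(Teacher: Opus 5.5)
Your proposal is correct and follows the paper's proof essentially step for step. It uses the first formula of Lemma~\ref{lem:closedloop}, the Duhamel formula with $\Pi(t,s)$ on the left and $\Pi_0(s,0)$ on the right, an exchange of integration justified by a bound of a constant times $t\,e^{(\gamma_\Pi+\gamma_F-\rho)t}$, and the second formula of Lemma~\ref{lem:closedloop} to collapse the inner integral into $e^{-\rho s}V_{\bm z}(\bm z_s)^\top$, all repeated with $F=\pi_i-\Gamma_i\circ\chi_i$ for $V^i$; your check through $\partial_s[\Pi(t,s)\Pi_0(s,0)]$ confirms the order of the factors, so the hesitation in Step~2 can be dropped.
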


\begin{proof}
	Write $\mathcal J(t)=\mathcal J_0(t)+\mathcal J_\chi(t)$, with $\mathcal J_\chi(t)\equiv\sum_k\bm h_{r_k}\chi_{k,\bm z}$ at $(\bm z_t,\bm\chi(\bm z_t))$. The difference $\Pi(t,0)-\Pi_0(t,0)$ is zero at $t=0$ and solves $\dot Y=\mathcal J(t)Y+\mathcal J_\chi(t)\Pi_0(t,0)$, so by Duhamel's formula
	\[
		\Pi(t,0)=\Pi_0(t,0)+\int_0^t\Pi(t,s)\,\mathcal J_\chi(s)\,\Pi_0(s,0)\,ds .
	\]
	Let $F=W-\sum_k\Gamma_k\circ\chi_k$. Applying this to $\bm v_i$ in the first formula of Lemma~\ref{lem:closedloop} for $V$ gives
	\[
		V_{\bm z}(\bm z_0)^\top\bm v_i=\int_0^\infty e^{-\rho t}\,\nabla F(\bm z_t)^\top d\bm z^0_t\,dt+\int_0^\infty\!\!\int_0^te^{-\rho t}\,\nabla F(\bm z_t)^\top\Pi(t,s)\,\mathcal J_\chi(s)\,d\bm z^0_s\,ds\,dt .
	\]
	By Assumption~\ref{ass:regular}, $\|\mathcal J_\chi(s)\|$ is bounded, and the integrand of the double integral is bounded by a constant times $e^{(\gamma_\Pi+\gamma_F-\rho)t}$ for $s\le t$, so its integral over $s$ is bounded by a constant times $te^{(\gamma_\Pi+\gamma_F-\rho)t}$, which is integrable, and the order of integration can be exchanged. By the second formula of Lemma~\ref{lem:closedloop}, $\int_s^\infty e^{-\rho t}\,\nabla F(\bm z_t)^\top\Pi(t,s)\,dt=e^{-\rho s}V_{\bm z}(\bm z_s)^\top$, so the double integral equals $\int_0^\infty e^{-\rho s}\,V_{\bm z}(\bm z_s)^\top\mathcal J_\chi(s)\,d\bm z^0_s\,ds=\int_0^\infty e^{-\rho s}\sum_k\bm h_{r_k}^\top V_{\bm z}\,dr^0_{k,s}\,ds$. In the first integral, $\nabla F^\top d\bm z^0_t=\nabla W^\top d\bm z^0_t-\sum_k\Gamma_k'\,dr^0_{k,t}$, because the gradient of $\Gamma_k\circ\chi_k$ is $\Gamma_k'\,\chi_{k,\bm z}^\top$. Together these give the first formula. The argument for $V^i$ is the same with $F=\pi_i-\Gamma_i\circ\chi_i$, for which $\nabla F^\top d\bm z^0_t=\nabla\pi_i^\top d\bm z^0_t-\Gamma_i'\,dr^0_{i,t}$.
\end{proof}
\ifjournal\else

The proof uses neither Assumption~\ref{ass:mpe} nor any property of $\bm v_i$. Both formulas therefore hold for any continuously differentiable strategies that satisfy Assumption~\ref{ass:regular}, and for any direction $\bm v$ of the initial change in knowledge in place of $\bm v_i$, with $d\bm z^0_t=\Pi_0(t,0)\bm v$ and the responses $dr^0_{k,t}$ defined from it. Under Assumption~\ref{ass:mpe}, the second formula is the integral, along the fixed-effort response, of the costate equation of a Markov perfect (feedback Nash) equilibrium of a differential game, in which the other players' strategy responses are weighted by the derivatives of the player's Hamiltonian with respect to their controls \citep{Mehlmann1988,DocknerEtAl2000,LingCaputo2012}. The first formula applies the same calculation to welfare under the equilibrium strategies. Theorem~\ref{thm:dynamic} combines these weights with Theorem~\ref{thm:main} applied at each date along the fixed-effort response.

\begin{proposition}[Wedge along the equilibrium response]\label{prop:dynamic}
	Let Assumption~\ref{ass:regular} hold. At each date $t$, let $dW_t\equiv\nabla W(\bm z_t)^\top d\bm z_t$ and $d\pi_{i,t}\equiv\nabla\pi_i(\bm z_t)^\top d\bm z_t$ be the changes in static welfare and in firm $i$'s profit induced by the equilibrium response $d\bm z_t$, let $\T_{k,t}$, $\Pv_{i,t}$, $\AEff_t$, and $\IP_{i,t}$ now denote the terms of Theorem~\ref{thm:main} for this static change, and let
	\[
		d\Gamma_{j,t}\equiv\Gamma_j'\bigl(\chi_j(\bm z_t)\bigr)\,\chi_{j,\bm z}(\bm z_t)\,d\bm z_t
	\]
	be the induced change in firm $j$'s R\&D cost. Then
	\begin{align*}
		\SMR_i-\PMR_i&=\int_0^\infty e^{-\rho t}\Bigl[dW_t-d\pi_{i,t}-\sum_{j\neq i}d\Gamma_{j,t}\Bigr]dt\\
		&=\int_0^\infty e^{-\rho t}\Bigl[(\T_{i,t}-\Pv_{i,t})+\sum_{j\neq i}\T_{j,t}+\AEff_t-\IP_{i,t}-\sum_{j\neq i}d\Gamma_{j,t}\Bigr]dt .
	\end{align*}
\end{proposition}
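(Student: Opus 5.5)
The plan is to work directly with the equilibrium response $d\bm z_t=\Pi(t,0)\bm v_i$. This avoids the Duhamel split used for Lemma~\ref{lem:openloop}.

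By Lemma~\ref{lem:closedloop}, the first formula applied to $V$ with $F=W-\sum_j\Gamma_j\circ\chi_j$ gives
\[
\SMR_i=V_{\bm z}(\bm z_0)^\top\bm v_i=\int_0^\infty e^{-\rho t}\nabla F(\bm z_t)^\top d\bm z_t\,dt .
\]
Since the gradient of $\Gamma_j\circ\chi_j$ is $\Gamma_j'\chi_{j,\bm z}^\top$, the integrand equals $dW_t-\sum_j d\Gamma_{j,t}$. In the same way, applying Lemma~\ref{lem:closedloop} to $V^i$ with $F=\pi_i-\Gamma_i\circ\chi_i$ gives
\[
\PMR_i=\int_0^\infty e^{-\rho t}\bigl[d\pi_{i,t}-d\Gamma_{i,t}\bigr]dt .
\]
Integrability of both integrands follows from the bounds in Assumption~\ref{ass:regular}, exactly as in the proof of that lemma. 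We take the two integrals to be finite, which is implicit in the definition of the returns.

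Subtracting the two expressions, the innovator's own term $d\Gamma_{i,t}$ appears in both and cancels. What remains is $dW_t-d\pi_{i,t}-\sum_{j\neq i}d\Gamma_{j,t}$, which gives the first line of the proposition. Note that this step uses no first-order condition, so Assumption~\ref{ass:mpe} is not needed, consistent with the statement.

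For the second line, we note that at each date the change in technologies $\tau_{\bm z}(\bm z_t)\,d\bm z_t$ is a differential change $d\tau$ in the static economy at state $\bm z_t$, with firm $i$ as innovator. Theorem~\ref{thm:main}, equation~\eqref{eq:main}, then gives
\[
dW_t-d\pi_{i,t}=(\T_{i,t}-\Pv_{i,t})+\sum_{j\neq i}\T_{j,t}+\AEff_t-\IP_{i,t}.
\]
When knowledge enters through the choice of technique, $\Pv_{i,t}=\omega_i\,dz_{i,t}$ and we use the version of the theorem from Remark~\ref{rem:technique}. Substituting this identity into the integrand completes the proof.

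The main obstacle is only bookkeeping. We have to check that $dW_t$ and $d\pi_{i,t}$, defined as gradients of the static value functions $W(\bm z)$ and $\pi_i(\bm z)$, coincide with the first-order changes $d$ used in Theorem~\ref{thm:main} along the locally smooth equilibrium selection. This holds because $W(\bm z)$ and $\pi_k(\bm z)$ are, by construction, evaluated at that selection.
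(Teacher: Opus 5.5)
Your proposal is correct and follows the paper's proof essentially step for step. You apply Lemma~\ref{lem:closedloop} to $V$ and to $V^i$ along the equilibrium response $d\bm z_t=\Pi(t,0)\bm v_i$, cancel firm $i$'s own R\&D cost in the difference without using any first-order condition, and then invoke Theorem~\ref{thm:main} at each date for the static change in technologies.
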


\begin{proof}
	Since $d\bm z_t=\Pi(t,0)\bm v_i$, Lemma~\ref{lem:closedloop} applied to $V$ gives
	\[
		\SMR_i=V_{\bm z}(\bm z_0)^\top\bm v_i=\int_0^\infty e^{-\rho t}\Bigl[dW_t-\sum_kd\Gamma_{k,t}\Bigr]dt,
	\]
	and applied to $V^i$ it gives $\PMR_i=\int_0^\infty e^{-\rho t}\bigl[d\pi_{i,t}-d\Gamma_{i,t}\bigr]dt$. Firm $i$'s own R\&D cost appears in both and cancels from the difference. At each date, $d\bm z_t$ changes the technologies of the static economy, whose equilibrium satisfies Assumptions~\ref{ass:inputs} and~\ref{ass:conduct}, so Theorem~\ref{thm:main} applies to $dW_t-d\pi_{i,t}$.
\end{proof}

The equilibrium response mixes the diffusion of the extra knowledge with the R\&D responses of every firm, including the innovator's own, which Lemma~\ref{lem:openloop} separates without using any optimality condition. Theorem~\ref{thm:dynamic} separates them at the cost of being recursive, because $\Delta_k$ and $\xi_{ij}$ embed the equilibrium's later responses.

\paragraph{Diffusion through the knowledge network.} With linear knowledge dynamics, the fixed-effort response has a closed form. Suppose $\bm h=\Phi_0\bm z+\kappa\bm r$, where $\Phi_0$ combines knowledge spillovers between firms and depreciation and $\kappa$ is the productivity of effort. Then $\bm v_i=\kappa\bm e_i$, and because the fixed-effort Jacobian $\mathcal J_0=\Phi_0$ does not depend on the state, $d\bm z^0_t=e^{\Phi_0t}\bm v_i$ along any base path. When the eigenvalues of $\Phi_0$ have real parts below $\rho$, the discounted integral of the fixed-effort response is the discounted resolvent $(\rho I-\Phi_0)^{-1}$ of the knowledge network applied to $\bm v_i$,
\[
	\int_0^\infty e^{-\rho t}\,d\bm z^0_t\,dt=(\rho I-\Phi_0)^{-1}\bm v_i .
\]
Every term of the integrand in~\eqref{eq:dynamic} is linear in $d\bm z^0_t$, with coefficients that depend only on $\bm z_t$. At a steady state, where the base path is constant, these coefficients are constant over time. The discounted integral of the static terms is therefore the static wedge $dW-d\pi_i$ of the knowledge change $(\rho I-\Phi_0)^{-1}\bm v_i$, and that of the response terms is their value at the R\&D responses $\chi_{k,\bm z}(\rho I-\Phi_0)^{-1}\bm v_i$. Corollary~\ref{cor:stationary} is the case $\Phi_0=-\delta I$. \citet{LiuMa2021} use this resolvent to characterize the optimal allocation of R\&D across sectors.

\begin{remark}[Linear--quadratic game]\label{rem:lq}
	In the linear--quadratic game of \citet{HopenhaynOkumura2026}, specified in Section~\ref{oa:lq}, knowledge follows the linear dynamics $\bm h=\Phi_0\bm z+\kappa\bm r$ above, so $\bm v_i=\kappa\bm e_i$. Static welfare and profits are $W=\bm z^\top Q\bm z$ and $\pi_k=\bm z^\top Q_k\bm z$, and $\Gamma_k(r_k)=r_k^2$. Strategies are linear, $\chi_k=\kappa\widetilde X_k\bm z$, and values are quadratic, $V=\bm z^\top X\bm z$ and $V^k=\bm z^\top X^k\bm z$, with $Q$, $Q_k$, $X$, and $X^k$ symmetric. Firm $k$'s first-order condition $2r_k=2\kappa X^k_{k\cdot}\bm z$, where the subscript $k\cdot$ denotes the $k$th row, gives $\widetilde X_k=X^k_{k\cdot}$. With $\widetilde X$ the matrix whose rows are the $\widetilde X_k$, knowledge follows $\dot{\bm z}=\Phi\bm z$ with $\Phi=\Phi_0+\kappa^2\widetilde X$, so $\bm z_t=e^{\Phi t}\bm z_0$, while $d\bm z^0_t=e^{\Phi_0t}\bm v_i$.

	In the first formula of Lemma~\ref{lem:openloop}, $\nabla W=2Q\bm z_t$, $dr^0_{k,t}=\kappa\widetilde X_k\,d\bm z^0_t$, and the coefficient of $dr^0_{k,t}$ is $\bm h_{r_k}^\top V_{\bm z}-\Gamma_k'=2\kappa(X-X^k)_{k\cdot}\bm z_t$, which is $\Delta_k$. The formula holds for every initial state near $\bm z_0$ and with any direction in place of $\bm v_i$, and both of its sides are bilinear in the two. It therefore states that
	\[
		2X=\int_0^\infty e^{-\rho t}e^{\Phi_0^\top t}\,\mathcal G\,e^{\Phi t}\,dt,
		\qquad
		\mathcal G\equiv2Q+2\kappa^2\sum_k\widetilde X_k^\top(X-X^k)_{k\cdot}.
	\]
	By Assumption~\ref{ass:regular}, $\|e^{\Phi_0t}\|=\|\Pi_0(t,0)\|\le Ke^{\gamma_\Pi t}$. The matrix $Q=\bm N^\top\bigl((\psi/L)\bm J+\tfrac12\Sigma+I\bigr)\bm N$ is positive definite, and $\nabla W(\bm z_t)=2Qe^{\Phi t}\bm z$ for every initial state $\bm z$ near $\bm z_0$. The uniform bound on $\nabla W$ therefore gives $\|e^{\Phi t}\|\le Ce^{\gamma_Ft}$ for a constant $C$. For any $\mathcal G$ the integrand is bounded by a constant times $e^{(\gamma_\Pi+\gamma_F-\rho)t}$, and the sum of any eigenvalue of $\Phi_0$ and any eigenvalue of $\Phi$ has real part at most $\gamma_\Pi+\gamma_F<\rho$. The integral is therefore the unique solution of the Sylvester equation
	\[
		\bigl(\Phi_0-\tfrac{\rho}{2}I\bigr)^\top Y+Y\bigl(\Phi-\tfrac{\rho}{2}I\bigr)+\mathcal G=0.
	\]
	Using $\Phi_0=\Phi-\kappa^2\widetilde X$, $\sum_k\widetilde X_k^\top X_{k\cdot}=\widetilde X^\top X$, and $\sum_k\widetilde X_k^\top X^k_{k\cdot}=\widetilde X^\top\widetilde X$, the Sylvester equation with $Y=2X$ is twice the Lyapunov equation
	\[
		0=Q-\kappa^2\widetilde X^\top\widetilde X+X\bigl(\Phi-\tfrac{\rho}{2}I\bigr)+\bigl(\Phi-\tfrac{\rho}{2}I\bigr)^\top X,
	\]
	which gives welfare under the equilibrium strategies.

	In the second formula, the coefficient of the innovator's own response is $\bm h_{r_i}^\top V^i_{\bm z}-\Gamma_i'=2\kappa(X^i_{i\cdot}-\widetilde X_i)\bm z_t=0$, and the coefficient of firm $j$'s response is $2\kappa X^i_{j\cdot}\bm z_t$. The same steps, with $\mathcal G$ replaced by $\mathcal G_i\equiv2Q_i+2\kappa^2\sum_{j\neq i}\widetilde X_j^\top X^i_{j\cdot}$ and $Y=2X^i$, give a Sylvester equation with the same operator, and hence a unique solution, which is twice firm $i$'s Riccati equation
	\[
		0=Q_i-\kappa^2\widetilde X_i^\top\widetilde X_i+X^i\bigl(\Phi-\tfrac{\rho}{2}I\bigr)+\bigl(\Phi-\tfrac{\rho}{2}I\bigr)^\top X^i .
	\]
	For given strategies, and hence given $\Phi$ and $\widetilde X$, both equations are Lyapunov equations, linear in $X$ and $X^i$, and firm $i$'s equation becomes quadratic in $X^i$ once $\widetilde X_i=X^i_{i\cdot}$ is substituted. The fixed-effort form of Theorem~\ref{thm:dynamic}, which is Lemma~\ref{lem:openloop} with the first-order conditions $\widetilde X_k=X^k_{k\cdot}$ substituted, is therefore equivalent, in this game, to the equations that define the equilibrium and its welfare.
\end{remark}

\subsection{A linear--quadratic oligopoly}\label{oa:lq}

\paragraph{Static block.} In the oligopoly of \citet{HopenhaynOkumura2026}, utility from the firms' goods is $U=\bm b^\top\bm y-\tfrac12\bm y^\top\Sigma\bm y$ with $\sigma_{kk}=1$. Inverse demand is $\bm p=\bm b-\Sigma\bm y$, the demand system of \citet{Pellegrino2025}, whose matrix $I+\Sigma$ is our $\Sigma$. Since $\Sigma$ is positive semidefinite, $U$ is concave. Firm $k$ produces $y_k=A_kl_k$ from labor and allocates its knowledge $z_k$ between productivity and quality, $\psi A_k+b_k=z_k$, choosing its technique as in Remark~\ref{rem:technique}. Labor supply $L$ is inelastic, so welfare is $W=U$. Firms compete in quantities. Since $\varphi_{k,y}=-1$, the first-order condition gives $p_k-\mc_k=y_k$, with $\mc_k=w/A_k$.

At a given wage, firm $k$'s profit $\bigl(z_k-\psi A_k-(\Sigma\bm y)_k-w/A_k\bigr)y_k$ is maximized over $A_k$ at $A_k=\sqrt{w/\psi}$. Every firm therefore uses the same technique $A$, labor market clearing gives $A=\bm 1^\top\bm y/L$, and $\mc_k=\psi A$. Substituting $b_k=z_k-\psi A$ and $\mc_k=\psi A$ into the first-order conditions gives
\[
	\bm y=\bm N\bm z,
	\qquad
	\bm N=\Bigl(2\frac{\psi}{L}\bm J+\Sigma+I\Bigr)^{-1},
	\qquad
	\bm J=\bm 1\bm 1^\top .
\]
The matrix $\bm N$ is positive definite, and $d\bm y=\bm N\,d\bm z$ holds exactly.

Knowledge changes technology in two ways. At a given technique, a knowledge gain raises quality, $db_k=dz_k$. Quality shifts inverse demand in parallel, so $U_{b_k}=y_k\varphi_{k,b}=y_k$ and $\nu_k=0$, and the gain contributes $y_k\,dz_k$ to both $\T_k$ and $\Pv_k$. The private flow value of knowledge of Remark~\ref{rem:technique} is thus $\omega_k=y_k$. A change in the wage induces the common technique change $\dlog A=\bm 1^\top d\bm y/\bm 1^\top\bm y$, which lowers quality by $db_k=-\psi\,dA=-\mc_k\,\dlog A$. It contributes $\mc_ky_k\,\dlog A+y_k\,db_k=0$ to $\Pv_k$, as Remark~\ref{rem:technique} requires, and $p_ky_k\,\dlog A+y_k\,db_k=y_k^2\,\dlog A$ to $\T_k$, because $E\tilde\lambda_k=p_ky_k$ without intermediate inputs. The term $y_k^2\,\dlog A$ is firm $k$'s technique margin.

By equation~\eqref{eq:ae}, without intermediate inputs, $\AEff=\sum_k(\mu_k-1)\,w\,dl_k$. Since $(\mu_k-1)\mc_k=y_k$ and $\mc_k=w/A$, this is $\sum_ky_k\,A\,dl_k$, and $A\,dl_k=dy_k-y_k\,\dlog A$ gives
\[
	\AEff=\sum_k(y_k-\bar m)\,dy_k ,
	\qquad
	\bar m=\frac{\bm y^\top\bm y}{\bm 1^\top\bm y},
\]
where $\bar m$ is the output-weighted mean margin. Call producers large when their outputs, and hence their margins, exceed $\bar m$ and small otherwise. Innovation by large producers that displaces small rivals raises allocative efficiency, and innovation by small producers that displaces large ones lowers it. This is consistent with \citet{FarrellShapiro1990}, in which a sufficiently small Cournot firm has excessive incentives to invest and, when larger firms respond more to rivals' output, large firms have insufficient incentives, and with \citet{LahiriOno1988}, in which technical progress by an inefficient Cournot firm shifts production away from efficient firms and can lower welfare. The Harberger term $\sum_ky_k\,dy_k$ therefore equals $\AEff$ plus the technique margins, as equation~\eqref{eq:bridge} requires, since $E\tilde\lambda_k-\mc_ky_k=y_k^2$. A knowledge gain $dz_i$ of firm $i$ gives
\[
	\AEff=(\bm y-\bar m\bm 1)^\top\bm N\bm e_i\,dz_i=\Bigl[(y_i-\bar m)N_{ii}+\sum_{j\neq i}(y_j-\bar m)N_{ji}\Bigr]dz_i .
\]
The innovator's expansion $N_{ii}$ is positive and is weighted by its margin relative to the mean, $y_i-\bar m$. Each rival's response $N_{ji}$ is weighted by that rival's margin relative to the mean. The wage and direct product-market substitution make $N_{ji}$ negative, while substitution through third firms can make it positive for some rivals.

In the decomposition of Theorem~\ref{thm:main}, the appropriability gap of firm $i$ is its technique margin $y_i^2\,\dlog A$, and the technology of every other firm $j$ is valued at $y_j\,dz_j+y_j^2\,\dlog A$. The induced price term, $\IP_i=-y_i\bigl(\sum_{j\neq i}\sigma_{ij}\,dy_j+2(\psi/L)\,\bm 1^\top d\bm y\bigr)$, collects the effects of the rivals' outputs on firm $i$'s price and of the wage on its cost. Substituting $\bm y=\bm N\bm z$, static welfare and profits are quadratic in knowledge, $W=\bm z^\top Q\bm z$ with $Q=\bm N^\top\bigl((\psi/L)\bm J+\tfrac12\Sigma+I\bigr)\bm N$, and $\pi_k=y_k^2=\bm z^\top Q_k\bm z$ with $Q_k=\bm N^\top\bm e_k\bm e_k^\top\bm N$. Only $\psi/L$ enters these expressions.

\paragraph{R\&D game.} As in \citet{HopenhaynOkumura2026}, but without their knowledge shocks, knowledge follows the linear dynamics $\dot{\bm z}=\Phi_0\bm z+\kappa\bm r$ of Section~\ref{oa:dynamic-proofs}. Firm $k$'s R\&D cost is $\Gamma_k(r_k)=r_k^2$ in units of the numeraire, and the static block is in equilibrium at every date. Strategies are linear and values are quadratic, with the matrices $\widetilde X_k$, $X^k$, $X$, and $\Phi$ of Remark~\ref{rem:lq}. For given strategies, firm $k$'s Riccati equation in Remark~\ref{rem:lq} is a Lyapunov equation for $X^k$. The equilibrium is computed by policy iteration. Starting from zero effort, each step solves these Lyapunov equations at the current strategies and then sets $\widetilde X_k=X^k_{k\cdot}$, until the strategies converge. The welfare matrix $X$ then solves the Lyapunov equation for welfare in that remark.

The illustration has three firms, with
\[
	\Sigma=\begin{pmatrix}1&0.4&0.2\\0.4&1&0.3\\0.2&0.3&1\end{pmatrix},
	\qquad
	\Phi_0=\begin{pmatrix}-0.04&0.03&0.01\\0.02&-0.04&0.02\\0.01&0.03&-0.04\end{pmatrix},
\]
$\psi/L=0.02$, $\kappa=0.1$, $\rho=0.1$, and $\bm z_0=(3,2,1.5)$. The diagonal of $\Phi_0$ is minus the depreciation rate and its off-diagonal entries are spillovers. Each of its rows sums to zero, so knowledge would stay bounded without R\&D. In equilibrium knowledge grows asymptotically at the rate \DYGrowth, the largest eigenvalue of $\Phi$. This rate is below $\rho/2$. Since the fixed-effort response $e^{\Phi_0t}\bm v_i$ stays bounded, and the gradients of welfare, profits, and R\&D costs grow at the rate of knowledge, Assumption~\ref{ass:regular} holds with $\gamma_\Pi$ and $\gamma_F$ slightly above this rate, and the discounted values of the quadratic flows are finite.

\paragraph{The wedge of Theorem~\ref{thm:dynamic}.} In this game, knowledge follows the linear dynamics of Section~\ref{oa:dynamic-proofs}. At a given technique, a knowledge gain raises quality one for one, and quality shifts inverse demand in parallel. The Spence term is therefore zero, and the technology effect of a firm's knowledge gain at a given technique equals its private technology value. As a result, the firm's appropriability gap comes only from the change in technique that the change in the wage induces, a change with no private value (Remark~\ref{rem:technique}).

For firm~1, $\SMR_1=\DYSMR$ and $\PMR_1=\DYPMR$, so a marginal unit of firm~1's R\&D spending raises welfare by \DYRatio{} before its own cost, and the local subsidy rate of equation~\eqref{eq:subsidy} is \DYSubsidy{} percent. In this example the static terms account for most of the wedge $\SMR_1-\PMR_1=\DYWedge$, and the response terms are smaller. Per unit of firm~1's marginal R\&D cost, they contribute \DYStaticPerPMR{} and \DYResponsesPerPMR{} to the excess of $\SMR_1/\PMR_1$ over one. The static terms along the fixed-effort response sum to \DYStatic. They consist of the innovator's appropriability gap, \DYGapOwn; the rivals' technology effects, \DYRivKnow{} from the knowledge that spills over to them and \DYRivTech{} from the induced change in technique; the change in allocative efficiency, \DYAE; and minus the induced price term, \DYMinusOS. The response terms are the innovator's own R\&D response valued at its R\&D wedge, \DYOwn, and the rivals' responses valued at their R\&D wedges, \DYRivSocial, less the innovator's stake in them, \DYRivStake. The stake is negligible, so the rivals' responses enter the wedge almost entirely at their social value.
\fi

\section{Allocative Efficiency: Further Results and Proofs}\label{oa:allocative}

\subsection{One market under Cournot conduct}\label{oa:onecournot}

Equation~\eqref{eq:onefixed} takes the innovator's markup response as given. Under Cournot conduct, Proposition~\ref{prop:markups} determines it. For $\eta=1$ and rivals that form a fringe with constant markups, \citet{BaqaeeFarhi2020} show that the rise in the innovator's markup dampens the gain from its productivity change. The next corollary gives the change in allocative efficiency for any perceived elasticity $\eta$.

\begin{corollary}[One market, Cournot innovator]\label{cor:onecournot}
	Suppose producer $i$ has the markup of Proposition~\ref{prop:markups}, with perceived elasticity $\eta$, and all other producers keep constant markups. Then
	\[
		\frac{\AEff}{E}=(\theta-1)\,s_i\Bigl(1-\frac{\bar\mu}{\mu_i}\Bigr)\cdot\frac{1-\zeta_i(1-s_i)}{1+\zeta_i(\theta-1)(1-s_i)}\,\dlog A_i ,
	\]
	and with $\eta=1$ the last factor is $(1-s_i)/\bigl(1+s_i(\theta-1)\bigr)$.
\end{corollary}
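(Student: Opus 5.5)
The plan starts from equation~\eqref{eq:onefixed}, which already gives $\AEff/E=s_i(1-\bar\mu/\mu_i)[(\theta-1)\dlog A_i-\theta\dlog\mu_i]$ when only the innovator's markup moves. All that remains is to determine $\dlog\mu_i$ from Proposition~\ref{prop:markups} in the one-market setting and substitute it. The setting is a single CES nest $n$ with elasticity $\theta$, bought entirely by final demand and buying only from producers that use labor. Producer $i$ is the only strategic producer, with $\zeta_i=\mu_is_i(1/\eta-1/\theta)$, and $\zeta_k=0$ for every other producer.

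The markup response follows from the feedback loop behind Proposition~\ref{prop:markups}. It is simplest to compute this directly rather than invert the matrix. Since $i$ uses only labor, its price relative to the wage changes by $\dlog\mu_i-\dlog A_i$, and rivals' prices relative to the wage do not change. The nest's price therefore changes by $s_i(\dlog\mu_i-\dlog A_i)$ relative to the wage. The within-nest share then moves by
\[
	\dlog s_i=(1-\theta)(1-s_i)(\dlog\mu_i-\dlog A_i).
\]
Because $\zeta_i$ is the elasticity of $\mu_i$ with respect to $s_i$, this gives $\dlog\mu_i=\zeta_i\dlog s_i$. Solving the resulting linear equation yields
\[
	\dlog\mu_i=\frac{\zeta_i(\theta-1)(1-s_i)}{1+\zeta_i(\theta-1)(1-s_i)}\,\dlog A_i ,
\]
which is the scalar case of $B$ when the denominator is nonzero.

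Next, substitute this into the bracket $(\theta-1)\dlog A_i-\theta\dlog\mu_i$. Writing $D=1+\zeta_i(\theta-1)(1-s_i)$, the bracket becomes
\[
	\frac{\dlog A_i}{D}\bigl[(\theta-1)D-\theta\zeta_i(\theta-1)(1-s_i)\bigr]=\frac{(\theta-1)\bigl(1-\zeta_i(1-s_i)\bigr)}{D}\,\dlog A_i .
\]
This is the stated formula.

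For $\eta=1$, use $\zeta_i=s_i/(1-s_i)$ from the Cobb--Douglas case noted after Proposition~\ref{prop:markups}. This follows from $\mu_i(1/\eta-1/\theta)s_i$ together with $1/\mu_i=(1-s_i)(1-1/\theta)$. Then $\zeta_i(1-s_i)=s_i$, so the factor becomes $(1-s_i)/(1+s_i(\theta-1))$.

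The main obstacle is verifying that equation~\eqref{eq:onefixed} applies with $\bar\mu$ and $s_i$ evaluated at the initial equilibrium, and that the share-response formula is correct. The latter includes the $(1-s_i)$ factor, which comes from the nest price moving with $i$'s own price. Both facts are direct consequences of the CES cost-share rule, so I would check them in that order.
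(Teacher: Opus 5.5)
Your proposal is correct and follows the paper's proof essentially step for step. The paper also derives $\dlog s_i=(1-\theta)(1-s_i)(\dlog\mu_i-\dlog A_i)$ from CES demand, closes it with $\dlog\mu_i=\zeta_i\,\dlog s_i$ to obtain the scalar markup response, substitutes that response into the bracket of equation~\eqref{eq:onefixed}, and uses $\zeta_i=s_i/(1-s_i)$ when $\eta=1$. The only cosmetic difference is in how the $(1-s_i)$ factor is obtained: the paper measures the innovator's price relative to the final-demand aggregator's price directly, whereas you pass through the wage and the nest price.
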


Cournot conduct sets the pass-through to $1/\bigl(1+\zeta_i(\theta-1)(1-s_i)\bigr)$. When $\theta>1$ and the innovator's markup exceeds the mean, a markup that rises with the share works against allocative efficiency twice, because less labor moves to the innovator and the higher markup at a producer already above the mean lowers allocative efficiency directly. The factor is negative only when $\zeta_i(1-s_i)>1$, which requires $\eta<1$.

\subsection{Several markets}\label{oa:nests}

This subsection adds further markets, between which labor moves, and lets the innovator's rivals be strategic producers as well, so that their markups also respond. Final demand is a CES with elasticity $\eta$ over nests. Each nest $n$ is a CES with elasticity $\theta>1$ over producers that use only labor, and its Domar weight $\lambda_n$ is its share of final expenditure. The economy's and nest $n$'s harmonic mean markups are $\bar\mu=(\sum_k\lambda_k/\mu_k)^{-1}$ and $\bar\mu_n=(\sum_{k\in n}s_k/\mu_k)^{-1}$.

\begin{corollary}[Cournot producers in several nests]\label{cor:nests}
	Suppose producers have the markups of Proposition~\ref{prop:markups}, every strategic producer competing in quantities within its nest and perceiving the elasticity $\eta$. For producers $k$ in the innovator's nest $n$ let
	\[
		\gamma_k=\Bigl(1-\frac{\bar\mu}{\mu_k}\Bigr)(1-\zeta_k),
		\qquad
		S_n=\sum_{k\in n}\frac{s_k}{1+(\theta-1)\zeta_k},
		\qquad
		\bar\gamma_n=\frac{1}{S_n}\sum_{k\in n}\frac{s_k\gamma_k}{1+(\theta-1)\zeta_k} .
	\]
	Then a productivity change $\dlog A_i$ gives
	\[
		\frac{\AEff}{\lambda_iE}=\frac{1}{1+(\theta-1)\zeta_i}\Bigl[(\theta-1)(\gamma_i-\bar\gamma_n)+\frac{\eta-1}{S_n}\Bigl(1-\frac{\bar\mu}{\bar\mu_n}\Bigr)\Bigr]\dlog A_i .
	\]
\end{corollary}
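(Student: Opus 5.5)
The plan is to reduce the statement to the labor-reallocation formula of Corollary~\ref{cor:uniform} and then solve the within-nest share system in closed form. Nests and the final-demand aggregator are competitive aggregators that sell only to final demand, and producers use only labor, so there are no other intermediate inputs. Corollary~\ref{cor:uniform} then gives
\[
\AEff/E=\sum_k\lambda_k\Bigl(1-\frac{\bar\mu}{\mu_k}\Bigr)\dlog l_k .
\]
Since $wl_k=E\lambda_k/\mu_k$, we have $\dlog l_k=\dlog\lambda_k-\dlog\mu_k+\text{common terms}$. The weights $\lambda_k(1-\bar\mu/\mu_k)$ sum to zero, so the common terms drop and
\[
\AEff/E=\sum_k\lambda_k\Bigl(1-\frac{\bar\mu}{\mu_k}\Bigr)(\dlog\lambda_k-\dlog\mu_k).
\]

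\emph{Step 1: other nests.} In a nest $m\neq n$, every producer's price relative to the wage is unchanged. Hence within-nest shares, and with them the Cournot markups of Proposition~\ref{prop:markups}, are unchanged, and $\dlog\lambda_k=\dlog\lambda_m$ for every $k\in m$. Summing over $k\in m$ turns the weight into $\lambda_m(1-\bar\mu/\bar\mu_m)$.

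\emph{Step 2: within nest $n$.} For $k\in n$ write $\lambda_k=s_k\lambda_n$. The markup rule gives $\dlog\mu_k=\zeta_k\,\dlog s_k$, so $\dlog\lambda_k-\dlog\mu_k=(1-\zeta_k)\dlog s_k+\dlog\lambda_n$. Summing, the contribution of nest $n$ is
\[
\lambda_n\sum_{k\in n}s_k\gamma_k\,\dlog s_k+\lambda_n\Bigl(1-\frac{\bar\mu}{\bar\mu_n}\Bigr)\dlog\lambda_n .
\]
To find $\dlog s_k$, combine three relations with $X\equiv\dlog w-\dlog p_n$:
\begin{itemize}
\item the CES share equation $\dlog s_k=(1-\theta)(\dlog p_k-\dlog p_n)$;
\item the price change $\dlog p_k=\dlog\mu_k-\dlog A_k+\dlog w$;
\item the markup response $\dlog\mu_k=\zeta_k\,\dlog s_k$.
\end{itemize}
Together they give
\[
\dlog s_k=\frac{(\theta-1)\bigl(\mathbf 1\{k=i\}\dlog A_i-X\bigr)}{1+(\theta-1)\zeta_k}.
\]
The constraint $\sum_{k\in n}s_k\,\dlog s_k=0$ pins down $X=s_i\,\dlog A_i/\bigl[(1+(\theta-1)\zeta_i)S_n\bigr]$. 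Substituting, $\sum_ks_k\gamma_k\,\dlog s_k=(\theta-1)s_i(\gamma_i-\bar\gamma_n)\dlog A_i/(1+(\theta-1)\zeta_i)$. With $\lambda_ns_i=\lambda_i$, this is the first bracketed term of the statement.

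\emph{Step 3: flows across nests.} Final demand is CES with elasticity $\eta$, so $\dlog\lambda_m=(1-\eta)\bigl(\dlog p_m-\sum_{m'}\lambda_{m'}\dlog p_{m'}\bigr)$, where only $p_n$ moves relative to $w$. Collect the cross-nest terms from Steps~1 and~2, $\sum_m\lambda_m(1-\bar\mu/\bar\mu_m)\dlog\lambda_m$. The part proportional to $\sum_{m'}\lambda_{m'}\dlog p_{m'}$ vanishes because $\sum_m\lambda_m(1-\bar\mu/\bar\mu_m)=0$, which is again the zero-sum property of the weights. What remains is $(1-\eta)\lambda_n(1-\bar\mu/\bar\mu_n)(\dlog p_n-\dlog w)=(\eta-1)\lambda_n(1-\bar\mu/\bar\mu_n)X$. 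Substituting $X$ gives $\lambda_i(\eta-1)(1-\bar\mu/\bar\mu_n)\dlog A_i/\bigl[S_n(1+(\theta-1)\zeta_i)\bigr]$. Adding Steps~2 and~3 and dividing by $\lambda_iE$ yields the stated formula.

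The main obstacle is Steps~2--3: solving the markup--share feedback within the nest for $X$, and keeping the common terms straight so that the zero-sum cancellations apply. That requires using $\lambda_n$ and $p_n$ relative to the wage consistently. I would also check that the perceived elasticity $\eta$ enters only through $\zeta_k$ and never directly, since actual nest demand responds with $\eta$ only for negligible nests. If $\lambda_n$ is not negligible, I would treat the CES final-demand formula as exact, which it is for actual demand, and keep $\eta$ in $\zeta_k$ as a conduct parameter.
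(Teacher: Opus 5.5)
Your proposal is correct and follows the paper's proof essentially step by step. It makes the same reduction of $\AEff/E$ to $\sum_k\lambda_k(1-\bar\mu/\mu_k)(\dlog\lambda_k-\dlog\mu_k)$, solves the within-nest share--markup system in the same closed form pinned down by $\sum_{k\in n}s_k\,\dlog s_k=0$, and uses the zero-sum weights in the same way to reduce the cross-nest terms to $\lambda_n(1-\bar\mu/\bar\mu_n)$ times the CES change in nest $n$'s expenditure share. The only cosmetic difference is that the paper fixes $p_0=1$ and computes $\dlog w$, $\dlog\lambda_n$ and $\dlog\lambda_{n'}$ explicitly, whereas you drop the common price term through the zero-sum property without choosing a numeraire.
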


With one nest and only the innovator strategic, the corollary reduces to Corollary~\ref{cor:onecournot}. The prefactor $1/\bigl(1+(\theta-1)\zeta_i\bigr)$ is the pass-through of the innovator's cost reduction into its price relative to the wage at a given price of its nest, and it scales both the change in the innovator's share and the fall in its nest's price. The bracket has two terms. The first collects the changes in shares and markups within nest $n$ at a given share of final expenditure on the nest, and the second values the change in that share.

In the first term, $\gamma_k$ combines two forces for each producer $k$ in the nest. The markup gap $1-\bar\mu/\mu_k$ measures how valuable it is to move labor to $k$. The factor $1-\zeta_k$ is the elasticity of $k$'s wage bill per unit of its nest's spending, $s_k/\mu_k$, with respect to its share, since its markup moves with the share at the elasticity $\zeta_k$. When $\zeta_i>1$, which with $\eta=1$ means $s_i>1/2$, a higher share lowers $s_i/\mu_i$, and $\gamma_i<0$ for an innovator whose markup exceeds the mean. The innovation raises the innovator's share, which $\gamma_i$ values, and lowers the nest's price relative to the wage, which lowers every producer's share in proportion to $(\theta-1)/\bigl(1+(\theta-1)\zeta_k\bigr)$. The average $\bar\gamma_n$ values these declines.

The second term vanishes when $\eta=1$. When $\eta>1$, the fall in the nest's price raises its share of final expenditure, which raises $\AEff$ if the nest's mean markup exceeds the economy's. Labor nonetheless moves between nests even when $\eta=1$, because a change in the nest's markups changes the labor it employs at a given share of spending. The reference markup in $\gamma_k$ is therefore the economy's mean markup, not the nest's. Unlike in Corollary~\ref{cor:onecournot}, where only the innovator is a strategic producer in a single market, the sign of $\AEff$ can reverse with $\eta=1$.

\begin{table}[!t]
	\centering
	\caption{Innovation by a dominant producer}
	\label{tab:dominant}
	\small
	\begin{tabular}{@{}rrrrrrr@{}}
		\toprule
		Share $s_i$ & Markup $\mu_i$ & $\zeta_i$ & $(\T_i-\Pv_i)/\lambda_iE$ & $\AEff/\lambda_iE$ & $\IP_i/\lambda_iE$ & $dW/d\pi_i$ \\
		\midrule
		\IWXDomRowOne
		\IWXDomRowTwo
		\IWXDomRowThree
		\IWXDomRowFour
		\bottomrule
	\end{tabular}

	\vspace{0.3em}
	\parbox{0.92\linewidth}{\footnotesize Notes: The large producer and four small ones form a nest with a negligible share of final expenditure, in a Cobb--Douglas continuum of nests whose other nests have five equal producers. All producers compete in quantities within their nests; $\theta=5$ and $\eta=1$. The large producer's productivity changes by $\dlog A_i=1$. The returns are $dW=\lambda_iE+\AEff$ and $d\pi_i=\Pv_i+\IP_i$, with $\Pv_i=\lambda_iE/\mu_i$. The large producer's share varies with its demand weight (Section~\ref{oa:dominant}).}
\end{table}

Table~\ref{tab:dominant} shows the reversal for a dominant producer. As its share rises from \IWXDomShareLow{} to \IWXDomShareHigh, $\AEff$ per unit of its sales falls from \IWXDomAELow{} to \IWXDomAEHigh. Its innovation raises its share and markup and therefore its nest's mean markup, so at every share it releases labor from its nest to the other, less marked-up nests. Both this release and the gap between the mean markup of its nest and that of the other nests grow with its share, and at the two largest shares the loss from moving labor between the nests outweighs the gain within the nest. Replacing $\bar\mu$ by $\bar\mu_n$ in the corollary values the net flow of labor between nests at zero and gives \IWXDomNestMeanHigh{} per unit of sales at the largest share.

With strategic rivals in heterogeneous industries, the induced price term need not move with the change in allocative efficiency. In Table~\ref{tab:dominant}, as the dominant producer's share rises from \IWXDomShareLow{} to \IWXDomShareHigh, $\AEff$ per unit of its sales turns negative (Section~\ref{sec:nests}), while the induced price term stays positive and moves only from \IWXDomOSLow{} to \IWXDomOSHigh. Its share of the private return rises from \IWXDomOSShareLow{} to \IWXDomOSShareHigh. The induced price term is thus no measure of the change in allocative efficiency, even for a dominant producer.

The proof of Corollary~\ref{cor:nests} also gives $\IP_i$ in closed form (equation~\eqref{eq:os-closed} in Section~\ref{oa:allocative-proofs}), so $dW/d\pi_i$, which still rises with the share in Table~\ref{tab:dominant}, depends only on shares, markups, and elasticities.

\subsection{A Cournot producer facing a fringe}\label{oa:fringe}

In the economy of \citet{BaqaeeFarhi2020}, labor is the only factor and final demand is a Cobb--Douglas aggregate of a continuum of identical industries, each with a negligible share of final expenditure. In each industry, goods are combined with elasticity $\theta$, and one producer competes in quantities, perceiving $\eta=1$, against a fringe of atomistic producers priced at $\theta/(\theta-1)$, so its markup is $1/\mu_i=(1-s_i)(1-1/\theta)$.

\begin{corollary}[Cournot producer facing a fringe]\label{cor:fringe}
	In this economy, a productivity change $\dlog A_i$ of the Cournot producer of one industry gives, per unit of its sales,
	\[
		\frac{dW}{\lambda_iE}=\frac{1+(2\theta-1)s_i}{(1+s_i)\bigl(1+(\theta-1)s_i\bigr)}\,\dlog A_i,
		\qquad
		\frac{d\pi_i}{\lambda_iE}=\frac{(\theta-1)(1-s_i)\bigl(1+2(\theta-1)s_i\bigr)}{\theta\bigl(1+(\theta-1)s_i\bigr)}\,\dlog A_i ,
	\]
	and $\IP_i=(\theta-1)(1+s_i)\,\AEff/\theta$. The ratio
	\[
		\frac{dW}{d\pi_i}=\frac{\theta\bigl(1+(2\theta-1)s_i\bigr)}{(\theta-1)(1-s_i^2)\bigl(1+2(\theta-1)s_i\bigr)}
	\]
	tends to $\theta/(\theta-1)$ as $s_i\to0$ and rises with $s_i$.
\end{corollary}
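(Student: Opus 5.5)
The plan is to compute both returns in closed form from results already stated and then do the algebra. First I fix the environment. The industry has a negligible share of final expenditure and the upper tier is Cobb--Douglas, so the industry's expenditure share and the wage (normalized) are unchanged to first order. Only producer $i$'s markup responds, because the fringe keeps $\theta/(\theta-1)$ and the other industries are unaffected. There are no intermediate inputs, so $\tilde\lambda_i=\lambda_i$ and $\T_i=\lambda_iE\,\dlog A_i$.

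\emph{Step 1: welfare.} By Theorem~\ref{thm:main}, $dW=\T_i+\AEff$. For $\AEff$ I apply Corollary~\ref{cor:nests} with $\eta=1$, where the second bracket term vanishes, to nest $n$. In that nest only $i$ has $\zeta_i=s_i/(1-s_i)\neq0$, and the fringe has $\zeta=0$. The economy's harmonic mean markup equals that of every other, identical, industry:
\[
1/\bar\mu=(1-1/\theta)\bigl[s_i(1-s_i)+(1-s_i)\bigr]=(1-1/\theta)(1-s_i)(1+s_i).
\]
This gives
\[
1-\bar\mu/\mu_i=s_i/(1+s_i),\qquad 1-\bar\mu/\mu_f=1-1/\bigl((1-s_i)(1+s_i)\bigr)=-s_i^2/(1-s_i^2).
\]
I then form $\gamma_i=(1-\bar\mu/\mu_i)(1-\zeta_i)$ and $\gamma_f=1-\bar\mu/\mu_f$, together with $S_n$ and $\bar\gamma_n$, all of which are rational in $s_i$. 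Substituting, I expect to recover $dW/(\lambda_iE)=1+\AEff/(\lambda_iE)$ in the stated form. As a check, I also compute $dW$ directly from Lemma~\ref{lem:accounting} (the Harberger term) using the share changes of Step~2.

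\emph{Step 2: private return.} Producer $i$ has constant returns, so Corollary~\ref{cor:os} gives
\[
d\pi_i=(1-1/\mu_i)\,d(p_iy_i)+(p_iy_i/\mu_i)\,\dlog\mu_i .
\]
Industry spending is fixed, so $\dlog(p_iy_i)=\dlog s_i$. The CES share equation gives $\dlog s_i=(1-\theta)(\dlog p_i-\dlog p_n)$, where $\dlog p_n=s_i\,\dlog p_i$ and $\dlog p_i=\dlog\mu_i-\dlog A_i$. The markup responds as $\dlog\mu_i=\zeta_i\,\dlog s_i$. These equations form a linear fixed point, whose solution is
\[
\dlog s_i=\frac{(\theta-1)(1-s_i)\,\dlog A_i}{1+(\theta-1)s_i},
\]
using $\zeta_i(1-s_i)=s_i$ to simplify the denominator. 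Substituting, and using $1/\mu_i=(1-s_i)(1-1/\theta)$ and $1-1/\mu_i$, should give the stated $d\pi_i/(\lambda_iE)$. Then $\IP_i=d\pi_i-\Pv_i$ with $\Pv_i=\lambda_iE/\mu_i$, and I verify $\IP_i=(\theta-1)(1+s_i)\AEff/\theta$ by comparing the two rational functions, both of which should carry the factor $s_i^2$.

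\emph{Step 3: the ratio.} Dividing the two expressions gives $dW/d\pi_i$ immediately. Setting $s_i=0$ gives $\theta/(\theta-1)$. For monotonicity, I differentiate $\log$ of the ratio:
\[
\frac{2\theta-1}{1+(2\theta-1)s}+\frac{2s}{1-s^2}-\frac{2(\theta-1)}{1+2(\theta-1)s}.
\]
The first term exceeds the third, because $a/(1+as)$ is increasing in $a$ and $2\theta-1>2(\theta-1)$, and the middle term is nonnegative, so the derivative is positive. The main obstacle is the bookkeeping in Step~1, where $\bar\gamma_n$ must be simplified with the fringe and innovator weights. If that becomes messy, I will instead compute $\AEff$ from Corollary~\ref{cor:uniform}, as $\sum_k\lambda_k(1-\bar\mu/\mu_k)\dlog l_k$, using the labor changes of $i$ and the fringe implied by Step~2.
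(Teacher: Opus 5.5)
Your proposal is correct and follows the paper's proof: the same fixed point for $\dlog s_i$, the constant-returns form of Corollary~\ref{cor:os} for $d\pi_i$, and the same log-derivative argument for monotonicity. The one difference is how you get $\AEff$: the paper uses your fallback, the expression $\AEff/E=\sum_k\lambda_k(1-\bar\mu/\mu_k)(\dlog\lambda_k-\dlog\mu_k)$ from the proof of Corollary~\ref{cor:nests}, where the zero-sum industry weights and $\zeta_i(1-s_i)=s_i$ give $\AEff/(\lambda_iE)=(1-\bar\mu/\mu_i)\,\dlog s_i$ at once; your closed-form route also goes through, since $\gamma_i-\bar\gamma_n=s_i\bigl(1+(\theta-2)s_i\bigr)/\bigl((1+s_i)(1+(\theta-1)s_i)\bigr)$ cancels the prefactor $(1-s_i)/\bigl(1+(\theta-2)s_i\bigr)$, and one small slip is that, per unit of the innovator's sales, $\IP_i$ and $\AEff$ share the factor $s_i(1-s_i)$, not $s_i^2$.
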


The change in allocative efficiency, positive and dampened by the rise in the innovator's markup, is that of \citet{BaqaeeFarhi2020}. The induced price term moves in proportion to it and offsets part of it in the wedge, and more than all of it when $s_i>1/(\theta-1)$, although it consists of transfers rather than of any part of that change. Both vanish relative to the innovator's sales as its share goes to zero, the change in allocative efficiency here because the fringe charges the markup of a small Cournot producer. The ratio rises with the share, as $\T_i/\Pv_i=\mu_i$ does, a gradient that requires the fringe's mass to be fixed. When the fringe enters freely, the ratio no longer depends on the share (Remark~\ref{rem:entry}).

\subsection{Proofs}\label{oa:allocative-proofs}

The proof of Corollary~\ref{cor:uniform} uses only equation~\eqref{eq:ae}. The other proofs use the nested-CES setting of Section~\ref{sec:ces}, whose equilibrium is that of the economy of \citet{BaqaeeFarhi2020} with wedges equal to the markups, and this section also derives Proposition~\ref{prop:ae} and equation~\eqref{eq:onefixed} from their formulas.

\begin{proof}[Proof of Corollary~\ref{cor:uniform}]
	Without intermediate inputs, $\widetilde\Omega$ and $\Omega$ have no entries between goods, so $\tilde\lambda=\lambda=\beta$, and equation~\eqref{eq:ae} becomes
	\[
		\AEff=\sum_k(\mu_k-1)\,dI_k,
		\qquad
		dI_k=\sum_fw_f\,dl_{kf}.
	\]
	With fixed factor supplies, $\sum_kdI_k=\sum_fw_f\,dL_f=0$, so subtracting $(c-1)\sum_kdI_k$ gives $\AEff=\sum_k(\mu_k-c)\,dI_k$ for every constant $c$. With one factor, $dI_k=wl_k\,\dlog l_k$, and with constant returns the cost of producer $k$ is $wl_k=\mc_ky_k=E\lambda_k/\mu_k$. Hence $\AEff/E=\sum_k\lambda_k(1-c/\mu_k)\,\dlog l_k$ for every $c$, and $c=\bar\mu$ gives the second expression, whose weights sum to zero because $\sum_k\lambda_k=1$. If all markups equal a common $\mu$, taking $c=\mu$ gives $\AEff=0$. Neither step uses the demand system or the response of markups.
\end{proof}

\paragraph{Proposition~\ref{prop:ae}.} With one factor, $\widetilde\Lambda_L=1$, so the expression of Theorem~1 of \citet{BaqaeeFarhi2020} in Section~\ref{sec:lemmas} becomes $\AEff/E=-\tilde\lambda^\top\dlog\bm\mu-\dlog\Lambda_L$. Their equations~(10) and~(13), the one-factor forms of their Propositions~2 and~3, multiplied by $\dlog A_k$ and $\dlog\mu_k$ and summed over producers, give $\dlog\Lambda_L$ and hence equation~\eqref{eq:ae-prim}.

\begin{proof}[Proof of Proposition~\ref{prop:markups}]
	Let $n=n(k)$. The nest is competitive with constant returns, so $p_k=p_n\,\partial y_n/\partial x_{nk}$, where $x_{nk}=y_k$, and with a CES technology $\partial y_n/\partial x_{nk}$ is proportional to $(y_n/y_k)^{1/\theta_n}$. Holding fixed the quantities of the nest's other inputs, $y_n$ responds to $y_k$ with elasticity $s_k$, and producer $k$ perceives $p_n$ to respond to $y_n$ with elasticity $-1/\eta_n$. Its perceived inverse demand therefore has elasticity
	\[
		\frac{\partial\log\varphi_k}{\partial\log y_k}=-\frac{s_k}{\eta_n}-\frac{1-s_k}{\theta_n}=-\frac{1}{\varepsilon_k},
	\]
	and the first-order condition of Assumption~\ref{ass:conduct}, $p_k(1-1/\varepsilon_k)=\mc_k$, gives the markup rule.

	With $\eta_n$ constant, differentiating the rule gives
	\[
		\dlog\mu_k=\mu_k\,d(1/\varepsilon_k)=\mu_ks_k\Bigl(\frac{1}{\eta_n}-\frac{1}{\theta_n}\Bigr)\dlog s_k=\zeta_k\,\dlog s_k .
	\]
	CES demand gives $\dlog s_k=(1-\theta_n)\bigl(\dlog p_k-\sum_m\widetilde\Omega_{nm}\,\dlog p_m\bigr)$, the sum running over all inputs of the nest. By the price equation of Section~\ref{sec:ces}, $\dlog p_m=u_m+\dlog w$ with $\bm u=\widetilde\Psi_{\cdot\mathcal K}(\dlog\bm\mu-\dlog\bm A)$, and $u_L=0$ for labor, whose price is $w$. Because the nest's cost shares sum to one, $\dlog w$ cancels, so $\dlog s_k=(\mathcal D\bm u)_k$ and
	\[
		\dlog\bm\mu=Z\mathcal D\widetilde\Psi_{\cdot\mathcal K}\,(\dlog\bm\mu-\dlog\bm A),
	\]
	where producers that are not strategic have $\zeta_k=0$ and constant markups. Solving for $\dlog\bm\mu$ gives the result. Under price competition, differentiating $\mu_k=\varepsilon_k/(\varepsilon_k-1)$ with $\varepsilon_k=\theta_n(1-s_k)+\eta_ns_k$ gives $\dlog\mu_k=(\theta_n-\eta_n)s_k\,\dlog s_k/\bigl(\varepsilon_k(\varepsilon_k-1)\bigr)=\zeta_k\,\dlog s_k$ with $\zeta_k=\mu_ks_k(\theta_n-\eta_n)/\varepsilon_k^2$, and the rest of the proof applies.
\end{proof}

\paragraph{The within-sector case.} When no nest buys labor or the output of other nests directly and the suppliers of each nest have the same input shares, a change in productivities or markups outside a nest moves the prices of all its suppliers in the same proportion. It leaves their within-nest shares and markups unchanged, so $B$ reduces to the within-sector solution of \citet{BursteinCarvalhoGrassi2025} even with input--output linkages.

\begin{proof}[Proof of Corollary~\ref{cor:cd}]
	If $\theta_j=1$ for every $j$, the substitution term of \eqref{eq:ae-prim} vanishes. The factor $1-\theta_{n(k)}$ in the map $\mathcal D$ is zero, so $\mathcal D=0$ and $B=0$, and markups do not change. The direct term of \eqref{eq:ae-prim} then vanishes as well, so $\AEff=0$, with fixed markups or with the markups of Proposition~\ref{prop:markups}. With $\theta_n=1$ the markup rule gives $1-1/\varepsilon_k=s_k(1-1/\eta_n)$, which is positive when $\eta_n>1$, so these markups are finite.
\end{proof}

\paragraph{Equation~\eqref{eq:onefixed}.} In the horizontal economy of Section~\ref{sec:onemarket}, $\widetilde\Psi$ is the identity on producers, $\Psi_{kL}=1/\mu_k$, and $\Lambda_L=1/\bar\mu$. With their elasticity $\theta_0$ and sales shares $\lambda_k$ equal to $\theta$ and $s_k$, the harmonic mean markup of \citet{BaqaeeFarhi2020} is $\bar\mu$, and their equations~(5) and~(6) give the effects on output of a productivity change and of a markup change. Combining them net of the technology effect $Es_i\,\dlog A_i$ gives equation~\eqref{eq:onefixed}.

\begin{proof}[Proof of Corollary~\ref{cor:onecournot}]
	Only $\mu_i$ changes. The innovator's price relative to the price $p_0$ of the final-demand aggregator changes by $(1-s_i)(\dlog\mu_i-\dlog A_i)$, so CES demand gives
	\[
		\dlog s_i=(1-\theta)(1-s_i)(\dlog\mu_i-\dlog A_i).
	\]
	Together with $\dlog\mu_i=\zeta_i\,\dlog s_i$ from the proof of Proposition~\ref{prop:markups}, this gives $\dlog\mu_i=B_{ii}\,\dlog A_i$ with
	\[
		B_{ii}=\frac{\zeta_i(\theta-1)(1-s_i)}{1+\zeta_i(\theta-1)(1-s_i)} .
	\]
	Substituting into equation~\eqref{eq:onefixed} replaces the bracket by $\bigl[(\theta-1)-\theta B_{ii}\bigr]\dlog A_i$, and
	\[
		(\theta-1)-\theta B_{ii}=\frac{(\theta-1)\bigl(1-\zeta_i(1-s_i)\bigr)}{1+\zeta_i(\theta-1)(1-s_i)},
	\]
	which gives the result. With $\eta=1$ the markup rule gives $1/\mu_i=(1-s_i)(1-1/\theta)$, so $\zeta_i=\mu_is_i(1-1/\theta)=s_i/(1-s_i)$, and the factor becomes $(1-s_i)/\bigl(1+s_i(\theta-1)\bigr)$. The markup response is then $B_{ii}=s_i(\theta-1)/\bigl(1+s_i(\theta-1)\bigr)$, that of \citet{BaqaeeFarhi2020}, and the factor is the ratio of the changes in allocative efficiency with and without this response in their example.
\end{proof}

\begin{proof}[Proof of Corollary~\ref{cor:nests}]
	Let the final-demand aggregator be the numeraire, $p_0=1$, and let $p_n$ be the price of nest $n$. Producers in other nests keep their shares and markups. Their productivities do not change, so their prices all move with the wage, and so does the price of every other nest. Within nest $n$, $\dlog\mu_k=\zeta_k\,\dlog s_k$ by the proof of Proposition~\ref{prop:markups}, $\dlog s_k=(1-\theta)(\dlog p_k-\dlog p_n)$, and $\dlog p_k=\dlog\mu_k-\dlog A_k+\dlog w$. Solving for $\dlog s_k$, and using $\sum_{k\in n}s_k\,\dlog s_k=0$ to find the nest's price,
	\[
		\dlog s_k=\frac{\theta-1}{1+(\theta-1)\zeta_k}\bigl(\dlog A_k+\dlog(p_n/w)\bigr),
		\qquad
		\dlog(p_n/w)=-\frac{s_i\,\dlog A_i}{\bigl(1+(\theta-1)\zeta_i\bigr)S_n},
	\]
	with $\dlog A_k=0$ for $k\neq i$. Since $\dlog p_0=\lambda_n\,\dlog p_n+(1-\lambda_n)\,\dlog w=0$, the wage changes by $\dlog w=-\lambda_n\,\dlog(p_n/w)$, and CES final demand gives
	\[
		\dlog\lambda_n=(1-\eta)(1-\lambda_n)\,\dlog(p_n/w),
		\qquad
		\dlog\lambda_{n'}=-(1-\eta)\lambda_n\,\dlog(p_n/w)\quad\text{for }n'\neq n .
	\]

	Nests and the final-demand aggregator are competitive, and producers use only labor, so $\tilde\lambda=\lambda$ and equation~\eqref{eq:ae} gives $\AEff=\sum_k(\mu_k-1)\,w\,dl_k$, the sum running over producers. As in the proof of Corollary~\ref{cor:uniform}, $wl_k=E\lambda_k/\mu_k$ and $\sum_kdl_k=0$ give $\AEff/E=\sum_k\lambda_k(1-\bar\mu/\mu_k)\,\dlog l_k$, with weights that sum to zero. Since $\dlog l_k=\dlog\lambda_k-\dlog\mu_k+\dlog(E/w)$, the common term drops out, and
	\[
		\frac{\AEff}{E}=\sum_k\lambda_k\Bigl(1-\frac{\bar\mu}{\mu_k}\Bigr)(\dlog\lambda_k-\dlog\mu_k),
		\qquad
		\dlog\lambda_k=\dlog\lambda_{n(k)}+\dlog s_k .
	\]
	Within each nest $n'$, the terms in $\dlog\lambda_{n'}$ sum to $\lambda_{n'}(1-\bar\mu/\bar\mu_{n'})\,\dlog\lambda_{n'}$, and the coefficients $\lambda_{n'}(1-\bar\mu/\bar\mu_{n'})$ sum to zero across nests. Every nest other than $n$ has the same $\dlog\lambda_{n'}$, so these terms equal
	\[
		\lambda_n\Bigl(1-\frac{\bar\mu}{\bar\mu_n}\Bigr)\bigl(\dlog\lambda_n-\dlog\lambda_{n'}\bigr)=(1-\eta)\lambda_n\Bigl(1-\frac{\bar\mu}{\bar\mu_n}\Bigr)\dlog(p_n/w).
	\]
	The remaining terms come from nest $n$, where $\dlog s_k-\dlog\mu_k=(1-\zeta_k)\,\dlog s_k$, and equal
	\[
		\lambda_n\sum_{k\in n}s_k\gamma_k\,\dlog s_k=\lambda_n(\theta-1)\Bigl[\frac{s_i\gamma_i\,\dlog A_i}{1+(\theta-1)\zeta_i}+S_n\bar\gamma_n\,\dlog(p_n/w)\Bigr].
	\]
	Substituting $\dlog(p_n/w)$ into both expressions and dividing by $\lambda_iE=\lambda_ns_iE$ gives the result.
\end{proof}

\paragraph{The induced price term in closed form.} The same steps give the induced price term of the innovator in Corollary~\ref{cor:nests}. Producers have constant returns, so for a process innovation Corollary~\ref{cor:os} gives $\IP_i=p_iy_i\bigl[(1-1/\mu_i)\,\dlog(p_iy_i)-(\dlog A_i-\dlog\mu_i)/\mu_i\bigr]$. With $p_iy_i=\lambda_iE$ and $\dlog\mu_i=\zeta_i\,\dlog s_i$,
\begin{equation}
	\frac{\IP_i}{\lambda_iE}=\Bigl(1-\frac{1}{\mu_i}\Bigr)(\dlog\lambda_i+\dlog E)-\frac{\dlog A_i-\zeta_i\,\dlog s_i}{\mu_i},
	\label{eq:os-closed}
\end{equation}
where $\dlog\lambda_i=\dlog\lambda_n+\dlog s_i$, and the proof of Corollary~\ref{cor:nests} gives
\begin{align*}
	\dlog s_i&=\frac{\theta-1}{1+(\theta-1)\zeta_i}\Bigl(1-\frac{s_i}{\bigl(1+(\theta-1)\zeta_i\bigr)S_n}\Bigr)\dlog A_i,\\
	\dlog\lambda_n&=-\frac{(1-\eta)(1-\lambda_n)\,s_i}{\bigl(1+(\theta-1)\zeta_i\bigr)S_n}\,\dlog A_i .
\end{align*}
With final demand as the numeraire, $E$ is real output, so $\dlog E=dW/E$, and Theorem~\ref{thm:main} with $\tilde\lambda_i=\lambda_i$ gives $\dlog E=\lambda_i\,\dlog A_i+\AEff/E$. Together with the closed form of $\AEff$ in Corollary~\ref{cor:nests}, every term of \eqref{eq:os-closed} is therefore a function of the shares $s_k$ and $\lambda_n$, the markups, and the elasticities $\theta$ and $\eta$. In Table~\ref{tab:dominant}, $\dlog A_i=1$ and there are no spillovers, so $dW=\lambda_iE+\AEff$ and $d\pi_i=\lambda_iE/\mu_i+\IP_i$, and the ratio
\[
	\frac{dW}{d\pi_i}=\frac{1+\AEff/(\lambda_iE)}{1/\mu_i+\IP_i/(\lambda_iE)}
\]
is a function of the same objects alone.

\begin{proof}[Proof of Corollary~\ref{cor:chain}]
	With fixed markups, only the first term of \eqref{eq:ae-prim} remains, and only producer $j$ has more than one input, since $i$ uses only labor and the final-demand aggregator buys only $j$'s good. The exposure is zero for labor and $\dlog A_i$ for good $i$. Producer $i$'s revenue share of labor is $\Psi_{iL}=1/\mu_i$, and $\Lambda_L=\Psi_{jL}=(1-\omega+\omega/\mu_i)/\mu_j$, so $\sum_m\widetilde\Omega_{jm}\Psi_{mL}/\Lambda_L=\mu_j$ and
	\[
		\Cov_{\widetilde\Omega(j)}\Bigl(\widetilde\Psi\,\dlog\bm A,\frac{\Psi_{(L)}}{\Lambda_L}\Bigr)=\omega\,\dlog A_i\Bigl(\frac{1}{\mu_i\Lambda_L}-\mu_j\Bigr)=-\mu_j\,\frac{\omega(1-\omega)(1-1/\mu_i)}{1-\omega(1-1/\mu_i)}\,\dlog A_i .
	\]
	Since $\lambda_j=1$, multiplying by $-(\lambda_j/\mu_j)(\theta_j-1)$ gives the result.
\end{proof}

\begin{proof}[Proof of Corollary~\ref{cor:fringe}]
	Let $s=s_i$. The fringe's price moves with the wage, so the innovator's price relative to the price of its industry changes by $(1-s)(\dlog\mu_i-\dlog A_i)$, and, as in the proof of Corollary~\ref{cor:onecournot}, $\dlog s_i=(1-\theta)(1-s)(\dlog\mu_i-\dlog A_i)$. With $\eta=1$, $\zeta_i=s/(1-s)$ and $\dlog\mu_i=\zeta_i\,\dlog s_i$, so
	\[
		\dlog s_i=\frac{(\theta-1)(1-s)}{1+(\theta-1)s}\,\dlog A_i .
	\]
	As in the proof of Corollary~\ref{cor:nests}, $\AEff/E=\sum_k\lambda_k(1-\bar\mu/\mu_k)(\dlog\lambda_k-\dlog\mu_k)$. Industries' shares of final expenditure do not change, and producers in other industries keep their shares and markups, so only the innovator's industry contributes. There $\dlog\lambda_k=\dlog s_k$, the fringe's markup is constant, and the fringe's share changes by $\dlog(1-s)=-s\,\dlog s_i/(1-s)$. The industries are identical, so $\bar\mu$ is also the industry's harmonic mean markup, the industry's weights $s_k(1-\bar\mu/\mu_k)$ sum to zero, and
	\[
		\frac{\AEff}{\lambda_iE}=\Bigl(1-\frac{\bar\mu}{\mu_i}\Bigr)\Bigl(\frac{\dlog s_i}{1-s}-\dlog\mu_i\Bigr)=\Bigl(1-\frac{\bar\mu}{\mu_i}\Bigr)\dlog s_i .
	\]
	Since $1/\bar\mu=s/\mu_i+(1-s)(1-1/\theta)=(1-1/\theta)(1-s)(1+s)$, $1-\bar\mu/\mu_i=s/(1+s)$. Without intermediate inputs, $\T_i=\lambda_iE\,\dlog A_i$, and $dW=\T_i+\AEff$ gives the first expression.

	With constant returns, $\pi_i=(1-1/\mu_i)\lambda_iE$. The industry's share of final expenditure is constant and negligible, so $\dlog\lambda_i=\dlog s_i$ and the change in $E$ is negligible per unit of the innovator's sales. Hence
	\[
		\frac{d\pi_i}{\lambda_iE}=\Bigl(1-\frac{1}{\mu_i}\Bigr)\dlog s_i+\frac{\dlog\mu_i}{\mu_i}=\frac{1+2(\theta-1)s}{\theta}\,\dlog s_i ,
	\]
	which gives the second expression. Subtracting $\Pv_i/(\lambda_iE)=\dlog A_i/\mu_i=(1-s)(1-1/\theta)\,\dlog A_i$ gives $\IP_i/(\lambda_iE)=(\theta-1)^2s(1-s)\,\dlog A_i/\bigl(\theta(1+(\theta-1)s)\bigr)$, which is $(\theta-1)(1+s)/\theta$ times $\AEff/(\lambda_iE)$. Finally,
	\[
		\frac{d}{ds}\log\frac{dW}{d\pi_i}=\frac{2\theta-1}{1+(2\theta-1)s}-\frac{2(\theta-1)}{1+2(\theta-1)s}+\frac{2s}{1-s^2}.
	\]
	Because $x/(1+sx)$ increases with $x$, the first two terms sum to a positive number, so the ratio rises with $s$.
\end{proof}

\subsection{A dominant producer}\label{oa:dominant}

All producers use only labor with unit productivity, $y_k=l_k$, and labor supply is one. Final demand is a Cobb--Douglas aggregate of a continuum of nests, so every nest has a negligible share of final expenditure. Final demand is the numeraire, and welfare is $W=Y$. Within nest $n$, goods are combined with elasticity $\theta=5$ and demand weights $\delta_k$, $Y_n=\bigl(\sum_{k\in n}\delta_k^{1/\theta}y_k^{(\theta-1)/\theta}\bigr)^{\theta/(\theta-1)}$, so that at equal prices spending within a nest is proportional to the weights. The large producer's nest contains it and four small producers with weight $1/4$ each, and every other nest contains five producers with weight $1$ each. The large producer's weight is $1$, $6$, $25$, and $60$ in the four rows of Table~\ref{tab:dominant}, which raises its share from \IWXDomShareLow{} to \IWXDomShareHigh. Every producer competes in quantities within its nest with the markups of Proposition~\ref{prop:markups}, taking final output and its price as given, as in \citet{BaqaeeFarhi2020}. This is exact because every nest is negligible, so $\eta=1$, $1/\mu_k=(1-s_k)(1-1/\theta)$, and $\zeta_k=s_k/(1-s_k)$. The economy's harmonic mean markup $\bar\mu$ is the common markup of the producers in the other nests, each of which has the share $1/5$ of its nest. The innovation raises the large producer's productivity by $\dlog A_i=1$. The values in the table are those of Corollary~\ref{cor:nests} and equation~\eqref{eq:os-closed} in the limit in which the share of the large producer's nest in final expenditure goes to zero. Equilibria re-solved with small shares of that nest converge to them.

\section{Extensions}\label{oa:extensions}

This section extends Theorem~\ref{thm:main} to arbitrary technology parameters, derives the results for price competition, collects remarks on factor supply, technique choice, \ifjournal\else multi-product producers, monopsony, several factors, \fi and free entry, \ifjournal\else lists the statistics that measure each term of the wedge, \fi and proves Lemma~\ref{lem:accounting} and Corollaries~\ref{cor:os} and~\ref{cor:bridge}.

\subsection{General technology}

Section~\ref{sec:environment} restricts technology to Hicks-neutral productivity and to the quality of goods sold only to final consumers. Theorem~\ref{thm:main} does not depend on these restrictions. Let the technology parameters $\tau=(\tau_1,\dots,\tau_N)$, with $\tau_m$ those of producer $m$, enter the production functions $y_k=F_k(\bm x_k,\bm l_k;\tau)$ and utility $U(\bm c;\tau)$ in any differentiable way. This covers factor-augmenting technology and the quality of an intermediate good, which enters the production functions of the good's buyers. Each $\tau_m$ may be a vector, in which case $(\partial F_k/\partial\tau_m)\,d\tau_m$ denotes the change in $F_k$ along $d\tau_m$. Assumption~\ref{ass:inputs} holds with $F_k$ in place of $A_kG_k$, so that $\mc_k\,\partial F_k/\partial x_{km}=p_m$ and $\mc_k\,\partial F_k/\partial l_{kf}=w_f$, and in Assumption~\ref{ass:conduct} producer $i$ perceives the inverse demand $\varphi_i(y_i,\tau_i)$. For an innovation $d\tau$, let
\[
	g_k\equiv\frac{1}{y_k}\sum_m\frac{\partial F_k}{\partial\tau_m}\,d\tau_m
\]
be producer $k$'s proportional output gain at fixed inputs.

\begin{proposition}[General technology]
	\label{prop:general}
	For an innovation $d\tau$, define
	\begin{align*}
		\T_m&\equiv E\sum_k\frac{\tilde\lambda_k}{y_k}\frac{\partial F_k}{\partial\tau_m}\,d\tau_m+\frac{\partial U}{\partial\tau_m}\,d\tau_m,\\
		\Pv_i&\equiv\mc_i\,y_i\,g_i+y_i\frac{\partial\varphi_i}{\partial\tau_i}\,d\tau_i,
		\qquad
		(dp_i)^{-i}\equiv dp_i-\varphi_{i,y}\,dy_i-\frac{\partial\varphi_i}{\partial\tau_i}\,d\tau_i,
	\end{align*}
	let $\AEff$ be as in Definition~\ref{def:ae}, and let $\IP_i$ be defined as in Lemma~\ref{lem:private} with this $(dp_i)^{-i}$. Then the following hold.
	\begin{enumerate}
		\item Lemma~\ref{lem:allocation} holds with these $\T_m$, Lemma~\ref{lem:private} holds with these $\Pv_i$ and $\IP_i$, and Lemma~\ref{lem:accounting} holds in the form
		\[
			dW=\sum_m\Bigl(\sum_k\mc_k\frac{\partial F_k}{\partial\tau_m}\,d\tau_m+\frac{\partial U}{\partial\tau_m}\,d\tau_m\Bigr)+\sum_k(p_k-\mc_k)\,dy_k .
		\]
		\item Theorem~\ref{thm:main} holds in the form
		\[
			dW-d\pi_i=\sum_m\T_m-\Pv_i+\AEff-\IP_i .
		\]
		\item Equations~\eqref{eq:ae} and~\eqref{eq:bridge} hold with $g_k$ in place of $\dlog A_k$, and equation~\eqref{eq:harberger} holds in the form
		\[
			dW-d\pi_i=\sum_k(p_k-\mc_k)\,dy_k+\sum_m\Bigl(\sum_k\mc_k\frac{\partial F_k}{\partial\tau_m}\,d\tau_m+\frac{\partial U}{\partial\tau_m}\,d\tau_m\Bigr)-\Pv_i-\IP_i .
		\]
		\item The innovator's own appropriability gap, $\T_i$ less the part of $\Pv_i$ due to $d\tau_i$, which is $\mc_i(\partial F_i/\partial\tau_i)\,d\tau_i+y_i(\partial\varphi_i/\partial\tau_i)\,d\tau_i$, equals
		\[
			\Bigl(\frac{E\tilde\lambda_i}{y_i}-\mc_i\Bigr)\frac{\partial F_i}{\partial\tau_i}\,d\tau_i+\Bigl(\frac{\partial U}{\partial\tau_i}+E\sum_{k\neq i}\frac{\tilde\lambda_k}{y_k}\frac{\partial F_k}{\partial\tau_i}-y_i\frac{\partial\varphi_i}{\partial\tau_i}\Bigr)d\tau_i .
		\]
	\end{enumerate}
\end{proposition}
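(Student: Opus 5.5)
The plan is to redo the proofs of Lemmas~\ref{lem:allocation}, \ref{lem:private}, and~\ref{lem:accounting} with $F_k(\bm x_k,\bm l_k;\tau)$ in place of $A_kG_k$. Parts~2--4 then follow by the same algebra as Theorem~\ref{thm:main} and Corollaries~\ref{cor:aeformula}, \ref{cor:gap}, and~\ref{cor:bridge}.

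\emph{Part~1, Lemma~\ref{lem:allocation}.} At a fixed allocation and fixed factor supplies,
\[
\dlog y_k=g_k+\sum_m\frac{\partial\log F_k}{\partial\log x_{km}}\dlog y_m .
\]
By the generalized Assumption~\ref{ass:inputs}, $\mc_k\,\partial F_k/\partial x_{km}=p_m$, so the elasticities are still $\widetilde\Omega_{km}$. Hence $\dlog\bm y=\widetilde\Psi\bm g$, and with $p_m=U_{c_m}$ we get
\[
dW=E\tilde\lambda^\top\bm g+\sum_m(\partial U/\partial\tau_m)\,d\tau_m .
\]
Substituting $g_k$ and regrouping the double sum by $m$ gives $\sum_m\T_m$.

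\emph{Part~1, Lemma~\ref{lem:private}.} Shephard's lemma with the envelope theorem on the cost-minimization problem gives
\[
\partial C_i/\partial\tau=-\mc_i\,\partial F_i/\partial\tau .
\]
This holds because the Lagrange multiplier on $F_i\ge y_i$ is $\mc_i$, and it needs no Hicks neutrality. So $dC_i$ has the term $-\mc_iy_ig_i$, the own-price split uses $\varphi_i(y_i,\tau_i)$, and the $dy_i$ terms cancel by the first-order condition.

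\emph{Part~1, Lemma~\ref{lem:accounting}.} Repeat the proof of Corollary~\ref{cor:aeformula}, using $p_m=U_{c_m}$, $w_f\,dL_f=V_f'\,dL_f$, and market clearing to write
\[
dW=\sum_kp_k\,dy_k-\sum_kdI_k+\sum_m(\partial U/\partial\tau_m)\,d\tau_m ,
\]
together with $dI_k=\mc_k(dy_k-y_kg_k)$, which follows from differentiating $F_k$ under cost minimization.

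\emph{Parts~2 and~3.} For Part~2, the chain rule splits $dW$ into the fixed-allocation term, which Part~1 identifies as $\sum_m\T_m$, and $\AEff$; subtracting $d\pi_i=\Pv_i+\IP_i$ gives the stated form. For Part~3, subtracting $\sum_m\T_m$ from the accounting expression gives~\eqref{eq:ae} with $g_k$. Rewriting $\sum_k(\mu_k-1)dI_k$ as $\sum_k(p_k-\mc_k)dy_k-\sum_k(p_k-\mc_k)y_kg_k$ and using $p_ky_k=E\lambda_k$ then gives~\eqref{eq:bridge}. Substituting the Harberger form of $dW$ into Part~2 yields the form of~\eqref{eq:harberger} stated in Part~3.

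\emph{Part~4.} Isolate in $\T_i$ the $k=i$ summand $E(\tilde\lambda_i/y_i)(\partial F_i/\partial\tau_i)d\tau_i$ and subtract the $d\tau_i$ part of $\Pv_i$. What remains is the own-production gap and a spillover-to-buyers-plus-Spence term, as displayed.

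The main obstacle is the bookkeeping of the double index. $\tau_m$ may affect $F_k$ for $k\ne m$, as with intermediate-good quality, so $\Pv_i$ (through $g_i$) contains effects of other producers' parameters, while $\T_m$ is indexed by the source $m$. I would state carefully that $\sum_m\T_m=E\tilde\lambda^\top\bm g+U_\tau d\tau$ is a regrouping, so that Part~2 is written with $\sum_m\T_m$ rather than with a split into $\T_i$ and the rest. I would also check that the envelope step $\partial C_i/\partial\tau=-\mc_i\,\partial F_i/\partial\tau$ holds whenever the generalized Assumption~\ref{ass:inputs} holds.
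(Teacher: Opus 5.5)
Your proposal is correct and follows the paper's proof. You redo the three lemmas with $y_kg_k$ in place of $y_k\,\dlog A_k$, using $\partial\log F_k/\partial\log x_{km}=\widetilde\Omega_{km}$, the envelope step $\partial C_i/\partial\tau=-\mc_i\,\partial F_i/\partial\tau$, and $dI_k=\mc_k(dy_k-y_kg_k)$; you then regroup $E\tilde\lambda^\top\bm g$ and $\sum_k\mc_ky_kg_k$ by the source index $m$ and get Parts~2--4 by the algebra of Theorem~\ref{thm:main} and Corollaries~\ref{cor:aeformula} and~\ref{cor:bridge}, all as the paper does. Your point that the result must be written with $\sum_m\T_m-\Pv_i$, because $g_i$ picks up other producers' parameters, is the same remark the paper makes right after its proof.
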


\begin{proof}
	The three lemmas use the production functions only through the output gain at fixed inputs and the cost-minimization conditions, so their proofs go through with $y_kg_k$ in place of $y_k\,\dlog A_k$ and $\sum_m(\partial U/\partial\tau_m)\,d\tau_m$ in place of $\sum_kU_{b_k}\,db_k$. In the proof of Lemma~\ref{lem:accounting}, cost minimization gives $dI_k=\mc_k(dy_k-y_kg_k)$, and collecting $\sum_k\mc_ky_kg_k$ by the producer $m$ whose parameters change gives the values at fixed quantities in part~1. In the proof of Lemma~\ref{lem:allocation}, $\partial\log F_k/\partial\log x_{km}=\widetilde\Omega_{km}$, so at a fixed allocation $\dlog\bm y=\widetilde\Psi\bm g$ and
	\[
		dW=E\sum_k\tilde\lambda_k\,g_k+\sum_m\frac{\partial U}{\partial\tau_m}\,d\tau_m=\sum_m\T_m .
	\]
	In the proof of Lemma~\ref{lem:private}, Shephard's lemma and the envelope theorem give $dC_i=\mc_i\,dy_i-\mc_iy_ig_i+\sum_mx_{im}\,dp_m+\sum_fl_{if}\,dw_f$, and after substituting $dp_i=\varphi_{i,y}\,dy_i+(\partial\varphi_i/\partial\tau_i)\,d\tau_i+(dp_i)^{-i}$ the terms in $dy_i$ again vanish by the first-order condition. Part~2 follows from the lemmas as in the proof of Theorem~\ref{thm:main}. The proof of Corollary~\ref{cor:bridge} goes through with $g_k$ in place of $\dlog A_k$, and the last display of part~3 is the form of Lemma~\ref{lem:accounting} in part~1 minus Lemma~\ref{lem:private}. Part~4 follows from the definitions.
\end{proof}

The first term of the gap is the production term of Corollary~\ref{cor:gap}, $E\tilde\lambda_i-\mc_iy_i\ge0$, times the proportional gain in $i$'s output at fixed inputs that its own technology change produces, so it is nonnegative for any technology that raises that output. The second is a general Spence term. It is the value of the improvement to all of $i$'s buyers, final and intermediate, propagated downstream at a fixed allocation, less the revenue from the price increase that $i$ captures at its current output. The innovator also captures $\mc_i(\partial F_i/\partial\tau_m)\,d\tau_m$ of the technology changes of other producers $m$ that enter its own production, such as a better input, which is why the theorem is stated with $\sum_m\T_m-\Pv_i$. With Hicks-neutral productivity and quality on final goods, $\partial F_k/\partial\log A_m$ is $y_k$ for $m=k$ and zero otherwise, $\Pv_i$ is due to $d\tau_i$ alone, and the general Spence term is $\nu_i\,db_i$, so the proposition reduces to the results of Sections~\ref{sec:main} and~\ref{sec:harberger}. Under the general technology of Proposition~\ref{prop:general}, the same steps apply when neither $j$'s own technology nor its output at fixed inputs changes.

\subsection{Small innovators in general markets}\label{oa:smallgeneral}

Corollary~\ref{cor:small} takes the innovator's market to be a CES aggregate and rules out spillovers. Its result on the induced price term holds for any smooth constant-returns aggregate and with spillovers that shrink in proportion to the innovator's size. Pool spillovers, $\dlog A_k=\phi_T\,\mathrm{prox}_{ik}\,dG_i/\mathrm{Pool}_k$ with own effect $\dlog A_i=\phi_R\,dG_i/G_i$ \citep{BloomSchankermanVanReenen2013}, are an example when the innovator's knowledge stock $G_i$ is proportional to its sales.

\begin{proposition}[Small innovators in general markets]\label{prop:smallgeneral}
	Let the assumptions of Corollary~\ref{cor:small} hold, except that $n$ has a constant-returns technology whose unit cost $C_n(\bm p;\delta_i)$ is smooth in input prices and in a weight $\delta_i$ on good $i$, with $\partial C_n/\partial p_i=0$ for all $\bm p$ at $\delta_i=0$ and $\partial^2C_n/\partial p_i\partial\delta_i>0$ at the limit prices, and that the innovation also changes the productivity of each producer $k\neq i$ by $\kappa_k\,\dlog A_i$, with $\kappa_k$ smooth in $\delta_i$ and zero at $\delta_i=0$. Let $\varepsilon_i$ and $e_{im}$ be the own- and cross-price elasticities of $n$'s demand for good $i$ at given output of $n$ and given other prices, with limit $\varepsilon_i^0>1$, and let $n$'s demand system for its other inputs be invertible, up to the normalization of prices, at $\delta_i=0$. Then, whatever $i$'s conduct,
	\begin{equation}
		\begin{split}
			\frac{d\pi_i-\Pv_i}{p_iy_i}={}&\Bigl[\varepsilon_i\Bigl(1-\frac{1}{\mu_i}\Bigr)-1\Bigr](\dlog A_i-\dlog\mu_i)\\
			&+\Bigl(1-\frac{1}{\mu_i}\Bigr)\Bigl[\sum_{m\neq i}e_{im}\,\dlog p_m-(\varepsilon_i-1)\sum_m\widetilde\Omega_{im}\,\dlog p_m+\dlog y_n\Bigr],
		\end{split}
		\label{eq:smallgeneral}
	\end{equation}
	which is equation~\eqref{eq:small} when $n$ is CES. If $i$ competes in quantities or in prices, perceiving the demand for $n$'s output to have an elasticity $\eta_n>0$, then as $\delta_i\to0$, $(d\pi_i-\Pv_i)/(p_iy_i)$ is of order $s_i$ and $\mu_i$ tends to $\varepsilon_i^0/(\varepsilon_i^0-1)$.\ifjournal\else{} Moreover, $\dlog\mu_i$ tends to $-\zeta_i\,\dlog A_i/(1-\zeta_i)$, where $\zeta_i$ is the elasticity of $\varepsilon_i^0/(\varepsilon_i^0-1)$ with respect to $p_i$, so it is of order $s_i$ when $\varepsilon_i^0$ does not depend on $p_i$, as under CES, and $\AEff/(p_iy_i)$ converges to a finite limit, which is the same when $i$'s markup is held fixed if $\dlog\mu_i$ is of order $s_i$.\fi
\end{proposition}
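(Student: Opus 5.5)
The plan follows the proof of Corollary~\ref{cor:small} step by step. The CES-specific facts are replaced by the general constant-returns unit cost $C_n(\bm p;\delta_i)$.

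\textbf{Identity~\eqref{eq:smallgeneral}.} Since $i$ has constant returns, Corollary~\ref{cor:os} gives, whatever its conduct,
\[
\frac{d\pi_i-\Pv_i}{p_iy_i}=\Bigl(1-\frac{1}{\mu_i}\Bigr)\dlog(p_iy_i)-\frac{\dlog A_i-\dlog\mu_i}{\mu_i}.
\]
Because $n$ is competitive with constant returns, Shephard's lemma gives $y_i=y_n\,\partial C_n/\partial p_i$. Hence
\[
\dlog y_i=\dlog y_n-\varepsilon_i\,\dlog p_i+\sum_{m\neq i}e_{im}\,\dlog p_m ,
\]
with $\varepsilon_i$ and $e_{im}$ the elasticities at given output of $n$. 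Cost minimization by $i$ gives $\dlog p_i=\dlog\mu_i-\dlog A_i+\sum_m\widetilde\Omega_{im}\,\dlog p_m$. Substituting into $\dlog(p_iy_i)=(1-\varepsilon_i)\dlog p_i+\sum_{m\neq i}e_{im}\dlog p_m+\dlog y_n$ and collecting the coefficient of $\dlog A_i-\dlog\mu_i$ yields the bracket $\varepsilon_i(1-1/\mu_i)-1$ and the second line of~\eqref{eq:smallgeneral}. Under CES, $\varepsilon_i=\theta_n(1-s_i)+s_i$ and $e_{im}=(\theta_n-1)\widetilde\Omega_{nm}$ for $m\neq i$. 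Rewriting with $\dlog p_n=\sum_m\widetilde\Omega_{nm}\dlog p_m$ and $\dlog(p_ny_n)$ then recovers~\eqref{eq:small}; this check is routine algebra.

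\textbf{Order of the first bracket.} Under quantity competition $i$ perceives the inverse-demand elasticity $-1/\varepsilon_i+O(s_i)$. This comes from its own price effect at fixed $y_n$, plus $s_i/\eta_n$ through $n$'s output. Under price competition the perceived demand elasticity is $\varepsilon_i+O(s_i)$. In either case the first-order condition gives $\mu_i=\varepsilon_i/(\varepsilon_i-1)+O(s_i)$, so the bracket is $O(s_i)$ and $\mu_i\to\varepsilon_i^0/(\varepsilon_i^0-1)$. This uses $\varepsilon_i^0>1$ and that $s_i$, $i$'s cost share in $n$, is of order $\delta_i$. The latter holds because $\partial C_n/\partial p_i$ vanishes at $\delta_i=0$ with positive cross-derivative in $\delta_i$.

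\textbf{Order of the second bracket.} Here I would repeat the implicit-function argument of the proof of Corollary~\ref{cor:small}, using $i$'s output in levels and $n$'s dual technology $C_n(\cdot;\delta_i)$ as variables.
\begin{enumerate}
\item At $\delta_i=0$, $i$'s conditions form a block with unit diagonal, and the others depend on $i$ only through its input demands, which vanish with its output.
\item The invertibility of $n$'s demand system for its other inputs, up to normalization, replaces the CES structure. It makes the Jacobian of the rest of the economy nonsingular, so equilibrium derivatives with respect to $\log A_i$ are $C^1$ in $\delta_i$.
\item At $\delta_i=0$ no object other than $p_i$ (and $\mu_i$, possibly) depends on $A_i$, because $n$ does not use good $i$ and every $\kappa_k=0$. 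So $\dlog p_m$ for $m\neq i$, factor prices and $\dlog y_n$ are $O(\delta_i)=O(s_i)$, and the spillover terms $\kappa_k\dlog A_i$ are $O(\delta_i)$ by assumption.
\end{enumerate}
The second bracket is therefore $O(s_i)$, and so is its product with $1-1/\mu_i$.

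\textbf{Additional claims and main obstacle.} The limit of $\dlog\mu_i$ follows by differentiating the limiting markup rule in $p_i$ at fixed other prices. That gives $\dlog\mu_i=\zeta_i(\dlog\mu_i-\dlog A_i)$, hence $-\zeta_i\dlog A_i/(1-\zeta_i)$. The statement on $\AEff$ copies the last two paragraphs of the proof of Corollary~\ref{cor:small}: write $\AEff/(p_iy_i)$ as a ratio of two $C^1$ functions of $\delta_i$ vanishing at zero, and compare with the fixed-markup economy. The main obstacle is that $\mu_i$ need not be constant at $\delta_i=0$ when $\varepsilon_i^0$ depends on $p_i$. The block structure then must treat $\mu_i$ as a function of $p_i$ in the regular variables, and I would first verify that it stays $C^1$ there.
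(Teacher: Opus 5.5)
Your route is the paper's: the general identity follows from Corollary~\ref{cor:os} plus Shephard's lemma for $n$. The price-competition markup, the order of the second bracket and the limit of $\dlog\mu_i$ go through as you describe.

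\textbf{The gap: the quantity-competition markup.} This is the step the non-CES generalization actually requires, and you assert it rather than prove it. You claim the perceived inverse-demand elasticity is $-1/\varepsilon_i+O(s_i)$, coming from ``the own price effect at fixed $y_n$ plus $s_i/\eta_n$.'' But $\varepsilon_i$ is defined at given \emph{prices} of $n$'s other inputs, whereas a quantity competitor holds their \emph{quantities} fixed. For a general $C_n$, raising $y_i$ at fixed rivals' quantities moves the other inputs' prices relative to $p_n$. These feed back into $p_i$ through the cross-elasticities $e_{im}$, which are of order one (under CES, $\theta_ns_m$). So you must show that those relative prices move only by $O(s_i)$. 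The paper does this with the matrix $E$ of $n$'s conditional demand elasticities:
\begin{itemize}
\item $E\bm 1=0$, and $s_mE_{mi}=s_iE_{im}$ by symmetry of the Hessian of $C_n$;
\item at given other quantities, $E\,\dlog(\bm p/p_n)=(\bm e_i-s_i\bm 1)\,\dlog y_i$;
\item in the rows $m\neq i$, both $E_{mi}$ and the right side are $O(s_i)$, so invertibility of $n$'s demand system for its other inputs makes the other relative prices move by $O(s_i)$;
\item row $i$ then gives $\dlog(p_i/p_n)=-\dlog y_i/\varepsilon_i+O(s_i)$.
\end{itemize}
This is where the invertibility hypothesis enters. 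You use it instead for nonsingularity of the economy's Jacobian, which it does not by itself deliver and which is already among the hypotheses carried over from Corollary~\ref{cor:small}.

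\textbf{Smaller points.}
\begin{itemize}
\item Your CES values $\theta_n(1-s_i)+s_i$ and $(\theta_n-1)s_m$ are elasticities at given spending $p_ny_n$. At given output $y_n$, as in your identity, they are $\varepsilon_i=\theta_n(1-s_i)$ and $e_{im}=\theta_ns_m$. With your values the check leaves $\dlog y_n$ where \eqref{eq:small} has $\dlog(p_ny_n)$.
\item The obstacle you flag resolves as in the paper. When $\varepsilon_i^0$ depends on $p_i$, the Jacobian at $\delta_i=0$ is nonsingular only if $\zeta_i\neq1$, the same condition your limit $-\zeta_i\,\dlog A_i/(1-\zeta_i)$ needs.
\item Smoothness then keeps $\dlog\mu_i$ bounded, which the order of the first term requires, since the bracket multiplies $\dlog A_i-\dlog\mu_i$.
\item The fixed-markup comparison for $\AEff$ gives the same limit only if $\dlog\mu_i=O(s_i)$, which is no longer automatic here.
\end{itemize}
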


\begin{proof}
	Producer $n$ buys $x_{ni}=y_n\,\partial C_n/\partial p_i$, so $\dlog(p_iy_i)=\dlog y_n+(1-\varepsilon_i)\,\dlog p_i+\sum_{m\neq i}e_{im}\,\dlog p_m$. Substituting this and $\dlog p_i=\dlog\mu_i-\dlog A_i+\sum_m\widetilde\Omega_{im}\,\dlog p_m$ into the constant-returns form of Corollary~\ref{cor:os}, and using $(\varepsilon_i-1)(1-1/\mu_i)-1/\mu_i=\varepsilon_i(1-1/\mu_i)-1$, gives~\eqref{eq:smallgeneral}. Under CES, $\varepsilon_i=\theta_n(1-s_i)$ and $e_{im}=\theta_ns_m$, and $\sum_ms_m\,\dlog p_m=\dlog p_n$ turns it into~\eqref{eq:small}.

	Because $\partial C_n/\partial p_i=0$ for all $\bm p$ at $\delta_i=0$, the ratio of $\partial C_n/\partial p_i$ to $\delta_i$ is smooth and tends to $\partial^2C_n/\partial p_i\partial\delta_i>0$, so $\varepsilon_i$ and $e_{im}$, which are elasticities of $\partial C_n/\partial p_i$ and hence of this ratio, are smooth in $\delta_i$. Under price competition, $i$ perceives $\dlog y_i=-(\varepsilon_i+\eta_ns_i)\,\dlog p_i$, so its perceived elasticity is $\hat\varepsilon_i=\varepsilon_i+\eta_ns_i$. Under quantity competition, let $E$ be the matrix of elasticities of $n$'s conditional demands. Its rows sum to zero, $E\bm 1=0$, and $s_mE_{mi}=s_iE_{im}$ by the symmetry of the Hessian of $C_n$, so $\bm s^\top E=0$. At given other quantities, the prices relative to $p_n$ solve $E\,\dlog(\bm p/p_n)=(\bm e_i-s_i\bm 1)\,\dlog y_i$, and $p_n$ moves by $-(s_i/\eta_n)\,\dlog y_i$. In the rows $m\neq i$, the coefficients $E_{mi}$ and the right side are of order $s_i$, and $n$'s demand system for its other inputs is invertible, so the other relative prices move by terms of order $s_i$, and the row of $i$ gives $\dlog(p_i/p_n)=-\dlog y_i/\varepsilon_i$ up to a term of order $s_i$. Hence $1/\hat\varepsilon_i=1/\varepsilon_i+O(s_i)$. Under either conduct $\mu_i=\hat\varepsilon_i/(\hat\varepsilon_i-1)$, so $\mu_i-\varepsilon_i/(\varepsilon_i-1)$ and the bracket $\varepsilon_i(1-1/\mu_i)-1=\varepsilon_i/\hat\varepsilon_i-1$ are of order $s_i$.

	The smoothness argument of the proof of Corollary~\ref{cor:small} applies with the dual weight $\delta_i$, except that at $\delta_i=0$ the innovator's markup depends on $p_i$ through $\varepsilon_i^0$, so the nonsingularity of the Jacobian also requires $\zeta_i\neq1$\ifjournal, where $\zeta_i$ is the elasticity of $\varepsilon_i^0/(\varepsilon_i^0-1)$ with respect to $p_i$\fi. At $\delta_i=0$, $n$ does not use good $i$ and $\kappa_k=0$, so only $p_i$ and $\mu_i$ depend on $A_i$, and every other price, $y_n$, and welfare change by terms of order $\delta_i$, hence of order $s_i$, since $s_i/\delta_i$ tends to $p_i(\partial^2C_n/\partial p_i\partial\delta_i)/C_n>0$. The second term of~\eqref{eq:smallgeneral} is therefore of order $s_i$, and so is the first, the bracket times $\dlog A_i-\dlog\mu_i$, which is bounded because $\mu_i$ is smooth in $\delta_i$ and $\log A_i$.\ifjournal\else{} At $\delta_i=0$, $\dlog p_i=\dlog\mu_i-\dlog A_i$, so $\dlog\mu_i=\zeta_i(\dlog\mu_i-\dlog A_i)$, which gives the limit since $\zeta_i\neq1$. For the convergence of $\AEff/(p_iy_i)$, $\AEff/E=dW/E-\sum_k\tilde\lambda_k\,\dlog A_k$ is smooth in $\delta_i$ and zero at $\delta_i=0$, and holding $i$'s markup fixed changes it by $(\partial W/\partial\log m)\,\dlog\mu_i/E$, where $\partial W/\partial\log m$ is of order $s_i$, as in the proof of Corollary~\ref{cor:small}.\fi
\end{proof}
\ifjournal\else

The markup result separates the two halves of Part~1 of Corollary~\ref{cor:small}. A small innovator's induced price term vanishes whatever its market, because its own markup change has no first-order private value and the prices it does not control move by terms of order $s_i$. Its markup itself stays fixed only when the elasticity of demand that it faces does not vary with its own price. Under Kimball demand, with an elasticity that rises with the relative price, a small innovator raises its markup after a cost reduction, which lowers its pass-through below one, and the change in allocative efficiency that its innovation causes then depends on that pass-through. Pool spillovers have $\kappa_k=(\phi_T/\phi_R)\,\mathrm{prox}_{ik}\,G_i/\mathrm{Pool}_k$, which is of order $s_i$ when $G_i$ is proportional to the innovator's sales. Their technology effects $\sum_{k\neq i}\T_k$ remain of first order relative to $\Pv_i$, so they add to the external part of the social return without reaching the innovator's profit. When $G_i$ falls more slowly than $s_i$, the induced price term is of the order of $\max_k\kappa_k$ rather than $s_i$. Induced changes in technique at given knowledge (Remark~\ref{rem:technique}) respond to prices that move by terms of order $s_i$ and are of that order. When they change the quality of $n$'s other inputs, they add to~\eqref{eq:smallgeneral} terms in those quality changes, which are also of order $s_i$, so the induced price term remains of order $s_i$. Technology changes that come through the innovator's knowledge, such as its diffusion in Section~\ref{oa:dynamic} or imitation of a given improvement, are spillovers, and when they do not shrink with the innovator's size they move rivals' prices at first order, so that the induced price term is of first order.
\fi

\subsection{Price competition}\label{oa:bertrand}

\paragraph{Price competition.} Suppose that producer $i$ chooses its price, taking input and factor prices as given, and perceives the demand $y_i=D_i(p_i,b_i)$, with $D_{i,p}<0$, in which the prices and qualities of the other producers selling to its buyers are held fixed and its buyers adjust their purchases along it. Its profit is $\pi_i=p_i\,D_i(p_i,b_i)-C_i\bigl(D_i(p_i,b_i);\bm p,\bm w,A_i\bigr)$, and the first-order condition is
\[
	y_i+(p_i-\mc_i)\,D_{i,p}=0 .
\]

\begin{proof}[Derivation of the results in Remark~\ref{rem:bertrand}]
	Write the change in $i$'s output as $dy_i=D_{i,p}\,dp_i+D_{i,b}\,db_i+(dy_i)^{-i}$, where $(dy_i)^{-i}$ is the part that $i$ does not attribute to its own price or quality. Substituting into the profit differential above gives
	\begin{align*}
		d\pi_i={}&\bigl[y_i+(p_i-\mc_i)\,D_{i,p}\bigr]dp_i+\mc_i\,y_i\,\dlog A_i+(p_i-\mc_i)\,D_{i,b}\,db_i\\
		&+(p_i-\mc_i)(dy_i)^{-i}-\sum_mx_{im}\,dp_m-\sum_fl_{if}\,dw_f .
	\end{align*}
	The bracket is zero by the first-order condition, so $d\pi_i=\Pv_i^B+\IP_i^B$ with
	\begin{align*}
		\Pv_i^B&=\mc_i\,y_i\,\dlog A_i+(p_i-\mc_i)\,D_{i,b}\,db_i,\\
		\IP_i^B&=(p_i-\mc_i)(dy_i)^{-i}-\sum_mx_{im}\,dp_m-\sum_fl_{if}\,dw_f .
	\end{align*}
	Lemma~\ref{lem:allocation} uses price taking by households and in input markets, cost minimization, and market clearing, but not the conduct of producers, and Definition~\ref{def:ae} involves welfare alone. Hence $dW=\sum_k\T_k+\AEff$ under price competition as well, and subtracting $d\pi_i$ gives
	\[
		dW-d\pi_i=(\T_i-\Pv_i^B)+\sum_{j\neq i}\T_j+\AEff-\IP_i^B ,
	\]
	which is Theorem~\ref{thm:main} with these terms. For a process innovation $db_i=0$, and therefore $\Pv_i^B=\Pv_i$.
\end{proof}

\ifjournal\else
Corollaries~\ref{cor:gap} and~\ref{cor:os} carry over with one replacement. The first-order condition gives $p_i-\mc_i=-y_i/D_{i,p}$, so
\[
	(p_i-\mc_i)\,D_{i,b}=y_i\Bigl(-\frac{D_{i,b}}{D_{i,p}}\Bigr),
\]
where $-D_{i,b}/D_{i,p}$ is the rise, per unit of quality, in the price at which $i$ can sell its current output when the prices of the other producers selling to its buyers are held fixed. The term $(p_i-\mc_i)\,D_{i,b}$ therefore plays the role of $y_i\,\varphi_{i,b}$, and with it in place of $y_i\,\varphi_{i,b}$ the appropriability gap $\T_i-\Pv_i^B$ takes the form of Corollary~\ref{cor:gap}, with the Spence term $\bigl[U_{b_i}-(p_i-\mc_i)\,D_{i,b}\bigr]db_i$. Because $D_{k,p}<0$ gives $\mu_k\ge1$ for every price-setting producer, and Assumption~\ref{ass:conduct} gives it for every quantity-setting producer, both parts of the production term remain nonnegative. Using the first-order condition once more, $(p_i-\mc_i)(dy_i)^{-i}=y_i\,dp_i+(p_i-\mc_i)\,dy_i-(p_i-\mc_i)\,D_{i,b}\,db_i$, so
\[
	\IP_i^B=\TR_i+(p_i-\mc_i)\,dy_i-(p_i-\mc_i)\,D_{i,b}\,db_i .
\]
Under constant returns, $\pi_i=(1-1/\mu_i)\,p_iy_i$ whatever the conduct. Corollary~\ref{cor:os} therefore holds under price competition with $\IP_i^B$, $\Pv_i^B$, and $(p_i-\mc_i)\,D_{i,b}$ in place of $\IP_i$, $\Pv_i$, and $y_i\,\varphi_{i,b}$. Its process-innovation form is unchanged, so the constant-markup expression and its benchmark apply under either conduct.
\fi

Equations~\eqref{eq:ae} and~\eqref{eq:bridge} use only welfare accounting at fixed quantities (Lemma~\ref{lem:accounting} and the proof of Corollary~\ref{cor:aeformula}) and the definitions of $\T_k$ and $\AEff$, so they hold under price competition. Equation~\eqref{eq:harberger} holds under price competition with $\IP_i^B$ in place of $\IP_i$ and with $(p_i-\mc_i)\,D_{i,b}$ in place of $y_i\,\varphi_{i,b}$ in $\nu_i$, because $\mc_i\,y_i\,\dlog A_i+U_{b_i}\,db_i-\Pv_i^B=\bigl[U_{b_i}-(p_i-\mc_i)\,D_{i,b}\bigr]db_i$. In that equation $-\IP_i^B$ cancels the part $(p_i-\mc_i)(dy_i)^{-i}$ of the innovator's term $(p_i-\mc_i)\,dy_i$ in the Harberger term, so under price competition markup-weighted output changes are external only net of $\IP_i^B$.

\subsection{Remarks on the assumptions}

\begin{remark}[Endogenous markups]\label{rem:wedges}
	The result that $\AEff$ is external relies on each producer's output being at its optimum given the demand it perceives, which holds when markups arise from producers' optimization. Suppose instead that prices were set by exogenous markups that producers keep as profit, $p_k=\mu_k\mc_k$, with quantities determined by demand. The first-order condition of Assumption~\ref{ass:conduct} requires the margin $p_i-\mc_i$ to equal $-y_i\varphi_{i,y}$, and an exogenous markup would satisfy it only by coincidence. Otherwise the private return in Lemma~\ref{lem:private} would contain the additional term $(p_i-\mc_i+y_i\varphi_{i,y})\,dy_i$. When the exogenous margin exceeds $-y_i\varphi_{i,y}$, this term is positive for an output expansion, and the innovator would capture part of its margin on that expansion. For a producer whose share of a CES aggregate with elasticity $\theta_n$ goes to zero, equation~\eqref{eq:small} gives this part. With the exogenous markup $\mu_i$, the producer captures the fraction $1-\mu_i/\bigl(\theta_n(\mu_i-1)\bigr)$ of its margin on the expansion that a process innovation causes, which is positive exactly when $\mu_i$ exceeds $\theta_n/(\theta_n-1)$, the markup that it would choose. Endogenous markups are thus the step beyond the exogenous wedges of \citet{BaqaeeFarhi2020}, whose decomposition of welfare holds for any wedges but does not separate a producer's private return from allocative efficiency.
\end{remark}

\begin{remark}[Factor supply]
	\label{rem:factorsupply}
	With elastic factor supplies, Lemma~\ref{lem:accounting} is unchanged, because $w_f=V_f'(L_f)$ values each change in factor supply at its marginal disutility. Lemma~\ref{lem:allocation} holds factor supplies fixed, so $\AEff$ also contains the welfare effect of the change in factor supplies. At a fixed allocation, one more unit of factor $f$ raises the value of consumption by $E\widetilde\Lambda_f/L_f$, its social marginal product, as in the distortion-adjusted Solow residual of \citet{BaqaeeFarhi2020}, and its marginal disutility is $w_f=E\Lambda_f/L_f$. The change in factor supplies therefore contributes $E\sum_f(\widetilde\Lambda_f-\Lambda_f)\,dL_f/L_f$ to $\AEff$, which values it at the gap between social marginal products and prices. Lemma~\ref{lem:private} does not involve factor supplies, and Theorem~\ref{thm:main} is unchanged.
\end{remark}

\begin{remark}[Endogenous technology choice]
	\label{rem:technique}
	Suppose producer $k$ also chooses its technology $\tau_k=(A_k,b_k)$ subject to $R_k(\tau_k)\le z_k$, where $z_k$ is its knowledge, taking its perceived inverse demand and input prices as given. A price change can then induce a change in technique at given knowledge, such as the shift between productivity and quality that a change in the wage induces in the linear--quadratic model of \ifjournal \citet{HopenhaynOkumura2026}\else Section~\ref{oa:lq}\fi. Let $\omega_k$ be the multiplier on the constraint, the private flow value of knowledge. At an interior optimum with a binding constraint, the first-order conditions for $\tau_k$ are $\mc_ky_k=\omega_kA_k\,\partial R_k/\partial A_k$ and $y_k\varphi_{k,b}=\omega_k\,\partial R_k/\partial b_k$, so for the total change $d\tau_k$
	\[
		\Pv_k=\omega_k\Bigl(\frac{\partial R_k}{\partial A_k}\,dA_k+\frac{\partial R_k}{\partial b_k}\,db_k\Bigr)=\omega_k\,dz_k .
	\]
	A change in technique at given knowledge therefore has no private value. Theorem~\ref{thm:main} holds with $d\tau$ the total change, $\Pv_i=\omega_i\,dz_i$, and $\Pv_j=0$ for producers whose knowledge does not change. Their induced technology changes are valued privately at zero and socially at $\T_j$, which is then also their appropriability gap $\T_j-\Pv_j$.
\end{remark}

\ifjournal\else
\begin{remark}[Multi-product producers]
	\label{rem:multiproduct}
	A producer that sells several goods internalizes the effect of each good's output on the prices of its other goods. If it perceives inverse demands $p_g=\varphi_g$ for its goods $g$ that depend on the outputs and qualities of all of them, its first-order conditions are
	\[
		p_g-\mc_g+\sum_{g'}y_{g'}\,\frac{\partial\varphi_{g'}}{\partial y_g}=0
	\]
	for each of its goods $g$, where $\mc_g$ is the marginal cost of good $g$ and the sum runs over its goods $g'$. Lemma~\ref{lem:private} then holds for its total profit, with $\Pv_i$ summed over its goods and including the effect of its technology on the prices of all of them, and with $\IP_i$ excluding the price effects among its own goods. The other results are unchanged, with the innovator's technology values summed over its goods.
\end{remark}

\begin{remark}[Monopsony]
	\label{rem:monopsony}
	If a producer recognizes that its demand raises the price of a factor or an input, it no longer takes that price as given, and Assumption~\ref{ass:inputs} fails for that input. The producer equates the input's marginal product, valued at its marginal cost of output, to its marginal cost of the input, which exceeds the price, so the input carries a markdown wedge. The markdown enters the welfare accounting and $\AEff$ like a markup, as wedges do in \citet{BaqaeeFarhi2020}, where any wedge can be represented as the markup of a fictitious intermediary. The change in the input price that the producer's own demand causes moves from $\IP_i$ into its first-order condition and has no first-order private value.
\end{remark}
\fi

\ifjournal\else
\begin{remark}[Quality and several factors]
	\label{rem:factors}
	A change in quality in efficiency units, $y_i\mapsto b_iy_i$, of a good sold to only one CES nest is equivalent, for the allocation and for the markups of Proposition~\ref{prop:markups}, to a productivity change of the same size. The results of Section~\ref{sec:primitives} therefore apply with $\dlog b_i$ in place of $\dlog A_i$. Such quality enters the production function of the nest, a case that Proposition~\ref{prop:general} covers. In the setting of Section~\ref{sec:ces} with several factors in fixed supply, $\AEff=-E\,(\tilde\lambda^\top\dlog\bm\mu+\widetilde\Lambda^\top\dlog\bm\Lambda)$, and $\dlog\bm\Lambda$ solves the linear system of \citet{BaqaeeFarhi2020}, in which changes in relative factor prices propagate through their factor-share propagation matrix. Equation~\eqref{eq:ae-prim} is the one-factor case, in which that matrix is zero.
\end{remark}
\fi

\begin{remark}[Free entry]
	\label{rem:entry}
	Suppose a group $g$ of symmetric producers of measure $N_g$ sells to one CES nest with elasticity $\theta_g$, each producer pays a fixed cost in factors, and producers enter until each earns zero profit. The other producers take $N_g$ as given, as they take the outputs of other producers in Assumption~\ref{ass:conduct}. Entry and exit move the innovator's profit only through prices, so Lemma~\ref{lem:private} is unchanged and their effect on that profit is part of $\IP_i$. At fixed quantities, a marginal entrant adds output worth $\theta_g/(\theta_g-1)$ times its revenue $R_g$ and uses inputs worth $R_g$, its total cost at zero profit, so the welfare change of Lemma~\ref{lem:accounting} gains the term $R_g\,dN_g/(\theta_g-1)$. The fixed allocation of Lemma~\ref{lem:allocation} holds $N_g$ fixed, so this term is part of $\AEff$, and Theorem~\ref{thm:main} holds with $\AEff$ including entry and exit. Because free entry keeps the group's profits at zero, business stealing from its producers leaves their profits and market values unchanged, whereas the exit that it causes enters $\AEff$ with the negative term $R_g\,dN_g/(\theta_g-1)$.

	Free entry also removes the dependence of the ratio $dW/d\pi_i$ on the innovator's share in Corollary~\ref{cor:fringe}. Suppose that in that economy the fringe producers of the innovator's industry pay a fixed cost in labor and enter freely. The zero-profit condition then holds the industry's price index fixed given its spending and the wage, so the innovation leaves the industry's buyers no better off and lowers the industry's total cost by $d\pi_i$. The social return is therefore $d\pi_i$ times the welfare gain from a unit of labor released to the rest of the economy, per unit of its wage, whatever the innovator's share and conduct. When the other industries keep their sets of producers fixed, they absorb the released labor in proportion, and this multiple is the mean markup $\bar\mu$ of Corollary~\ref{cor:uniform}. Labor hired for R\&D costs society the same multiple of its wage, so with R\&D paid in labor the static terms of the wedge cancel at the first-order condition for R\&D, and only technology spillovers and R\&D responses remain. If every industry has free entry, the multiple rises with the common share $s$, because the other industries are then more concentrated, not because the innovator's own share rises.
\end{remark}

\ifjournal\else
\subsection{Measuring the terms}\label{oa:statistics}

\begin{table}[!t]
	\centering
	\small
	\caption{The terms of the wedge and the statistics that measure them}
	\label{tab:measurement}
	\begin{tabular}{@{}>{\raggedright\arraybackslash}p{0.2\linewidth}>{\raggedright\arraybackslash}p{0.24\linewidth}>{\raggedright\arraybackslash}p{0.11\linewidth}>{\raggedright\arraybackslash}p{0.37\linewidth}@{}}
		\toprule
		Term & Expression & Sign & Measured by \\
		\midrule
		(1) Production term of the appropriability gap & $\bigl[(p_i-\mc_i)\,y_i$ \newline ${}+E(\tilde\lambda_i-\lambda_i)\bigr]\dlog A_i$ & $\ge0$ & Sales, markups, and input--output tables, as for cost-based Domar weights in \citet{BaqaeeFarhi2020} \\
		\addlinespace
		(2) Spence term & $\nu_i\,db_i$ & Either & Demand estimates of how quality changes the willingness to pay for inframarginal relative to marginal units \\
		\addlinespace
		(3) Other producers' technology & $\sum_{j\neq i}\T_j$ & Sign of the technology changes & Spillover estimates, valued at $E\tilde\lambda_j$ for productivity and at $U_{b_j}$ for quality \\
		\addlinespace
		(4) Change in allocative efficiency & $\AEff$ & Either & Observed changes in markups and factor shares \citep{BaqaeeFarhi2020}; for a counterfactual innovation, elasticities, the network, markups, and conduct (Section~\ref{sec:primitives}) \\
		\addlinespace
		(5) Minus the induced price term & $-\IP_i$ & Either & With constant returns, the innovator's sales and markup, their changes, and $\Pv_i$ (Corollary~\ref{cor:os}) \\
		\addlinespace
		(6) R\&D responses & $\sum_k\Delta_k\,dr^0_{k,t}$ \newline ${}-\sum_{j\neq i}\xi_{ij}\,dr^0_{j,t}$ & Either & Equilibrium strategies, the responding firms' R\&D wedges, and the innovator's stakes in its rivals' R\&D (Theorem~\ref{thm:dynamic}) \\
		\addlinespace
		(7) Premium on factors used in R\&D & $-E\sum_f(\widetilde\Lambda_f-\Lambda_f)$ \newline ${}\times dL^R_f/L_f$ & $\le0$ & Markups, factor income shares, and input--output tables (Proposition~\ref{prop:factors}) \\
		\bottomrule
	\end{tabular}

	\vspace{0.3em}
	\parbox{\linewidth}{\footnotesize Notes: Rows (1)--(5) sum to $dW-d\pi_i$ by Theorem~\ref{thm:main} and Corollary~\ref{cor:gap}. In the R\&D game of Section~\ref{oa:dynamic} they are evaluated at each date along the fixed-effort response, and together with row (6) they give the integrand of equation~\eqref{eq:dynamic}. Row (7) is added when the innovation or R\&D withdraws factors from production. The sign in row (1) is for $\dlog A_i\ge0$, and that in row (3) uses $\tilde\lambda_j\ge0$ and, for quality, $U_{b_j}\ge0$.}
\end{table}

Table~\ref{tab:measurement} lists the terms of the wedge with their signs and the statistics that measure them. The production term of the appropriability gap, the change in allocative efficiency caused by an innovation that has occurred, and the premium on the factors that R\&D uses are measured with the statistics that \citet{BaqaeeFarhi2020} use for productivity accounting with markups. The Spence term, the change in allocative efficiency caused by a counterfactual innovation, and the R\&D responses require estimates of demand, conduct, and the dynamic equilibrium beyond these statistics.
\fi

\subsection{Other proofs}\label{oa:otherproofs}

\begin{proof}[Proof of Lemma~\ref{lem:accounting}]
	Using $p_m=U_{c_m}$, $w_f\,dL_f=V_f'\,dL_f$ for every factor (for a factor in inelastic supply because $dL_f=0$), and market clearing,
	\[
		dW=\sum_mp_m\Bigl(dy_m-\sum_kdx_{km}\Bigr)+\sum_kU_{b_k}\,db_k-\sum_fw_f\sum_kdl_{kf}.
	\]
	By Assumption~\ref{ass:inputs}, the change in producer $k$'s input use valued at current prices is
	\[
		dI_k\equiv\sum_mp_m\,dx_{km}+\sum_fw_f\,dl_{kf}=\mc_kA_k\,dG_k=\mc_k\bigl(dy_k-y_k\,\dlog A_k\bigr),
	\]
	where the last step uses $dy_k=y_k\,\dlog A_k+A_k\,dG_k$. The input terms of the first display are $-\sum_kdI_k$. Substituting the expression for $dI_k$ and using $\sum_mp_m\,dy_m-\sum_k\mc_k\,dy_k=\sum_k(p_k-\mc_k)\,dy_k$ gives the result.
\end{proof}

The profit differential at the start of Appendix~\ref{app:proofs} gives the following proofs.

\begin{proof}[Proof of Corollary~\ref{cor:os}]
	By the definitions of $\IP_i$ and $(dp_i)^{-i}$ in Lemma~\ref{lem:private},
	\[
		\IP_i=y_i\,dp_i-\sum_mx_{im}\,dp_m-\sum_fl_{if}\,dw_f-y_i\,\varphi_{i,y}\,dy_i-y_i\,\varphi_{i,b}\,db_i
		=\TR_i-y_i\,\varphi_{i,y}\,dy_i-y_i\,\varphi_{i,b}\,db_i .
	\]
	The first-order condition of Assumption~\ref{ass:conduct} gives $y_i\,\varphi_{i,y}=\mc_i-p_i$, which yields the first expression for $\IP_i$. With constant returns, $C_i=\mc_i\,y_i$ with $\mc_i$ independent of $y_i$, so $\pi_i=(1-1/\mu_i)\,p_iy_i$, and differentiating gives, without any optimality condition,
	\[
		d\pi_i=\Bigl(1-\frac{1}{\mu_i}\Bigr)d(p_iy_i)+\frac{p_iy_i}{\mu_i}\,\dlog\mu_i .
	\]
	By Lemma~\ref{lem:private}, $\IP_i=d\pi_i-\Pv_i$. For a process innovation $db_i=0$, so $\Pv_i=\mc_i\,y_i\,\dlog A_i=p_iy_i\,\dlog A_i/\mu_i$. Substituting this and $d(p_iy_i)=p_iy_i\,\dlog(p_iy_i)$ gives the expression for a process innovation.
\end{proof}

\begin{proof}[Proof of Corollary~\ref{cor:bridge}]
	Subtracting $\sum_k\T_k=\sum_k\bigl(E\tilde\lambda_k\,\dlog A_k+U_{b_k}\,db_k\bigr)$ from Lemma~\ref{lem:accounting} gives
	\[
		\AEff=\sum_k(\mc_k\,y_k-E\tilde\lambda_k)\,\dlog A_k+\sum_k(p_k-\mc_k)\,dy_k ,
	\]
	which rearranges to \eqref{eq:bridge}. Finally, subtracting $d\pi_i=\Pv_i+\IP_i$ (Lemma~\ref{lem:private}) from Lemma~\ref{lem:accounting} gives
	\begin{align*}
		dW-d\pi_i={}&\sum_k(p_k-\mc_k)\,dy_k+\sum_{j\neq i}\bigl(\mc_j\,y_j\,\dlog A_j+U_{b_j}\,db_j\bigr)\\
		&+\bigl(\mc_i\,y_i\,\dlog A_i+U_{b_i}\,db_i-\Pv_i\bigr)-\IP_i ,
	\end{align*}
	and $\mc_i\,y_i\,\dlog A_i+U_{b_i}\,db_i-\Pv_i=(U_{b_i}-y_i\,\varphi_{i,b})\,db_i=\nu_i\,db_i$, which gives \eqref{eq:harberger}.
\end{proof}

\section{Example Economies}\label{oa:examples}

The values reported in the text are computed from equilibria of the economies below, re-solved after each change in technology, or from closed forms checked against such equilibria.

\subsection{A three-producer economy}\label{oa:roundabout}

Three producers, $k=1,2,3$, compete in quantities. Producer $k$ produces $y_k=A_k\,l_k^{\alpha_k}M_k^{1-\alpha_k}$ from labor $l_k$ and a quantity $M_k$ of an aggregate good. A competitive aggregator produces the aggregate good,
\[
	Y=\Bigl(\sum_k(b_ky_k)^{(\theta-1)/\theta}\Bigr)^{\theta/(\theta-1)},
	\qquad
	\theta=\IWTheta,
\]
where $b_k$ is the quality of good $k$ in efficiency units. The aggregate good is consumed and is the only intermediate input, so $Y=C+\sum_kM_k$. It is the numeraire, and welfare equals consumption, $W=C$. Labor is supplied inelastically, with $\sum_kl_k=1$. Producers take the wage and the price of the aggregate good as given. With the other producers' outputs fixed, producer $k$'s inverse demand $p_k=b_k^{(\theta-1)/\theta}(Y/y_k)^{1/\theta}$ has elasticity $-(1-s_k)/\theta$, where $s_k=p_ky_k/Y$ is its sales share, so its markup solves $1/\mu_k=1-(1-s_k)/\theta$. The output elasticities of labor are $(\alpha_1,\alpha_2,\alpha_3)=(0.6,0.5,0.7)$, baseline productivities are $(A_1,A_2,A_3)=(1,1.3,0.8)$, and baseline qualities are $(b_1,b_2,b_3)=(1,1.1,0.9)$.

\ifjournal\else
\paragraph{The terms of the wedge.} The innovation raises producer~1's log productivity by one and, through spillovers, the log productivities of producers~2 and~3 by 0.3 and 0.1, so $\dlog\bm A=(1,0.3,0.1)$. The parameters, including these spillovers, are chosen to illustrate the terms rather than to measure them. The baseline markups of producers~1, 2, and~3 are \IWMuA, \IWMuB, and \IWMuC. Every producer sells only to the aggregator, so the ratio $\tilde\lambda_k/\lambda_k$ of its cost-based to its revenue-based Domar weight in equation~\eqref{eq:liu} is the same for every producer. Because the aggregator's output is used by every producer, each with a markup, the ratio exceeds one, and $E\tilde\lambda_k$ is \IWLamTildeA{} times producer~$k$'s sales. Table~\ref{tab:example} reports the two returns, the decomposition of the wedge in Theorem~\ref{thm:main}, and the split of the Harberger term in~\eqref{eq:bridge}.

\begin{table}[!t]
	\centering
	\small
	\caption{The wedge in a three-producer economy}
	\label{tab:example}
	\begin{tabular}{@{}lr@{}}
		\toprule
		Term & Value \\
		\midrule
		\multicolumn{2}{@{}l}{\emph{A. Returns}} \\
		Social return, $dW$ & \IWSMR \\
		Private return, $d\pi_1$ & \IWPMR \\
		\quad Private technology value, $\Pv_1$ & \IWPrivTech \\
		\quad Induced price term, $\IP_1$ & \IWOS \\
		\qquad Wage change, $-l_1\,dw$ & \IWOSWage \\
		\qquad Own-price change not attributed to own output, $y_1\,(dp_1-\varphi_{1,y}\,dy_1)$ & \IWOSPrice \\
		\addlinespace
		\multicolumn{2}{@{}l}{\emph{B. Wedge}} \\
		Wedge, $dW-d\pi_1$ & \IWWedge \\
		\quad Appropriability gap, $\T_1-\Pv_1$ & \IWGap \\
		\qquad Own markup, $(p_1-\mc_1)\,y_1\,\dlog A_1$ & \IWGapOwn \\
		\qquad Downstream markups, $E(\tilde\lambda_1-\lambda_1)\,\dlog A_1$ & \IWGapDown \\
		\quad Other producers' technology, $\sum_{j\neq1}\T_j$ & \IWRivals \\
		\quad Change in allocative efficiency, $\AEff$ & \IWAE \\
		\addlinespace
		\multicolumn{2}{@{}l}{\emph{C. Harberger term}} \\
		Harberger term, $\sum_k(p_k-\mc_k)\,dy_k$ & \IWHarberger \\
		\quad Appropriability gaps, $\sum_k(E\tilde\lambda_k-\mc_ky_k)\,\dlog A_k$ & \IWGapsAll \\
		\quad Change in allocative efficiency, $\AEff$ & \IWAE \\
		\bottomrule
	\end{tabular}

	\medskip
	\begin{minipage}{0.92\linewidth}
		\footnotesize Notes: The innovation is $\dlog\bm A=(1,0.3,0.1)$, a unit increase in producer~1's log productivity with spillovers of 0.3 and 0.1 to the log productivities of producers~2 and~3. Values are in units of the aggregate good, which is the numeraire, with labor supply normalized to one. Indented rows are components of the less indented row above them, except that the wedge in Panel~B equals the sum of the appropriability gap, other producers' technology, and the change in allocative efficiency, minus $\IP_1$. The input-price term $-\sum_mx_{1m}\,dp_m$ of $\IP_1$ in Lemma~\ref{lem:private} is zero because producer~1's only intermediate input is the aggregate good, the numeraire, so the two rows under $\IP_1$ sum to it. Components may not add up exactly because of rounding. The economy is described in Section~\ref{oa:roundabout}.
	\end{minipage}
\end{table}

The private return, \IWPMR, is a small part of the social return, \IWSMR. Producer~1 receives its private technology value and an induced price term that is negative because the innovation and the spillovers raise the wage. The wage change contributes \IWOSWage{} to the induced price term, while the change in producer~1's own price that it does not attribute to its own output adds only \IWOSPrice.

The appropriability gap consists of producer~1's markup on the output that its productivity gain adds at fixed inputs and the markups of all three producers on the further output that this gain adds at a fixed allocation. Other producers' technology is the largest external term because of the assumed spillovers. The Harberger term exceeds the change in allocative efficiency by the appropriability gaps of all three productivity changes, \IWGapsAll. A reading of the Harberger term as the change in allocative efficiency would attribute this amount to reallocation.
\fi

\paragraph{Business stealing and the cost of research labor.} In this economy, let producer~1's productivity alone rise, $\dlog\bm A=(1,0,0)$, so that Proposition~\ref{prop:bs} applies to both rivals. Producer~2's output falls, and its markup-weighted output change, the welfare effect of business stealing from it, is \IWXBSMarkupTwo. Its profit nevertheless rises by \IWXBSProfitTwo, because the transfer $\TR_2$, \IWXBSTransferTwo, more than offsets this loss. In this economy a larger producer faces a more elastic demand and charges a lower markup (Proposition~\ref{prop:markups} with $\eta=\infty$), so producer~2 raises its markup as it loses share, and its buyers pay the higher margin on its remaining output. Where markups instead rise with the share, as in \citet{AtkesonBurstein2008}, a rival that loses business lowers its margin, and the transfer adds to its loss of profit (Section~\ref{sec:app-bs}). Measures of business stealing based on rivals' profits or market values can therefore differ even in sign from its welfare effect. For producer~3, the markup-weighted output change is \IWXBSMarkupThree{} and the transfer $\TR_3$ is \IWXBSTransferThree, so its profit changes by \IWXBSProfitThree. Its profit falls, but by less than the welfare cost of its lost output. In the experiment of Proposition~\ref{prop:factors}, an amount $dL^R_L$ of labor is withdrawn from production with no change in technology. With a single factor, the cost-based share of labor is $\widetilde\Lambda_L=\sum_k\tilde\lambda_k\alpha_k=1$, so the ratio $\widetilde\Lambda_L/\Lambda_L$ is the inverse of labor's income share, $E/w=\IWXRDRatio$.

\ifjournal\else
\subsection{A market for a homogeneous good}\label{oa:fs}

Consider a market in which several producers sell perfect substitutes, with inverse demand $p(Y)$ for their total output $Y$, and let $Y_{-i}$ be the total output of producer $i$'s rivals. Let producers outside this market be competitive, so that they contribute nothing to the Harberger term, and let the innovation have no spillovers and leave input and factor prices unchanged. Under quantity competition, producer $i$ perceives $\varphi_{i,y}=p'(Y)$. Since $dp_i=p'(Y)\,(dy_i+dY_{-i})$, the part of the change in its price that it does not attribute to its own output is $(dp_i)^{-i}=p'(Y)\,dY_{-i}$, and $\IP_i=y_i\,p'(Y)\,dY_{-i}$. With the first-order condition $p-\mc_i=-p'(Y)\,y_i$, the innovator's own term in the Harberger term, $(p-\mc_i)\,dy_i$, and $-\IP_i$ sum to $-y_i\,p'(Y)\,dY$. For a process innovation, equation~\eqref{eq:harberger} gives the excess of the private over the social return as
\[
	d\pi_i-dW=y_i\,p'(Y)\,dY-\sum_{j\neq i}(p-\mc_j)\,dy_j ,
\]
which is minus the external effect that \citet{FarrellShapiroAER1990} derive for a change in the behavior of a subset of producers, here the innovator alone, whose investment lowers its cost. Its last term is minus the welfare effect of business stealing (Section~\ref{sec:bs}) and raises the private return relative to the social return when the rivals' outputs fall.
\fi

\section{The 2015 Economy}\label{oa:application}\label{app:application}%

This section describes the economy of Section~\ref{sec:application} and its alternatives, the computation of the returns, the technology spillovers, and the output-based returns, and Table~\ref{tab:oa-robustness} reports further specifications.

\paragraph{Industries.} The input--output data are the BEA summary use table for \IWAppYear{} before redefinitions, with 66 industries after dropping the government, scrap, and noncomparable-import industries, as in \citet{BaqaeeFarhi2020}. Each industry's cost shares follow their calibration, and so do the elasticities of substitution, \IWAppEpsilonVA{} between value added and intermediate inputs, \IWAppThetaInt{} across intermediate inputs, \IWAppEtaKL{} between labor and capital, and \IWAppSigmaFinal{} across industries in final demand. Labor's share comes from compensation, capital's from gross operating surplus less the profits implied by the industry's markup, and the shares of intermediate inputs from the use table. Where the profits implied by the markup exceed gross operating surplus, capital's share is zero and the shares of intermediate inputs are scaled down so that the cost shares sum to one, as in their calibration. This is the case in \IWAppClipUserCost{} and \IWAppClipAccounting{} of the 66 industries with the user-cost and accounting-profit markups, in \IWAppSegClip{} with the markups of the baseline, and in \IWAppClipBase{} industries with \IWAppClipBaseOutput{} percent of gross output in the BEA economy with $\theta=\IWAppThetaBase$. An industry's markup is the sales-weighted harmonic mean of the markups of its producers and fringes. Industries without sample firms get the output-weighted mean markup of the other industries.

\paragraph{Firms.} The firms are the Compustat firms in the \IWAppYear{} sample of \citet{BaqaeeFarhi2020}, which excludes agriculture, construction, general merchandise stores, utilities, finance, and nonclassifiable firms, here without the requirement that a markup be measured. In the baseline, \IWAppSegDropped{} records whose sales another record of the sample contains, dual listings and consolidated subsidiaries, are dropped.

\paragraph{Markets and producers.} A market is a six-digit NAICS code of the 2017 Economic Census, in the BEA industry to which the code belongs. Codes that changed between the 2012 and 2017 classifications are mapped by hand, and the 2017 codes that split a 2012 code form one market. Each business segment of a firm in the Compustat segment files for fiscal \IWAppYear{} is a producer in the market of its NAICS code, and the segments of a firm in one market form one producer. Where a firm's largest segment has no six-digit code, the firm's own code is used. A shorter code merges the markets within it into one where the firm would hold more than a tenth of the largest of them, and otherwise places the firm in that market. Segments whose code belongs to no industry of the network, or to an industry without firms of the sample, are left out, \IWAppSegLeftOut{} segments with \IWAppSegLeftOutSales{} percent of the producers' domestic sales. A firm's innovation and its R\&D are in its largest producer, and \IWAppSegMultiFirms{} firms have more than one producer.

A producer's domestic sales are its segment sales times the firm's U.S. share, the share of the firm's geographic segment sales in the United States. Where the domestic segment covers North America, the Americas, or the United States and Canada, it is scaled by the U.S. share of the region's \IWAppYear{} GDP, \IWAppSegFactorNA{}, \IWAppSegFactorAM{}, and \IWAppSegFactorUSCA{}. U.S. firms without geographic segments sell only domestically. Foreign firms without geographic segments get the median U.S. share of the foreign firms of their industry, and foreign firms whose geographic segments name no U.S. region get zero, so \IWAppSegNoUSFirms{} firms with \IWAppSegNoUSSales{} percent of the sample's sales leave the economy. Domestic sales are \IWAppSegDomesticShare{} percent of the sample's sales. A market's size is its census receipts scaled to \IWAppYear{} by its industry's gross output and to domestic absorption by the industry's commodity output less exports plus imports, relative to commodity output, so that it includes the U.S. sales of foreign producers. Receipts withheld for a six-digit code are the five-digit total less the published codes within it. A producer's share is its domestic sales divided by the larger of its market's size and the domestic sales of the market's producers. The \IWAppSegFringeless{} markets in which the second is larger have no fringe. They hold \IWAppSegFringelessSales{} percent of the producers' domestic sales, and the innovators whose largest producer sells in them account for \IWAppSalesShareNoFringeSeg{} percent of the innovators' sales and \IWAppRDShareNoFringeSeg{} percent of their R\&D.

A firm's producers in different markets of one industry are separate competitors, so the firm does not internalize the effect of one producer's output on the prices of its others. The private return is the change in the profits of all of the firm's producers, and the profit changes of the producers other than the innovator are below a thousandth of the private return in total. The two elasticities $\theta$ and $\sigma$ solve two equations, the aggregate markup equal to the accounting-profit value of \citet{BaqaeeFarhi2020}, \IWAppFactorPremiumAccounting{}, and the sales-weighted own-cost pass-through equal to that of the BEA economy, \IWAppPassOwn{}.

\paragraph{The BEA economy.} The alternative economy assigns each firm to one BEA industry by its NAICS code. The code is read at four digits where BEA industries divide a three-digit code, which separates motor vehicles from other transportation equipment and legal services and computer systems design from other professional services, and at three digits within retail trade. A firm's share is its worldwide sales divided by its industry's gross output. In \IWAppFullCoverage{} industries, which account for \IWAppFullCoverageSales{} percent of the sample's sales and \IWAppFullCoverageRD{} percent of its R\&D, the sales of the Compustat firms exceed the industry's gross output, by a median factor of \IWAppFullCoverageRatioMedian{}. Shares there are relative to the sum of these sales, and the industry has no fringe. Firms perceive a unit elasticity of their industry's demand, as in Online Appendix~G of \citet{BaqaeeFarhi2020}, and $\theta=\IWAppThetaBase$ gives the aggregate markup \IWAppFactorPremium{}. Their value, $\theta=\IWAppThetaAlt$, raises the aggregate markup to \IWAppFactorPremiumAlt{}, and the profits it implies exceed gross operating surplus in \IWAppClipAlt{} industries with \IWAppClipAltOutput{} percent of gross output.

The ratio of Section~\ref{sec:app-ae} uses the firms, industries, and measured markups of the structural sample of \citet{BaqaeeFarhi2020}. Innovators with a measured markup below one, which Assumption~\ref{ass:conduct} rules out, are excluded. They account for \IWAppExcludedUserCost{}\ifjournal{} and \IWAppExcludedAccounting{} percent of sales under the user-cost and accounting-profit markups.\else, \IWAppExcludedAccounting{}, and \IWAppExcludedProduction{} percent of sales under the user-cost, accounting-profit, and production-function markups. The exclusion matters most for the production-function markups, whose mean ratio would be \IWAppTPProductionAll{} with these innovators included, against \IWAppTPProduction{} without them.\fi

\paragraph{Markups and their responses.} In a market with elasticity $\theta$ whose producers perceive the elasticity $\eta$ of its demand, $\eta=\sigma$ in the baseline and $\eta=1$ in the BEA economy, Proposition~\ref{prop:markups} gives $\mu_j=(1-1/\varepsilon_j)^{-1}$ with $1/\varepsilon_j=(1-s_j)/\theta+s_j/\eta$, and $\zeta_j=\mu_js_j(1/\eta-1/\theta)$ for each producer $j$. The producers and the fringe share their industry's technology, so their prices move relative to each other only through their markups and productivities. With $u_j=\dlog\mu_j-\dlog A_j$, and $u=0$ for the fringe,
\[
	\dlog s_j=(1-\theta)\Bigl(u_j-\sum_ms_mu_m\Bigr),
\]
where the sum runs over the market's producers. With $\dlog\mu_j=\zeta_j\,\dlog s_j$, the markup responses $\dlog\bm\mu=X\,\dlog\bm A$ within the market solve $(I-M)X=-M$, with $M=(1-\theta)\,Z\,(I-\bm 1\bm s^\top)$ and $Z=\operatorname{diag}(\zeta_j)$. With a single strategic producer, $X$ reduces to the response in the proof of Corollary~\ref{cor:onecournot}. When only its own markup responds, a producer passes $1/\bigl(1+(\theta-1)(1-s_j)\zeta_j\bigr)$ of a cost reduction into its price. Markups in other markets do not respond, because a productivity change in one market leaves relative prices unchanged within every other market.

When the producers of a market compete in prices \citep{AtkesonBurstein2008}, a producer with share $s_i$ perceives the elasticity $\varepsilon_i=\theta(1-s_i)+\eta s_i$, charges $\mu_i=\varepsilon_i/(\varepsilon_i-1)$, and its markup responds to its share with the elasticity $\zeta_i=\mu_is_i(\theta-\eta)/\varepsilon_i^2$ in the fixed point of Proposition~\ref{prop:markups}. A producer that sets its price perceives its true demand when its rivals' prices are fixed, so its own price change leaves nothing in its induced price term, which contains only its rivals' price responses and general-equilibrium changes, and is therefore small.

\paragraph{Returns.} For a unit increase in the log productivity of firm $i$'s largest producer, the responses of Domar weights, factor shares, and real output solve the propagation equations of \citet{BaqaeeFarhi2020}. The network represents the innovator's market by its producers and its fringe, each other market of its industry by a single producer, and every other industry by a single producer. Collapsing the other markets is exact at first order, because their within-market shares, and so their markups, do not move. With final demand as the numeraire, $\dlog E=\dlog Y$, so $dW=E\,\dlog Y$. With constant returns, Corollary~\ref{cor:os} gives
\[
	d\pi_j=E\lambda_j\Bigl[\Bigl(1-\frac1{\mu_j}\Bigr)\bigl(\dlog\lambda_j+\dlog Y\bigr)+\frac{\dlog\mu_j}{\mu_j}\Bigr]
\]
for each producer of the innovator's market. A producer in another market keeps its share and markup, so its profit changes by $E\lambda_j(1-1/\mu_j)(\dlog\lambda_{m(j)}+\dlog Y)$, where $\lambda_{m(j)}$ is the Domar weight of its market. A rival's lost output valued at its markup is $(p_j-\mc_j)\,dy_j=E\lambda_j(1-1/\mu_j)\,\dlog y_j$, with $\dlog y_j=\dlog\lambda_j+\dlog Y-\dlog p_j$. The ratios divide the sum of $dW$ over innovators by the sum of $d\pi_i$.\ifjournal\else{} Table~\ref{tab:industries-seg} reports them and their terms by industry. The productivity gains of whole industries in Section~\ref{sec:app-ae} are unit gains of each industry with Compustat firms in the industry network of \citet{BaqaeeFarhi2020}, with the markups of each measure, pooled across industries. With the markups that \citet{BaqaeeFarhi2020} measure, the same computation gives their gains from eliminating markups in 2015, \IWAppBFGainUserCost{}, \IWAppBFGainAccounting{}, and \IWAppBFGainProduction{} percent for their three markup measures.\fi

\ifjournal\else
\begin{table}[!t]
	\centering
	\scriptsize
	\setlength{\tabcolsep}{4pt}
	\caption{Returns to process innovation by industry, six-digit markets, 2015}
	\label{tab:industries-seg}
	\begin{tabular}{@{}lrrrrrrr@{}}
		\toprule
		 & & \multicolumn{2}{c}{Share (\%)} & & & & \\ \cmidrule(lr){3-4}
		Industry & Firms & Sales & R\&D & $dW/d\pi_i$ & $(\T_i-\Pv_i)/d\pi_i$ & $\AEff/d\pi_i$ & $-\IP_i/d\pi_i$ \\
		\midrule
		Forestry and fishing & 1 & 0.0 & 0.0 & 1.31 & 0.37 & \ensuremath{-}0.01 & \ensuremath{-}0.06 \\
		Oil and gas extraction & 95 & 1.1 & 0.0 & 1.25 & 0.36 & \ensuremath{-}0.04 & \ensuremath{-}0.07 \\
		Mining, except oil and gas & 66 & 1.5 & 0.1 & 1.18 & 0.38 & \ensuremath{-}0.09 & \ensuremath{-}0.11 \\
		Support activities for mining & 40 & 0.7 & 0.2 & 1.02 & 0.13 & 0.05 & \ensuremath{-}0.16 \\
		Utilities & 2 & 0.0 & 0.0 & 1.21 & 0.20 & 0.17 & \ensuremath{-}0.16 \\
		Construction & 3 & 0.1 & 0.0 & 1.14 & 0.12 & 0.16 & \ensuremath{-}0.14 \\
		Wood products & 16 & 0.2 & 0.0 & 1.31 & 0.43 & 0.08 & \ensuremath{-}0.20 \\
		Nonmetallic mineral products & 23 & 0.5 & 0.1 & 1.41 & 0.48 & 0.10 & \ensuremath{-}0.17 \\
		Primary metals & 34 & 1.5 & 0.2 & 1.21 & 0.48 & \ensuremath{-}0.08 & \ensuremath{-}0.19 \\
		Fabricated metal products & 58 & 0.7 & 0.3 & 1.24 & 0.39 & 0.07 & \ensuremath{-}0.23 \\
		Machinery & 137 & 3.7 & 4.9 & 1.11 & 0.26 & 0.05 & \ensuremath{-}0.21 \\
		Computer and electronic products & 424 & 8.1 & 20.9 & 1.25 & 0.43 & \ensuremath{-}0.03 & \ensuremath{-}0.16 \\
		Electrical equipment & 61 & 1.8 & 2.5 & 1.15 & 0.28 & \ensuremath{-}0.03 & \ensuremath{-}0.10 \\
		Motor vehicles and parts & 68 & 9.7 & 11.3 & 0.99 & 0.23 & \ensuremath{-}0.04 & \ensuremath{-}0.20 \\
		Other transportation equipment & 36 & 2.3 & 2.4 & 1.25 & 0.38 & 0.04 & \ensuremath{-}0.17 \\
		Furniture and related products & 22 & 0.2 & 0.1 & 1.13 & 0.16 & 0.17 & \ensuremath{-}0.21 \\
		Miscellaneous manufacturing & 111 & 1.4 & 2.5 & 1.04 & 0.18 & 0.02 & \ensuremath{-}0.16 \\
		Food, beverages, and tobacco & 96 & 5.1 & 1.1 & 1.11 & 0.20 & 0.03 & \ensuremath{-}0.12 \\
		Textiles & 8 & 0.1 & 0.0 & 1.19 & 0.29 & 0.11 & \ensuremath{-}0.21 \\
		Apparel and leather & 41 & 1.0 & 0.1 & 1.20 & 0.29 & 0.03 & \ensuremath{-}0.12 \\
		Paper products & 29 & 1.2 & 0.6 & 1.17 & 0.37 & 0.04 & \ensuremath{-}0.23 \\
		Printing & 11 & 0.2 & 0.0 & 1.17 & 0.27 & 0.09 & \ensuremath{-}0.18 \\
		Petroleum and coal products & 29 & 9.5 & 1.0 & 0.98 & 0.23 & \ensuremath{-}0.06 & \ensuremath{-}0.19 \\
		Chemical products & 467 & 9.4 & 30.4 & 1.06 & 0.28 & \ensuremath{-}0.05 & \ensuremath{-}0.17 \\
		Plastics and rubber products & 27 & 0.6 & 0.8 & 1.19 & 0.35 & 0.00 & \ensuremath{-}0.17 \\
		Wholesale trade & 128 & 7.4 & 0.1 & 1.22 & 0.21 & 0.21 & \ensuremath{-}0.20 \\
		Motor vehicle and parts dealers & 14 & 0.8 & 0.0 & 1.16 & 0.12 & 0.19 & \ensuremath{-}0.15 \\
		Food and beverage stores & 11 & 1.6 & 0.0 & 1.15 & 0.12 & 0.24 & \ensuremath{-}0.21 \\
		Other retail & 115 & 7.2 & 2.9 & 1.23 & 0.24 & 0.15 & \ensuremath{-}0.16 \\
		Air transportation & 22 & 1.8 & 0.0 & 1.04 & 0.21 & 0.03 & \ensuremath{-}0.20 \\
		Rail transportation & 6 & 0.3 & 0.0 & 1.24 & 0.38 & 0.09 & \ensuremath{-}0.22 \\
		Water transportation & 52 & 0.3 & 0.0 & 1.24 & 0.38 & 0.00 & \ensuremath{-}0.14 \\
		Truck transportation & 16 & 0.2 & 0.0 & 1.12 & 0.20 & 0.06 & \ensuremath{-}0.15 \\
		Transit and ground passenger transport. & 2 & 0.0 & 0.0 & 1.24 & 0.20 & 0.29 & \ensuremath{-}0.24 \\
		Pipeline transportation & 9 & 0.6 & 0.0 & 1.59 & 0.67 & 0.05 & \ensuremath{-}0.13 \\
		Other transportation and support & 18 & 0.9 & 0.0 & 1.55 & 0.64 & 0.01 & \ensuremath{-}0.10 \\
		Publishing, including software & 146 & 1.7 & 6.4 & 1.12 & 0.13 & 0.14 & \ensuremath{-}0.16 \\
		Motion pictures and sound recording & 14 & 0.3 & 0.0 & 1.20 & 0.23 & 0.20 & \ensuremath{-}0.23 \\
		Broadcasting and telecommunications & 93 & 7.0 & 0.7 & 1.04 & 0.26 & \ensuremath{-}0.04 & \ensuremath{-}0.18 \\
		Data processing and other information & 184 & 1.9 & 5.8 & 1.11 & 0.28 & \ensuremath{-}0.02 & \ensuremath{-}0.15 \\
		Rental and leasing & 31 & 0.4 & 0.1 & 1.48 & 0.44 & 0.27 & \ensuremath{-}0.23 \\
		Legal services & 1 & 0.0 & 0.0 & 1.19 & 0.22 & \ensuremath{-}0.03 & \ensuremath{-}0.01 \\
		Computer systems design & 73 & 1.8 & 3.8 & 1.20 & 0.24 & 0.07 & \ensuremath{-}0.11 \\
		Other professional services & 85 & 1.4 & 0.2 & 1.21 & 0.25 & 0.15 & \ensuremath{-}0.19 \\
		Administrative and support services & 65 & 0.7 & 0.0 & 1.27 & 0.33 & 0.12 & \ensuremath{-}0.17 \\
		Educational services & 20 & 0.1 & 0.0 & 1.13 & 0.10 & 0.11 & \ensuremath{-}0.08 \\
		Ambulatory health care & 47 & 0.5 & 0.1 & 1.29 & 0.20 & 0.27 & \ensuremath{-}0.18 \\
		Hospitals & 20 & 0.5 & 0.0 & 1.11 & 0.09 & 0.13 & \ensuremath{-}0.11 \\
		Social assistance & 1 & 0.0 & 0.0 & 1.13 & 0.08 & 0.16 & \ensuremath{-}0.12 \\
		Performing arts and spectator sports & 7 & 0.1 & 0.0 & 1.38 & 0.33 & 0.30 & \ensuremath{-}0.25 \\
		Amusements and recreation & 20 & 0.1 & 0.1 & 1.10 & 0.14 & 0.16 & \ensuremath{-}0.20 \\
		Accommodation & 22 & 0.5 & 0.0 & 1.14 & 0.17 & 0.16 & \ensuremath{-}0.19 \\
		Food services & 58 & 1.0 & 0.0 & 1.20 & 0.19 & 0.16 & \ensuremath{-}0.14 \\
		Other services & 8 & 0.1 & 0.0 & 1.27 & 0.20 & 0.31 & \ensuremath{-}0.25 \\
		\bottomrule
	\end{tabular}

	\vspace{0.3em}
	\parbox{\linewidth}{\footnotesize Notes: Innovators by the BEA industry of their main market, in the economy with competition within six-digit markets and Compustat segments ($\theta=13.8$, $\eta=\sigma=1.7$), with all markups of the innovator's market responding. Shares are of the innovators' sales and R\&D. The returns pool the innovations of the industry's innovators, and the three terms sum to $dW/d\pi_i$ minus one, up to rounding.}
\end{table}

\fi

\paragraph{Free entry.} When the fringe of the innovator's market enters freely, the market's price index is fixed given its spending and input prices (Remark~\ref{rem:entry}). Holding these fixed, as for a market that is small relative to the economy, the other producers of the market keep their shares and markups, and the innovator keeps its markup rule and faces a demand of elasticity $\theta$, so that
\[
	d\pi_i=\Pv_i\,\frac{(\theta-1)(\mu_i-1+\zeta_i)}{1+(\theta-1)\zeta_i},
\]
which with $\eta=1$ is $\Pv_i\bigl(1+2(\theta-1)s_i\bigr)/\bigl(1+(\theta-2)s_i\bigr)$. The market's output and price do not change, and its total cost falls by $d\pi_i$, so $dW$ is $d\pi_i$ times the welfare gain per dollar of the industry's input bundle released to the rest of the economy. This gain is the response of the propagation equations to equal changes in the productivity and the markup of the industry, which leave its price and output unchanged and lower its cost. The other markets keep their fringes fixed, and the baseline flows are those of the fixed-fringe economy. Free entry is defined only where the innovator's market has a fringe.

\paragraph{Further specifications.} Table~\ref{tab:oa-robustness} reports variants of the baseline and of the BEA economy, and its notes define their calibration, the unit elasticities, and free entry. Main segment only keeps each firm's largest producer and leaves the sales of its other segments in the fringes of their markets. All sales in the main market places a firm's domestic sales in the market of its largest producer, and firm-level codes place them in the market of the firm's own code. Receipts without trade adjustment omit the absorption adjustment of market size, and unscaled receipts omit the scaling to \IWAppYear{}. The next rows set the regional factors to one, give foreign firms without a U.S. region their industry's median share or foreign firms without geographic segments zero, and raise the threshold for merging markets. In the BEA economy, the perceived elasticity of the row with $\eta$ from measured markups makes the slope of $1/\mu_i$ on $s_i$ within industries, $-(1/\eta-1/\theta)$ by Proposition~\ref{prop:markups}, equal to that of the user-cost markups, \IWAppSlopeUser{}, estimated with industry fixed effects across the \IWAppSlopeFirms{} firms of the sample with a user-cost markup between 0.5 and 5, against \IWAppSlopeCournot{} with $\eta=1$. Weighted by sales, the slope is \IWAppSlopeUserWeighted{}, which implies $\eta=\IWAppEtaFlat$, so that markups barely rise with share.\ifjournal\else{} The induced price term then turns negative, because large firms perceive their demand to be more elastic than it is (Section~\ref{sec:ces}), but innovators would pass \IWAppPassEtaFlat{} of their cost reduction into their prices, against about 0.6 in \citet{AmitiItskhokiKonings2019}.\fi{} The last row drops the firms incorporated outside the United States, \IWAppForeignFirms{} of the sample with \IWAppForeignSales{} percent of its sales, whose sales become part of their industries' fringes.

\paragraph{Measured markups.} Within six-digit markets, the measured markups of \citet{BaqaeeFarhi2020} also rise with market share, but less steeply than those of the baseline. With market fixed effects, across firms with a measured markup between 0.5 and 5, the slope of $1/\mu_i$ on $s_i$ is \IWAppSegMSSlopeUser{} unweighted and \IWAppSegMSSlopeUserWeighted{} weighted by sales for the user-cost markups and \IWAppSegMSSlopeAccounting{} and \IWAppSegMSSlopeAccountingWeighted{} for the accounting-profit markups, which by Proposition~\ref{prop:markups} imply values of $\sigma$ from \IWAppSegMSSigmaAccounting{} to \IWAppSegMSSigmaUserWeighted{} with the baseline $\theta$, against \IWAppSegEta{}. Across firms, the correlation of the measured markups with the markups that conduct implies is only \IWAppSegMSCorrAccounting{} to \IWAppSegMSCorrUserWeighted{}, so conduct explains little of their dispersion. In the economies with $\sigma$ from the user-cost slopes (panel~C of Table~\ref{tab:oa-robustness}), small innovators' change in allocative efficiency remains negative, and the static return falls short of the social cost for nearly all of their R\&D.

\begin{table}[!t]
	\centering
	\scriptsize
	\setlength{\tabcolsep}{2pt}
	\caption{Robustness of the returns to process innovation, 2015}
	\label{tab:oa-robustness}
	\begin{tabular}{@{}lccccccccccccc@{}}
		\toprule
		& & & & Pass- & & & & & \multicolumn{2}{c}{R\&D-weighted} & \multicolumn{3}{c}{Small innovators} \\ \cmidrule(lr){10-11}\cmidrule(lr){12-14}
		& $\theta$ & $\eta$ & Markup & through & $dW/d\pi_i$ & Gap & AE & $-$IP & AE & Below & AE & IP & Below \\
		\midrule
		\multicolumn{14}{@{}l}{\emph{A. Six-digit markets with segments, quantity competition}} \\
		Baseline & 13.8 & 1.7 & 1.20 & 0.62 & 1.14 & 0.27 & 0.04 & \ensuremath{-}0.17 & \ensuremath{-}0.04 & 80 & \ensuremath{-}0.16 & 0.026 & 100 \\
		$\eta$ lower by 0.5 & 16.1 & 1.2 & 1.20 & 0.50 & 1.13 & 0.29 & 0.05 & \ensuremath{-}0.21 & \ensuremath{-}0.10 & 77 & \ensuremath{-}0.18 & 0.047 & 95 \\
		$\eta$ higher by 0.5 & 12.7 & 2.2 & 1.20 & 0.69 & 1.15 & 0.26 & 0.03 & \ensuremath{-}0.14 & \ensuremath{-}0.04 & 82 & \ensuremath{-}0.15 & 0.017 & 100 \\
		Main segment only & 13.5 & 1.8 & 1.20 & 0.62 & 1.16 & 0.28 & 0.05 & \ensuremath{-}0.17 & \ensuremath{-}0.02 & 79 & \ensuremath{-}0.14 & 0.026 & 100 \\
		All sales in the main market & 14.2 & 2.0 & 1.20 & 0.62 & 1.17 & 0.28 & 0.05 & \ensuremath{-}0.16 & \ensuremath{-}0.04 & 75 & \ensuremath{-}0.17 & 0.024 & 100 \\
		Firm-level codes & 13.9 & 1.8 & 1.20 & 0.62 & 1.15 & 0.27 & 0.04 & \ensuremath{-}0.16 & \ensuremath{-}0.06 & 81 & \ensuremath{-}0.18 & 0.026 & 99 \\
		Receipts, no trade adjustment & 13.8 & 1.7 & 1.20 & 0.62 & 1.13 & 0.27 & 0.03 & \ensuremath{-}0.17 & \ensuremath{-}0.06 & 82 & \ensuremath{-}0.18 & 0.028 & 100 \\
		Receipts of 2017, unscaled & 13.6 & 1.7 & 1.20 & 0.62 & 1.14 & 0.27 & 0.04 & \ensuremath{-}0.17 & \ensuremath{-}0.04 & 81 & \ensuremath{-}0.15 & 0.027 & 100 \\
		Region factors of 1 & 13.8 & 1.8 & 1.20 & 0.62 & 1.14 & 0.27 & 0.04 & \ensuremath{-}0.17 & \ensuremath{-}0.04 & 81 & \ensuremath{-}0.16 & 0.027 & 95 \\
		Foreign firms, industry median & 14.1 & 1.7 & 1.20 & 0.62 & 1.13 & 0.27 & 0.04 & \ensuremath{-}0.18 & \ensuremath{-}0.05 & 81 & \ensuremath{-}0.19 & 0.028 & 100 \\
		Foreign without segments, zero & 13.7 & 1.8 & 1.20 & 0.62 & 1.14 & 0.27 & 0.04 & \ensuremath{-}0.17 & \ensuremath{-}0.04 & 81 & \ensuremath{-}0.16 & 0.026 & 100 \\
		Merge threshold 0.11 & 13.7 & 1.8 & 1.20 & 0.62 & 1.14 & 0.27 & 0.04 & \ensuremath{-}0.17 & \ensuremath{-}0.04 & 80 & \ensuremath{-}0.16 & 0.026 & 100 \\
		Unit elasticity, VA and inputs & 13.8 & 1.7 & 1.20 & 0.62 & 1.18 & 0.27 & 0.08 & \ensuremath{-}0.17 & 0.02 & 70 & \ensuremath{-}0.10 & 0.026 & 93 \\
		Unit elasticities & 13.8 & 1.7 & 1.20 & 0.62 & 1.20 & 0.27 & 0.10 & \ensuremath{-}0.17 & 0.05 & 68 & \ensuremath{-}0.09 & 0.027 & 91 \\
		Free entry & 13.8 & 1.7 & 1.20 & -- & 1.07 & 0.22 & 0.13 & \ensuremath{-}0.28 & 0.09 & 100 & \ensuremath{-}0.08 & 0.035 & 100 \\
		\addlinespace \multicolumn{14}{@{}l}{\emph{B. BEA industries, quantity competition}} \\
		$\theta=16$ & 16.0 & 1.0 & 1.20 & 0.62 & 0.99 & 0.23 & \ensuremath{-}0.03 & \ensuremath{-}0.21 & \ensuremath{-}0.09 & 90 & \ensuremath{-}0.31 & 0.054 & 99 \\
		$\theta=8$ & 8.0 & 1.0 & 1.37 & 0.74 & 1.19 & 0.42 & \ensuremath{-}0.11 & \ensuremath{-}0.13 & \ensuremath{-}0.14 & 88 & \ensuremath{-}0.26 & 0.024 & 99 \\
		$\eta$ from measured markups & 16.0 & 1.8 & 1.17 & 0.73 & 1.00 & 0.18 & \ensuremath{-}0.02 & \ensuremath{-}0.15 & \ensuremath{-}0.07 & 90 & \ensuremath{-}0.23 & 0.028 & 99 \\
		$\eta$ from sales-weighted slope & 16.0 & 10.2 & 1.13 & 0.97 & 1.15 & 0.15 & \ensuremath{-}0.04 & 0.04 & \ensuremath{-}0.05 & 47 & \ensuremath{-}0.06 & \ensuremath{-}0.002 & 97 \\
		Unit elasticities & 16.0 & 1.0 & 1.20 & 0.62 & 1.05 & 0.22 & 0.04 & \ensuremath{-}0.21 & 0.00 & 87 & \ensuremath{-}0.22 & 0.054 & 99 \\
		Free entry & 16.0 & 1.0 & 1.20 & -- & 1.08 & 0.19 & 0.15 & \ensuremath{-}0.26 & 0.18 & 100 & \ensuremath{-}0.02 & 0.066 & 100 \\
		U.S.-incorporated firms & 16.0 & 1.0 & 1.21 & 0.58 & 1.05 & 0.24 & 0.02 & \ensuremath{-}0.21 & \ensuremath{-}0.06 & 78 & \ensuremath{-}0.29 & 0.053 & 100 \\
		\addlinespace \multicolumn{14}{@{}l}{\emph{C. Six-digit markets with segments, further checks}} \\
		User-cost markup level & 31.2 & 3.7 & 1.08 & 0.62 & 1.01 & 0.10 & 0.11 & \ensuremath{-}0.20 & 0.01 & 83 & \ensuremath{-}0.14 & 0.029 & 100 \\
		$\sigma$ from markup slope & 13.8 & 2.3 & 1.19 & 0.68 & 1.13 & 0.24 & 0.04 & \ensuremath{-}0.15 & \ensuremath{-}0.04 & 82 & \ensuremath{-}0.15 & 0.019 & 100 \\
		$\sigma$ from weighted slope & 13.8 & 4.1 & 1.17 & 0.81 & 1.15 & 0.20 & 0.03 & \ensuremath{-}0.07 & \ensuremath{-}0.03 & 80 & \ensuremath{-}0.11 & 0.007 & 100 \\
		Markets with a fringe & 13.8 & 1.7 & 1.20 & 0.66 & 1.17 & 0.25 & 0.09 & \ensuremath{-}0.17 & 0.02 & 79 & \ensuremath{-}0.11 & 0.027 & 100 \\
		\addlinespace
		Price competition & 11.2 & 1.7 & 1.20 & 0.87 & 1.26 & 0.26 & \ensuremath{-}0.04 & 0.05 & \ensuremath{-}0.07 & 46 & \ensuremath{-}0.09 & \ensuremath{-}0.001 & 99 \\
		User-cost markup level & 25.1 & 3.7 & 1.08 & 0.87 & 1.15 & 0.10 & 0.01 & 0.03 & \ensuremath{-}0.02 & 46 & \ensuremath{-}0.05 & \ensuremath{-}0.001 & 100 \\
		Markets with a fringe & 11.2 & 1.7 & 1.20 & 0.89 & 1.26 & 0.25 & \ensuremath{-}0.03 & 0.03 & \ensuremath{-}0.04 & 47 & \ensuremath{-}0.08 & \ensuremath{-}0.001 & 99 \\
		\bottomrule
	\end{tabular}

	\vspace{0.3em}
	\parbox{\linewidth}{\footnotesize Notes: Per unit of the private return, without spillovers, with all markups of the innovator's market responding; ratios pool innovations. Markup: the aggregate markup, which is also the factor premium. Pass-through: own-cost pass-through with only the innovator's markup responding. Below: percentage of R\&D whose static social return falls short of the factor premium. Panel A: each data variant recalibrates $\theta$ and $\eta=\sigma$ to the aggregate markup 1.20 and the pass-through 0.62; the rows that move $\eta$ recalibrate $\theta$ only. Unit elasticities: the elasticities of the network outside markets (between value added and intermediates, across intermediates, between labor and capital, and in final demand) set to one, keeping $\sigma=\eta$ across the markets of an industry and the baseline calibration; the row above it sets only the elasticity between value added and intermediates to one. Free entry applies to innovators whose market (panel A) or industry (panel B) has a fringe. Small innovators hold less than 1 percent of their market (panels A and C) or BEA industry (panel B). Panel C: the first four rows with quantity competition, the last three with price competition. User-cost markup level: $\theta$ and $\eta=\sigma$ recalibrated to the aggregate markup of the user-cost markups of \citet{BaqaeeFarhi2020}, 1.08, and the pass-through 0.62 ($\theta$ alone with price competition). $\sigma$ from markup slope: $1/\sigma=1/\theta$ minus the within-market slope of the inverse user-cost markup on market share, unweighted or weighted by sales, with the baseline $\theta$. Markets with a fringe: innovators whose market has a fringe, in the baseline economy. -- marks values not defined.}
\end{table}

\paragraph{Technology spillovers.} Let $\varpi_{jk}$ be the proximity of firms $j$ and $k$ of \citet{Jaffe1986}, the uncentered correlation of the shares of their patents across technology classes, and let $\mathcal R_k$ be firm $k$'s R\&D stock. In the production function of \citet{BloomSchankermanVanReenen2013}, a firm's sales deflated by its industry's price index have elasticity \IWAppPhiR{} with respect to the firm's R\&D stock and \IWAppPhiT{} with respect to its spillover pool $\sum_{k\neq j}\varpi_{jk}\mathcal R_k$, the values they use to compute returns, estimated with controls for the industry's output. With the industry's output fixed, a firm in a market with elasticity $\theta$ raises its deflated sales by $1-1/\theta$ times its output growth \citep{KletteGriliches1996}, so the corresponding elasticities of its productivity are these values divided by $1-1/\theta$. A change in $\mathcal R_i$ that raises $A_i$ by $\dlog A_i=\IWAppPhiR\,\dlog\mathcal R_i/(1-1/\theta)$ therefore raises $A_j$ by $\dlog A_j=\upsilon_{ji}\,\dlog A_i$, with
\[
	\upsilon_{ji}=\frac{\IWAppPhiT}{\IWAppPhiR}\,\frac{\varpi_{ji}\mathcal R_i}{\sum_{k\neq j}\varpi_{jk}\mathcal R_k}.
\]
The factor $1-1/\theta$ is common to all firms and cancels from $\upsilon_{ji}$. The patents are those granted from \IWAppWindowStart{} to \IWAppWindowEnd{}, linked to Compustat firms by \citet{AroraEtAl2024}, with their CPC main groups as technology classes. Firms with at least \IWAppMinRecords{} patent-class records enter, \IWAppPatentingFirms{} firms over \IWAppCPCGroups{} classes. R\&D stocks cumulate Compustat R\&D expenditures with depreciation \IWAppDepreciation{}, starting from the first year's expenditure divided by \IWAppDepreciation{} plus a pre-sample growth rate of \IWAppPreGrowth{}. The pool of each firm sums over all these firms, including those outside the sample. A technology change at a firm is a change at its largest producer.

Because the returns are first order, the returns with spillovers are sums of unit responses. With $W_j$ and $\Pi_{ij}$ the effects on welfare and on firm $i$'s profit of a unit increase in $A_j$ alone, each computed in the network that represents $j$'s market producer by producer,
\[
	dW_i=W_i+\sum_{j\neq i}\upsilon_{ji}W_j,
	\qquad
	d\pi_i=\Pi_{ii}+\sum_{j\neq i}\upsilon_{ji}\Pi_{ij}.
\]
A firm $i$ outside $j$'s market keeps its shares and markups, so
\[
	\Pi_{ij}=\sum_kE\lambda_k\Bigl(1-\frac1{\mu_k}\Bigr)\bigl(\dlog\lambda_{m(k)}+\dlog Y\bigr),
\]
where the sum runs over $i$'s producers $k$.\ifjournal\else{} If all firms had the same sales and the same R\&D stock, the output gain of all firms relative to the innovator's own would be $1+\sum_{j\neq i}\upsilon_{ji}$, with $\upsilon$ computed from equal R\&D stocks. Its mean across the innovators with spillovers is \IWAppSegBSVEqualMean{}, close to the value $1+\IWAppPhiT/\IWAppPhiR$, or \IWAppBSVSymmetric{}, that this ratio takes in the symmetric case of \citet{BloomSchankermanVanReenen2013} and, in their replication files, in the full calculations of their Table~IX.\fi{}

\paragraph{Output-based returns.} Following \citet{BloomSchankermanVanReenen2013}, the output-based social and private returns of a unit increase in $A_i$ are
\[
	\widehat{dW}_i=E\Bigl(\lambda_i+\sum_{j\neq i}\upsilon_{ji}\lambda_j\Bigr),
	\qquad
	\widehat{d\pi}_i=E\lambda_i-\frac12\sum_{j\in\mathcal N_i}E\lambda_j\,\dlog\pi_j=E\lambda_i-\frac12\sum_{j\in\mathcal N_i}\frac{\mu_j}{\mu_j-1}\,d\pi_j,
\]
where $\mathcal N_i$ collects $i$'s rivals in its market, the fringe included, and $d\pi_j$ is the change in $j$'s profit caused by $i$'s productivity change alone. The last term is the output that \citet{BloomSchankermanVanReenen2013} credit to the innovator, half of each rival's percentage change in market value times the rival's sales, with market values proportional to profits. With constant returns, a rival's sales are $\mu_j/(\mu_j-1)$ times its profit. The credit includes the fringe, because the output that the innovator takes from it is part of the output gain that this return measures, and the welfare-based returns count business stealing from every rival. \ifjournal\else Restricting the credit to the Compustat rivals, as the sample of public firms in \citet{BloomSchankermanVanReenen2013} does, gives an output-based ratio of \IWAppSegBSVRatioFirmRivals{} in the baseline.\fi{} By Theorem~\ref{thm:main}, with $\upsilon_{ii}=1$,
\begin{align*}
	\widehat{dW}_i-dW_i&=-E\sum_j\upsilon_{ji}\bigl(\tilde\lambda_j-\lambda_j\bigr)-\AEff,\\
	\widehat{d\pi}_i-d\pi_i&=E\lambda_i\Bigl(1-\frac1{\mu_i}\Bigr)-\IP_i-\frac12\sum_{j\in\mathcal N_i}\frac{\mu_j}{\mu_j-1}\,d\pi_j .
\end{align*}
\ifjournal\else In the baseline, the output-based private return exceeds the innovator's profit change by \IWAppSegBSVPrivateExcess{} percent. The credit for rivals' lost output adds \IWAppSegBSVSteal{} per unit of the profit change, \IWAppSegBSVCreditModel{} times the innovator's own output gain, and with it the output-based ratio is \IWAppSegBSVRatio{}. Valuing the innovator's own output gain at its price rather than at its marginal cost adds \IWAppSegBSVOwnMarkup{}, and its induced price term subtracts \IWAppSegBSVOS{}.\fi{} In the estimates of \citet{BloomSchankermanVanReenen2013}, the credit is \IWAppBSVCreditBSV{} times the gain in the innovator's sales deflated by its industry's price index, half their market-value coefficient on rivals' R\&D divided by the elasticity of these sales with respect to own R\&D. Because their estimates control for industry output, these sales rise by $1-1/\theta$ times the output gain at its price, so per unit of the output gain their credit is \IWAppSegBSVCreditBSVPrice{}.\ifjournal{} In the baseline, the credit that the model implies is \IWAppSegBSVCreditModel{} per unit of the output gain, and with it the output-based ratio is \IWAppSegBSVRatio{}. The welfare-based ratio is\else{} The output-based social return is \IWAppSegBSVSocial{} of the welfare gain. It omits the markups downstream of the firms whose productivity rises, \IWAppSegBSVDown{} of the welfare gain, and the change in allocative efficiency, \IWAppSegBSVAE{} of it. The welfare-based ratio is therefore\fi{} \IWAppSegBSVWelfareOverOutputPrice{} times the output-based ratio with the credit of \citet{BloomSchankermanVanReenen2013}. One plus that credit contributes a factor of \IWAppSegBSVFactorCreditPrice{}, the innovator's output gain at its price relative to its profit change a factor of \IWAppSegBSVFactorOwnOutput{}, and the welfare gain relative to the output-based social return a factor of \IWAppSegBSVFactorSocial{}.\ifjournal\else{} Holding the welfare gain fixed, the welfare-based ratio would be \IWAppSegSpRatioNoIP{} without the induced price term, and it would fall to the output-based ratio only if the term were \IWAppSegSpIPMultiplePrice{} times as large.\fi{} \citet{BloomSchankermanVanReenen2013} compute their returns for an increase in R\&D that includes the R\&D it induces other firms to undertake. The static returns here omit these responses in both accountings.

\end{document}